\newtheorem{thm}{Theorem}
\newtheorem{mytheorem}{Theorem}
\newtheorem{mylemma}{Lemma}
\newtheorem{myclaim}{Claim}
\newtheorem{mydefinition}{Definition}
\newtheorem{myassumption}{Assumption}
\newcommand{\mc}{\mathcal}
\newcommand{\tx}[1]{\text{#1}}
\newcommand{\derv}[3]{\frac{d^{#3}#1}{d#2^{#3}}}					
\newcommand{\half}{\tfrac{1}{2}}
\newcommand{\bes} {\begin{subequations}}
\newcommand{\ees} {\end{subequations}}
\newcommand{\bea} {\begin{eqnarray}}
\newcommand{\eea} {\end{eqnarray}}
\newcommand{\beq}{\begin{equation}}
\newcommand{\eeq}{\end{equation}}
\newcommand{\expv}[1]{\langle #1\rangle}							
\newcommand{\ph}{\ensuremath{\varphi}}
\newcommand{\eps}{\ensuremath{\varepsilon}} 
\newcommand{\R}{\ensuremath{\mathbb{R}}} 
\newcommand{\C}{\ensuremath{\mathbb{C}}} 
\newcommand{\e}{\ensuremath{{e}}} 
\newcommand{\ii}{\ensuremath{{i}}}
\newcommand{\abs}[1]{\ensuremath{\left|#1\right|}} 
\newcommand{\norm}[1]{\ensuremath{\left\|#1\right\|}} 
\newcommand{\opU}{\ensuremath{{{U}}}}
\newcommand{\opH}{\ensuremath{{{H}}}}
\newcommand{\ma}[1]{\mathcal{#1}}
\newcommand{\expect}[3]{\<{#1}|{#2}|{#3}\>}
\newcommand{\ave}[1]{\<{#1}\>}
\newcommand{\sinc}{\mathrm{sinc}}
\newcommand{\pen}[1]{\left(#1\right)}								
\newcommand{\ben}[1]{\left[#1\right]}								
\newcommand{\cen}[1]{\left\{#1\right\}}								
\newcommand{\ignore}[1]{}
\newcommand{\mcal}[1]{\mathcal{#1}}
\newcommand{\mcHS}{\mathcal{H}_S}
\newcommand{\mcHB}{\mathcal{H}_B}
\newcommand{\mcHR}{\mathcal{H}_R}
\newcommand{\mP}{\mathcal{P}}
\newcommand{\mQ}{\mathcal{Q}}
\newcommand{\mL}{\mathcal{L}}
\newcommand{\mG}{\mathcal{G}}
\newcommand{\mK}{\mathcal{K}}
\newcommand{\mJ}{\mathcal{J}}
\newcommand{\mU}{\mathcal{U}}
\newcommand{\tr}{\tilde{\rho}}
\newcommand{\pt}{\partial_t}
\def\a{\alpha}
\def\b{\beta}
\def\g{\gamma}
\def\d{\delta}
\def\e{\epsilon}
\def\h{\eta}
\def\t{\theta}
\def\m{\mu}
\def\n{\nu}
\def\x{\xi}
\def\p{\pi}
\def\r{\rho}
\def\s{\sigma}
\def\ph{\varphi}
\def\c{\chi}
\def\o{\omega}
\def\Ph{\Phi}
\def\O{\Omega}
\def\ox{\otimes}
\def\>{\rangle}
\def\<{\langle}
\def\Tr{\mathrm{Tr}}
\def\Pr{\mathrm{Pr}}
\newcommand{\ketbra}[1]{|{#1}\>\!\<#1|}
\newcommand{\bracket}[1]{\<{#1}|{#1}\>}
\newcommand{\bk}[2]{\<{#1}|{#2}\>}
\newcommand{\ketb}[2]{|{#1}\>\!\<#2|}
\newcommand{\ketbsub}[3]{|{#1}\>_{#3}\<#2|}
\newcommand{\on}[1]{\|#1\|_\infty}
\newcommand{\ee}{\ensuremath{{e}}} 
\def\lp{\left(}
\def\rp{\right)}
\def\ls{\left[}
\def\rs{\right]}
\def\lb{\left\{}
\def\rb{\right\}}
\def\dgr{\dagger}
\begin{document}

\title{Lecture Notes on the Theory of Open Quantum Systems}
\author{Daniel A. Lidar}
	\affiliation{Departments of Electrical Engineering, Chemistry, and Physics \& Astronomy\\
	Center for Quantum Information Science \&
		Technology\\ 
		University of Southern California, Los Angeles, California 90089, USA}

\begin{abstract}
This is a self-contained set of lecture notes covering various aspects of the theory of open quantum system, at a level appropriate for a one-semester graduate course. The main emphasis is on completely positive maps and master equations, both Markovian and non-Markovian.
\end{abstract}

\maketitle

\tableofcontents

\newpage

\section{Preface and Acknowledgments}

The theory of open quantum systems is the backbone of nearly all modern research in quantum mechanics and its applications. The reason is simple: the idealization of an isolated quantum system obeying perfectly unitary quantum dynamics is just that: an idealization. In reality every system is \emph{open}, meaning that it is coupled to an external environment. Sometimes these open system effects are small, but they can almost never be neglected. This is particularly relevant in the field of quantum information processing, where the existence of a quantum advantage over classical information processing is often derived first from the idealized, closed system perspective, and must then be re-scrutinized in the realistic, open system setting.\\

These lecture notes provide a fairly comprehensive and self-contained introduction to the theory of open quantum systems. They are based on lectures I gave at the University of Southern California as part of a one-semester graduate course on the topic taught in Fall 2006, Spring 2013, Spring 2017, and Fall 2018. There are several excellent textbooks and monographs either devoted to or containing the same subject, and these notes are in parts heavily influenced by these works, in particular the invaluable books by Heinz-Peter Breuer and Francesco Petruccione~\cite{Breuer:book} and by Robert Alicki and Karl Lendi~\cite{alicki_quantum_2007}. The notes do fill in many details not found in the original sources (at times tediously so!), and also draw on various articles and unpublished materials. I therefore hope that these notes will serve as a useful companion to the textbooks, and will help students and researchers interested in entering the field in a semester of dedicated study.\\

The notes were originally typeset by students serving as scribes during the lectures given in 2013 and 2017, and have undergone extensive editing and additions since then. I am extremely grateful to all the students who participated in this effort:
Chao Cao, Rajit Chatterjea, Yi-Hsiang Chen, Jan 
Florjanczyk, Jose Raul Gonzalez Alonso, Anastasia Gunina, Drew Henry, Kung-Chuan
Hsu, Zhihao Jiang, Joshua Job, Hannes Leipold, Milad Marvian,  Anurag Mishra, Nicolas
Moure Gomez, Siddharth Muthu Krishnan, Shayne Sorenson, Georgios Styliaris, Christopher Sutherland, Subhasish Sutradhar, Walter Unglaub, Ka Wa Yip, and Yicong
Zheng. I am also very grateful to the students in the 2018 course, who offered numerous additional feedback: 
Namit Anand, Mojgan Asadi, Brian Barch, Matthew Kowalsky, Lawrence Liu, Humberto
Munoz Bauza, Adam
Pearson, Bibek Pokharel, Evangelos Vlachos, Aaron Wirthwein, Haimeng Zhang, and Zihan Zhao. Finally, I wish to warmly thank Dr. Tameem Albash and Dr. Jenia Mozgunov, who filled in for me on various occasions, and whose notes I relied on as well. \\

Of course, all errors, typos, and omissions are mine. The reader is strongly encouraged to send me any corrections at 
\href{mailto:lidar@usc.edu}{lidar@usc.edu}.
The notes will be updated regularly to reflect these corrections, as well as new material of interest. I apologize in advance to all the numerous authors whose contributions I did not cite; the field is vast and the intent of these notes is \emph{not} to serve as a comprehensive review article. I have certainly not done justice to the literature.\\

The completion of this work was (partially) supported by the Office of
the Director of National Intelligence (ODNI), Intelligence Advanced
Research Projects Activity (IARPA), via the U.S. Army Research Office
contract W911NF-17-C-0050. 

\newpage



\section{Review of Quantum Mechanics}

The introductory material presented here is based on the approach of the excellent textbook by Nielsen \& Chuang \cite{nielsen2010quantum}. There are four main postulates on which Quantum Mechanics can be built. These four postulates are:
\begin{enumerate}
  \item \emph{Where things happen:} Hilbert space
  \item \emph{Combining Quantum Systems:} Tensor product of vectors, matrices and of Hilbert spaces
  \item \emph{Time Evolution (Dynamics):} Schr\"{o}dinger equation
  \item \emph{Information extraction:}  Measurements
\end{enumerate}

\subsection{Postulate 1}
\begin{quotation}\textit{
``To every Quantum system is associated a \emph{state space}, i.e, a Hilbert space $\mathcal{H}$.''}
\end{quotation}
A Hilbert space is a vector space equipped with an inner product. The vector spaces that we will be working with, $\ma{H}$, can be defined in the following way, in which $\C$ is the field of complex numbers:

\bes
\begin{align}
		\ma{H}&=\C^d \\
		&=\{\vec{v}=\lp\begin{array}{c} v_0\\  v_1 \\ \vdots\\ v_{d-1} \end{array}\rp | v_i\ \epsilon\ \C\}
		\label{eq:v}
\end{align}
\ees

Thus for our purposes a vector space can be defined as the set of $d$-dimensional vectors $\vec{v}$, each element of which, $v_i$, is a complex number. Recall that a vector space  has a couple of properties. First, for all vectors $\vec{v}\in \ma{H}$, $ a\vec{v}+b\vec{v'} \in \ma{H}$, with $a,b\in \C$, i.e., any linear combination of vectors $\vec{z}$ is also an element of the vector space $V$.  Second, the vector space must contain the zero vector, an element that satisfies the condition $\vec{0}+\vec{v}=\vec{v}\ \ \forall\vec{v}\ \in \C$.

The postulate means that physical states of a quantum system can be
associated to a vector $\vec{v}\in\mathcal{H}$. We shall use Dirac notation, in which column vectors are denoted by ``kets": $\vec{v} \mapsto \ket{v}$. In what follows, we
will usually assume that the dimension of $\mathcal{H}$ is finite, and find an orthonormal basis for it.
That is, if ${\dim}(\mathcal{H})=d$, then denote a such a basis
$\left\{\ket{k}\right\}_{k=0}^{d-1}$. A good (but obviously non-unique) choice is the standard basis
\beq
\ket{k}=
\begin{pmatrix}
0\\
\vdots\\
1\\
\vdots\\
0
\end{pmatrix}
\leftarrow\mbox{ $k+1$\textsuperscript{th} position}
\eeq

Any vector in the Hilbert space can be expanded in an orthonormal basis as a linear combination
\beq
\ket{v}=\sum_{k=0}^{d-1} v_k \ket{k},\; v_k\in\C,
\eeq
which quantum physicists often call a \emph{superposition}. The coefficients $v_k$ are called \emph{probability amplitudes}. The reason is that the probability of a quantum system ``being" in a specific state $\ket{k}$ is $\abs{v_k}^2$. This latter statement is part of the postulate. The different orthonormal basis vector $\ket{k}$ represent \emph{mutually exclusive} possibilities, such as the discrete positions of a particle on a line, or different spin configurations.

Of course, in order for the set $\{\abs{v_k}^2\}$ to be a proper probability distribution, the probabilities must sum to one. This is the reason that we need to endow the vector space $\ma{H}$ with an inner product, i.e., work with Hilbert spaces. To define the inner product function we first introduce the dual of a ket, called a ``bra". In Dirac notation, row vectors (or \emph{bras}) are written as $\bra{v}$, where by definition $\bra{v} = \ket{v}^\dagger$, where the dagger denotes Hermitian conjugation, i.e., transpose and complex conjugation. Thus if $\ket{v}$ is written as in Eq.~\eqref{eq:v} then $\bra{v} = \{v^*_1,v^*_2,\dots,v^*_{n}\}$. One reason that Dirac notation is
convenient because we can represent the inner product as a ``braket",
i.e.,
\beq
\braket{v|w} \equiv \begin{pmatrix}v_0^*, \hdots, v_{d-1}^*\end{pmatrix}
\begin{pmatrix}w_0\\\vdots\\w_{d-1}\end{pmatrix}=\sum_{k=0}^{d-1} v_k^*w_k.
\eeq
The normalization condition of the probability distribution can now be written as
\beq
1 = \sum_{k=0}^{d-1}|v_k|^2 = \sum_{k=0}^{d-1}v^*_k v_k = 
\braket{v|v} = \norm{\ket{v}}^2 ,
\eeq
which is to say that every vector $\ket{v}\in\mathcal{H}$ is  \emph{normalized}, i.e., $\norm{\ket{v}} = \sqrt{\braket{v|v}}=1$. 
Note that an overall phase does not affect normalization, i.e., $\ket{v}$ and $e^{i\theta}\ket{v}$ have the same norm. In fact we do not distinguish between states that differ only by an overall phase. Such states form a ``ray" in Hilbert space.

Using Dirac notation we can form the \emph{outer product} of two vectors in the same Hilbert space as follows
\beq
\ket{v}\bra{w} =
\begin{pmatrix}
v_0\\ \vdots \\ v_{d-1}
\end{pmatrix}
\begin{pmatrix}
w_0^*, \hdots, w_{d-1}^*
\end{pmatrix} =
\begin{pmatrix}
v_0w_0^* & \hdots & v_0w_{d-1}^*\\
\vdots & \ddots & \vdots\\
v_{d-1}w_0^* & \hdots & v_{d-1}w_{d-1}^*
\end{pmatrix}.
\eeq

Additional linear algebra and Dirac notation facts are collected in Appendix~\ref{app:A}.

\subsection{Postulate 2}
\begin{quotation}\textit{
``Given two quantum systems with respective Hilbert spaces $\mathcal{H}_1$ and $\mathcal{H}_2$ the
combined quantum system has associated with it a Hilbert space given by $\mathcal{H} = \mathcal{H}_1\otimes
\mathcal{H}_2$.''}
\end{quotation}
Let us define $\mathcal{H}_1$ to be the span of $\{\ket{{{v}_{i}}}\}_{i=0}^{{{d}_{1}}-1}$, and similarly $\mathcal{H}_2$ to be the span of $\{\ket{{{w}_{j}}}\}_{j=0}^{{{d}_{2}}-1}$. Then we have $\mathcal{H}$ defined as the span of
$\{\ket{{{v}_{i}}}\otimes \ket{{{w}_{j}}}\}_{i=0,j=0}^{{{d}_{1}}-1,{{d}_{2}}-1}$.
For two states $\ket{\psi}\in\mathcal{H}_1$, $\ket{\ph}\in\mathcal{H}_2$, the tensor product is given by
\beq
\ket{\psi}\otimes\ket{\ph} =
\begin{pmatrix}
\psi_0\\
\vdots\\
\psi_{d_1-1}
\end{pmatrix}
\otimes
\begin{pmatrix}
\ph_0\\
\vdots\\
\ph_{d_2-1}
\end{pmatrix}
=
\begin{pmatrix}
\psi_0\ph_0\\
\vdots\\
\psi_0\ph_{d_2-1}\\
\vdots\\
\psi_{d_1-1}\ph_{d_2-1}
\end{pmatrix}
\eeq
Note that the underlying Hilbert spaces could represent entirely different physical systems, e.g., the first could be the space of electron spins, whereas the second could be the space of photon polarizations.
We can also define the tensor product between matrices, i.e., if
\beq
{A} =
\begin{pmatrix}
a_{11} & \hdots & a_{1n}\\
\vdots & \ddots & \vdots\\
a_{m1} & \hdots & a_{mn}
\end{pmatrix}\;\ \ \ \
{B} =
\begin{pmatrix}
b_{11} & \hdots & b_{1q}\\
\vdots & \ddots & \vdots\\
b_{p1} & \hdots & b_{pq}
\end{pmatrix}
\eeq
then
\beq
{A}\otimes{B} =
\begin{pmatrix}
a_{11}b_{11} & \hdots & a_{1n}b_{1q}\\
\vdots & \ddots & \vdots\\
a_{m1}b_{p1} & \hdots & a_{mn}b_{pq}
\end{pmatrix},
\eeq
i.e., a matrix of dimension $mp\times nq$.

For example, let $\ma{H}_1 = \ma{H}_2 = \C^2$ and $\ket{\Psi }=\frac{1}{\sqrt{2}}(({{\left| 0 \right\rangle }_{1}}\otimes {{\left| 0 \right\rangle }_{2}})+({{\left| 1 \right\rangle }_{1}}\otimes {{\left| 1 \right\rangle }_{2}}))  = \frac{1}{\sqrt{2}}(1,0,0,1)^t\in \ma{H} =  \mathcal{H}_1\otimes\mathcal{H}_2$, where $\ket{0} = \left( \begin{matrix}
   1  \\
   0  \\
\end{matrix} \right)$ and  $
\ket{1} = \left( \begin{matrix}
   0  \\
   1  \\
\end{matrix} \right)$. This example is interesting and important since it represents an entangled state, i.e., a state which cannot be written as a tensor product in the same basis.

\subsection{Postulate 3}
\begin{quotation}\textit{
``$\exists$  a unitary operator $\opU(t)$ such that the time evolution of a state is given by
\beq
\ket{\psi(t)} = \opU(t)\ket{\psi(0)}.
\label{eq:U}
\eeq
Equivalently, the state vector of the system satisfies the
Schr\"{odinger} equation
\beq
\ket{\dot{\psi}(t)} = -\frac{\ii}{\hbar} \opH \ket{\psi(t)}
\label{eq:SE}
\eeq
 with $\opH$ being a Hermitian operator known as the
Hamiltonian.''}
\end{quotation}
The dot denotes $\partial/\partial t$, and in this course we will set $\hbar=1$, which means that the units of energy and frequency will be the
same. 

We shall show below that the equivalence holds provided $\opU(t) = \exp\left(-i\opH t\right)$ when $H$ is $t$-independent. In the time dependent case, the situation is more complicated, and we have 
\beq
\opU(t)={{T}_{+}}{{e}^{-i\int_{0}^{T}{H(t')dt'}}}
\eeq
where ${{T}_{+}}$ represents Dyson time ordering. This will be discussed later.

To prove the equivalence let us recall a bit of mathematical background. 
An operator $A$ is normal if $A^\dgr A = A A^\dgr$, it is Hermitian if $A^\dagger = A$, and it is unitary if $A^\dagger A = I$. Clearly, unitary operators and Hermitian operators are also normal.

\begin{mytheorem}[Spectral Theorem] A linear operator $A:V \rightarrow V$ obeys $A^\dgr A = A A^\dgr$ (i.e., it is a normal operator) if and only if $A=\sum_{a} \lambda_a \ket{a}\bra{a}$ for a set of orthonormal basis vectors $\{\ket{a}\}$ for $V$, which are also the eigenvectors of $A$ with respective eigenvalues $\{\lambda_a\}$. 
\end{mytheorem}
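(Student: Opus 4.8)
The plan is to prove the two implications separately. The reverse direction is a short computation that I would dispatch first. Assuming $A = \sum_a \lambda_a \ket{a}\bra{a}$ with $\{\ket{a}\}$ orthonormal, I take the adjoint to get $A^\dgr = \sum_a \lambda_a^* \ket{a}\bra{a}$, and then use $\braket{a|b} = \delta_{ab}$ to collapse both products to the same operator, $A^\dgr A = \sum_a \abs{\lambda_a}^2 \ket{a}\bra{a} = A A^\dgr$, establishing normality.

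The forward direction (normality implies the existence of an orthonormal eigenbasis) is the substance of the theorem, and I would prove it by induction on $d = \dim V$. The engine of the induction is the lemma that a normal $A$ shares its eigenvectors with $A^\dgr$: if $A\ket{v} = \lambda\ket{v}$ then $A^\dgr\ket{v} = \lambda^*\ket{v}$. To prove it, I first observe that $B := A - \lambda\opI$ is itself normal (the cross terms in $BB^\dgr$ and $B^\dgr B$ match, and $AA^\dgr = A^\dgr A$ by hypothesis), and that every normal $B$ satisfies $\norm{B\ket{v}} = \norm{B^\dgr\ket{v}}$ since $\norm{B\ket{v}}^2 = \bra{v}B^\dgr B\ket{v} = \bra{v}BB^\dgr\ket{v} = \norm{B^\dgr\ket{v}}^2$. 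Applying this to $B = A - \lambda\opI$, the vanishing of $(A-\lambda\opI)\ket{v}$ forces the vanishing of $(A^\dgr - \lambda^*\opI)\ket{v}$.

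With the lemma in hand, the induction runs as follows. Because the field is $\C$, the characteristic polynomial of $A$ has a root, so $A$ has at least one normalized eigenvector $\ket{v}$ with eigenvalue $\lambda$. Let $W = \{\ket{w} : \braket{v|w} = 0\}$ be its orthogonal complement. Using $\bra{v}A = \lambda\bra{v}$ (the adjoint form of the lemma) one checks $\braket{v|A|w} = \lambda\braket{v|w} = 0$ for $\ket{w}\in W$, so $A$ maps $W$ into itself; the symmetric computation with $A^\dgr$ shows $A^\dgr$ also preserves $W$. Hence $A|_W$ is a well-defined normal operator on the $(d-1)$-dimensional space $W$, to which the induction hypothesis applies, yielding an orthonormal eigenbasis of $W$; adjoining $\ket{v}$ produces an orthonormal eigenbasis of $V$.

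I expect the main obstacle to be the invariance step, since this is exactly where normality does more than mere diagonalizability: an arbitrary diagonalizable operator need not leave the orthogonal complement of an eigenvector invariant, and it is the lemma $A^\dgr\ket{v} = \lambda^*\ket{v}$ that rescues orthogonality. A secondary point of care is degeneracy of eigenvalues, but this is handled automatically by the induction rather than by a separate Gram--Schmidt step, provided one verifies that the restriction $A|_W$ genuinely remains normal---which holds because restricting the commuting pair $A, A^\dgr$ to their common invariant subspace preserves the commutation relation.
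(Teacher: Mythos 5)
The paper states the spectral theorem without proof---it appears as quoted background in the linear-algebra review, with Nielsen \& Chuang cited as the source---so there is no in-text argument to compare against. Your proof is correct and is the standard one: the easy direction by direct computation using orthonormality, and the hard direction by induction on $\dim V$, powered by the lemma that $A\ket{v}=\lambda\ket{v}$ forces $A^\dgr\ket{v}=\lambda^*\ket{v}$ (via $\norm{B\ket{v}}=\norm{B^\dgr\ket{v}}$ for the normal operator $B=A-\lambda \opI$), which is exactly what makes the orthogonal complement of an eigenvector invariant under both $A$ and $A^\dgr$ so that the induction can proceed. This is essentially the argument in the cited reference (Box~2.2 of Nielsen \& Chuang), phrased there via the projector decomposition $A=PAP+QAQ$ with $P$ the projector onto a full eigenspace; your one-eigenvector-at-a-time induction is the same idea, and your remark that the invariance of the complement is where normality does genuine work---and that the restriction $A|_W$ inherits normality because $A$ and $A^\dgr$ both preserve $W$---correctly flags the two points a careless write-up would gloss over.
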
 

Using this we can characterize the eigenvalues of Hermitian and unitary operators:
\begin{enumerate}
\item {Hermitian operators:} Applying the spectral theorem we get $\sum_{a} \lambda_a \ket{a}\bra{a} = \sum_{a} \lambda_a^* \ket{a}\bra{a}$, so that $\lambda_a=\lambda_a^*$. Thus the eigenvalues are real in this case.

\item {Unitary operators: } Applying the spectral theorem we get 
\beq
I = A^\dgr A=I=\left(\sum_a \lambda_a^* \ket{a}\bra{a}\right)\left(\sum_{a'} \lambda_{a'} \ket{a'}\bra{a'}\right) =\sum_a \lambda_a^* \lambda_a \ket{a}\bra{a} =\sum_a \abs{\lambda_a}^2 \ket{a}\bra{a}=\sum_a \ket{a}\bra{a},
\eeq 
where the last equality is the spectral representation of the identity operator $I$ (all its eigenvalues are $1$). Thus the eigenvalues of a unitary operator are all phases: $\lambda_a = e^{i\theta_a}$ where $\theta_a \in \mathbb{R}$.
\end{enumerate}

We now define functions of normal operators. If we have a function $f:\C\rightarrow\C$, then we can extend it to the case of
normal operators by defining
\beq
f(A)\equiv \sum_a f(\lambda_a) \ket{a}\bra{a} .
\label{eq:f-def}
\eeq
Note that the function operates only on the eigenvalues. 

Let us now prove the equivalence of the two evolution laws.
One direction is straightforward, namely assuming Eq.~\eqref{eq:U} we easily derive Eq.~\eqref{eq:SE}:
\beq
\frac{d}{dt}\left| \psi (t) \right\rangle =\frac{d}{dt}({{e}^{-iHt}}\ket{\psi (0)})=-iH{{e}^{-iHt}}\ket{\psi (0)}=-iHU(t)\left| \psi (0) \right\rangle =-iH\left| \psi (t) \right\rangle 
\eeq
Note that bringing the term involving the Hamiltonian down from the exponent is justified even for operators, as is easily verified using the spectral theorem (since $H$ is normal), or directly by differentiating the Taylor expansion of the matrix exponential (which applies even if $A$ is not normal):
\beq
\frac{d}{dt}({{e}^{At}})=\frac{d}{dt}(I+At+\frac{{{A}^{2}}{{t}^{2}}}{2!}+\frac{{{A}^{3}}{{t}^{3}}}{3!}+...)=A+\frac{{{A}^{2}}}{2!}(2t)+\frac{{{A}^{3}}}{3!}(3{{t}^{2}})+...=A(I+At+
\frac{{{A}^{2}}{{t}^{2}}}{2!}+...)=A{{e}^{At}}
\eeq
Now for the other direction, we start with writing the spectral decomposition of the Hamiltonian as $H=\sum_{a}{{{\lambda }_{a}}\ket{a} }\left\langle  a \right|$, and note also that from the definition \eqref{eq:f-def} we have:
\beq
U(t)={{e}^{-iHt}}\Rightarrow U(t)=\sum_{a}{{{e}^{-i{{\lambda }_{a}}t}}}\ket{a} \left\langle  a \right|
\eeq
Now, since the eigenvectors of $H$ are a basis (again, from the spectral theorem), we can decompose $\ket{\psi(t)}$ in this basis and write $\left| \psi (t) \right\rangle =\sum_{a}{{{\psi }_{a}}(t)\ket{a} }$, so that the left hand side of Eq.~\eqref{eq:SE} becomes
\beq
\frac{d}{dt}\left| \psi (t) \right\rangle =\sum_{a}{\frac{d}{dt}{{\psi }_{a}}(t)\ket{a} }.
\eeq
As for the right hand side of Eq.~\eqref{eq:SE},
\beq
-iH\left| \psi (t) \right\rangle =-i\sum_{a}{{{\lambda }_{a}}\ket{a} }\left\langle  a \right|\psi (t)\rangle=-i\sum_{aa'}{{{\lambda }_{a}}{{\psi }_{a'}}(t)\ket{a} }\underbrace{\left\langle a|a' \right\rangle }_{{{\delta }_{aa'}}}=-i\sum_{a}{{{\lambda }_{a}}{{\psi }_{a}}(t)\ket{a} }.
\eeq
For Eq.~\eqref{eq:SE} to hold, these two need to be equal term by term (from orthonormality of the basis), so that we find
\beq
\frac{d}{dt}{{\psi }_{a}}(t)=-i{{\lambda }_{a}}{{\psi }_{a}}(t)\Rightarrow {{\psi }_{a}}(t)={{e}^{-i{{\lambda }_{a}}t}}{{\psi }_{a}}(0)
\eeq
Plugging this result into $\left| \psi (t) \right\rangle =\sum_{a}{{{\psi }_{a}}(t)\ket{a} }$ and using orthonormality once more we now have:
\beq
\ket{\psi (t)}=\sum_{a}{{{e}^{-i{{\lambda }_{a}}t}}{{\psi }_{a}}(0)\ket{a} }=(\sum_{a}{{{e}^{-i{{\lambda }_{a}}t}}\ket{a} }\left\langle  a \right|)(\sum_{a'}{{{\psi }_{a'}}(0)\left| a' \right\rangle )={{e}^{-iHt}}\ket{\psi (0)}=U(t)\ket{\psi (0)}}.
\eeq
This completes the proof.

\subsection{Postulate 4}
This is the most controversial postulate, but we will not discuss those issues here and simply assume its validity.

This postulate has two parts: measuring states and measuring operators.
\begin{enumerate}
\item \emph{Measuring States:} Quantum measurements are described by a set $\left\{{M}_k\right\}_{k=1}^{N}$ of \emph{measurement operators} satisfying the constraint $\sum_k {M}_k^\dagger{M}_k = {I}$.

Given a state $\ket{\psi}\in\mathcal{H}$, instantaneously after the measurement it becomes,
\beq
\ket{\psi} \mapsto \frac{M_k\ket{\psi}}{\sqrt{p_k}} \equiv \ket{\psi_k},
\label{eq:22}
\eeq
with probability
\beq
p_k = \braket{\psi|{M}_k^\dagger{M}_k|\psi} = \|M_k \ket{\psi}\|^2 \geq 0.
\label{eq:23}
\eeq
The measurement outcome is the index $k$ of the state that resulted. The constraint listed in the postulate has the following origin. Notice that $\sum_k p_k = 1$ must be true since $p_k$ is a probability, which implies $\bra{\psi} \sum_k {M}_k^\dagger{M}_k \ket{\psi} = 1$. Since this is true for arbitrary $\ket{\psi}$  the sum rule follows
\beq
\sum_k {M}_k^\dagger{M}_k = {I}.
\eeq

\item To every physically measurable quantity is associated an \textnormal{observable}, i.e., a Hermitian operator $A$.
$A$ has a spectral decomposition (since it is Hermitian and hence normal),
\beq
A = \sum_{a} \lambda_a \ket{a}\bra{a},
\eeq
with $\lambda_a \in \mathbb{R}$ since $A$ is Hermitian. The $\lambda_a$'s, the eigenvalues, are the outcomes of the measurement (hence need to be real).\footnote{It is interesting to ask why physically measurable quantities should be associated with Hermitian operators.
Intuitively, since physical measurements produce real numbers we want to associate an operator with a real spectrum
as a physically observable quantity. Moreover, we would like states with different eigenvalues (or different results from the
measurement) to be orthogonal. A Hermitian operator satisfies both of these requirements. However, these justifications admittedly leave something to be desired. For more details see \url{https://physics.stackexchange.com/questions/39602/why-do-we-use-hermitian-operators-in-qm}.} 

The set of eigenvectors, $\{\ket{a}\}$ are an orthonormal set. Hence, $\{ P_a \equiv \ket{a}\bra{a} \}$ are projectors (defined below). These are the measurement operators corresponding to the measurement of this observable. Hence, if the system is in state $\ket{\psi}$ before the observable $A$ is measured, according to Eq.~\eqref{eq:23} the probability of outcome $\lambda_a$ is given by $p_a = \braket{\psi | P^\dagger_a P_a | \psi} =  \braket{\psi | P^2_a | \psi} = \braket{\psi | P_a | \psi} = |\bra{\psi}a\rangle|^2$. Moreover, according to Eq.~\eqref{eq:22} the state after the measurement is performed and outcome $\lambda_a$ is observed, becomes $\ket{\psi_a} = \frac{P_a\ket{\psi}}{\sqrt{p_a}} = \frac{\bra{a}\psi\rangle}{|\bra{a}\psi\rangle|}\ket{a} = e^{i\theta}\ket{a}$, where $e^{i\theta}$ is the phase associated with the complex number $\bra{a}\psi\rangle$.

\end{enumerate}

We next consider several important special cases of the \emph{generalized measurements} defined above.

\subsubsection{Projective (von Neumann) measurements}
\label{sec:PvNmeas}
Projective measurements are a special case of generalized measurements, in which the measurement operators, $M_k$ are Hermitian operators called \emph{projectors}. That is, $M_k=P_k$, where $P_k P_l = \delta_{k,l}P_k$ and $P_k^{\dagger}=P_k$. In particular, $P_k^2=P_k$. Using this, we can see that the probability of outcome $k$, $p_k = \braket{\psi | M_k ^{\dagger} M_k | \psi } = \braket{\psi | P_k | \psi }$.

\emph{\underline{Example:}} Let $\ket{\psi} = a\ket{0} + b\ket{1}$ where $\ket{0} = \begin{pmatrix} 1\\ 0 \end{pmatrix}$ , $\ket{1} = \begin{pmatrix} 0\\ 1 \end{pmatrix}$ and $a,b \in \C$. That is, $\ket{\psi} \in \C^2$ and $\{ \ket{0}, \ket{1} \}$ is the standard basis for the space. Such a $\ket{\psi}$ is called a \emph{qubit}.

Now, we define measurement operators, $M_0 = P_0 = \ketb{0}{0}$ and $M_1 = P_1 = \ketb{1}{1}$. We can see that $P_{0,1}^2 = P_{0,1}$ and $P_0 P_1 = 0$. Hence, this is a set of projective measurements. Thus, the probabilities of outcomes are,
\bes
\begin{eqnarray}
&p_0 = \braket{\psi|P_0|\psi}=\braket{\psi|0}\braket{0|\psi} = \abs{a}^2, \\
&p_1 = \braket{\psi|P_1|\psi}=\braket{\psi|1}\braket{1|\psi} = \abs{b}^2 .
\end{eqnarray}
\ees
This shows that the absolute value squared of the amplitudes of a wavefunction when expanding it in an orthonormal basis provide the probabilities of observing the outcomes corresponding to those basis states. This is sometimes called the Born rule in quantum mechanics.

Also, using Postulate 4, we can see that the state transformation in the above measurement would be:
\beq
\label{meastrans}
\ket{\psi} \mapsto
\begin{cases}
\frac{P_0 \ket{\psi}}{\abs{a}} \text{  with probability  } p_0=\abs{a}^2, \\
\frac{P_1 \ket{\psi}}{\abs{b}} \text{  with probability  } p_1=\abs{b}^2.
\end{cases}
\eeq
We can easily see that $P_0 \ket{\psi} = \ketb{0}{0}(a\ket{0} + b\ket{1})=a\braket{0|0}\ket{0}+b\braket{0|1}\ket{0}=a\ket{0}$, where in the last step we have used that $\{\ket{0},\ket{1}\}$ is an orthonormal set. Similarly, $P_1 \ket{\psi}=b\ket{1}$.
Hence the transformation \eqref{meastrans} becomes,
\beq
\label{phases}
\ket{\psi} \mapsto
\begin{cases}
\frac{a}{\abs{a}}\ket{0} = e^{\ii \theta_a}\ket{0}\text{  with probability  } p_0=\abs{a}^2, \\
\frac{b}{\abs{b}}\ket{1} = e^{\ii \theta_b}\ket{1}\text{  with probability  } p_1=\abs{b}^2.
\end{cases},
\eeq
where $\theta_a$ and $\theta_b$ are the arguments of the complex numbers $a$ and $b$ respectively. We can see that the phase factors $e^{\ii \theta_{a,b}}$ are completely arbitrary since they have no influence on the probabilities of the measurement outcomes.

Thus, quantum states are equivalent up to a global phase factor. Because of this, quantum states are \emph{\textit{rays}} in a Hilbert space, since they are not just one vector but an equivalence class of vectors: equivalent up to a global phase.


\subsubsection{Examples of measuring observables}

To  illustrate the concept of observables, let's consider a few examples.

\begin{itemize}

\item \emph{Pauli matrices:} The Pauli matrices and their properties are reviewed in Appendix~\ref{app:Pauli}. 
Consider, e.g., measuring the Pauli matrix $Z $ on a qubit $\ket{\psi} = a \ket{0} + b\ket{1}$. Writing the spectral decomposition of $Z$,
\beq
Z = (+1) \ketb{0}{0} + (-1)\ketb{1}{1},
\eeq
we can see that the set of measurement operators for this observable is  $\{ P_0 = \ketb{0}{0}, P_1 = \ketb{1}{1} \}$, with
outcomes as the corresponding eigenvalues $\{ \lambda_0 = +1, \lambda_1 =-1 \}$. Thus, we obtain $\lambda_0$ with
probability $p_0 = \braket{\psi|P_0|\psi}=\abs{a}^2$, and obtain $\lambda_1$ with probability $p_1 = \braket{\psi|P_0|\psi}=\abs{a}^2$.

Hence, the action of measuring $Z$ takes $\ket{\psi}$ to $e^{\ii \theta_a}\ket{0}$ if the outcome was $\lambda_0$; and
to $e^{\ii \theta_b} \ket{1}$  if the outcome was $\lambda_b$.
\item \emph{Measuring Energy}: When measuring energy, the observable we use is simply the Hamiltonian $H$ of  the system. Since $H$ is Hermitian it has a spectral decomposition,
We can write $H$ as,
\beq
H = \sum_{a} E_a \ket{a}\bra{a} ,
\eeq
where $E_a$ denotes the energy and $\ket{a}$ the associated energy eigenstate. So, in our experiment, we measure an energy of $E_a$ with probability $p_a = \braket{\psi | P_a | \psi}$, where $P_a = \ket{a}\bra{a}$. 
The post-measurement state is $\ket{\psi_a} = \frac{P_a \ket{\psi}}{\sqrt{p_a}} = \frac{\braket{a|\psi}}{\sqrt{p_a}}\ket{a}$, i.e., the new state is an eigenstate of the Hamiltonian, also sometimes called an energy eigenstate.

Consider the following single-qubit Hamiltonian:
\beq
H = \o_x \s_x + \o_z \s_z .
\eeq
What happens when we measure it in the state $\ket{\psi}$? To find out we need $H$'s spectral decomposition, i.e., we need to diagonalize $H$. The eigenvalues are easily found to be $E_\pm = \pm \sqrt{\o_x^2+\o_z^2}$, so that $H$ can be written in diagonal form as
\beq
H = E_-\ketbra{E_-} + E_+\ketbra{E_+} ,
\eeq
where $\ket{E_\pm}$ are the corresponding eigenvectors. When $H$ is measured, we find $E_\pm$ with probability $p_\pm = \<\psi\ketbra{E_\pm}\psi\> = |\bra{\psi}{E_\pm}\>|^2$. For example, if the system is prepared in the ground state $\ket{E_-}$ (the state with the lower energy), then $p_- = 1$ but $p_+ = 0$. Or, if the system is prepared in a uniform superposition of the ground state $\ket{E_-}$ and the excited state $\ket{E_+}$, i.e., $\ket{\psi} = 1/\sqrt{2}(\ket{E_-}+\ket{E_+})$, then $p_- =p_+ = 1/2$.

\end{itemize}

\subsubsection{Expectation value of an observable}

Given an observable $A = \sum_{a} \lambda_a \ket{a}\bra{a}$, since we obtain $\lambda_a$ with probability $p_a$, we can
naturally define an expectation value of this observable in the state $\ket{\psi}$ as
\bes
\label{eq:exp-val-pure}
\begin{align}
\langle A \rangle_\psi \equiv \sum_{a} \lambda_a p_a &= \sum_{a} \lambda_a \braket{\psi|P_a|\psi} \\
&=\braket{\psi| \left( \sum_{a} \lambda_a P_a \right) | \psi} \\
&=\braket{\psi| A | \psi} \\
&= \Tr (A \ket{\psi}\bra{\psi}).
\label{eq:39d}
\end{align}
\ees
The last equality can be proved as follows: first create an orthonormal basis for the Hilbert space with $\ket{\psi}$ as
one of the elements of the basis (say by using the Gram-Schmidt procedure \cite{nielsen2010quantum}[p.66]. That is,
\beq
\mathcal{H} = \text{Span} \{ \ket{\psi} = \ket{\phi_0}, \ket{\phi_1}, \ket{\phi_2}, ... \ket{\phi_{d-1}} \},
\eeq
where $d$ is the dimension of the Hilbert space and all vectors in the basis are orthonormal.
Now,
\bes
\begin{align}
\Tr( A \ket{\psi}\bra{\psi} ) &= \sum_{i=0}^{d-1} \braket{\phi_i |A| \psi }\braket{\psi|\phi_i} \\
\label{41b}
&=\sum_{i=0}^{d-1} \braket{\phi_i |A| \phi_0 }\braket{\phi_0|\phi_i} \\
&=\braket{\phi_0 |A| \phi_0} \\
&=\braket{\psi|A|\psi},
\end{align}
\ees
where in Eq.~\eqref{41b} we used the fact that $\ket{\psi}=\ket{\phi_0}$ is the first element in our basis. 

Likewise we can define the standard deviation as follows:
\beq
\Delta A=\sqrt{{{\left\langle {{(A-{{\left\langle A \right\rangle }_{\psi }})}^{2}} \right\rangle }_{\psi }}}
\eeq

\subsubsection{Heisenberg Uncertainty Principle}
In quantum mechanics, an important property of pairs of non-commuting observables is that they cannot be measured with
arbitrary precision simultaneously. What this means is that if we measure the Hermitian operators $C$ and $D$ on $\ket{\psi}$
then they obey the Heisenberg Uncertainty Principle:
\begin{align}
\lp \Delta C\rp \lp \Delta D\rp \ge \frac{1}{2}\abs{\braket{\psi|[C,D]|\psi}}.
\end{align}
Let us now show this. Define the Hermitian operators $A=C-\< C \>$, and $B=D-\< D \>$. We can always decompose the expectation value as a complex number:
\begin{align}
\braket{\psi|AB|\psi} = x + \ii y, \; x,y\in\R.
\end{align}
Note that $2AB=\{ A,B \} + [A,B]$, where $\{ A,B \}$ is Hermitian (purely real eigenvalues) and $[A,B]$ is anti-Hermitian (purely imaginary eigenvalues). Therefore 
\bes
\begin{align}
\braket{\psi|\{ A,B \}|\psi}^* &= \braket{\psi|\{ A,B \}^\dagger |\psi} = \braket{\psi|\{ A,B \}|\psi}\\
\braket{\psi|[ A,B ]|\psi}^* &= \braket{\psi|[ A,B ]^\dagger |\psi} = -\braket{\psi|[ A,B ]|\psi} ,
\end{align}
\ees
which means that $\braket{\psi|\{ A,B \}|\psi}$ is real while $\braket{\psi|[ A,B ]|\psi}$ is purely imaginary.
Hence the following must be true:
\bes
\begin{align}
\braket{\psi|\{ A,B \}|\psi} &= 2x\\
\braket{\psi|[ A,B ]|\psi} &= 2\ii y.
\end{align}
\ees
Therefore, by using the Cauchy-Schwarz inequality (see Appendix~\ref{app:A}) in the third line:
\bes
\begin{align}
4x^2 + 4y^2 &= \abs{\braket{\psi | \lbrace A,B\rbrace | \psi}}^2 + \abs{\braket{\psi | [A,B] | \psi}}^2\\
&= 4\abs{\braket{\psi |AB| \psi}}^2 \\
&\le 4 \braket{\psi| A^\dagger A |\psi}\braket{\psi| B^\dagger B |\psi}\\
&= 4 \braket{\psi| A^2 |\psi}\braket{\psi| B^2 |\psi}\\
&= 4 \braket{\psi| \lp C -\< C \> \rp^2 |\psi}\braket{\psi| \lp D -\< D \> \rp^2 |\psi} .
\end{align}
\ees
Obviously $\abs{\braket{\psi | \lbrace A,B\rbrace | \psi}}^2 \geq 0$, and hence:
\bes
\begin{align}
4 \braket{\psi| \lp C -\< C \> \rp^2 |\psi}\braket{\psi| \lp D -\< D \> \rp^2 |\psi} &\ge \abs{\braket{\psi | [A,B] | \psi}}^2\\
&= \abs{\braket{\psi | [C,D] | \psi}}^2
\end{align}
\ees
from which the Heisenberg uncertainty principle now follows.

\subsubsection{Positive Operator Valued Measures (POVMs)}
Given a generalized measurement with measurement operators $\{M_k\}$ we define the elements of a POVM via 
\beq
E_k=M_k^\dagger M_k.
\eeq
The normalization condition then becomes $\sum_k E_k = I$. Clearly, $E_k^\dagger=M_k^\dagger\left(M_k^\dagger\right)^\dagger=E_k$, so that the POVM elements are Hermitian. It is easy to show that the
$E_k$s are moreover \emph{positive} operators, i.e., that $\braket{\psi|E_k|\psi}\geq 0$
is true for every $\ket{\psi}$  (for more details on positive operators see Appendix~\ref{app:pos-ops}). Indeed, $\braket{\psi|E_k|\psi}=\braket{\psi|M_k^\dagger M_k|\psi}=\norm{M_k\ket{\psi}}^2\ge 0$. Note that the probability of outcome $k$ is simply $p_k = \braket{\psi|E_k|\psi}$. How about the effect of the measurement $E_k$ on a state $\ket{\psi}$? Suppose we are given an arbitrary set of positive operators
$\left\{E_k\right\}$ that satisfy $\sum_k  E_k = I$. How do we extend the measurement postulate in this case? The answer to
this is to use the so called \emph{polar decomposition} of the operator. It is true that for any operator $A$,
we can always find a unitary $U$ and a positive operator $P$ such that $A=UP$ with $P=\sqrt{A^\dagger A}$. If the operator
$A$ is invertible, then such a decomposition is unique and $U=AP^{-1}$. In our case, we could use the given POVMs and
define for every $k$
\begin{align}
M_k = U_k \sqrt{E_k},
\end{align}
where the $U_k$'s are just arbitrary unitaries. In other words, since only the $E_k$ are specified (by assumption), we are free to choose the $U_k$'s, and for every such choice we get a different set of $M_k$'s. Hence, we can now write the state after
the measurement as
\begin{align}
\ket{\psi}\mapsto\ket{\psi_k} = \frac{U_k\sqrt{E_k}\ket{\psi} }{\sqrt{p_k}}\text{ with probability }p_k=\braket{\psi | E_k|\psi}.
\end{align}
Since $U_k$'s are arbitrary (again, since only the $E_k$'s were specified), this unitary freedom is a generalization of the freedom to leave the overall phase of a state unspecified.

To see why POVMs are relevant let's consider the following example. Suppose we have to play a game. Alice always gives us
one of these two states:
\bes
\begin{align}
&\ket{\psi_1} = \ket{0}, \text{  or } \\
&\ket{\psi_2} = \frac{1}{\sqrt{2}} (\ket{0} + \ket{1}) \equiv \ket{+}.
\end{align}
\ees
We do not, \textit{a priori}, know which state has been handed to us. We do know that it is one of these two states. Our task is to
perform measurements and decide which of the two states we were given.  Also, we are not allowed to make an error in
identification, i.e., if we provide an answer, it has to be right. However, we are allowed to proclaim ignorance if we don't know
the answer. Moreover, we must treat both states equally, i.e., we cannot preferentially identify only one of the states and proclaim ignorance on the other. What is our strategy? Since these are non-orthogonal states, there is no way distinguish these two states with
complete certainty \cite{nielsen2010quantum}[Box 2.3, p.87]. Suppose we try to do it with projective measurements. Let's
take the measurement set to be $\{M_k\}=\{P_0,P_1\}$, where $P_i = \ket{i}\bra{i}, i=0,1$. Suppose that the outcome is the index $0$. This can happen in either of two ways: Alice prepared $\ket{\psi_1}$ or she prepared $\ket{\psi_2}$. The probability that the
outcome is $0$ given that she prepared $\ket{\psi_1}$ is $p(0|\psi_1)=\braket{\psi_1 |P_0|\psi_1}=\braket{0|0}\braket{0|0}=1$. And, the probability that the
outcome is $0$ given that she prepared $\ket{\psi_2}$ is $p(0|\psi_2)=\braket{\psi_2 |P_0|\psi_2}=1/2$. This means that if the outcome is $0$ then we cannot know for sure which of the two states Alice prepared, since both occur with non-vanishing probability. Therefore in this case we must proclaim ignorance. However, note that it also follows that  $p(1|\psi_1)=0$ and $p(1|\psi_2)=1/2$  which means that given outcome $1$ we know with certainty that Alice prepared $\ket{\psi_2}$. Thus we cannot satisfy the condition of treating the two states equally. As is easily checked, this will always be the case with a projective measurement. 

Now, let's try with an intelligent choice of POVMs. Define,
\bes
\begin{align}
&E_1 = \alpha \ketb{1}{1}, \\
&E_2 = \alpha \ket{-}\bra{-},\\
&E_3 = I - E_1 -E_2.
\end{align}
\ees
where, $\ket{-} \equiv \frac{1}{\sqrt{2}} (\ket{0} - \ket{1})$ and $\alpha>0$ is an arbitrary parameter which we can optimize
later, and which must be chosen so that $E_3 > 0$. If we do so then this clearly is a set of POVMs, since $\sum_{k} E_k = I$, and for suitable $\alpha$, all the $E_k$'s are positive. Let us now compute
the probabilities of the $3$ possible outcomes,
\bes
\begin{align}
&p(1|\psi_1)=\braket{\psi_1|E_1|\psi_1}=0, \\
&p(1|\psi_2)=\braket{\psi_2|E_1|\psi_2}=\frac{\alpha}{2}, \\
&p(2|\psi_1)=\braket{\psi_1|E_2|\psi_1}=\frac{\alpha}{2}, \\
&p(2|\psi_2)=\braket{\psi_2|E_2|\psi_2}=0, \\
&p(3|\psi_1)=\braket{\psi_1|E_3|\psi_1}=1-\frac{\alpha}{2},  \\
&p(3|\psi_2)=\braket{\psi_2|E_3|\psi_2}=1-\frac{\alpha}{2}.
\end{align}
\ees
So, if we get outcome $1$, we can say with certainty that the given state was $\ket{\psi_2}$ and if we get outcome $2$, we can
say with certainty that the given state was $\ket{\psi_1}$. With outcome $3$, we have no information about the state, i.e., we must proclaim ignorance. But in
two of the three outcomes we have been able to obtain an answer with certainty. So, in order to make the probability of
outcome $3$ as small as possible (since it yields no information), we have to increase $\alpha$ as much as possible while
keeping $E_3$ positive. If we write out $E_3$ as a matrix and place the constraint
of the eigenvalues of this matrix being positive, it easy to show that the maximal allowed value of $\alpha$ is $\frac{\sqrt{2}}{1+\sqrt{2}}$.

\section{Density Operators}
\label{sec:densityops}
We will motivate the study of density operators by considering ensembles of pure quantum states. Suppose, instead of having a single state vector, we only know that our system is in state $\ket{\psi_1}$ with probability $q_1$, or in state $\ket{\psi_2}$ with probability $q_2$, and so on. In other words, we have an \emph{pure state ensemble} $\{ q_i, \ket{\psi_i} \}_{i=1}^{N}$ describing our system.

Now, we would like to understand what happens when we make measurements on this quantum system. Suppose the state were $\ket{\psi_i}$ and we measure with a set of measurement operators  $\{ M_k\}$. The measurement transformation would be:
\beq
\ket{\psi_i} \mapsto \frac{M_k \ket{\psi_i} }{\sqrt{{p}_{k|i}}} = \ket{\psi_i^k}
\label{eq:44}
\eeq
with probability ${p}_{k|i}=\braket{\psi_i| M_k^{\dagger} M_k | \psi_i}$, which is the probability of outcome $k$, given a state $\ket{\psi_i}$.

Now, consider that we did not know what the state was but only that it came from the ensemble $\{ q_i, \ket{\psi_i} \}_{i=1}^{N}$. Then the probability of obtaining the outcome $k$ as a result of the measurement on the ensemble is:
\bes
\begin{align}
{p}_k &= \sum_{i} {p}_{k|i} q_i \\
&= \sum_{i} q_i \braket{\psi_i |M_k^{\dagger} M_k | \psi_i} \\
&= \Tr \left[ M_k^{\dagger} M_k \left( \sum_{i} q_i \ket{\psi_i}\bra{\psi_i} \right) \right]
\label{densmat}.
\end{align}
\ees

In Eq.~\eqref{densmat} we define the operator within the parentheses as,
\beq
\rho=\sum_{i} q_i \ket{\psi_i}\bra{\psi_i}.
\label{eq:rho}
\eeq
This is called the \emph{density matrix} or \emph{density operator} and is a central object in quantum mechanics. The density matrix is completely equivalent to the pure state ensemble $\{ q_i, \ket{\psi_i} \}_{i=1}^{N}$, but it has the advantage of being directly useful for calculations. Indeed,  using the density matrix, Eq. \eqref{densmat} becomes:
\beq
{p}_k = \Tr (E_k \rho),
\label{eq:p_k}
\eeq
where we have defined $E_k \equiv M_k^{\dagger} M_k$ as the element of a POVM.

What about the state that results after measurement result $k$ has been observed? Suppose that outcome $k$ is observed for a known initial state $\rho=\sum_{i} q_i \ket{\psi_i}\bra{\psi_i}$.  If we let $\ket{\psi_i^k} := \frac{M_k \ket{\psi_i}}{\sqrt{p_{k|i}}}$ [as in Eq.~\eqref{eq:44}], then $\{ p_{i|k}, \ket{\psi_i^k} \}_i$ is the resulting ensemble, where $p_{k|i}$ is the probability of outcome $k$ given state $\ket{\psi_i}$. On the other hand, if outcome $k$ was observed, and we {\em don't} know the initial state, then we should sum over all possible states compatible with outcome $k$ (the states $\ket{\psi_i^k}$) with their respective conditional probabilities $p_{i|k}$. Thus, the density operator for result $k$ becomes
 \bes
\begin{eqnarray}
	\r_k & = & \sum_i p_{i|k} \ketbra{\psi_i^k} \\
	& = & \sum_i p_{i|k} \frac{M_k \ketbra{\psi_i} M^{\dag}_k}{p_{k|i}} \\
	& = & \sum_i \frac{q_i}{p_k} M_k \ketbra{\psi_i} M^{\dag}_k \\
	& = & \frac{M_k \r M^{\dag}_k}{p_k} \\
	& = & \frac{M_k \r M^{\dag}_k}{{\Tr \ls \r M^{\dag}_k M_k \rs}},
\end{eqnarray}
\ees
where in the third line we used Bayes' rule ${\rm Pr}(i \& k) = {\rm Pr}(i|k) {\rm Pr}(k)={\rm Pr}(k|i) {\rm Pr}(i)$, where ${\rm Pr}(i)=q_i$ is the {\it a priori} probability of having state $\ket{\psi_i}$, and $p_k$ is the probability of measurement outcome $k$, as in Eq.~\eqref{eq:p_k}. Thus, comparing the pure state case to the generalized density operator case we observe
\beq
\ket{\psi} \mapsto \frac{M_k \ket{\psi}}{\sqrt{p_k}} \hspace{1in} \r \mapsto \frac{M_k \r M^{\dag}_k}{p_k}.
\eeq


\subsection{Properties of the density operator}
\begin{itemize}
\item \emph{Unit trace:} The trace operation is reviewed in Appendix~\ref{app:trace}. The density operator $\r$ has $\Tr \ls \r \rs = 1$. This property can easily be seen by the following calculation:
\beq\Tr \ls \r \rs = \sum_i q_i \Tr \ls \ketbra{\psi_i} \rs = \sum_i q_i = 1.
\eeq
\item \emph{Hermiticity:} The density operator $\r$ is Hermitian. The following line demonstrates this \beq \r^{\dag} = \sum_i q^*_i \lp \ketbra{\psi_i} \rp^{\dag} = \sum_i q_i \ketbra{\psi_i} = \r \eeq where we've used that probabilities $q_i$ are real and projectors formed from outer-products are Hermitian.
\item \emph{Positive definite:} For all vectors $\ket{\nu} \in \mc{H}$, the density operator $\r$ has $\braket{\nu | \r | \nu} \geq 0$:
\beq \braket{\nu | \r | \nu} = \sum_i q_i \left| \braket{\psi_i | v} \right|^2 \geq 0 ,
\eeq
since the $q_i$ are all non-negative by virtue of being probabilities. 
But since $\Tr \rho =1$ it clearly must have at least one eigenvalue that is non-zero. Therefore $\r$ must be positive, not just positive semi-definite (positive operators are defined in Appendix~\ref{app:pos-ops}).
\end{itemize}
Note that positivity implies Hermiticity, since an operator is Hermitian iff it has only real eigenvalues. Therefore we don't actually need to separately stipulate Hermiticity.
Also note that the density operator deserves to be called an operator: it acts as a transformation between two copies of the Hilbert space, i.e., $\r :\mc{H}\mapsto\mc{H}$. 

We define the space of positive, trace-one linear operators acting on $\mc{H}$ as $\mc{D}(\mc{H})$. Thus 
\beq
\r \in \mc{D}(\mc{H}).
\label{eq:D(H)}
\eeq

\subsection{Dynamics of the density operator}
Recall the two equivalent descriptions of dynamics of the pure quantum state
\beq \ket{\psi(t)} = U(t) \ket{\psi(0)} \hspace{0.5in} \Leftrightarrow \hspace{0.5in} \ket{\dot{\psi}} = -iH \ket{\psi} \eeq
where $U(t)$ and $H$ are related by $U(t) = e^{-iHt}$. Consider one of the pure states forming the ensemble $\{ q_i, \ket{\psi_i} \}_i$. This state will evolve as \beq
\ket{\psi_i(t)} = U(t) \ket{\psi_i(0)}
\eeq and the time-evolution of the density operator associated to the ensemble is
\bes
\begin{eqnarray}
	\r(t) & = & \sum_i q_i \ketbra{\psi_i(t)} \\
	& = & \sum_i q_i U(t) \ketbra{\psi_i(0)} U^{\dag}(t) \\
	& = & U(t) \r(0) U^\dag(t).
\end{eqnarray}
\ees
The Schr\"{o}dinger equation for the density operator takes a slightly different form however and we can derive it by taking the time-derivative of the first line above,
\bes
\begin{eqnarray}
	\frac{\partial}{\partial t} \r(t) & = & \frac{\partial}{\partial t} \sum_i q_i\ketbra{\psi_i(t)} \\
	\dot{\r}(t) & = & \sum_i q_i \ls \lp \frac{\partial}{\partial t}\ket{\psi_i(t)}\rp \bra{\psi_i(t)} + \ket{\psi_i(t)}\lp \frac{\partial}{\partial t} \bra{\psi_i(t)} \rp \rs.
\end{eqnarray}
\ees
At this point we invoke the Schr\"{o}dinger equation for pure states while making note that after Hermitian conjugation of the Schr\"{o}dinger equation we obtain $\bra{\dot{\psi}_j} = i\bra{\psi_j}H$. Thus:
\bes
\begin{eqnarray}
	\dot{\r}(t) & = & \sum_i q_i \lp -iH\ketbra{\psi_i(t)} + i\ketbra{\psi_i(t)}H \rp \\
	& = &  -i \ls H \lp \sum_i q_i \ketbra{\psi_i(t)} \rp - \lp \sum_i q_i \ketbra{\psi_i(t)} \rp H\rs \\
	& = &  -i \lp H \r - \r H\rp \\
	& = & -i \ls H, \r \rs
\end{eqnarray}
\ees
where $\ls \cdot , \cdot \rs$ represents the commutator of the two operators.

\subsection{Restatement of the postulates of quantum mechanics}
We can now summarize the four postulates in terms of the density operator.
\begin{enumerate}
\item The state space is the Hilbert-Schmidt space of linear operators $\r$ such that $\Tr \ls \r \rs =1$ and $\r > 0$. The inner product in the  Hilbert-Schmidt space is defined as $\Tr [A^\dagger B]$ for any two operators $A$ and $B$ acting on the same Hilbert space. This inner product defines a length in the usual way, i.e., $\|\rho\| = \sqrt{\<\rho,\rho\>}=\sqrt{P}$. The quantity 
\beq
P \equiv \Tr[\rho^2]
\label{eq:purity}
\eeq
is called the ``purity" of the state $\rho$. Thus a density matrix can have ``length" $\leq 1$. A state is called ``pure" if $P=1$ and ``mixed" if $P<1$.
\item State spaces are composed via the tensor product $\ox$.
\item Density operators evolve as $\dot{\r} = -i \ls H, \r \rs$ under a Hamiltonian $H$, or equivalently as $\r(t) = U(t) \r(0) U^\dagger(t)$ where the unitary $U (t)=e^{-itH}$.
\item A general measurement operation defined by elements $\{ M_k \}$ results with probability $p_k = \Tr \ls \r M^{\dag}_k M_k \rs$ in the state transformation $\r \mapsto \frac{M_k \r M^{\dag}_k}{{p_k}}$.
\end{enumerate}

Expectation values are now computable in terms of the $\r$ as well. Consider an observable $A$ measured for a system in the pure state ensemble $\{q_i,\ket{\psi_i}\}$. Previously, in Eq.~\eqref{eq:exp-val-pure}, we showed that the expectation value was 
$\langle A \rangle_\psi = \braket{\psi| A | \psi} = \Tr (A \ket{\psi}\bra{\psi})$.
We need to modify this by assigning each pure state $\ket{\psi_i}$ in the ensemble its weight $q_i$. Thus the new expression for the expectation value is:
\begin{align}
\label{eq:exp-val-rho}
\langle A \rangle_\r = \sum_i q_i \braket{\psi_i| A | \psi_i} = \sum_i q_i \Tr (A \ket{\psi_i}\bra{\psi_i}) = \Tr(A \sum_i q_i \ket{\psi_i}\bra{\psi_i}) = \Tr(A \r) = \Tr(\r A) .
\end{align}
Likewise, the standard deviation becomes:
\beq
\Delta A = \sqrt{\ave{(A-\ave{A}_\r)^2}_\r} ,
\eeq
and it is not hard to prove the associated uncertainty relation:
\beq
\Delta A \Delta B \geq \frac{1}{2}|\ave{[A,B]}_\r .
\eeq

To sum up, here is a comparison of the postulates for pure states and density operators:

\begin{center}
\begin{tabular}{|c|c|c|c|} \hline
& & \hspace{2.2cm}{\bf Pure States}\hspace{2.2cm} & \hspace{1.2cm}{\bf General States}\hspace{1.2cm} \\ \hline
\multirow{3}{*}{Postulate 1} & {\bf State space} & Hilbert space $\mathscr{H}$ & Trace-class operator space $\mathscr{D}$ \\ \cline{2-4}
& {\bf State} & ket vector $\ket{\psi}\in\mathscr{H}$ s.t. $\braket{\psi|\psi}=1$ & density operator $\rho$ s.t. $\left\{
\begin{array}{cl}
\Tr\left[ \rho \right]=1 \\
\rho > 0
\end{array}
\right.$ \\ \cline{2-4}
& {\bf Inner product} & $f(\ket{\mu},\ket{\omega})\equiv\braket{\mu|\omega}$, $\forall\ket{\mu},\ket{\omega}\in\mathscr{H}$ & $f(A,B)\equiv \Tr\left[ A^\dag B \right]$, $\forall A,B\in\mathscr{D}$\\&&& Hilbert-Schmidt inner product \\ \hline
Postulate 2 & {\bf Expansion} & tensor product $\otimes$ & tensor product $\otimes$ \\ \hline
\multirow{2}{*}{Postulate 3} & {\bf Dynamics} & Schr\"odinger equation: & Liouville-von Neumann equation: \\
& w/ Hamiltonian $H$ & $\frac{d\ket{\psi(t)}}{dt}=-iH\ket{\psi(t)}$ & $\frac{d\rho(t)}{dt}=-i\left[ H,\rho(t) \right]$ \\ \hline
\multirow{2}{*}{Postulate 4} & {\bf Measurement} & outcome $k\in K$ w.p. $p_k=\bra{\psi}M_k^\dag M_k\ket{\psi}$ & outcome $k\in K$ w.p. $p_k=\Tr\left[ M_k \rho M_k^\dag \right]$ \\
& w/ meas. ops. $\left\{ M_k \right\}_{k\in K}$ & $\ket{\psi}\mapsto \frac{M_k\ket{\psi}}{\sqrt{p_k}}$ & $\rho\mapsto \frac{M_k \rho M_k^\dag}{p_k}$ \\ \hline
\end{tabular}
\end{center}

\subsection{More on pure and mixed quantum states}
\label{sec:mixedpure}
We defined ``pure" and ``mixed" states above according to the value of the purity $P=\Tr[\r^2]$ being $1$ or $<1$.
Prior to introducing the density operator formalism, we had considered quantum states as vectors in the Hilbert space. This formalism is equivalent to pure state ensembles of the type $\{1,\ket{\psi}\}$, i.e., having only a single element. It is not hard to see that such special ensembles are ``pure" quantum states. The associated density operator is $\r = \ketbra{\psi}$. It is useful to think of pure states as ensembles with only one member and probability $1$.

Any state that is not not pure is by definition mixed. This means that they are described by ensembles of the form $\{p_i,\ket{\psi}_i\}$ where for all $i$, $0<p_i<1$. The density operator associated with a mixed ensemble is the \emph{mixture} of the pure states with their associated weights [as seen in Eq.~\eqref{eq:rho}]. 

Note that a pure state is a projector: $( \ketbra{\psi})( \ketbra{\psi})= \ketbra{\psi}$. Therefore, if a state $\rho$ is pure then $\r^2 = \r$. The converse is also true: $\r^2 = \r$ implies that $\rho$ is pure. It is easy to check that these conditions are equivalent to the definition in terms of purity $P$. 

We can also define a \emph{mixed state ensemble}, i.e., a collection of mixed states $\r_k$ with associated probabilities $p_k$, as
\beq
\{p_k,\r_k\} \Leftrightarrow \r = \sum_k p_k \r_k \ .
\label{eq:mixedens}
\eeq

\subsection{Unitary equivalence in ensembles}
\label{subsec:equiv}
When are two pure state ensembles equivalent? Consider for example the two ensembles
\bes
\bea
&& \{(3/4,1/4),(\ket{0},\ket{1})\}\\
&& \{(1/2,1/2),(\ket{a},\ket{b})\} ,
\eea
\ees
where
\bes
\bea
\ket{a} &=& \sqrt{3/4}\ket{0}+\sqrt{1/4}\ket{1} \\
\ket{b} &=& \sqrt{3/4}\ket{0}-\sqrt{1/4}\ket{1}.
\eea
\ees
On the face of it, the first of these ensembles represents a biased classical coin (``heads", or $0$, with probability $3/4$, tails, or $1$ with probability $1/4$), whereas the second is quantum in the sense that each state is a superposition state. But are they really different?
It is straightforward to check that in fact the two density matrices corresponding to these two ensembles are equal. This being the case, there is no measurement that can distinguish them, and that means we must consider them to be the same.

\begin{thm} \label{thm:relatingensembles} Two pure state ensembles with the same number of elements\footnote{If necessary pad the smaller set with zeroes to make it equal in length to the larger set.} $\{q_i, \ket{\psi_i}\}_i$ and $\{r_j, \ket{\phi_j} \}_j$ correspond to the same density operator if and only if there exists a unitary $U$ with entries $[U]_{ij}$ such that \beq \sqrt{q_i} \ket{\psi_i} = \sum_j [U]_{ij} \sqrt{r_j} \ket{\phi_j} \label{eqn:unitaryequivform} \eeq 
\end{thm}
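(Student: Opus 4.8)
The plan is to pass to unnormalized vectors and reduce both ensembles to the spectral decomposition of their common density operator. First I would set $\ket{\tilde\psi_i} \equiv \sqrt{q_i}\ket{\psi_i}$ and $\ket{\tilde\phi_j} \equiv \sqrt{r_j}\ket{\phi_j}$, so that the two ensembles produce $\rho = \sum_i \ketbra{\tilde\psi_i}$ and $\sigma = \sum_j \ketbra{\tilde\phi_j}$, and the claimed relation~\eqref{eqn:unitaryequivform} becomes simply $\ket{\tilde\psi_i} = \sum_j [U]_{ij}\ket{\tilde\phi_j}$. The easy direction ($\Leftarrow$) is then a one-line computation: assuming the unitary relation, expand $\rho = \sum_i \ketbra{\tilde\psi_i} = \sum_{j,k}\big(\sum_i [U]^*_{ik}[U]_{ij}\big)\ketb{\tilde\phi_j}{\tilde\phi_k}$ and use $\sum_i [U]^*_{ik}[U]_{ij} = (U^\dagger U)_{kj} = \delta_{kj}$ to collapse the double sum to $\sum_j \ketbra{\tilde\phi_j} = \sigma$, so the ensembles coincide.

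For the hard direction ($\Rightarrow$), the key is a normal-form lemma: every unnormalized decomposition of $\rho$ is unitarily related to its eigendecomposition. Using the Spectral Theorem I would write $\rho = \sum_{k=1}^r \lambda_k \ketbra{e_k}$ with $\lambda_k > 0$ and orthonormal $\{\ket{e_k}\}$, and define the canonical vectors $\ket{\tilde e_k} \equiv \sqrt{\lambda_k}\ket{e_k}$. The first sub-step is to show each $\ket{\tilde\psi_i}$ lies in the support of $\rho$: if $\ket{\chi}$ is orthogonal to every $\ket{e_k}$ then $0 = \braket{\chi|\rho|\chi} = \sum_i \abs{\braket{\chi|\tilde\psi_i}}^2$ forces $\braket{\chi|\tilde\psi_i}=0$ for all $i$. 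Hence $\ket{\tilde\psi_i} = \sum_k [V]_{ik}\ket{\tilde e_k}$ for suitable coefficients; substituting into $\rho = \sum_i\ketbra{\tilde\psi_i}$ and matching against $\sum_k \lambda_k\ketbra{e_k}$ (using linear independence of the $\ketb{e_k}{e_l}$) yields $V^\dagger V = I_r$, so $V$ is an isometry. Padding the $r$ canonical vectors with $N-r$ zero vectors (the footnoted padding convention) and completing $V$ to a square matrix by Gram-Schmidt turns it into a genuine unitary while preserving $\ket{\tilde\psi_i} = \sum_k [V]_{ik}\ket{\tilde e_k}$. The identical argument applied to the second ensemble gives a unitary $W$ with $\ket{\tilde\phi_j} = \sum_k [W]_{jk}\ket{\tilde e_k}$.

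Finally I would compose. Inverting the $W$-relation via unitarity, $\sum_j [W]^*_{jk}\ket{\tilde\phi_j} = \sum_{j,l}[W]^*_{jk}[W]_{jl}\ket{\tilde e_l} = \ket{\tilde e_k}$, and substituting into the $V$-relation gives $\ket{\tilde\psi_i} = \sum_j \big(\sum_k [V]_{ik}[W]^*_{jk}\big)\ket{\tilde\phi_j} = \sum_j [VW^\dagger]_{ij}\ket{\tilde\phi_j}$. Setting $U \equiv VW^\dagger$, a product of unitaries and hence unitary, recovers Eq.~\eqref{eqn:unitaryequivform}. The main obstacle is the normal-form lemma, and within it specifically the transition from the isometry $V$ to a genuine unitary by zero-padding — this is the only place where the hypothesis of equal-length ensembles is actually consumed, and it is the step most prone to sloppy index bookkeeping; the support argument, the coefficient-matching, the composition, and the easy direction are all routine once it is in place.
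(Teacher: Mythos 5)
Your proof is correct. The ``if'' direction is essentially the same computation as the paper's: expand $\sum_i \ketbra{\tilde\psi_i}$ in terms of the $\ket{\tilde\phi_j}$ and use $U^\dagger U = I$ to collapse the double sum. The paper stops there and defers the converse to Nielsen \& Chuang; you supply it via the standard normal-form lemma (every unnormalized decomposition of $\rho$ is an isometry image of the eigendecomposition $\{\sqrt{\lambda_k}\ket{e_k}\}$), which is indeed the argument the cited reference uses. Your bookkeeping is sound: the support argument correctly places each $\ket{\tilde\psi_i}$ in $\mathrm{range}(\rho)$, the coefficient matching gives $V^\dagger V = I_r$ only because the $\lambda_k$ are strictly positive and the $\ketb{e_k}{e_l}$ are linearly independent (both of which you invoke), and the zero-padding plus Gram--Schmidt completion is exactly where the equal-length hypothesis is consumed, as you note. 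One small point worth making explicit if you write this up: the relation $N \geq r$ needed for the completion is automatic, since $\rho = \sum_{i=1}^N \ketbra{\tilde\psi_i}$ forces the $N$ vectors $\ket{\tilde\psi_i}$ to span the rank-$r$ support of $\rho$.
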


\begin{proof} We show explicitly the ``if" direction of the proof. The complete proof is found in \cite{nielsen2010quantum}[p.104]. Consider the following mixture,
\bes
\begin{eqnarray}
	\sum_i q_i \ketbra{\psi_i} & = & \sum_i \lp \sqrt{q_i} \ket{\psi_i} \rp \lp \bra{\psi_i} \sqrt{q_i} \rp \\
	& = & \sum_i \lp \sum_j [U]_{ij} \sqrt{r_j} \ket{\phi_j} \rp \lp \sum_k \bra{\phi_k} \sqrt{r_k} [U^{\dag}]_{ik} \rp \\
	& = & \sum_{j,k} \sqrt{r_j r_k} \lp \sum_i [U]_{ij}[U^{\dag}]_{ik} \rp  \ketb{\phi_j}{\phi_k} \\
	& = & \sum_{j,k} \sqrt{r_j r_k} \lp \d_{jk} \rp  \ketb{\phi_j}{\phi_k} \label{eq:unitarity} \\
	& = & \sum_{j} r_j \ketbra{\phi_j},
\end{eqnarray}
\ees
where in Eq.~\eqref{eq:unitarity} we used the unitarity of $U$. Thus the two ensembles represent the same density operator.
\end{proof}

\subsection{Visualizing the density matrix of a qubit: the Bloch sphere}

A qubit is a quantum state $\ket{\psi_i}$ in a two-dimensional Hilbert space $\mc{H} = \C^2 = \mathrm{span} \{ \ket{0}, \ket{1} \}$ where $\ket{0}$ and $\ket{1}$ form an orthonormal basis for $\mc{H}$. The density operator for any state in this space is thus of the form $\sum_i q_i \ketbra{\psi_i}$ and can hence be represented by a $2 \times 2$ complex matrix of the form
\beq 
\r = \ls \begin{array}{cc} a&b\\c&d \end{array} \rs. 
\eeq
However, applying the properties of density operators can reduce this to an expression of only two variables. First, the unit trace reduces to the condition $d=1-a$ and Hermiticity reduces to the condition that $c=b^*$ and that $a$ be real. Thus, the density matrix is completely parametrized by the complex number $b$ and the real number $a$ and takes the form
\beq 
\r = \ls \begin{array}{cc} a&b\\b^{*}&1-a \end{array} \rs. 
\eeq
Positivity is the statement that the eigenvalues $\lambda_\pm$ are non-negative:
\beq
 \left| \r - \lambda I \right| = 0 \Rightarrow \lambda^2 - (\Tr\r)\lambda + |\r | = 0 ,
 \eeq
 i.e., using $\Tr\r=1$:
 \beq
 \lambda_\pm = \frac{1}{2} (1\pm \sqrt{1-4|\r |}) \geq 0 .
 \eeq
This parametrization requires only three parameters and we can thus embed it naturally in three dimensions. Before we proceed to do this we will decompose the density operator one more time but in a more useful basis.

Recall the Pauli matrices $\s_x, \s_y, \s_z, \s_0$. Any qubit density matrix can represented by 
\beq 
\r = \frac{1}{2} \left( I +\sum_i v_i\s_i \right) = \frac{1}{2} \left( I + \vec{v} \cdot \vec{\s} \right)\ ,
\label{eq:Bloch-vec}
\eeq 
where $\vec{v} = (v_x, v_y, v_z)$ and $\vec{\s} = (\s_x, \s_y, \s_z)$. In terms of the elements of $\vec{v}$, $\r$ appears as \beq \r = \frac{1}{2} \ls \begin{array}{cc} 1+v_z & v_x - i v_y \\ v_x + i v_y & 1 - v_z \end{array} \rs. \eeq To relate this to our previous analysis simply let $b = 1/2 (v_x - i v_y)$ and $a = 1/2 (1+ v_z)$. We call $\vec{v}$ the \emph{Bloch vector}. The $2 \times 2$ matrix we have constructed using the Bloch vector is not, however, necessarily a valid quantum state. Unit trace is guaranteed by the construction, and positivity can now be made explicit by noting that
\beq
|\r | = \frac{1}{4} \left(1-v_z^2-(v_x^2+v_y^2)\right) = \frac{1}{4} \left(1-\|\vec{v}\|^2\right),
\eeq
so that 
\beq
 \lambda_\pm = \frac{1}{2} \lp 1 \pm \left\| \vec{v} \right\| \rp .
\eeq
The two solutions are $\left\| \vec{v} \right\| \leq 1$ and $\left\| \vec{v} \right\| \geq -1$, which is trivially satisfied. Thus if we require positivity, the relevant constraint is 
\beq
\boxed{\left\| \vec{v} \right\| \leq 1} .
\eeq 
Let us also relate the magnitude of the Bloch vector to the purity of the quantum state. Recall that a pure quantum state is a projector and thus $\r^2=\r$ for pure states. If we calculate the density operator $\r^2$ we find
\beq
	\r^2 = \frac{1}{4} \lp I+ \vec{v} \cdot \vec{\s} \rp \lp I+ \vec{v} \cdot \vec{\s} \rp = \frac{1}{4} \lp I+ 2 \vec{v} \cdot \vec{\s}  + \lp \vec{v} \cdot \vec{\s} \rp^2 \rp.
	\label{eq:76}
\eeq
The term $(\vec{v} \cdot \vec{\s})^2$ becomes \beq \sum_{k,l \in \{x,y,z\}} v_kv_l\s_k\s_l. 
\eeq 
Recall Eq.~\eqref{eq:Pauli-mult}. 
Taking the trace and noting that the Pauli matrices are traceless
only the $\d_{kl}$ term remains. Thus $\Tr(\vec{v} \cdot \vec{\s})^2 = \left\| \vec{v} \right\|^2 \Tr I$, with $\Tr I=2$, and Eq.~\eqref{eq:76} yields:
\begin{eqnarray}
	\Tr \r^2  =  \frac{1}{2} \left(1+\|\vec{v} \|^2 \right).
\end{eqnarray}
From this form it is clear that any unit Bloch vector will make $\Tr\r^2=1$, i.e., a pure state, and Bloch vectors of length less than $1$ yield mixed states.

Having gathered the requisite intuition for the geometry at hand, we call the set of all valid Bloch vectors $\vec{v}$ the \emph{Bloch sphere}, also known as the \emph{Poincar\'{e} sphere} in optics.

\begin{figure}[t]
	\begin{center}
	\label{fig:blochsphere}
	\includegraphics[width=0.3\linewidth]{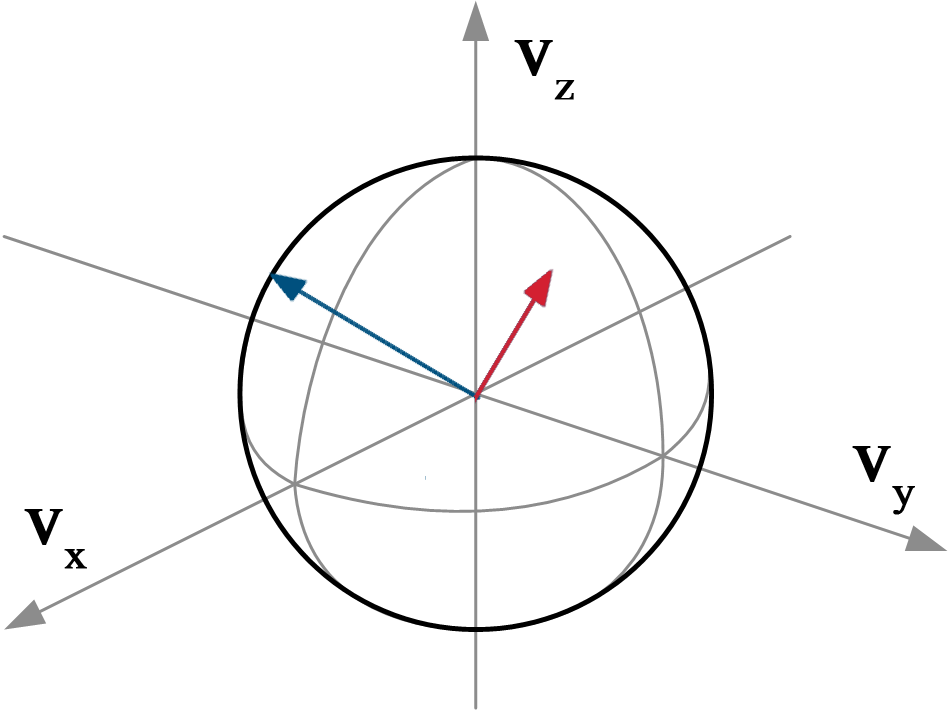}
	\caption{The Bloch sphere is a geometric representation of the collection of all Bloch vectors $\vec{v}$ which describe valid qubit density operators. Thus, the sphere is of radius $1$, its surface represents all pure states, and its interior represents all mixed states. In this diagram the blue vector lies on the surface of the sphere indicating a pure state, whereas the red vector lies in its interior indicating a mixed state.}
	\end{center}
\end{figure}
Since the Bloch sphere can describe all qubit states and can be embedded in three dimensions it is a useful tool for illustrating various common qubit states.
\begin{itemize}
\item {$Z$ poles ($\vec{v} = (0,0,\pm 1)$)}:  The density matrix takes the form
\begin{eqnarray*}
	\r & = & \frac{I \pm \s_z}{2} \\
	& = & \frac{1 \pm 1}{2} \ketbra{0} + \frac{1 \mp 1}{2} \ketbra{1}
\end{eqnarray*}
which yields $\ketbra{0}$ for $v_z=1$ and $\ketbra{1}$ for $v_z=-1$.
\item {$X$ poles ($\vec{v} = (\pm1,0,0)$)}: The density matrix takes the form
\begin{eqnarray*}
	\r & = & \frac{I \pm \s_x}{2} \\
	& = & \frac{1}{2}\lp \ketbra{0} + \ketbra{1} \pm (\ketb{0}{1} +\ketb{1}{0}) \rp \\
	& = & \frac{1}{2} \lp \ket{0} \pm \ket{1} \rp \lp \bra{0} \pm \bra{1} \rp
\end{eqnarray*}
which yields $\ketbra{+}$ for $v_x=1$ and $\ketbra{-}$ for $v_x=-1$.
\item {$Y$ poles ($\vec{v} = (0,\pm1,0)$)}: The density matrix takes the form
\begin{eqnarray*}
	\r & = & \frac{I \pm \s_y}{2} \\
	& = & \frac{1}{2}\lp \ketbra{0} + \ketbra{1} \pm (-i\ketb{0}{1} + i\ketb{1}{0}) \rp \\
	& = & \frac{1}{2} \lp \ket{0} \pm i\ket{1} \rp \lp \bra{0} \pm (-i)\bra{1} \rp
\end{eqnarray*}
which yields $(\ket{0} + i \ket{1})/\sqrt{2}$ for $v_y=1$ and $(\ket{0} - i \ket{1})/\sqrt{2}$ for $v_y=-1$.
\item {Center ($\vec{v} = (0,0,0)$)}: The density matrix takes the form $\r = I/2$, the maximally mixed state.
\end{itemize}

Since the dimensionality of this geometric representation goes as $d^{2}-1$ for a $d$-level system (the density matrix becomes a $d\times d$ matrix, and the trace constraints removes one matrix element), the Bloch sphere is typically only used to represent two-level systems.
As we shall see later on, the Bloch sphere plays an important visualization role in understanding the dynamics of open quantum systems.


\section{Composite Systems}
\subsection{Combining a system and a bath}
Now that we have discussed in detail the dynamics of a single system, let us consider more complex systems. Consider a two component system, where we have a subsystem of our interest,  $A$ (often we'll just call it ``system"), and the other subsystem is the bath, $B$. Together, the system and the bath comprise the lab, or even the entire universe. We can think of $A$ as a quantum computer, or a molecule, or any other system we're interested in studying. We shall assume that the total system evolves according to the Schr\"odinger equation and that it is described by a density matrix $\r(t)$. Further, let the subsystem Hilbert spaces be
\begin{eqnarray}
\mathcal{H}_A &=& {\rm span}\{\ket{i}_A\} \\
\mcHB &=& {\rm span}\{\ket{\mu}_B\}
\end{eqnarray}
Here, $i$ goes from $0$ to $d_A-1$, the dimension of the Hilbert space of $A$, and $\mu$ goes from $0$ to $d_B-1$, the dimension of the Hilbert space of $B$.
Usually, the dimension of the bath, $d_B \rightarrow \infty $, while $d_A$ is finite. By the second postulate, the Hilbert space of the two system combined is the tensor product of the individual spaces:
\bes
\begin{eqnarray}
\mathcal{H} &=& \mathcal{H}_A \otimes \mcHB \\
				&=& {\rm span}\{\ket{i}_A \otimes \ket{\mu}_B\}
\end{eqnarray}
\ees
Let us figure out the structure of a density matrix in this combined Hilbert space. We can define a pure state ensemble $\{\ket{\Psi_a},q_a\}$ for a set of pure states $\ket{\Psi_a}\in\mc{H}$. Each of these states can be expanded in the basis above, i.e.,
\beq
\ket{\Psi_{a}} = \sum_{i,\mu} c_{a;i\mu}\ket{i}_A \otimes \ket{\mu}_B .
\eeq
Thus, the associated density matrix is:
\beq
\r = \sum_a q_a \ketbra{\Psi_a} = \sum_a q_a (\sum_{i,\mu} c_{a;i\mu}\ket{i}_A \otimes \ket{\mu}_B)(\sum_{j,\nu} c^*_{a;j\nu}\bra{j}_A \otimes \bra{\nu}_B) .
\eeq
Therefore any density matrix in the combined Hilbert space can be written down as
\beq
\r = \sum_{ij\mu\nu}\lambda_{ij\mu\nu} \ket{i}_A\!\bra{j} \otimes \ket{\mu}_B\!\bra{\nu} ,
\eeq
where $\lambda_{ij\mu\nu} = \sum_a q_a c_{a;i\mu}c^*_{a;j\nu}$.

Note that if $\lambda_{ij\mu\nu} = \lambda^A_{ij}\lambda^B_{\mu\nu}$ then $\r = \r_A\otimes \r_B$, where $\r_A = \sum_{i\mu}\lambda^A_{ij} \ket{i}_A\!\bra{j}$ and $\r_B = \sum_{\mu\nu}\lambda^B_{\mu\nu} \ket{\mu}_B\!\bra{\nu}$. In this case $\r$ is called a ``factorized" state. Such states exhibit no correlations at all between the $A$ and $B$ subsystems. Clearly, however, 
this is a special case and in general, $\r $ cannot be factored in this manner. When it cannot, the subsystems are correlated. These correlations can be quantum (due to entanglement), classical, or both.

We are primarily interested in the system $A$. We thus need to find a way to remove the bath $B$ from our description. To do, we now define a new operation called partial trace, which effectively averages out of the components of $B$ from the combined density matrix. The resultant density matrix then describes only $A$.

\subsection{Partial Trace}

\subsubsection{Definition}
The partial trace is a linear operator that maps from the total Hilbert space to the Hilbert space of $A$, i.e., $\mathcal{H} \mapsto \mathcal{H}_A$, defined as follows.
Consider an operator $O = M_A \otimes N_B$ such that $O$ acts on $\mathcal{H}=\mathcal{H}_A \otimes \mcHB$. Then
\bes
\begin{align} 
\Tr_B (M_A \otimes N_B) &\equiv M_A \; \Tr(N_B) \\
&= M_A \; \sum_\mu \braket{\mu |N_B| \mu} \\
&= \sum_\mu \braket{\mu | [ M_A \otimes N_B ] | \mu }
\label{eq:OpPartialTrace}
\end{align}
\ees
It is understood in the last line that the basis vectors $\{\ket{\mu}\}$, which span the space $\mcHB$, act only on the second Hilbert space. In other words, the expression $\braket{\mu | [ M_A \otimes N_B ] | \mu }$ is a \emph{partial matrix element}, where the matrix element is taken only over the second factor, and the result is an operator acting on $\mathcal{H}_A$. Thus, if $O = \sum_{ij}\ell_{ij} M_A^i \otimes N_B^j$, then by linearity:
\bes
\begin{align} 
\Tr_B [O] &= \sum_{ij} \ell_{ij} \Tr_B (M_A^i \otimes N_B^j) \\
&= \sum_{ij} \ell_{ij} M_A^i \sum_\mu \braket{\mu |N_B^j| \mu} \\
&=\sum_\mu  \sum_{ij} \ell_{ij} \braket{\mu | [ M_A^i \otimes N_B^j ] | \mu }\\
& = \sum_\mu \braket{\mu | O | \mu }
\label{eq:OpPartialTrace2}
\end{align}
\ees

For example, when applied to a summand in the expression for $\r$:
\beq
\Tr_B [ \ket{i}_A\!\bra{j} \otimes \ket{\mu}_B\!\bra{\nu}] \equiv \ket{i}_A\!\bra{j} \braket{\nu|\mu}_B .
\eeq
By linearity,
\bes
\begin{align}
\Tr_B [ \sum_{i j \mu \nu} \lambda_{i j \mu \nu} \ket{i}_A\!\bra{j} \otimes \ket{\mu}_B\!\bra{\nu}] &= \sum_{i j\mu\nu} \lambda_{i j \mu\nu} \ket{i}_A\!\bra{j} \braket{\nu|\mu}_B \\
& = \sum_{i j\mu} \lambda_{i j \mu\mu} \ket{i}_A\!\bra{j}  = \sum_{i j} \bar{\lambda}_{i j} \ket{i}_A\!\bra{j} ,
\end{align}
\ees
where in the second line we assumed that $\{\ket{\mu}_B\}$ forms an orthonormal basis, and we defined $\bar{\lambda}_{i j} = \sum_{\mu} \lambda_{i j \mu\mu} $. This shows that taking the partial trace leads to a form that looks like a density matrix for the $A$ subsystem. Of course, we'll have to verify that it satisfies the properties of a density matrix (unit trace and positivity). Positivity is more challenging, but unit trace is obvious if we assume (as we should) that $\Tr\r=1$. For, it is then easy to check that this implies $\sum_{i\mu}\lambda_{ii\mu\mu}=1$. On the other hand, if we are to interpret $\Tr_B[\r]$ as a valid density matrix then $\Tr(\Tr_B[\r]) = \sum_i\bar{\lambda}_{ii}$ should be $1$, which it is, since it equals $\sum_{i\mu}\bar{\lambda}_{ii\mu\mu}$.

\subsubsection{State of a quantum subsystem}

Crucially, we now claim that the density matrix of the subsystem $A$ is given by taking the partial trace of the combined density matrix with respect to $B$.
 \beq 
 \r_A = \Tr_B \left[\r\right]. 
 \eeq
This is called the reduced density matrix.

To justify this intuitively, we consider the cases which lie on the two extreme ends of combination of bath and system, viz. the simplest case of a separable density operator, and the case where system and bath are maximally entangled.

\begin{enumerate}

\item \emph{Case 1}: Consider a case where the states of the bath and system are completely separate, and hence form a tensor product. In such a case, we expect that the density operator of $A$ obtained by partial trace should be the same as the component of $A$ contributed in the tensor product. And indeed, clearly, if $\r = \r_A \otimes \r_B$ where both terms in the product are properly normalized states, then 
\beq \Tr_B \left[\r\right] = \r_A  \Tr_B\left[\r_B\right] = \r_A \eeq
\item \emph{Case 2}: Consider two qubits that are maximally entangled, that is
\beq
\ket\psi_{AB} = \frac{1}{\sqrt{2}}(\ket{0}_A \ket{0}_B + \ket{1}_A \ket{1}_B) .
\label{eq:MaxEntangled}
\eeq
This means that the state state $\ket\psi_{AB}$ contains no separate information about $A$ or $B$'s state. The reason is that if we measure, say, $B$ using the measurement operators $\{M_0=\ketbra{0},M_1=\ketbra{1}\}$, then we find the outcomes $0$ and $1$ with equal probability $1/2$, and at the same time the state of $A$ becomes either $\ket{0}$ or $\ket{1}$, respectively. It is easy to check that this random outcome remains true for any other choice of measurement operators. This means we gain no knowledge at all about $A$ or $B$ since the measurement outcome is perfectly random. In terms of the partial trace we find:
\bes
\begin{eqnarray}
\r_A &=& \Tr_B \left[\r\right] = \Tr_B[\ket{\psi}_{AB}\bra{\psi}] \\
  &=&  \frac{1}{2} \Tr_B [\ketbsub{0}{0}{A} \otimes \ketbsub{0}{0}{B} + \ketbsub{0}{1}{A} \otimes \ketbsub{0}{1}{B} + \ketbsub{1}{0}{A} \otimes \ketbsub{1}{0}{B} + \ketbsub{1}{1}{A} \otimes \ketbsub{1}{1}{B}] \notag \\
  \\
  &=& \frac{1}{2}[ \ketb{0}{0} \times 1 + \ketb{0}{1} \times 0 + \ketb{1}{0} \times 0 + \ketb{1}{1} \times 1] \\
  &=& \frac{1}{2} [\ketbsub{0}{0}{A} + \ketbsub{1}{1}{A}] = I_A/2
\end{eqnarray}
\ees
Therefore, the state of $A$ is an equal probabilistic mixture of the $\ket{0}$ and $\ket{1}$ states, as  expected. \\
\end{enumerate}

Next, we provide a formal justification. 

\subsubsection{Formal justification of using the partial trace to define a subsystem state}

Consider a composite system with the Hilbert space $\mathcal{H} = \mathcal{H}_A \otimes \mcHB$. If we had an observable $M_A$ on subsystem $A$, then, the expectation value of that operator would be given by
\beq
\braket{M_A}_{\rho_A} = \Tr[ \rho_A M_A] ,
\eeq
where we used Eq.~\eqref{eq:exp-val-rho}.

However, in the case of this composite system, this measurement is actually of the observable $\widetilde{M} = M_A \otimes I_B$ on the entire system $\r $ in $\mathcal{H}_A \otimes \mcHB$, where we do nothing (the identity operation) to $B$. Thus,

\beq
\braket{\widetilde{M}} = \Tr [\rho \; \widetilde{M} ]
\eeq

The key idea is that these two operations should correspond to the same physical observation and they should produce the same number. For the theory to be consistent, we demand that

\beq \<M_A\> \equiv \<\widetilde{M}\> ,\eeq
i.e.,
\beq \Tr[M_A\r_A] = \Tr[\widetilde{M}\r] \eeq

It can be shown that this condition is satisfied iff we define $\r_A \equiv \Tr_B(\r)$. We shall prove the theorem in one direction, that is, if $\r_A =  \Tr_B(\r)$, then $\<M\> = \<\widetilde{M}\>$. \\

\textit{\emph{Proof}}. 
Let $\mathcal{H} = \mathcal{H}_A \otimes \mcHB  = {\rm span}\{\ket{i}_A \otimes \ket{\mu}_B\}$. Then

\bes
\begin{align}
	\<M_A\> &= \sum_i \left. _A\langle i\right|\r_A M_A | i\rangle_A \\
		   &= \sum_i \left. _A\langle i\right|\Tr_B[ \r] M_A| i\rangle_A \label{88b} \\
		   &= \sum_i \left. _A\langle i\right| \sum_\mu \left. _B\langle \mu\right| \r | \mu\rangle_B  M_A | i\rangle_A \label{88c} 
\end{align}
\ees
In going from Eq.~\eqref{88b} to \eqref{88c}, we used the expression for the partial trace over operators as given in Eq.~\eqref{eq:OpPartialTrace2}. But note that $\rho$ is an operator acting on the composite system, not just on $A$, since $\left. _B\langle \mu\right| \r | \mu\rangle_B $ is a partial matrix element. If we wish to likewise consider $ M_A$ as an operator acting on the composite system, then we should extend it to $M_A \otimes I_B$. Also, the correct order for the product $\ket{\mu}_B\ket{i}_A$, including the tensor product symbol explicitly, is: $\ket{i}_A\otimes\ket{\mu}_B$. Thus:
\bes
\begin{align}
	\<M_A\> &= \sum_{i,\mu} \bra{i} \otimes \bra{\mu} [\r (M_A \otimes I_B)] \ket{i}\otimes\ket{\mu} \\
&= \Tr [\r (M_A \otimes I_B)] \\ 
&=\<\widetilde{M}\> \ ,
\end{align}
\ees
which shows the desired equality  $\<M_A\> = \<\widetilde{M}\>$.

\section{Open System Dynamics}

In this section we shall find the dynamical evolution of an open quantum system.

\subsection{Kraus Operator Representation}
\label{sec:Kraus}

Consider a system $S$ and bath $B$, such that they have a joint unitary evolution given by $U(t) = e^{-\ii H t}$. 
The initial joint state is $\r(0)$. Then, by Schr\"odinger's equation,
\beq \r(t) = U(t) \r (0) U^\dagger (t) \eeq
As the density operator of the bath is positive and normalized, it has a spectral decomposition in an orthonormal basis with non-negative eigenvalues. Hence
\beq 
\r_B(0) = \sum_\nu \lambda_\nu \ketbra{\nu} 
\eeq
where $\lambda_\nu$ are the eigenvalues (probabilities) and $\{\ket{\nu}\}$ are the corresponding orthonormal eigenvectors.

The state of the system is then found by performing a partial trace over the bath, i.e., 
\beq
\r_S(t) = \Tr_B [\r(t)].
\eeq
We can perform the partial trace in the orthonormal basis of bath eigenstates, i.e., 
\bes
\begin{align}
\r_S(t) &= \Tr_B [U(t) \r (0) U^\dagger (t)] \\
        &= \sum_\mu \braket{\mu| U(t) \r(0) U^\dagger(t)|\mu}
\end{align}
\ees
Let us now assume that the initial state is completely decoupled, that is 
\beq 
\r (0) = \r_S (0) \otimes \r_B (0).
\label{eq:decoupled}
\eeq 
Then
\bes
\begin{align}
\r_S(t) 	  &= \sum_\mu \braket{\mu|[ U(t) \r_S(0) \otimes  \sum_\nu \lambda_\nu \ketbra{\nu}U^\dagger(t)]           |\mu} \\
		  &= \sum_{\mu \nu} \sqrt{\lambda_\nu} \braket{\mu|U(t)|\nu}_B \r_S(0) \sqrt{\lambda_\nu} \braket{\nu|U^\dagger(t)|\mu}_B \\
        &= \sum_{\mu \nu } K_{\mu \nu}(t) \r_S(0) K_{\mu \nu}^\dagger (t) .
\end{align}
\ees

The system-only operators $\{K_{\mu \nu}\}$ are called the Kraus operators and are given by
\beq
K_{\mu \nu}(t) = \sqrt{\lambda_\nu} \braket{\mu|U(t)|\nu}
\label{eq:K}
\eeq
(note the \emph{partial} matrix element, leaving us with an operator acting on the system), and the equation defining the evolution of the system in terms of Kraus operator is called the Kraus Operator Sum Representation (OSR)
\beq
\boxed{\r_S(t) = \sum_{\mu \nu } K_{\mu \nu}(t) \r_S(0) K_{\mu \nu}^\dagger (t)}
\label{eq:Kraus OSR}
\eeq
This is a pivotal result; as we shall see it includes the Schr\"{o}dinger equation as a special case.

\subsection{Normalization and the special case of a single Kraus operator}
The system state should be normalized at all times, so we demand
\bes
\begin{eqnarray}
\Tr [\r_S(t)] &=& 1 \\
				&=& \Tr [\sum K_{\mu \nu}(t) \r_S(0) K_{\mu \nu}^\dagger (t)] \\
				&=& \sum \Tr [ K_{\mu \nu}(t) \r_S(0) K_{\mu \nu}^\dagger (t)] \\
				&=& \sum \Tr [ K_{\mu \nu}^\dagger (t) K_{\mu \nu}(t)  \r_S(0) ] \\
            &=& \Tr [\sum  K_{\mu \nu}^\dagger (t) K_{\mu \nu}(t)  \r_S(0) ]
\end{eqnarray}
\ees
It is easy to check that the equation is satisfied if $\sum  K_{\mu \nu}^\dagger(t) K_{\mu \nu} (t) = I$. However, this condition is not necessary. Thus the system state is guaranteed to be normalized provided the Kraus operators satisfy the following identity,  
\beq 
\sum_{\mu \nu}  K_{\mu \nu}^\dagger (t)K_{\mu \nu} (t) = I \ .
\label{eq:sum-Kraus}
\eeq
This criterion can be verified for our definition of Kraus operators, given by Eq.~\eqref{eq:K}.
\bes
\begin{eqnarray}
\sum_{\mu \nu} K_{\mu \nu}^\dagger K_{\mu \nu} &=& \sum_{\mu \nu} \lambda_\nu \braket{\mu|U(t)|\nu}\braket{\nu|U^\dagger(t)|\mu} \\
&=& \sum_\nu \lambda_\nu \braket{\nu|U^\dagger(t) \left( \sum_\mu \ketbra{\mu}\right)U(t)|\nu} \\
&=& \sum_\nu \lambda_\nu \braket{\nu|\nu} \\
&=& \sum{\lambda_\nu} = 1
\end{eqnarray}
\ees
Thus, such a set of Kraus operators preserves normalization.

Note that when there is just a single Kraus operator, the normalization condition~\eqref{eq:sum-Kraus} forces it to be unitary, which is just the case of closed system evolution! We can see more explicitly how this comes about, as follows.

\subsection{The Schr\"{o}dinger equation as a special case}
Assume that $U = U_S\otimes U_B$. In this special case the Kraus operators become $K_{\mu \nu} =  U_S \sqrt{\lambda_\nu} \braket{\mu|U_B|\nu} \equiv c_{\mu\nu} U_S$. It's easy to see that the sum rule normalization condition implies $\sum_{\mu\nu} c^*_{\mu\nu}c_{\mu\nu} = 1$, since now $\sum_{\mu \nu} K_{\mu \nu}^\dagger K_{\mu \nu}  = \sum_{\mu \nu} c_{\mu \nu}^* c_{\mu \nu} U_S^\dagger U_S = I$. Thus:
\beq
\r_S(t) = \sum_{\mu \nu } c_{\mu \nu} U_S(t) \r_S(0) c^*_{\mu \nu} U_S^\dagger (t) = U_S(t) \r_S(0) U_S^\dagger (t)\ ,
\eeq
which is unitary, Schr\"{od}inger-like dynamics.
Hence, the Kraus operator sum representation is more general than the Schr\"odinger equation, because it contains the latter as a special case.


\section{Complete Positivity and Quantum Maps}

We have seen [Eq.~\eqref{eq:Kraus OSR}] that the evolution of the state $\r_S$ of an open quantum system can be expressed as unitary evolution of the composite system+bath, followed by a partial trace, which leads to the Kraus operator sum representation (Kraus OSR):
\bea
\r_S (t)=\Tr_B [U(t)(\r_S \otimes \r_B)U^\dagger(t)] =\sum_{\a} K_{\a} (t) \r_S (0) K^\dagger_{\a} (t)\ ,
\eea
where we have collected the earlier $\m\n$ indices into a single index: $\a = (\m\n)$. From now on let us drop the $S$ subscript since we'll be focusing on the system alone. We'll reintroduce it as necessary.

\subsection{Non-selective measurements}
\label{sec:non-selective}
Let us observe that the OSR represents more than dynamics. It can also capture measurements. Specifically, consider measurement operators $\{ M_k \}$  with $\sum_k M^{\dag}_k M_k = I$. Recall that a state subjected to this measurement maps to
\beq 
\rho \mapsto \rho_k = \frac{M_k \rho M_k^{\dag}}{\Tr \ls M_k \rho M_k^{\dag}\rs}
\eeq
with probability $p_k = \Tr \ls M_k \rho M_k^{\dag}\rs$. Consider the case where we perform this measurement but do not learn the outcome $k$. What happens to $\rho$ after this measurement? In this case
\beq 
\rho \longmapsto  \langle \rho \rangle = \sum_k p_k \rho_k = \sum_k M_k \rho M^{\dag}_k  
\eeq
which we recognize as a \emph{non-selective measurement}. This last form is in the Kraus operator-sum representation with the Kraus operators $M_k$. Thus, we can encapsulate the non-selective measurement postulate in the operator-sum formalism. 

Since both dynamics and measurements are captured by the OSR, and there are no other quantum processes according to our postulates, this suggests that the OSR is truly fundamental. It thus deserves further scrutiny.

\subsection{The OSR as a map}
It is useful to think of the OSR as a \emph{map} (or synonymously a \emph{process} or \emph{channel}) $\Ph$ from the initial to the final system state, i.e., 
\beq
\r(t) = \Phi[\r(0)] \ \qquad \leftrightarrow \qquad \Ph:\r (0)\mapsto \r (t)\ ,
\eeq
where $\Phi[X] \equiv  \sum_{\a } K_{\a}X K_{\a}^\dagger $. Note that $\Ph$ is an operator acting on operators, sometimes called a superoperator. While we started with vectors $\ket{v}$ in a Hilbert space $\mc{H}_S$, and moved the density operators $\r : \mc{H}_S \mapsto \mc{H}_S$ belonging to the space of positive trace-class operators $\mc{D}(\mc{H}_S)$, the map $\Ph : \mc{D}\mapsto\mc{D}$ belongs to $\mc{D}[\mc{D}(\mc{H}_S)]$, as we shall see shortly. In terms of dimensions, if $\dim(\mc{H}_S) =d$, then $\dim(\mc{D}(\mc{H}_S))=d^2$, and $\dim(\mc{D}[\mc{D}(\mc{H}_S)]) = d^4$, reflecting the fact that vectors are of dimension $d\times 1$, density matrices of dimension $d\times d$, and quantum maps of dimension $d^2\times d^2$.

It will prove to be profitable to adopt an even more abstract point of view, and seek to determine the key properties that any such map possesses. We can easily identify three properties by inspection:

\begin{enumerate}

\item Trace Preserving:
\bea
\Tr[\Ph(\r)]=\sum_\a\Tr(K_\a \r K_\a^\dagger)
=\sum_\a \Tr( K_\a^\dgr K_\a \r)
=\Tr(\sum_\a K_\a^\dagger K_\a \r)
=\Tr(\r)\ ,
\eea
where we used the fact that $\sum_\a K_\a^\dagger K_\a=I$. Thus the map $\Ph$ is trace-preserving.

\item Linear:

By direct substitution we find:
\beq
\Ph(a\r_1+b\r_2) = \sum_\a\Tr(K_\a a\r_1 K_\a^\dagger) + \sum_\a\Tr(K_\a b\r_2 K_\a^\dagger) = a\sum_\a\Tr(K_\a \r_1 K_\a^\dagger) + b \sum_\a\Tr(K_\a \r_2 K_\a^\dagger) = a\Ph(\r_1)+b\Ph(\r_2)
\eeq
for any scalars $a$ and $b$. Thus the map $\Ph$ is linear.

\item[3a.] Positivity:

This property means that $\Ph$ maps positive operators to positive operators. Assume the operator $A>0$, i.e., it has only non-negative eigenvalues, not all zero. Note that any density matrix $\r$ must be positive, and we can write $A = \sum_i \lambda_i \ketbra{i}$ where all $\lambda_i \geq 0$ (the spectral decomposition of $A$).

In order to demonstrate that $\Ph(A)>0$ it is sufficient show that $\expect{\n}{\Ph(A)}{\n}\geq 0$ for all $\ket{\n}\in\mc{H}_S$, since this means in particular that the eigenvalues of $\Ph(A)$ are all non-negative. Let $\ket{w_a}=K^\dgr_\a\ket{\n}$. Then:
\beq
\expect{\n}{\Ph(A)}{\n} = \sum_\a \expect{\n}{K_\a A K^\dgr_\a}{\n}=\sum_\a \expect{w_a}{A}{w_a}\ = \sum_{ai} \lambda_i |\bk{w_a}{i}|^2\ .
\label{eq:137}
\eeq
On the right hand side it is clear that each term in the sum is positive. Therefore $\Ph(A)>0$, and $\Ph$ itself is a positive map.

\end{enumerate}

The Kraus OSR satisfies these three properties, but does every map that satisfy the same properties have a Kraus OSR? The answer is negative. It turns out that we must modify and strengthen the positivity property into ``complete positivity".

\subsection{Complete Positivity}

The map $\Phi$ is a \emph{completely positive} (CP) map. It maps positive operators to positive operators (is ``positivity preserving"), and moreover, it can be shown that even $\Phi\otimes \mc{I}_R^{(k)}$ is positive for all $k$, where $k$ is the dimension of an ancillary Hilbert space $\mc{H}_R$, and $\mc{I}_R$ denotes the identity (super-)operator on $\mc{H}_R$. Conversely, every CP map can be represented as a Kraus OSR.

More formally, let $\mathcal{B}(\mathcal{H})$ denote the space of linear operators acting on the Hilbert space $\mathcal{H}$, i.e., $X:\mathcal{H}\mapsto \mathcal{H}$ is equivalent to $X\in \mathcal{B}(\mathcal{H})$. 
Let $A\in \mathcal{B} (\mcHS \ox \mcHR)$, where $\mcHS$ denotes the system space and $\mcHR$ is some auxiliary space with dimension $k$. Assume that $A>0$. Denote by $\mathcal{I}_R$ the identity map on $\mathcal{B} (\mcHR)$ [i.e., $\mathcal{I}_R(V)=I V I$ for all $V\in \mathcal{B} (\mcHR)$]. Also, let $\Ph\in \mathcal{B}[\mathcal{B} (\mcHS)]$, i.e., $\Phi:\mathcal{B} (\mcHS)\mapsto \mathcal{B} (\mcHS)$.

\begin{enumerate}

\item[3b.] Complete Positivity

If $(\Ph \ox \mathcal{I}_R)(A)>0$ $\forall k$, then $\Ph$ is called a completely positive (CP) map. If in addition $\Tr[\Ph(X)]=\Tr(X)$ $\forall X\in \mathcal{B} (\mcHS)$ then $\Ph$ is called a completely positive trace preserving (CPTP) map.

\end{enumerate}

Note that when $k=1$, complete positivity reduces to ordinary positivity.

It turns out that conditions $1,2,3$b are necessary and sufficient for the Kraus OSR. That is:
\begin{mytheorem}
A map $\Ph$ has a Kraus operator sum representation [i.e., $\Ph(X) = \sum_{\a} K_{\a} X K^\dagger_{\a}$ with $\sum_{\a} K^\dagger_{\a}K_{\a}=I$] iff it is trace preserving, linear, and completely positive.
\end{mytheorem}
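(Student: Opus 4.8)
The plan is to prove the two implications separately. The forward direction (Kraus OSR $\Rightarrow$ the three properties) is essentially a repackaging of computations already done above, so the work lies in the converse. Trace preservation and linearity were verified in properties $1$ and $2$ for any map of Kraus form, and these carry over verbatim. For complete positivity I would observe that if $\Ph(X) = \sum_\a K_\a X K_\a^\dagger$, then $\Ph \otimes \mathcal{I}_R$ is again of Kraus form, with operators $K_\a \otimes I_R$ acting on $\mcHS \otimes \mcHR$; applying the positivity estimate of Eq.~\eqref{eq:137} to these enlarged operators immediately gives $(\Ph \otimes \mathcal{I}_R)(A) > 0$ for every $A > 0$ and every $k$, which is complete positivity.

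For the substantive ``if'' direction, assume $\Ph$ is linear, trace preserving, and completely positive. I would set $d = \dim(\mcHS)$, take the auxiliary space to have dimension $k = d$ with basis $\{\ket{i}_R\}$ matched to a system basis $\{\ket{i}_S\}$, and introduce the unnormalized maximally entangled vector $\ket{\Omega} = \sum_i \ket{i}_S \otimes \ket{i}_R$. The central object is the Choi matrix $C_\Ph = (\Ph \otimes \mathcal{I}_R)(\ketbra{\Omega}) = \sum_{ij} \Ph(\ketb{i}{j}) \otimes \ketb{i}{j}$. Since $\ketbra{\Omega} > 0$ and $\Ph$ is completely positive at dimension $k = d$, the operator $C_\Ph$ is positive semidefinite, hence admits a decomposition into rank-one positive terms $C_\Ph = \sum_\a \ketbra{v_\a}$ with $\ket{v_\a} \in \mcHS \otimes \mcHR$.

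The key step is to ``reshape'' each $\ket{v_\a}$ into a system operator $K_\a$ through the correspondence $\ket{v_\a} = (K_\a \otimes I_R)\ket{\Omega}$, i.e.\ defining the matrix elements of $K_\a$ to be the product-basis coefficients of $\ket{v_\a}$. Substituting this into the decomposition and using $\ketbra{\Omega} = \sum_{ij} \ketb{i}{j} \otimes \ketb{i}{j}$ yields $C_\Ph = \sum_{ij} \left( \sum_\a K_\a \ketb{i}{j} K_\a^\dagger \right) \otimes \ketb{i}{j}$. Matching against the definition of $C_\Ph$ and invoking the linear independence of the operators $\{\ketb{i}{j}\}$ on the auxiliary factor gives $\Ph(\ketb{i}{j}) = \sum_\a K_\a \ketb{i}{j} K_\a^\dagger$ for all $i,j$; since these span $\mathcal{B}(\mcHS)$ and $\Ph$ is linear, it follows that $\Ph(X) = \sum_\a K_\a X K_\a^\dagger$ for every $X$. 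Finally, rewriting the trace-preservation hypothesis as $\Tr[(\sum_\a K_\a^\dagger K_\a) X] = \Tr(X)$ for all $X$ forces $\sum_\a K_\a^\dagger K_\a = I$, completing the Kraus OSR.

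I expect the main obstacle to be the reshaping correspondence and the verification that it genuinely reconstructs $\Ph$: the conceptual crux is recognizing that passing the maximally entangled state through $\Ph \otimes \mathcal{I}_R$ encodes the entire action of $\Ph$ (the Choi--Jamio\l kowski isomorphism), so that a single positivity check on the $d^2 \times d^2$ Choi matrix is equivalent to full complete positivity, and that its positive decomposition directly supplies the $K_\a$. Some care is needed to confirm that the index $\a$ need range over at most $\mathrm{rank}(C_\Ph) \le d^2$ values and that choosing $k = d$ is sufficient, so that complete positivity has to be invoked only at this single dimension rather than for all $k$.
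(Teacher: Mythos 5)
Your proposal is correct, and it actually does more than the paper: the paper proves only the forward direction (that a Kraus OSR is trace preserving, linear, and completely positive, via the substitution $\ket{w_\a}=(K_\a^\dgr\ox I_R)\ket{\n}$ exactly as you describe) and then explicitly defers the converse, stating only that it ``requires a tool known as the Choi decomposition'' with a citation. You supply that missing converse in full: forming the Choi matrix $C_\Ph=(\Ph\ox\mathcal{I}_R)(\ketbra{\Omega})$ with $\ket{\Omega}=\sum_i\ket{i}_S\ox\ket{i}_R$, using complete positivity at the single dimension $k=d$ to get $C_\Ph\geq 0$, decomposing it into at most $d^2$ rank-one terms, and reshaping each $\ket{v_\a}=(K_\a\ox I_R)\ket{\Omega}$ into a Kraus operator; the reconstruction $\Ph(\ketb{i}{j})=\sum_\a K_\a\ketb{i}{j}K_\a^\dgr$ then follows by matching coefficients of the linearly independent $\ketb{i}{j}$ on the auxiliary factor, and trace preservation pins down $\sum_\a K_\a^\dgr K_\a=I$ since $\Tr[(\sum_\a K_\a^\dgr K_\a-I)X]=0$ for all $X$. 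This is the standard Choi--Jamio{\l}kowski argument and it is sound; as a bonus it shows that $d$-positivity already implies complete positivity for a map on a $d$-dimensional system, since the Kraus form you extract is automatically CP for every ancilla dimension by the forward direction. The only cosmetic caveat is that $\ketbra{\Omega}$ and $C_\Ph$ are positive \emph{semi}-definite rather than strictly positive, but this is harmless (and consistent with the paper's own loose use of $A>0$), since the rank-one decomposition only needs as many terms as the rank of $C_\Ph$.
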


Let us prove one direction of this theorem: that the Kraus OSR is completely positive (we already showed trace preservation and linearity). To this end, note that if $\Ph$ has a Kraus OSR then
\beq
(\Ph \ox \mathcal{I}_R)(A)=\sum_\a(K_\a \ox {I}_R)(A)(K^\dgr_\a \ox {I}_R)\ .
\eeq
To prove that $\Ph$ is CP we need to show that $\Ph \ox \mathcal{I}_R$ is positive for all $d=\dim(\mcHR)$. Indeed:
\bea
\bra{\n} (\Ph \ox \mathcal{I}_R)(A) \ket{\n} = \sum_\a \bra{\n} (K_\a \ox {I}_R)A(K^\dgr_\a \ox {I}_R) \ket{\n} =\sum_\a \expect{w_\a}{A}{w_\a}>0 \ ,
\label{eq:139}
\eea
where we defined $\ket{w_\a}=(K_\a^\dgr \ox \mathcal{I}_R) \ket{\n}$, where now $\ket{\n}\in\mc{H}_S\ox\mc{H}_R$, and we drew upon the fact that $A>0$, as in Eq.~\eqref{eq:137}.  

The key feature of the Kraus OSR that makes it a completely positive map is having the same operator ($K_\a$) on both sides. For example, something like $\sum_{\a\b} K_\a X K^\dagger_\b$ is not a CP map, and the proof of positivity as in Eq.~\eqref{eq:139} would clearly not have worked.
 
To prove the reverse direction, that all maps that satisfy conditions $1,2,3$b have a Kraus OSR, is more challenging and requires a tool known as the Choi decomposition \cite{Choi:75}.

From now on we \emph{define} a \emph{quantum map} (or quantum channel) as a map that is (1) trace preserving, (2) linear, (3) completely positive. This definition is motivated by the fact that we know that such maps have a Kraus OSR, and that the Kraus OSR arises both from the physical prescription of unitary evolution followed by partial trace, and from (non-selective) measurements.

\subsection{Positive but not Completely Positive: Transpose}

Do maps that are positive but not completely positive exist?  The answer is affirmative. The canonical example is the elementary transpose map $T$. 

Given a real basis $\{\ket{i}\}$ for $\mcHS$, the action of the transpose on the basis elements is: $T (\ketb{i}{j})=\ketb{j}{i}$ (for a real basis this is the same as Hermitian conjugation). For example, for a $2\times 2$ matrix:
\bea
&T:\lp \begin{array}{cc}
a & b\\
c & d \end{array} \rp = a \ketb{0}{0} + b \ketb{0}{1} + c \ketb{1}{0} + d \ketb{1}{1} \longmapsto a \ketb{0}{0} + b \ketb{1}{0} + c \ketb{0}{1} + d \ketb{1}{1} = 
\lp \begin{array}{cc}
a & c\\
b & d \end{array} \rp \ .
\eea

\begin{myclaim}
$T$ is a positive map.
\end{myclaim}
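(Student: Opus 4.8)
The plan is to verify positivity of $T$ directly via the expectation-value criterion already used for the map $\Ph$ [cf.\ Eq.~\eqref{eq:137}]: since an operator is positive exactly when all its eigenvalues are non-negative, it suffices to show that $\expect{\psi}{T(A)}{\psi}\geq 0$ for every $\ket{\psi}\in\mcHS$ whenever $A>0$. So I would take an arbitrary positive $A$ and an arbitrary $\ket{\psi}$ and compute this expectation value.

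First I would record that in the chosen real orthonormal basis $\{\ket{i}\}$ the transpose acts as $T(A)=A^T$, i.e.\ its matrix elements satisfy $[T(A)]_{ij}=A_{ji}$. Writing $\ket{\psi}=\sum_i \psi_i\ket{i}$, this gives $\expect{\psi}{T(A)}{\psi}=\sum_{ij}\psi_i^* A_{ji}\psi_j$. The key move is then to introduce the complex-conjugate vector $\ket{\psi^*}\equiv \sum_i \psi_i^*\ket{i}$, which is again a legitimate unit vector in $\mcHS$. A relabeling of the summation indices $i\leftrightarrow j$ shows that $\sum_{ij}\psi_i^* A_{ji}\psi_j = \sum_{ij}\psi_i A_{ij}\psi_j^* = \expect{\psi^*}{A}{\psi^*}$, so that $\expect{\psi}{T(A)}{\psi}=\expect{\psi^*}{A}{\psi^*}$.

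With that identity in hand the conclusion is immediate: because $A>0$ and $\ket{\psi^*}\in\mcHS$, the right-hand side is non-negative, and since $\ket{\psi}$ was arbitrary, every eigenvalue of $T(A)$ is non-negative, i.e.\ $T(A)\geq 0$; the fact that $A\neq 0$ forces $T(A)\neq 0$ then upgrades this to $T(A)>0$ in the paper's convention. An even shorter route, which I would mention as a one-line alternative, is spectral: since $\det(A^T-\lambda I)=\det(A-\lambda I)$, the operator $T(A)=A^T$ has exactly the same characteristic polynomial as $A$, and it is Hermitian whenever $A$ is (as $(A^T)^\dagger=A^T$ when $A^\dagger=A$), so $A^T$ inherits the non-negative spectrum of $A$ directly.

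I do not expect a serious obstacle here; the only things requiring care are bookkeeping. One must confirm that $\ket{\psi^*}$ genuinely lies in $\mcHS$ so that positivity of $A$ may be applied to it, and one should note that although $T$ itself is basis-dependent, the conclusion $T(A)>0$ is not—the spectral argument makes this manifest, since the eigenvalues are basis-independent. I would present the expectation-value computation as the main proof, because it mirrors the structure of Eq.~\eqref{eq:137}, and append the spectral remark so the reader sees why complete positivity, rather than mere positivity, is the property that will ultimately fail for $T$.
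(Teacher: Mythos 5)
Your proposal is correct, and your primary argument takes a genuinely different route from the paper's. The paper proves the claim by showing that $X$ and $T(X)$ have the same characteristic polynomial, $\det(X-\lambda I)=\det[T(X)-\lambda I]$, so the entire spectrum (not just its sign) is preserved under transposition — this is exactly the ``one-line alternative'' you append at the end, so your spectral remark \emph{is} the paper's proof. Your main argument, by contrast, is the standard conjugate-vector computation: $\expect{\psi}{T(A)}{\psi}=\sum_{ij}\psi_i^*A_{ji}\psi_j=\expect{\psi^*}{A}{\psi^*}\geq 0$, which verifies positivity directly through expectation values in the same style as Eq.~\eqref{eq:137}. The computation is right (the index relabeling and the Hermiticity of $A^T$ both check out), and this route has the pedagogical advantage of paralleling the proof already given for the Kraus OSR; it also makes transparent exactly where the analogous argument breaks for the \emph{partial} transpose, since $(T\ox\mc{I}_R)$ would require conjugating only the system factor of $\ket{\psi}$, which is not a legitimate vector operation on an entangled state. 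The paper's determinant argument buys slightly more — it shows the eigenvalues themselves are unchanged, which is reused later in the PPT discussion to conclude that $T(\r_i^A)$ is again a valid quantum state — whereas your expectation-value argument only directly yields non-negativity. Either proof is acceptable.
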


\begin{proof}
To prove the claim it suffices to show that the eigenvalues of $X$ and $T(X)$ are the same for any $X\in\mcHS$ [since then in particular their sign is preserved, so if $X>0$ then also $T(X)>0$]. The eigenvalues of $X$ are found by solving for the roots of its characteristic polynomial: $p(X) = \det(X-\lambda I)$. Now, since the determinant is invariant under elementary row and column operations, it is invariant under transposition. Therefore $\det(X-\lambda I) = \det[T(X-\lambda I)] = \det[T(X)-\lambda T(I)] = \det[T(X)-\lambda I]$, i.e., $p(X) = p[T(X)]$.
\end{proof}

Is $T$ also completely positive? To test this we need to check if any extension $T^\text{p} \equiv T \ox \mathcal{I}_R$ of $T$ is also positive. This extension is called the \emph{partial transpose}, and its action on any basis element of $\mc{B}(\mcHS\ox \mcHR)$ is as follows:
\beq
T^\text{p} (\ketb{i}{j}\ox \ketb{\m}{\n})=\ketb{j}{i}\ox \ketb{\m}{\n}\ .
\eeq
To prove that $T$ is \emph{not} a CP map, it suffices to find a counterexample. Indeed, consider the pure state $\r=\ketb{\psi}{\psi}$, where $\ket{\psi}=\frac{1}{\sqrt{2}}(\ket{0}_S \ket{0}_R + \ket{1}_S \ket{1}_R)$.  Then:
\bes
\begin{align}
T^\text{p}(\r)&=\frac{1}{2} (T \ox \mathcal{I})[\ketbra{0_S 0_R}+\ketb{00}{11}+\ketb{11}{00}+\ketbra{11}]\\
&=\frac{1}{2}(\ketbra{00}+\ketb{10}{01}+\ketb{01}{10}+\ketbra{11}) = 
\frac{1}{2} \lp \begin{array}{cccc}
1&0&0&0\\
0&0&1&0\\
0&1&0&0\\
0&0&0&1 \end{array} \rp \ .
\end{align}
\ees
The eigenvalues of this matrix are $(\frac{1}{2},\frac{1}{2},\frac{1}{2},-\frac{1}{2})$, and the existence of a negative eigenvalue shows that $T$ is not a CP map, since $T^\text{p}\not >0$.  Therefore $T$ does not have a Kraus OSR, and is not a quantum map. Note furthermore that this means that a maximally entangled two-qubit state has a negative partial transpose. This observation motivates the study of the partial transpose as a tool for entanglement testing.

\subsection{Partial Transpose as a Test for Separability/Entanglement: the PPT criterion}
  
Consider a separable (thus by definition unentangled) state $\r=\sum_i p_i \r_i^A \ox \r_i^B$, where the $p_i$ are probabilities and the $\r_i^A$ and $\r_i^B$ are quantum states (positive, normalized). The state $\r$ obviously arises from the mixed state ensemble $\{\r_i^A \ox \r_i^B, p_i\}$, in which every element is a tensor product state. Mixing such states classically does not generate any entanglement between $A$ and $B$, hence the definition. 

Applying the partial transpose yields:
\bea
T^\text{p}(\r) = (T \ox \mathcal{I})(\r)=\sum_i p_i T(\r_i^A) \ox \r_i^B
=\sum_i p_i \s_i^A \ox \r_i^B\ .
\eea
Since the transpose does not change the eigenvalues, $\s_i^A\equiv T(\r_i^A)$ is also a valid quantum state, and hence $T^\text{p}(\r)$ is another separable quantum state. In particular, this shows that every separable state has a positive partial transpose (PPT). In other words, separability implies PPT. Conversely, a negative partial transpose (NPT) implies entanglement. This means that PPT is a \emph{necessary} condition for separability. 

Is PPT also sufficient for separability? It turns out that this is the case only for the $2 \times 2$ (two qubits) or $2 \times 3$ (qubit and qutrit) cases. I.e., only in these cases a state is separable iff it has a positive partial transpose (PPT) (conversely, is entangled iff it has a NPT) \cite{Peres:1996aa,Horodecki:1996aa}. Indeed, we saw in the previous subsection that a (maximally) entangled state has NPT.

In higher dimensions the PPT criterion it is still necessary but no longer sufficient. In such higher dimensions there are examples of so-called ``bound-entangled" states that have PPT but are not separable \cite{Horodecki:1999aa}.

As an example of the use of the PPT criterion consider the Werner states:
\beq
\rho = p |\Psi^-\rangle \langle \Psi^-| + (1-p) \frac{I}{4} \, 
\eeq
where $|\Psi^-\rangle$ is a maximally entangled singlet state: $\ket{\Psi^-} = (\ket{01}-\ket{10})/\sqrt{2}$. This represents a family of quantum states parametrized by the probability $p$ of being in the singlet state as opposed to the maximally mixed state.

Its density matrix in the standard basis is
\beq
\rho = \frac{1}{4}\begin{pmatrix}
1-p & 0 & 0 & 0\\
0 & p+1 & -2p & 0\\
0 & -2p & p+1 & 0 \\
0 & 0 & 0 & 1-p\end{pmatrix}\ ,
\eeq
and the partial transpose
\beq
T^\text{p}(\rho) = \frac{1}{4}\begin{pmatrix}
1-p & 0 & 0 & -2p\\
0 & p+1 & 0 & 0\\
0 & 0 & p+1 & 0 \\
-2p & 0 & 0 & 1-p\end{pmatrix} \ .
\eeq
The eigenvalues of this matrix are $(1-3p)/4$ and (threefold) $(1+p)/4$. Therefore, the state is entangled for $p > 1/3$ and separable for $p\leq 1/3$ (for $p=1/3$ all eigenvalues are non-negative so PPT).

\subsection{Kraus OSR as a composition of CP maps}

The Kraus OSR is a actually a composition of three other maps:
\beq
\Phi = \Tr_B \circ \mathcal{U} \circ \mathcal{A},
\eeq
where (i) $\mathcal{A}$ is the ``assignment map" which associates to every initial system state $\r_S(0)$ a fixed bath state $\r_B(0)$, i.e., $\mathcal{A}[\r_S(0)] = \r_S(0)\otimes\r_B(0)$; (ii) $\mathcal{U}$ is the unitary evolution superoperator, i.e., $\mathcal{U}[X] = UXU^\dagger$; (iii) $\Tr_B$ is the usual partial trace operator. This is depicted in Fig.~\ref{fig:commutative-diagram}.

\begin{figure}[h]
\begin{equation*}
\begin{tikzcd}[row sep=huge]
\rho_S(0) \otimes \rho_B(0) \arrow[r,"\mathcal{U}"] &
U \left[ \rho_S(0) \otimes \rho_B(0) \right] U^\dagger \arrow[d,"\text{Tr}_B"]\\
\rho_S (0) \arrow[u,"\mathcal{A}"] \arrow[r,dashed,"\Phi"] & \rho_S (t)
\end{tikzcd}
\end{equation*}
	\caption{A commutative diagram showing that the quantum map $\Ph$ can be viewed as a composition of three maps.}
	\label{fig:commutative-diagram}
\end{figure}

Let us show that each of these three maps is, in turn, CP.

\subsubsection{The assignment map is CP}

The map $\mc{A}$ is from $\mc{D}(\mc{H}_S)$ to $\mc{D}(\mc{H}_S\ox\mc{H}_B)$. To prove that it is CP we need to consider positive operators $A \in \mc{D}(\mc{H}_S\ox\mc{H}_R)$. Thus, writing $A = \sum_{ir} \lambda_{ir}\ketbsub{i}{i}{S} \ox \ketbsub{r}{r}{R}$ and $\r_B = \sum_{\mu} \lambda_{\mu} \ketbsub{\mu}{\mu}{B}$, with $\lambda_{ir},\lambda_{\mu} \geq 0$:
\bes
\begin{align}
\bra{v}(\mc{A}\ox\mc{I}_R)(A)\ket{v} &= \sum_{ir} \lambda_{ir}\bra{v}[\ketbsub{i}{i}{S} \ox \r_B\ox \ketbsub{r}{r}{R}]\ket{v}  \\
&= \sum_{ir\mu} \lambda_{ir}\lambda_{\mu} \bra{v}[\ketbsub{i}{i}{S} \ox \ketbsub{\mu}{\mu}{B}\ox \ketbsub{r}{r}{R}]\ket{v}  \\
&= \sum_{ir\mu} \lambda_{ir}\lambda_{\mu} |\bk{v}{i\mu r}|^2 \geq 0 .
\end{align}
\ees

\subsubsection{The unitary map is CP}
This is obvious since the unitary map is a special case of a Kraus OSR having such a single Kraus operator $U$.

\subsubsection{The partial trace is CP}

To demonstrate that the partial trace, $\Tr_B:\r_{SB} \mapsto \r_S$, is CP, we can perform a direct calculation like we did for the assignment map. However, instead we can also directly demonstrate that it has a Kraus OSR (since this is a sufficient condition for CPness).

Consider the following explicit Kraus operators for the partial trace map: 
\beq
K_\a = I_S\ox \bra{\a}\ ,
\label{eq:K-Tr_B}
\eeq
where $\{\ket{\a}\}$ denotes the elements of some chosen basis for the bath Hilbert space. This choice is motivated by the fact that the partial trace leaves the system alone but sandwiches the bath between basis states.

Applying the map $\Ph = \{K_\a\}$ to an arbitrary system-bath state $\r_{SB}= \sum_{ij\m\n} \lambda_{ij\m\n} \ketb{i}{j} \ox \ketb{\m}{\n}$ written in the same basis for the bath, and noting that $\Tr_B(\r_{SB}) = \sum_{ij\m\n\a} \lambda_{ij\m\n} \ketb{i}{j} \bk{\a}{\m}\bk{\n}{\a} = \sum_{ij\a} \lambda_{ij\a\a}\ketb{i}{j} $, we find the following:
\beq
\Ph(\r_{SB}) = \sum_\a K_\a\r_{SB} K_\a^\dgr =\sum_\a I_S\ox \bra{\a}\left( \sum_{ij\m\n} \lambda_{ij\m\n} \ketb{i}{j} \ox \ketb{\m}{\n}\right) I_S\ox \ket{\a} = \sum_{ij\a} \lambda_{ij\a\a}\ketb{i}{j} = \Tr_B(\r_{SB})\ , 
\eeq
as desired. Thus, the partial trace has Kraus elements as given in Eq.~\eqref{eq:K-Tr_B}, and is CP.



\subsection{OSR for a general initial condition?}
What would happen if we were to relax the initial condition? Will we still get a CP map?

\subsubsection{General initial states}

Using a general orthonormal basis for the joint Hilbert space we can always write
\beq 
\r (0) = \sum_{ij \a\b} \lambda_{ij \a\b} \ketb{i}{j}\otimes \ketb{\a}{\b}.
\eeq
The corresponding initial state of the system is 
\beq
\r_S(0) = \Tr_B[\r(0)] = \sum_{ij \a} \lambda_{ij \a\a} \ketb{i}{j}\ .
\label{eq:rhoS0}
\eeq

If we go through the same steps as in the derivation of the Kraus OSR, we have, with $\{\ket{\mu}\}$ now representing the same bath basis as $\{\ket{\a}\}$:
\bes
\begin{align}
\r_S (t) &= \sum_\mu \braket{\mu|U(t) \r(0) U^\dagger (t)|\mu} \\
         &= \sum_{\mu ij\a \beta}\lambda_{ij \a\beta} \braket{\mu|U(t)  \ketb{i}{j}\otimes \ketb{\a}{\beta}U^\dagger(t)|\mu} \\
         &= \sum_{\mu ij\a \beta} \; \bra{\mu}U(t)\ket{\a}\lambda_{ij\a \beta} \ketb{i}{j} \; \braket{\beta|U^\dagger (t)|\mu} \\
         \label{108d}
         &=  \sum_{\mu,\a} \; \bra{\mu}U(t)\ket{\a}\left(\sum_{ij} \lambda_{ij\a \a} \ketb{i}{j}\right) \; \braket{\a|U^\dagger (t)|\mu}  + \sum_{\mu ij,\a\neq\beta} \; \bra{\mu}U(t)\ket{\a}\left(\sum_{ij} \lambda_{ij\a \b} \ketb{i}{j}\right) \; \braket{\beta|U^\dagger (t)|\mu} \ .
         \end{align}
\ees
The first summand in Eq.~\eqref{108d} has Kraus operators $\bra{\mu}U(t)\ket{\a}$
and may look fine. However, because of the sum over $\alpha$ we cannot factor out $\r_S(0)$.
Moreove, the second summand in Eq.~\eqref{108d} in addition involves off-diagonal terms $\lambda_{ij\a \b}$ that do not appear in $\r_S(0)$ [Eq.~\eqref{eq:rhoS0}]. Clearly, we cannot factor out $\r_S(0)$, so we do not even get a map from $\r_S(0)$ to $\r_S(t)$.

\subsubsection{Separable states}
What if we consider {separable} states, 
\beq
\r (0) = \sum_{i} p_{i} \rho_S^i\otimes \rho_B^i
\eeq
where $\r_S^i$ and $\r_B^i$ are themselves states of the system and bath? For such a state the initial system state is $\r_S (0) = \Tr_B[\r(0)] = \sum_{i} p_{i} \r_S^i$. Let's decompose each bath state as 
\beq
\r_B^i = \sum_{\nu_i} \lambda_{\nu_i} \ketbra{\nu_i} 
\eeq
and try again:
\bes
\begin{align}
\r_S (t) &=  \sum_{\mu}\sum_i\sum_{\nu_i}p_i\lambda_{\nu_i} \braket{\mu|U(t)   \rho_S^i\otimes \ketbra{\nu_i} U^\dagger(t)|\mu} \\
         &= \sum_{\mu}\sum_i\sum_{\nu_i}\lambda_{\nu_i} \braket{\mu|U(t)\ket{\nu_i} p_i  \rho_S^i  \bra{\nu_i}U^\dagger(t)|\mu} .
\end{align}
\ees
We can move the sum over $i$ inside if we first assume that all $\rho_B^i$ commute, i.e., are diagonal in the same basis so that ${\nu_i} = {\nu}$ $\forall i$, for then $\r_B^i(0) = \sum_\nu \lambda_{\nu}^i \ketbra{\nu}$ and hence 
\begin{align}
\r_S (t) = \sum_{\mu,\nu} \braket{\mu|U(t)\ket{\nu} \sum_i  \lambda^i_{\nu} p_i \rho_S^i  \bra{\nu}U^\dagger(t)|\mu} \ ,
\end{align}
but this still doesn't allow us to extract the initial system state $\sum_{i} p_{i} \r_S^i$. To accomplish this we may moreover assume that eigenvalues are the same, i.e., $\lambda^i_{\nu} = \lambda_{\nu}$ $\forall i$. If we do so we find $\r_S (t) = \sum_{\mu,\nu} \lambda_{\nu}\braket{\mu|U(t)\ket{\nu} \sum_i p_i \rho_S^i  \bra{\nu}U^\dagger(t)|\mu}$, and this involves a map acting on $\r_S(0) = \sum_{i} p_{i} \r_S^i$ as desired, but we haven't gained anything: this is the case if $\rho_B^i = \rho_B$ $\forall i$, i.e., we're back to Eq.~\eqref{eq:decoupled} again.

\subsection{The quantum discord perspective}
\label{sec:discord}

\subsubsection{Quantum Discord}
In classical information theory there are two equivalent ways to define the mutual information between two random variables $X$ and $Y$:
\bes
\begin{align}
I(Y:X) &= H(Y) + H(X) - H(X,Y) \\
J(Y:X) &= H(Y) - H(Y|X) \ ,
\end{align}
\ees
where $H(X) = - \sum_i p_i \log(p_i)$ is the Shannon entropy associated with $X$, with $p_i = \Pr(x_i)$ being the probability of $X$ assuming the value $x_i$. The quantity $H(X,Y)$ is the entropy of the joint distribution, and $H(Y|X)$ is the entropy of $Y$ conditioned on $X$. The equivalence follows directly from Bayes' rule [the joint probability satisfies $p(y,x) = p(y|x)p(x)$, where $p(y|x)$ is the conditional probability], which implies that $H(X,Y) = H(Y|X) + H(X)$, and hence that $I(Y:X) =J(Y:X)$.

In the quantum case, measuring system $X$ generally affects system $Y$ if the joint state $\rho_{XY}$ is correlated, so the asymmetry inherent in the second expression $J(Y|X)$ means that there is the potential for a different outcome from the symmetric first expression $I(Y|X)$. This observation forms the basis for the definition of the quantum discord, $I_Q(Y:X) - J_Q(Y:X)$.
Let us thus define the quantum mutual information expressions $I_Q(Y:X)$ and $J_Q(Y:X)$.

First, we need the quantum von Neumann entropy associated with a state $\r$:
\beq
S(\r) = -\Tr [\r \log(\r)] \ .
\eeq
Then
\beq
I_Q(Y:X) = S(\r_Y) + S(\r_X) - S(\r_{XY}) \ ,
\eeq
where $\rho_{XY}$ is the total state of systems $X$ and $Y$, $\r_Y = \Tr_X\r_{XY}$, and $\r_X = \Tr_Y\r_{XY}$.
The second mutual information $J_Q$ arises from first measuring $X$. Assume that this is done using a projective measurement with projectors $\{\Pi_i\}$, acting only on $X$. Then the post-measurement state obtained in case $i$ is $\r_{Y|\Pi_i} \equiv \Pi_i \r_{XY}\Pi_i/p_i$, where $p_i = \Tr[\Pi_i \r_{XY}] $ is the probability of case $i$. Let us associate an entropy to this state: $S(\r_{Y|\Pi_i})$. The entropy conditioned non-selectively on the entire measurement is $S(Y|\{\Pi_i\}) = \sum_i p_i S(\r_{Y|\Pi_i})$, and the conditional entropy is the minimum over all possible measurements, since we're interested in maximizing the mutual information: $S(Y|X) = \min_{\{\Pi_i\}} S(Y|\{\Pi_i\})$. Explicitly:
\beq
S(Y|X) = \min_{\{\Pi_i\}} \sum_i p_i S(\Pi_i \r_{XY}\Pi_i/p_i)\ , \quad p_i = \Tr[\Pi_i \r_{XY}] \ .
\eeq
With this, we are ready to define the second quantum mutual information:
\beq
J_Q(Y:X) = S(Y) - S(Y|X) \ .
\eeq
Generally, $J_Q(Y:X) \neq I_Q(Y:X)$. We thus define the quantum discord \cite{Ollivier:01} as
\beq
D(\r_{XY}) = I_Q(Y:X) - J_Q(Y:X)\ .
\eeq 
$D(\r_{XY})=0$ only for zero-discord states (by definition), which are states that have no quantum correlations at all. Note that separable states can have non-zero discord [i.e., $J_Q(Y:X) \neq I_Q(Y:X)$], which means that they have some quantum correlations despite being a convex combination of product states. However, it is not hard to show that 
a special class of separable states does have zero discord. Such states are known as zero-discord states, and they are of the form
\beq
\r_{SB}(0) = \sum_i p_i \Pi_i \otimes \rho_B^i\ ,
\label{eq:ZD}
\eeq
where the $\Pi_i$ are projectors, i.e., $\Pi_i \Pi_j = \delta_{ij}\Pi_i$. This initial state would be the result of a non-selective projective measurement of the system with measurement operators $\{\Pi_i\}$ (you can easily check that the state is invariant under a non-selective projective measurement with the same set of measurement operators, which is the property we expect from the state after a first projective measurement; see subsection~\ref{sec:non-selective} for non-selective measurements).

\subsubsection{Zero discord initial states and CP maps}
It turns out that zero-discord states do allow us to generalize the assumption of a factorizable initial state [Eq.~\eqref{eq:decoupled}] in the derivation of the Kraus OSR~\cite{Rodriguez:08}. Let's assume that the initial state is of the form given in Eq.~\eqref{eq:ZD}. Thus the system state becomes
\bes
\begin{align}
\r_S(t) &= \sum_{\mu} \bra{\mu} U(t) \sum_i p_i \Pi_i \otimes \rho_B^i U^\dagger (t)\ket{\mu} \\
&=  \sum_{\mu,i} \bra{\mu} U(t) \sqrt{\rho_B^i}  p_i \Pi_i \otimes  \sqrt{\rho_B^i} U^\dagger (t)\ket{\mu} \\
& \sum_{\mu\nu,i} \bra{\mu} U(t) \sqrt{\rho_B^i}\ket{\nu}  p_i \Pi_i \otimes  \bra{\nu}\sqrt{\rho_B^i} U^\dagger (t)\ket{\mu} \ ,
\end{align}
\ees
where we used the fact that ${\rho_B^i}$ is a positive operator to take its square root, and inserted a bath identity operator $\sum_\nu \ketbra{\nu}$ in the last line. Let's define 
\beq
D_{i\mu\nu} \equiv \bra{\mu} U(t) \sqrt{\rho_B^i}\ket{\nu}\ ,
\eeq
and note that this is a system-only operator. Now, we can always write  $D_{i\mu\nu} = \sum_{m} D_{m\mu\nu} \delta_{im}$. Inserting this into the last equation we have
\bes
\begin{align}
\r_S(t) &= \sum_{\mu\nu,i} p_i   \left(\sum_{m} D_{m\mu\nu} \delta_{im}\right)  \Pi_i   \left(\sum_{n} D^\dagger_{n\mu\nu} \delta_{in}\right) \\
&= \sum_{\mu\nu,i} p_i   \left(\sum_{m} D_{m\mu\nu} \delta_{im} \Pi_i\right)     \left(\sum_{n} \Pi_i \delta_{in} D^\dagger_{n\mu\nu} \right) \ ,
\end{align}
\ees
where we used $\Pi_i^2 = \Pi_i$.
Next, note that $\delta_{im} \Pi_i = \Pi_m \Pi_i$ and $\delta_{in} \Pi_i = \Pi_i \Pi_n$, which allows us to replace the $\delta$'s by $\Pi$'s:
\begin{align}
\r_S(t) = \sum_{\mu\nu,i,m,n} p_i  D_{m\mu\nu} \Pi_m   \Pi_i   \Pi_n D^\dagger_{n\mu\nu} \ .
\end{align}
We can now move the sum over $i$ inside so it is performed first. Thus, we have 
\begin{align}
\r_S(t) = \sum_{\mu\nu,m,n} D_{m\mu\nu} \Pi_m \left(\sum_i p_i    \Pi_i  \right) \Pi_n D^\dagger_{n\mu\nu} \ ,
\end{align}
and using Eq.~\eqref{eq:ZD} we recognize the middle term as the initial system state: $\r_S(0) = \sum_i p_i    \Pi_i$. We can also define new Kraus operators as $K_{\mu\nu} = \sum_m D_{m\mu\nu}\Pi_m$. This then gives us a proper Kraus OSR:
\begin{align}
\r_S(t) = \sum_{\mu\nu} K_{\mu\nu} \r_S(0) K_{\mu\nu}^\dagger \ .
\end{align}

It turns out that there are also discordant states that give rise to CP maps, and even entangled states. Read about generalizations in Refs.~\cite{Buscemi:2013,Dominy:14,Dominy:2016xy}.

\subsection{Equivalence of Quantum Maps}

Given two quantum maps, a natural question is under which conditions they are equivalent.
As an example, consider the two single-qubit quantum maps defined by the following two sets of Kraus operators: ${\Ph}=\{K_{0}=\frac{1}{\sqrt{2}}I, K_{1}=\frac{1}{\sqrt{2}}\s^{z}\}$ and ${\Psi}=\{L_{0}=\ketbra{0}, L_{1}=\ketbra{1}\}$. Note that $\Ph$ can be interpreted as the map the flips the phase or leaves the state alone with equal probability, while ${\Psi}$ can be interpreted as a non-selective measurement in the $\s^z$ basis. Thus, a priori it seems that the two maps describe very different physical processes. Nevertheless, it's easy to show that the two maps are identical,\footnote{Simply write $\r$ as a general $2\times 2$ matrix and note that $Z\r Z$ flips the sign of the off-diagonal elements, so that both $\Ph$ and $\Psi$ erase $\r$'s off-diagonal elements.} i.e., $\forall \r$
\bea
\frac{I}{\sqrt{2}}\r \frac{I}{\sqrt{2}}+ \frac{\s^{z}}{\sqrt{2}} \r \frac{\s^{z}}{\sqrt{2}}=
\ketbra{0} \r \ketbra{0}+ \ketbra{1} \r \ketbra{1}.
\eea

\subsubsection{General conditions for equivalence}

What is the general condition such that two maps are equivalent? The following theorem provides the answer:

\begin{thm}
\label{th:Kraus-equiv}
Consider the maps produced by the following two sets of Kraus operators ${\Ph}=\{K_\a\}$, ${\Psi}=\{L_\b\}$:  $\r'=\sum{K_{\a} \r K_{\a}^{\dgr}}$ and $\r''=\sum{L_{\b} \r L_{\b}^{\dgr}}$. Then
\bea
\forall \r:\r'=\r'' 	\Longleftrightarrow \exists \text{ a unitary operator},  u: \textrm{s.t. } K_{\a}=\sum_\b {u_{\a \b} L_{\b}}.
\label{eq:Kraus-equiv}
\eea
\end{thm}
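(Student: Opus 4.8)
The plan is to prove the two directions separately. The ``$\Leftarrow$'' direction is a one-line substitution that only uses $u^\dgr u = I$; the ``$\Rightarrow$'' direction is the substantive one, and I would handle it by compressing the universally-quantified map identity into the equality of a \emph{single} operator (the Choi operator), which then falls under the unitary-freedom-in-ensembles result already established in Theorem~\ref{thm:relatingensembles}.

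For ``$\Leftarrow$'', assume $K_\alpha=\sum_\beta u_{\alpha\beta}L_\beta$, so $K_\alpha^\dgr=\sum_\gamma u^*_{\alpha\gamma}L_\gamma^\dgr$. Then for every $\r$,
\bea
\sum_\alpha K_\alpha \r K_\alpha^\dgr = \sum_{\beta\gamma}\Big(\sum_\alpha u^*_{\alpha\gamma} u_{\alpha\beta}\Big) L_\beta \r L_\gamma^\dgr = \sum_{\beta\gamma}(u^\dgr u)_{\gamma\beta}\, L_\beta \r L_\gamma^\dgr = \sum_\beta L_\beta \r L_\beta^\dgr ,
\eea
so $\r'=\r''$. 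This direction needs no structural input beyond unitarity of $u$.

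For ``$\Rightarrow$'', I would first pad the shorter list with zero operators so that $\{K_\alpha\}$ and $\{L_\beta\}$ have a common cardinality $n$ (this is exactly what lets $u$ be a genuine $n\times n$ unitary). Next I would introduce the unnormalized maximally entangled vector $\ket{\Omega}=\sum_k \ket{k}_S\ox\ket{k}_R$ on $\mcHS\ox\mcHR$ with $\dim\mcHR=\dim\mcHS$, and define the Choi vectors $\ket{k_\alpha}\equiv(K_\alpha\ox I)\ket{\Omega}$ and $\ket{l_\beta}\equiv(L_\beta\ox I)\ket{\Omega}$. Because $\Ph=\Psi$ as maps, their extensions $\Ph\ox\mc{I}_R$ and $\Psi\ox\mc{I}_R$ act identically on $\ketbra{\Omega}$, which yields the single-operator identity $\sum_\alpha \ketbra{k_\alpha} = \sum_\beta \ketbra{l_\beta}$. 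After normalizing both sides by their common trace, this is precisely the statement that two pure-state ensembles represent the same density operator, so Theorem~\ref{thm:relatingensembles} supplies a unitary $u$ with $\ket{k_\alpha}=\sum_\beta u_{\alpha\beta}\ket{l_\beta}$. The final step transfers this back to the operators: since $(A\ox I)\ket{\Omega}=\sum_{ij}A_{ij}\ket{i}\ket{j}$ records every matrix element of $A$, the vectorization $A\mapsto(A\ox I)\ket{\Omega}$ is injective, so $(K_\alpha\ox I)\ket{\Omega} = \big(\sum_\beta u_{\alpha\beta}L_\beta\ox I\big)\ket{\Omega}$ forces $K_\alpha = \sum_\beta u_{\alpha\beta}L_\beta$.

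The main obstacle is the conceptual move in the ``$\Rightarrow$'' direction: recognizing that the equality of two \emph{superoperators} can be encoded as the equality of two \emph{operators} (their Choi images) so that the earlier ensemble theorem becomes applicable. The Choi construction does exactly this, but the proof requires some care with bookkeeping — padding with zero operators to equalize cardinalities, normalizing the Choi operators (and assigning arbitrary unit vectors to any zero $\ket{k_\alpha}$) so that Theorem~\ref{thm:relatingensembles}, stated for probability-weighted ensembles, applies verbatim, and verifying the injectivity of vectorization that licenses the concluding pullback. None of these are deep, but each must be stated to make the reduction airtight.
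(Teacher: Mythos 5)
Your ``$\Leftarrow$'' direction is essentially identical to the paper's proof, which carries out the same resummation $\sum_\alpha u^*_{\alpha\gamma}u_{\alpha\beta} = [u^\dgr u]_{\gamma\beta} = \delta_{\gamma\beta}$. The substantive difference is that the paper stops there: it explicitly proves only the ``if'' direction and leaves the converse unaddressed, whereas you supply a complete argument for ``$\Rightarrow$'' as well. Your route for the converse — encoding the superoperator identity $\Ph=\Psi$ as the single operator identity $\sum_\alpha \ketbra{k_\alpha}=\sum_\beta\ketbra{l_\beta}$ via the Choi vectors $(K_\alpha\ox I)\ket{\Omega}$, invoking Theorem~\ref{thm:relatingensembles} to extract the unitary, and pulling back through the injective vectorization — is sound and is essentially the standard Choi--Jamio{\l}kowski argument (it is also how the full proof in Nielsen \& Chuang proceeds). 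The bookkeeping points you flag (padding with zero operators, normalizing the Choi operators by their common trace so the ensemble theorem applies to bona fide probability ensembles, handling any vanishing $\ket{k_\alpha}$) are exactly the right ones to worry about. One caveat worth making explicit: you are leaning on the ``only if'' direction of Theorem~\ref{thm:relatingensembles}, which the paper itself also only states and defers to the literature, so your argument is complete only modulo that earlier deferred proof. What your approach buys is a genuine equivalence rather than a one-way implication, at the cost of importing the Choi machinery; the paper's choice buys brevity at the cost of leaving the harder half of the theorem unproved.
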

\begin{proof}
Here we prove the ``if" direction, i.e., assume that such a unitary exists; then
\bes
\begin{align}
\r'&=\sum_{\a}{(\sum_{\b}{u_{\a \b} L_{\b}})\r(\sum_{\b'}{u_{\a\b'}^{*} L^{\dgr}_{\b'}})}\\
&=\sum_{\b\b'}{ L_{\b}\r L^{\dgr}_{\b'}\sum_{\a}[u^\dgr]_{\b' \a}{[u]_{\a \b} }} =\sum_{\b\b'}L_{\b}\r L^{\dgr}_{\b'}[u^\dgr u]_{\b'\b} \\
&=\sum_{\b\b'}{ L_{\b}\r L^{\dgr}_{\b'} \d_{\b \b'}}\\
&=\sum_{\b}{ L_{\b}\r L^{\dgr}_{\b}}=\r'' .
\end{align} 
\ees
\end{proof}

In the example above the relation between the operators is:
\begin{align}
K_{0}=\frac{1}{\sqrt{2}}(L_{0}+L_{1})\ , \qquad K_{1}=\frac{1}{\sqrt{2}}(L_{0}-L_{1}),
\end{align}
so the unitary is $u= \frac{1}{\sqrt{2}}\left( \begin{array}{cc}
1 & 1\\
1 & -1 \end{array} \right)$.

\subsubsection{Physical origin of the equivalence}

Where does this unitary equivalence between Kraus operators come from? To see this intuitively, note that in deriving the Kraus operators, after evolving with a unitary operator acting on both the system and the bath, we trace out the bath, so the Kraus operators should remain equivalent under the change of basis of the bath. 
Let us show that this ``gauge freedom" gives rise to the unitary equivalence between different sets of Kraus operators. As we shall see, we need to be a bit careful in accounting for the presence of the square-root of the eigenvalue of the bath density matrix in the definition of the Kraus operators.

Let us write Eq.~\eqref{eq:Kraus-equiv} as
\beq
K_{\m\n}(t) = \sum_{\h\x} u_{\m\n\h\x}L_{\h\x}(t) ,
\eeq
where we have let $\a =(\m\n)$ and $\b = (\h\x)$.
In terms of the explicit form of the Kraus operators this becomes
\beq
\sqrt{\lambda_\n}\bra{\m}U\ket{\n} = \sum_{\h\x}u_{\m\n\h\x}\sqrt{\lambda_\x}\bra{\h}U\ket{\x} .
\label{eq:168}
\eeq
Let us now assume that
\beq
u_{\m\n\h\x} \equiv v_{\m\h}w_{\x\n} = \bra{\m}v\ket{\h}\bra{\x}w\ket{\n} ,
\label{u-4index}
\eeq 
where $v$ and $w$ are both unitary. We can then show that the matrix $u$ is unitary:
\bes
\begin{align}
[u^\dagger u]_{\a\b} &= \sum_{\g} [u^\dagger]_{\a\g}[u]_{\g\b} = \sum_{\g}u^*_{\g\a}u_{\g\b} = \sum_{\m'\n'} u^*_{\m'\n'\m\n}u_{\m'\n'\h\x} = \sum_{\m'\n'} (v_{\m'\m}w_{\n\n'})^* v_{\m'\h}w_{\x\n'}\\
& = \sum_{\m'} v^*_{\m'\m}v_{\m'\h} \sum_{\n'}w^*_{\n\n'}w_{\x\n'} = \delta_{\m\h}\delta_{\x\n} = \delta_{\a\b} ,
\end{align}
\label{eq:u-unitary}
\ees
where we used the unitarity of $v$ and $w$ in the penultimate equality.

Plugging this expression for $u_{\m\n\h\x}$ into Eq.~\eqref{eq:168} gives:
\bes
\begin{align}
\bra{\m}U\sqrt{\lambda_\n}\ket{\n} &= \sum_{\h\x}\sqrt{\lambda_\x}\bra{\m}v\ketbra{\h}U\ketbra{\x}w\ket{\n} \\
&= \bra{\m}v[\sum_\h \ketbra{\h}] U [\sum_\x \sqrt{\lambda_\x}\ketbra{\x}] w\ket{\n}\\
&= (\bra{\m}v) U (\sqrt{\r_B} w\ket{\n}) ,
\end{align}
\ees
i.e., the gauge freedom giving rise to the unitary equivalence between to sets of Kraus operators is:
\bes
\begin{align}
\label{eq:171a}
\bra{\m} &\mapsto \bra{\m}v \\
\label{eq:171b}
\sqrt{\lambda_\n}\ket{\n} &\mapsto \sqrt{\r_B} w\ket{\n} .
\end{align}
\ees
Eq.~\eqref{eq:171a} simply expresses the freedom to apply a unitary transformation on the bath basis vectors before taking the partial trace (which we did by sandwiching inside $\bra{\m} \cdots \ket{\m}$). Eq.~\eqref{eq:171b} tells us that we can also apply a second unitary transformation on the eigenstates of $\r_B$ (i.e., $\ket{\n} \mapsto w\ket{\n}$), but that in general we should also replace the eigenvalue term $\sqrt{\lambda_\n}$ by $\sqrt{\r_B}$. 
To understand the latter, note that in deriving the Kraus OSR we can also proceed as follows:
\bes
\label{eq:169}
\begin{align}
\r'_S &= \Tr_B[U\r_S\ox\r_B U^\dagger] = \sum_{\mu} \bra{\mu} \left[ U \r_S\ox\left(\sqrt{\r_B} \sum_\nu \ketbra{\nu}\sqrt{\r_B}\right)U^\dag \right] \ket{\mu} \\
&= \sum_{\mu\nu}\left(\bra{\mu}U\sqrt{\r_B}\ket{\nu}\right)\r_S\left(\bra{\nu}\sqrt{\r_B}U^\dag\ket{\mu}\right) ,
\end{align}
\ees
which means that the Kraus operators we derived originally by using $\r_B$'s spectral decomposition, $\bra{\m}U\sqrt{\lambda_\n}\ket{\n}$, are equivalent to Kraus operators of the form $\bra{\mu}U\sqrt{\r_B}\ket{\nu}$. In other words, the spectral decomposition was just one of infinitely many equivalent ways to decompose $\r_B$. We recover the spectral decomposition if we choose the basis $\{\ket{\nu}\}$ in Eq.~\eqref{eq:169} as the eigenbasis of $\r_B$.



\section{Quantum Maps of a Qubit}

In this section, by focusing on the case of one qubit, we will develop a geometric picture of the action of quantum maps. The main tool that will allow us to do this is the Bloch sphere representation.  

Recall that the density matrix of a qubit may be written as $\r=\frac{1}{2}(I+\vec{v}\cdot\vec{\s})$ where $\vec{\s}=(\s_x,\s_y,\s_z)$ and $\vec{v}=(v_x,v_y,v_z)\in\R^3$ is the Bloch vector. In this way, a single-qubit state may be thought of as a point in or on the unit sphere in $\R^3$---the Bloch sphere.  States with $\|\vec{v}\|=1$ lie on the surface of the sphere and correspond to pure states of the form $\r=\ketbra{\psi}$.  Points on the interior of the sphere correspond to mixed states with purity $P=\Tr[\r^2]<1$.

\subsection{Transformation of the Bloch Vector}

What happens when a quantum map acts on a single qubit? As a map of the density matrix, $\Ph:\r\mapsto\r'$. At the same time $\r'$ must be expressible in terms of a new Bloch vector $\vec{v}'$, where $\r'=\frac{1}{2}(I+\vec{v}'\cdot\vec{\s})$. We shall show that $\r\mapsto\r'$ is equivalent to mapping the Bloch vector
\beq
\vec{v}\mapsto\vec{v}'=M\vec{v}+\vec{c}
\label{eq:Phi(v)}
\eeq
for some real $3\times3$ matrix $M$ and a vector $\vec{c}\in\R^3$. This is an \emph{affine transformation}. Before proving Eq.~\eqref{eq:Phi(v)}, let us decompose $M$ in a way that will reveal more of the geometric aspects of the transformation.

Recall the polar decomposition, which allows us to write any square matrix $A$ as $A=U|A|$, where $U$ is a unitary matrix and $|A|\equiv\sqrt{A^\dagger A}$ is Hermitian (since clearly its eigenvalues are real), a generalization of the polar representation of a complex number $z=e^{i\theta}|z|$. If $A$ is a real matrix, $U$ becomes real-unitary, i.e., orthogonal, and $|A|$ becomes real-Hermitian, i.e., symmetric. So, for our $3\times3$ real matrix $M$ we can write $M=OS$, for orthogonal $O$ and symmetric $S=\sqrt{M^\dgr M}$. $S$ causes deformation by scaling along the directions of the eigenvectors by a factor of the corresponding eigenvalues. $O$ is a rotation matrix.  Now we may interpret the action of a quantum map on a qubit state as mapping the Bloch vector according to
\beq
\vec{v}\mapsto\vec{v}'=OS\vec{v} + \vec{c},
\eeq
as a shift by $\vec{c}$, a deformation by $S$ and a rotation by $O$.  Because the Bloch sphere represents the set of possible Bloch vectors, we may view the Kraus map acting on a qubit as a transformation of the Bloch sphere that displaces its center by $\vec{c}$ and turns the sphere into an angled ellipsoid.

To prove Eq.~\eqref{eq:Phi(v)}, we plug the Bloch vector representation of $\r$ into the quantum map:
\beq
\r' =  \sum_\a K_\a \r K_\a^\dgr = \frac{1}{2}\sum_\a K_\a(I+\vec{v}\cdot\vec{\s})K_\a^\dgr = \frac{1}{2}(\sum_\a K_\a K_\a^\dgr +\sum_{\a j} v_j K_\a \s_j K_\a^\dgr)\ .
\label{eq:160}
\eeq
To isolate the components of $\vec{v}'$ we multiply both sides by $\s_i$ and take the trace, while remembering that the Pauli matrices are all traceless and satisfy Eq.~\eqref{eq:tracePaulis}. Thus, Eq.~\eqref{eq:160} becomes
\beq
\Tr(\r' \s_i) =  \frac{1}{2}[\sum_\a \Tr(K_\a K_\a^\dgr \s_i) +\sum_{\a j} v_j \Tr(K_\a \s_j K_\a^\dgr \s_i)]\ ,
\label{eq:160a}
\eeq
On the other hand, using $\r'=\frac{1}{2}(I+\vec{v}'\cdot\vec{\s})$ and Eq.~\eqref{eq:tracePaulis} again:
\beq
\Tr(\r' \s_i) = \frac{1}{2}[\Tr(\s_i)+\sum_j{v}'_j \Tr(\s_j{\s_i})] = 0 + v'_i\ .
\label{eq:161}
\eeq
Equating Eqs.~\eqref{eq:160a} and~\eqref{eq:161} we thus have 
\beq
v'_i = c_i + \sum_j M_{ij} v_j\ ,
\eeq 
where
\bes
\label{eq:M-c}
\begin{align}
M_{ij} &= \frac{1}{2}\sum_\a \Tr( \s_i K_\a \s_j K_\a^\dgr) \\
c_i &= \frac{1}{2}\sum_\a \Tr( \s_i K_\a K_\a^\dgr)\ .
\end{align}
\ees
Moreover, using the Hermiticity of the Pauli matrices and properties of the trace [Eq.~\eqref{eq:trace-equalities}]:
\beq
M_{ij}^* = \frac{1}{2}\sum_\a \Tr( \s_i K_\a \s_j K_\a^\dgr)^\dgr = \frac{1}{2}\sum_\a \Tr( K_\a \s_j K_\a^\dgr \s_i) = \frac{1}{2}\sum_\a \Tr( \s_i K_\a \s_j K_\a^\dgr) = M_{ij}\ , 
\eeq
i.e., $M$ is real. Likewise, 
\beq
c_i^* = \frac{1}{2}\sum_\a \Tr(\s_i K_\a K_\a^\dgr)^\dgr = \frac{1}{2}\sum_\a \Tr(K_\a K_\a^\dgr \s_i ) = \frac{1}{2}\sum_\a \Tr(\s_i K_\a K_\a^\dgr) = c_i\ ,
\eeq 
so $\vec{c}\in\mathbb{R}^3$. This proves Eq.~\eqref{eq:Phi(v)}.

\subsection{Unital Quantum Maps}

Returning temporarily to the general (beyond a single qubit) case, a quantum map is said to be unital if it maps the identity operator to itself, i.e.:
\begin{mydefinition}
$\Ph$ is unital if $\Ph(I) = I$. Otherwise it is non-unital.
\end{mydefinition}

Since a quantum map always has a Kraus OSR, we find that unital quantum maps satisfy
\beq
\sum_\a K_\a K_\a^\dgr = I \ ,
\label{eq:unital}
\eeq
in addition to the trace-preservation constraint $\sum_\a K_\a^\dgr K_\a = I$.

Note that if $\Ph$ is unital, so that Eq.~\eqref{eq:unital} holds, then
\beq
c_i = \frac{1}{2} \Tr( \s_i \sum_\a K_\a K_\a^\dgr) = \frac{1}{2} \Tr( \s_i ) = 0 \qquad (\text{unital case})\ .
\label{eq:c-unital}
\eeq
Conversely, if $\Ph$ is non-unital, then $\vec{c}\neq\vec{0}$. 

Note that, as is clear from Eq.~\eqref{eq:161}, $M$ is associated purely with the transformation of $\vec{v}\cdot\vec{\s}$ under the map, while $\vec{c}$ is associated purely with the transformation of $I$ under the map. This observation will help us read off $M$ and $\vec{c}$ in the examples we study below.

\subsection{The Phase Damping Map}
\label{sec:PD}

The phase damping map is:
\beq
\Ph(\r') = p\r+(1-p)Z\r Z\ ,
\eeq
where $Z\equiv\s_z$, so the Kraus operators are $K_0=\sqrt{p}I$ and $K_1=\sqrt{1-p}Z$. This map can be understood as 
\beq
\r\mapsto\r'=\begin{cases}
\r &\mbox{w/ prob. } p\\
Z\r Z &\mbox{w/ prob. } 1-p
\end{cases}
\eeq

Using our general result, Eq.~\eqref{eq:M-c} we have in this case:
\beq
c_i = \frac{1}{2}\sum_\a \Tr( \s_i K_\a K_\a^\dgr) = \frac{1}{2}[ p\Tr(\s_i) + (1-p)\Tr(\s_i)] = 0
\eeq
[in agreement with the fact that the phase damping map is unital; recall Eq.~\eqref{eq:c-unital}], and:
\beq
M_{ij} = \frac{1}{2}\sum_\a \Tr(\s_i K_\a \s_j K_\a^\dgr) =  \frac{1}{2}[p \Tr(\s_i \s_j) + (1-p)\Tr(\s_i Z \s_j Z)] = p\d_{ij} + \frac{1}{2}(1-p)J_{ij}\ ,
\eeq
where $J_{ij}\equiv \Tr(\s_i Z \s_j Z)$.
Written explicitly the matrix $J$ is:
\beq
J = \left(
\begin{array}{ccc}
\Tr (XZXZ)  & \Tr (XZYZ)  & \Tr (XZZZ)   \\
\Tr (YZXZ)  & \Tr (YZYZ)  & \Tr (YZZZ)   \\
\Tr (ZZXZ)  & \Tr (ZZYZ)  & \Tr (ZZZZ)    
\end{array}
\right) 
=
\left(
\begin{array}{ccc}
\Tr (-I)  & \Tr (\s)  & \Tr (\s)   \\
\Tr (\s)  & \Tr (-I)  & \Tr (\s)   \\
\Tr (\s)  & \Tr (\s)  & \Tr (I)    
\end{array}
\right) 
= \text{diag}(-2,-2,2)\ ,
\eeq
where $\s$ denotes a Pauli matrix. Thus, 
\beq
M = \text{diag}[p-(1-p),p-(1-p),p+(1-p)] = 
\begin{pmatrix}
2p-1&0&0\\
0&2p-1&0\\
0&0&1
\end{pmatrix}\ ,
\eeq
and 
\beq
\vec{v}' = M\vec{v} = \ls(2p-1)v_x,(2p-1)v_y,v_z\rs^t\ .
\eeq
The corresponding transformation of the Bloch sphere is shown in Fig.~\ref{fig:Bloch-PD}.
There is no shift of the Bloch sphere, while there is a rescaling along the $v_x$ and $v_y$ directions by a factor of $(2p-1)$, and all points on the $v_z$ axis are fixed.  The map has two fixed pure states, the north and south poles of the Bloch sphere, $\ketbra{0}$ and $\ketbra{1}$.  For $p=1$, the Bloch sphere remains unchanged.

\begin{figure}
\begin{center}
\includegraphics[width=0.75\textwidth]{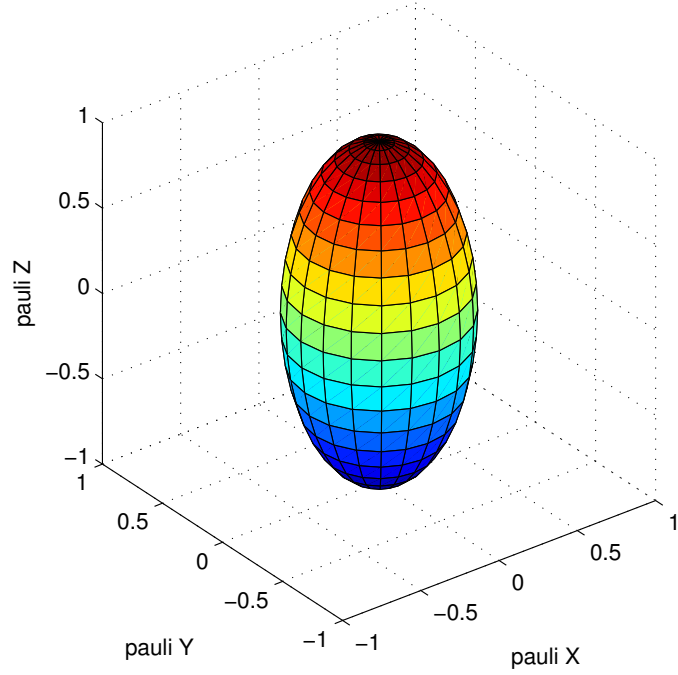}
\caption{The Bloch sphere become an ellipsoid after transformation by the phase damping channel.  The invariant states are those on the $\s_z$ axis. The major axis has length $2$, the minor axis has length $2(2p-1)$.}
\label{fig:Bloch-PD}
\end{center}
\end{figure}

Because $p$ is a probability, $-1\leq2p-1\leq1$. Hence the scaling factor can take negative values, corresponding to a rotation by $\pi$ about the $v_z$ axis.  To see why, let us use the polar decomposition to write $M=OS$, where $S=\sqrt{M^\dgr M} = \text{diag}(|2p-1|,|2p-1|,1)$. Therefore the rotation matrix must be $O=\text{diag}(\text{sign}(2p-1),\text{sign}(2p-1),1) = (\pm 1,\pm 1,1)$. When $2p-1<0$, $O$ is a rotation by $\pi$ about the $v_z$ axis.

The purity [Eq.~\eqref{eq:purity}] of the transformed state is
\beq
P'=\Tr[(\r')^2]=\frac{1}{2}(1+\|\vec{v}'\|^2) = \frac{1}{2}[1+(2p-1)^2(v_x^2+v_y^2)+v_z^2] \leq P\ .
\eeq
Thus the purity always decreases under the phase damping channel, except for the states on the $v_z$ axis (with $v_x=v_y=0$), whose purity is invariant.

\subsection{The Bit Flip Map}
\label{sec:bitflipmap}

The bit flip map is:
\beq\label{eq:bitflip}
\r\mapsto\r'=\begin{cases}
\r &\mbox{w/ prob. } p\\
X\r X &\mbox{w/ prob. } 1-p
\end{cases}
\eeq
In the computational basis, the bit flip map acts like a classical error channel, flipping bits at random. The phase damping map is purely quantum in the same basis, since of course the notion of a phase is not classical. However, mathematically the two maps are essentially identical. We can guess that since the phase flip map leaves the $v_z$ axis alone and shrinks the Bloch sphere in the $(v_x,v_y)$ plane, the bit flip map will leave $v_x$ axis alone and shrinks the Bloch sphere in the $(v_y,v_z)$ plane. To confirm this, let us use a more direct approach than the one we used for the phase flip map.

Using $\r =\frac{1}{2}\left(I+\vec{v}\cdot\vec{\s}\right)$, we have:
\begin{align}
\r \mapsto \r' = p\r + (1-p)X\r X = \frac{1}{2}\left(I+p\vec{v}\cdot\vec{\s}+(1-p)X\vec{v}\cdot\vec{\s}X\right) \ .
\end{align}
The key point is now that 
\beq
X(\vec{v}\cdot\vec{\s})X = X(v_x X +v_y Y +v_z Z)X = v_x X - v_y Y - v_z Z\ .
\label{eq:X-conj}
\eeq 
This shows that $v_x$ is unchanged, but the sign of both $v_y$ and $v_z$ is flipped. Had we studied the phase damping map instead, we would have seen that $v_z$ is unchanged, but the sign of both $v_x$ and $v_y$ is flipped.
We now have:
\beq
\r' = \frac{1}{2}\left(I+v_x X + (2p-1) v_y Y + (2p-1) v_z Z\right) = \frac{1}{2}\left(I+\vec{v}'\cdot\vec{\s}\right) \ .
\eeq
Thus, we find that the bit flip channel transforms $\vec{v}$ as:
\beq
\vec{v}\mapsto \vec{v}'=\ls v_x,(2p-1)v_y,(2p-1)v_z\rs =M\vec{v}+\vec{c}\ ,
\eeq
where
\bes
\begin{align}
M &=\begin{pmatrix}
1&0&0\\
0&2p-1&0\\
0&0&2p-1
\end{pmatrix}\\
\vec{c} &=\bar{0} \ .
\end{align}
\ees
Geometrically, this corresponds to the exact same deformation of the Bloch sphere as depicted in Fig.~\ref{fig:Bloch-PD}, but with the $v_x$ and $v_z$ axes interchanged. If we replace $X$ with $Y$ in Eq.~\eqref{eq:bitflip} we have the ``bit-phase flip channel," where the roles of the $v_x$ and $v_y$ axes is interchanged.


\subsection{The Depolarizing Map}

The depolarizing map acting on a qubit either takes the state to the maximally mixed state with probability $p$, or leaves the state unchanged with probability $1-p$:
\beq
\r\mapsto\r'=\begin{cases}
\frac{1}{2}I &\mbox{w/ prob. } p\\
\r &\mbox{w/ prob. } 1-p
\end{cases}\ .
\eeq
Thus, with probability $p$, all the information held in the state is erased.  Equivalently,
\beq
\r'=p\frac{I}{2}+(1-p)\r\ .
\eeq
Clearly, this is also a unital map. However, note that it is not in Kraus OSR form. To put it in Kraus OSR form, note that
\beq
{\r + X\r X+Y\r Y+Z\r Z} = 2I\ ,
\eeq
which we can prove easily using the same idea as in Eq.~\eqref{eq:X-conj}:
\bes
\begin{align}
Y(\vec{v}\cdot\vec{\s})Y &= - v_x X + v_y Y - v_z Z\\
Z(\vec{v}\cdot\vec{\s})Z &= - v_x X - v_y Y + v_z Z\ ,
\label{eq:YZ-conj}
\end{align}
\ees
so that 
\beq
\sum_i \s_i (\vec{v}\cdot\vec{\s}) \s_i = 0 \ .
\eeq
Thus we may write the map as:
\beq
\r\mapsto\r' = p\frac{1}{4}({\r + X\r X+Y\r Y+Z\r Z})+(1-p)\r = (1-\frac{3}{4}p)\r+\frac{p}{4}(X\r X+Y\r Y+X\r Z)\ ,
\eeq
from which we see that the Kraus operators are 
\beq
K_0=\sqrt{1-\frac{3}{4}p}I\ , \qquad K_i=\sqrt{\frac{p}{4}}\s_i \, \text{  for  } i=1,2,3 \ .
\eeq  

The analysis is particularly straightforward in terms of the Bloch vector:
\begin{eqnarray}
\r' = p\frac{I}{2}+\frac{1-p}{2}(I+\vec{v}\cdot\vec{\s}) = \frac{I}{2}+\frac{1-p}{2}\vec{v}\cdot\vec{\s} = \frac{1}{2}(I+\vec{v}'\cdot\vec{\s})\ ,
\end{eqnarray}
which implies that $\vec{v}'=(1-p)\vec{v}$, so that 
\bes
\begin{align}
M&=(1-p)I\\
\vec{c}&=\bar{0} \ .
\end{align}
\ees
This corresponds to the Bloch sphere shrinking uniformly to a radius of $1-p$, as illustrated in Fig.~\ref{fig:depol}.  The only invariant state is the fully mixed state (the origin, $\vec{v}=\bar{0}$).  Every other state loses purity as it becomes more mixed.

\begin{figure}[t]
\begin{center}
\includegraphics[width=0.75\textwidth]{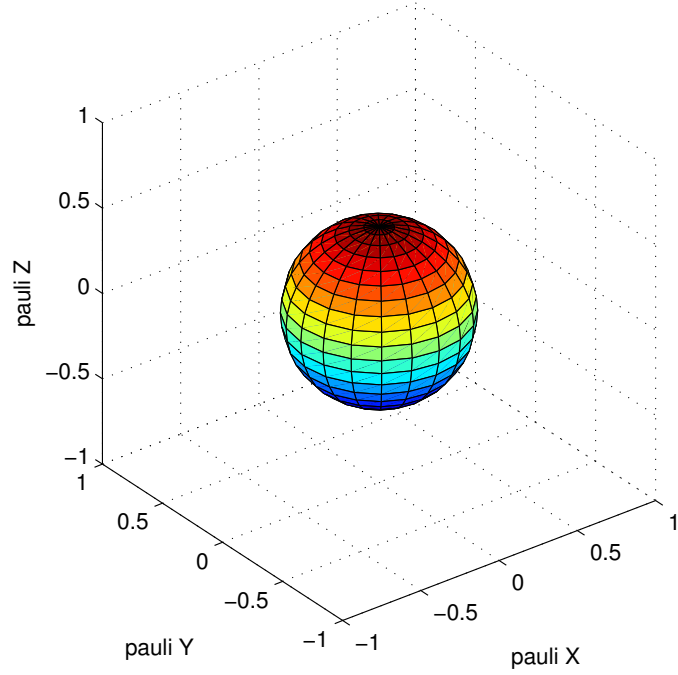}
\caption{The Bloch sphere transformed by the depolarizing channel.  As $p\rightarrow1$, all states converge to the fully mixed state at the origin.}
\label{fig:depol}
\end{center}
\end{figure}


\subsection{Amplitude Damping / Spontaneous Emission}

Spontaneous emission (SE) is the process by which an atom, nucleus, etc., undergoes a transition from a higher state of energy to a lower state of energy, thus releasing energy to the bath (relaxation). This could through the release of a photon, a phonon, or some other elementary excitation. If the bath is at temperature $T=0$, as we assume in this subsection, then the system cannot absorb energy, so the reverse process of excitation does not occur. We shall deal with it in the next subsection.

We consider a single qubits, with a ground state $\ket{0}$ and an excited state $\ket{1}$. Thus the map $\Ph$ is:
\bes
\begin{align}
\ket{0} &\mapsto \ket{0} \quad {\mbox {with probability}}\, 1 \\
\ket{1} &\mapsto \ket{0} \quad {\mbox {with probability}}\, p
\end{align}
\ees
Let us find the Kraus operators for this process. One Kraus operator is obvious: the transition from the excited state to the ground state is given by 
\beq
K_1 = \sqrt{p} \ketb{0}{1}\ .
\eeq
The second Kraus operator should keep the ground state in place, i.e., contains $\ketb{0}{0}$. But this isn't enough, since the normalization condition must be satisfied, and it's easy to check that it isn't if these are our Kraus operators. Instead, let us add an unspecified matrix $A$ and find out its form from the normalization condition. Thus:
\beq
K_0 = \ketb{0}{0} + A = \begin{pmatrix}
1 & a \\
b & c
\end{pmatrix}\ ,
\eeq
and the normalization condition ${K^{\dag}_0}K_0 +{K^{\dag}_1}K_1 = I$ becomes:
\begin{align}
\begin{pmatrix}
1+|b|^2 & a+b^* c \\
a^* +bc^* & |a|^2+|c|^2
\end{pmatrix} + p |1 \rangle \langle 0 | 0 \rangle \langle 1 | = \begin{pmatrix}
1 & 0 \\
0 & 1
\end{pmatrix}\ .
\end{align}
On equating the upper left entries we get $b=0$, which implies from the off-diagonal entries that $a=0$. Equating the bottom right entries then yields $c=\sqrt{1-p}$. Thus:
\beq
K_0 =\begin{pmatrix}
1 & 0 \\
0 & \sqrt{1-p}
\end{pmatrix}\ .
\eeq
The (perhaps curious) $\sqrt{1-p}$ component expresses the fact that \emph{not} observing an emission event (imagine a detector for the emitted photons) increases the likelihood that the system is in its ground state, but we cannot know this with certainty since the emission event might yet arrive in the future. We will see this more clearly later when we discuss quantum trajectories in Sec.~\ref{sec:trajectories}.

We can now directly derive $M$ and $\vec{c}$. Since $\rho ' = \sum_{\alpha=0}^1
K_{\alpha}\left[ \frac{1}{2} (I+\vec{v}\cdot
\vec{\sigma})\right] K^{\dag}_{\alpha}$, the most direct way to do this is to map $I$ and $\vec{v}\cdot \vec{\sigma}$ via the Kraus OSR and read off $M$ and $\vec{c}$. Starting with $I$, we have:
\beq
I \mapsto K_0 K^{\dag}_0 + K_1 K^{\dag}_1 = \begin{pmatrix}
1 & 0 \\
0 & 1-p
\end{pmatrix} + p | 0 \rangle \langle 1 | 1 \rangle \langle 0 | =\begin{pmatrix}
1+p & 0 \\
0 & 1-p
\end{pmatrix} = I + p Z.
\eeq
Thus SE is not a unital map. Since $\vec{c}$ captures the mapping of $I$, we see that 
\beq
\vec{c} = (0,0,p)\ .
\eeq

Next, $\vec{v}\cdot \vec{\sigma} \mapsto K_0 (\vec{v}\cdot \vec{\sigma}) K^{\dag}_0 + K_1 (\vec{v}\cdot \vec{\sigma})K^{\dag}_1$. It is simple to check by explicit matrix multiplication that 
\bes
\begin{align}
&K_0 X K^{\dag}_0 + K_1 X K^{\dag}_1 = \sqrt{1-p}X\\
&K_0 Y K^{\dag}_0 + K_1 Y K^{\dag}_1 = \sqrt{1-p}Y\\
&K_0 Z K^{\dag}_0 + K_1 Z K^{\dag}_1 = {(1-p)}Z
\end{align}
\ees
We thus arrive at the following $M$ matrix:
\beq
M=\begin{pmatrix}
\sqrt{1-p} & 0 & 0\\
0 & \sqrt{1-p} & 0\\
0 & 0 & {1-p}
\end{pmatrix}\ .
\label{eq:M-SE}
\eeq

The geometric meaning of the spontaneous emission map is now clear. The center  $(0,0,0) \mapsto (0,0,p)$, and the Bloch sphere is compressed more along the $v_z$-axis than along the $v_x$ and $v_y$-axes. In other words, all points on the Bloch sphere move closer to its north pole, which is the ground state. If $p=1$ then the entire Bloch sphere is compressed to a single point, the north pole. The latter is a fixed point of the map. To see this, note that
\beq
\Phi ( | 0 \rangle \langle 0 |) = K_0 | 0 \rangle \langle 0 |
K^{\dag}_0 + K_1 | 0 \rangle \langle 0 | K^{\dag}_1 = |0 \rangle
\langle 0 | + 0 = |0 \rangle \langle 0 |
\eeq

\subsection{Generalized (finite temperature) Amplitude Damping/Spontaneous Emission}
\label{sec:gen-AD}

If the qubit is able to absorb energy from the bath (since the latter is at a temperature $T>0$), then the reverse process, of excitation from the ground state to the excited state, is also possible. To account for this let us assume that the spontaneous emission process of the previous subsection occurs with probability $q$, while the reverse process occurs with probability $1-q$. Then the Kraus operators for the SE event become
\bes
\begin{align}
K_0 &= \sqrt{q}\begin{pmatrix}
1 & 0 \\
0 & \sqrt{1-p}
\end{pmatrix}
\\
K_1 &= \sqrt{qp} \ketb{0}{1}\ .
\end{align}
\ees
The Kraus operators for the reverse process are simply:
\bes
\begin{align}
K_2 &= \sqrt{1-q} \begin{pmatrix}
\sqrt{1-p} & 0 \\
0 & 1
\end{pmatrix} \\
K_3 &= \sqrt{(1-q)p} \ketb{1}{0}\ .
\end{align}
\ees

Thus:
\bes
\begin{align}
I &\mapsto K_0 K^{\dag}_0 + K_1 K^{\dag}_1 + K_2 K^{\dag}_2 + K_3 K^{\dag}_3 = q\begin{pmatrix}
1+p & 0 \\
0 & 1-p
\end{pmatrix}  + (1-q)\left[ \begin{pmatrix}
{1-p} & 0 \\
0 & 1
\end{pmatrix} + p| 1 \rangle \langle 0 | 0 \rangle \langle 1 | \right] \\
&= q(I+pZ) + (1-q)(I-pZ) =  I + (2q-1)pZ 
\ ,
\end{align}
\ees
which shows that 
\beq
\vec{c}= (0,0,(2q-1)p) \ .
\eeq
As for the $M$ matrix it is again simple to check by explicit matrix multiplication that 
\bes
\begin{align}
&\sum_{i=0}^3 K_i X K_i^{\dag} = \sqrt{1-p}X\\
&\sum_{i=0}^3 K_i Y K_i^{\dag} = \sqrt{1-p}Y\\
&\sum_{i=0}^3 K_i Z K_i^{\dag} = ({1-p})Z\ ,
\end{align}
\ees
i.e., $M$ is unchanged and is still given by Eq.~\eqref{eq:M-SE}.

Thus the only effect of allowing relaxation is to modify the center of the deformed Bloch sphere, which is now positioned at $(0,0,(2q-1)p)$. This corresponds to a new fixed point, $\r_{\text{eq}} = \text{diag}(q,1-q)$:
\beq
\Phi ( \r_{\text{eq}}) = \sum_{i=0}^3 K_i \r_{\text{eq}} K_i^{\dag} = \r_{\text{eq}}\ .
\eeq

Note that the case $q=1/2$ is unital (it corresponds to $\vec{c}=\vec{0}$) and has a fixed point the fully mixed state. Also note that when $q<1/2$ the new center is at $(0,0,-|2q-1|p)$, which corresponds to a preference for the excited state rather than the ground state. 



\section{Quantum Maps from First Principles}

So far we postulated the form of certain quantum maps. Let us now consider examples where we can analytically derive the Kraus operators from first principles.

\subsection{A qubit coupled to a single-qubit bath}

Consider a system of two qubits, such that the first qubit is the system ($\mathscr{H}_S$) and the second is the bath ($\mathscr{H}_B$). Consider also the interaction Hamiltonian $H_{SB} = \lambda \s_S^{\a} \ox \s_B^{\beta}$ where $\a, \b \in \{ x,y,z \}$. The system qubit is initially in the pure state $\rho_S(0) = \ketbra{\psi}$, $\ket{\psi} = a \ket{0} + b \ket{1}$, written in the computational basis (eigenbasis of $\s^z$). The initial state of the bath is mixed:
\beq 
\rho_B(0) = \lambda_0 \ketbra{0} + \lambda_1 \ketbra{1} = \ls \begin{array}{cc} \lambda_0 & 0 \\ 0 & \lambda_1 \end{array} \rs \ ,
\eeq
where $\lambda_1 = 1 - \lambda_0$. There are $4$ Kraus operators:
\bes
\begin{eqnarray}
	K_{00} & = & \sqrt{\lambda_0} \bra{0} e^{-i \lambda t\s^{\a}_S \ox \s^{\beta}_B} \ket{0} \\
	K_{01} & = & \sqrt{\lambda_1} \bra{0} e^{-i \lambda t\s^{\a}_S \ox \s^{\beta}_B} \ket{1} \\
	K_{10} & = & \sqrt{\lambda_0} \bra{1} e^{-i \lambda t\s^{\a}_S \ox \s^{\beta}_B} \ket{0} \\
	K_{11} & = & \sqrt{\lambda_1} \bra{1} e^{-i \lambda t\s^{\a}_S \ox \s^{\beta}_B} \ket{1}
\end{eqnarray}
\ees
Let $\theta \equiv \lambda t$. Recall now that if $A^2 = I$ then $e^{i \theta A} = \cos \theta I + i \sin \theta A$ (which can be easily checked by Taylor expansion). Therefore 
\beq 
e^{i \theta \s^{\a}_S \ox \s^{\beta}_B} = \cos \theta \cdot I_S \ox I_B + i \sin \theta \cdot \s^{\a}_S \ox \s^{\beta}_B\ ,
\eeq
and hence (for a general $\mu, \nu \in \{ 0 ,1 \}$)
\beq
	K_{\mu \nu} = \sqrt{\lambda_{\nu}} \lb \cos \theta \d_{\mu \nu} \cdot I_S - i \sin \theta \bra{\mu} \s^{\beta}_B \ket{\nu} \cdot \s^{\a}_S \rb \ .
	\label{eqn:general2qubitKraus}
\eeq
The system then evolves according to the Kraus map
\beq 
\rho_S(t) = \sum_{\mu \nu} K_{\mu \nu}(t) \ketbra{\psi} K^{\dag}_{\mu \nu} (t)\ .
\eeq


\subsubsection{$Z\otimes X$ coupling}

Consider first $H_{SB} = \lambda Z_S \ox X_B$. In this case, we can use Eq.~\eqref{eqn:general2qubitKraus} to find
\bes
\begin{align}
	K_{00} & =  \sqrt{\lambda_0} \cos \theta \cdot I \\
	K_{11} & =  \sqrt{\lambda_1} \cos \theta \cdot I \\
	K_{01} & =  -i \sqrt{\lambda_1}  \sin \theta \cdot \s^z \\
	K_{10} & =  -i \sqrt{\lambda_0}  \sin \theta \cdot \s^z
\end{align}
\ees
The density matrix for this map evolves under the action of these $4$ Kraus operators:
\bes
\begin{align}
	\rho_S(t) & =  \sum_{\mu \nu} K_{\mu \nu}(t) \rho_S(0) K^{\dag}_{\mu \nu}(t)  \\
	& =  \lp \sqrt{\lambda_0} \cos \theta \rp ^2 \ketbra{\psi} + \lp \sqrt{\lambda_1} \cos \theta \rp ^2 \ketbra{\psi} + \lp \sqrt{\lambda_0} \sin \theta \rp ^2 \s^z \ketbra{\psi} \s^z + \lp \sqrt{\lambda_1} \sin \theta \rp ^2 \s^z \ketbra{\psi} \s^z  \\
	& =  \cos^2 \theta \ketbra{\psi} + \sin^2 \theta \cdot \s^z \ketbra{\psi} \s^z  \\
	&= \ls \begin{array}{cc} |a|^2 & ab^* \cos(2\theta) \\ a^*b \cos(2\theta) & |b|^2 \end{array}\rs  \ ,
	\label{eq:224d} 
\end{align}
\ees
where we used the fact that $\lambda_0 + \lambda_1 = 1$.

Can we relate this result to the phase damping map discussed in Sec.~\ref{sec:PD}? This seems plausible since in both cases the system is affected by a $Z$ operator. In the phase damping case we have 
\beq
\r_S(t) = \Ph[\r_S(0)] = p\ketb{\psi}{\psi} + (1-p)Z\ketb{\psi}{\psi}Z = \ls \begin{array}{cc} |a|^2 & (2p-1) ab^* \\ (2p-1) a^*b  & |b|^2 \end{array}\rs\ ,
\eeq
which we would like to equate with Eq.~\eqref{eq:224d}. Clearly, this requires $2p-1 = f(\theta)$, so that
\beq
p = \frac{1+f(\theta)}{2} \ ,
\eeq
and the phase damping map has as a physical origin the model given by $H_{SB} = \lambda Z_S \ox X_B$. 

Why did  $\lambda_0$ and  $\lambda_1$ drop out? The intuitive reason is that by having the bath qubit subject to $\s^x$, its $\ket{0}$ and $\ket{1}$ state are constantly flipped, which also interchanges  $\lambda_0$ and  $\lambda_1$, so it is as if they are averaged to $1/2$.

There is much more to say about this result, but first let us consider another case, which will turn out to subsume this one.

\subsubsection{$Z\otimes Z$ coupling}
Consider the interaction Hamiltonian $H = \lambda \s^z_S \ox \s^z_B$. For this choice, since $\s^z$ is diagonal, only the $K_{00}$ and $K_{11}$ Kraus operators are non-zero and have the form
\bes
\begin{eqnarray}
	K_{00} & = & \sqrt{\lambda_0} \lp \cos \theta \cdot I_S  - i \sin \theta \cdot \s^z \rp = \sqrt{\lambda_0}  \ls \begin{array}{cc} e^{-i \theta} & 0 \\ 0 & e^{i \theta} \end{array} \rs \\
	K_{11} & = & \sqrt{\lambda_1} \lp \cos \theta \cdot I_S  + i \sin \theta \cdot \s^z \rp = \sqrt{\lambda_1}  \ls \begin{array}{cc} e^{i \theta} & 0 \\ 0 & e^{-i \theta} \end{array} \rs
\end{eqnarray}
\ees
Altogether, the pure state $\ket{\psi}$ under each of these operators becomes
\bes
\begin{eqnarray}
	K_{00} \ket{\psi} & = & \sqrt{\lambda_0} \ls \begin{array}{cc} e^{-i \theta}&0 \\ 0&e^{i \theta} \end{array} \rs \ls \begin{array}{c} a\\ b \end{array}\rs = \sqrt{\lambda_0} \ls \begin{array}{c} ae^{-i \theta} \\ be^{i \theta} \end{array}\rs \\
	K_{11} \ket{\psi} & = & \sqrt{\lambda_1} \ls \begin{array}{c} ae^{i \theta} \\ be^{-i \theta} \end{array}\rs
\end{eqnarray}
\ees
Therefore:
\begin{eqnarray}
	\rho_S(t) & = & \lambda_0 \ls \begin{array}{c} ae^{-i \theta} \\ be^{i \theta} \end{array}\rs \ls \begin{array}{cc} a^*e^{i \theta} & b^*e^{-i \theta} \end{array}\rs + \lambda_1 \ls \begin{array}{c} ae^{i \theta} \\ be^{-i \theta} \end{array}\rs \ls \begin{array}{cc} a^*e^{-i \theta} & b^*e^{i \theta} \end{array}\rs \nonumber \\
	& = & \ls \begin{array}{cc}|a|^2 & ab^* \lp \lambda_0 e^{-2i \theta} + \lambda_1 e^{2 i \theta} \rp \\ a^*b \lp \lambda_0 e^{2i\theta} + \lambda_1 e^{-2i \theta} \rp & |b|^2 \end{array}\rs
\end{eqnarray}
where the diagonal elements have again been simplified with the use of the fact that $\lambda_0 + \lambda_1 = 1$. Defining
\beq
f(\theta) = \lambda_0 e^{-2i \theta} + \lambda_1 e^{2i \theta}\ ,
\label{eq:f-dephasing}
\eeq 
yields 
\beq 
\rho_S(t) = \ls \begin{array}{cc} |a|^2 & ab^* f(\theta) \\ a^*b f^*(\theta) & |b|^2 \end{array}\rs \ .
\label{eq:rho-ZZ-model}
\eeq
The previous example, $H_{SB} = Z \ox X$, is now seen to be a special case of this one, where $\lambda_0 = \lambda_1 = 1/2$ [for then $f(\theta) = \cos(2\theta)$], so everything we discuss next applies to it as well.

Note that the diagonal elements (``population") haven't changed under time evolution and yet the off-diagonal elements (``coherence") are modulated by the periodic function $f$.  This is like elastic scattering where no energy is exchanged and only relative phases are impacted. More precisely, this is a \emph{dephasing} process, although in our case, the phase coherence recurs periodically. The period of $f$ is $\tau = \pi / \lambda$. 

Consider the {purity} $P = \Tr(\r^2)$:%
\footnote{We can obtain the same result using the formula $P = \frac{1}{2}(1+\|\vec{v}\|^2)$, as follows: $\frac{1}{2}(v_x-iv_y)=ab^*f$ and $\frac{1}{2}(v_x+iv_y)=a^*bf^*$, so that $v_x^2+v_y^2 = 4|a|^2|b|^2|f|^2$. Also, $v_z = |a|^2-|b|^2$, and $1=(|a|^2+|b|^2)^2$. Adding all this up gives Eq.~\eqref{eq:232c}.}
\bes
\begin{eqnarray}
	P & = & \Tr \ls \lp \begin{array}{cc} |a|^2 & ab^* f(\theta) \\ a^*b f^*(\theta) & |b|^2 \end{array}\rp^2 \rs  \\
	& = & \Tr \lp \begin{array}{cc} |a|^4 + |a|^2|b|^2|f|^2 & \dots \\ \dots & |b|^4 + |a|^2|b|^2|f|^2 \end{array} \rp  \\
	& = & |a|^4 + |b|^4 + 2 |a|^2|b|^2|f|^2\ .
	\label{eq:232c}
\end{eqnarray}
\ees
Thus, this function is periodic with period $\tau_P = \pi / (2 \lambda)$ since $f$ appears squared in the expression. Since $|f|^2 = \lambda_0^2 + \lambda_1^2 + 2\lambda_0\lambda_1\cos(4\theta)$ we have
\bes
\begin{align}
\label{eq:fmin}
\min_{\theta} |f|^2 &= \lambda_0^2 + \lambda_1^2 - 2 \lambda_0 \lambda_1 = \lp \lambda_0 - \lambda_1 \rp^2 \\ 
\max_{\theta} |f|^2 &= \lambda_0^2 + \lambda_1^2 - 2 \lambda_0 \lambda_1 = \lp \lambda_0 + \lambda_1 \rp^2 = 1,
\end{align}
\ees
so that the minimum and maximum values of the purity are 
\bes
\begin{align}
\min_{\theta} P &= |a|^4 + |b|^4 + 2 |ab|^2 |\lambda_0 - \lambda_1|^2 \\
\max_{\theta} P &= |a|^4 + |b|^4 + 2 |ab|^2 = (|a|^2+|b|^2)^2=1 \ . 
\end{align}
\ees
The purity achieves a minimum of $1/2$ when the bath qubit is in a maximally mixed state, $\rho_B = I_B / 2$ (so that $\lambda_0 = \lambda_1=1/2$), and when the system qubit is an equal superposition, $|a|=|b|=1/\sqrt{2}$.\footnote{To see this note that $|a|^4 + |b|^4 = |a|^4 + (1-|a|^2)^2 = 2|a|^4-2|a|^2+1 = 2x^2-2x+1$ with $x=|a|^2$; this is minimized at $4x-2=0$, i.e., $x=1/2$, or $|a|=1/\sqrt{2}$.} 

For short times $t\ll\tau_P$ the purity decays quadratically. This is typical of non-Markovian decay, as we will see later (in contrast, Markovian decay is always exponential, i.e., it starts out linearly). One might also write this inequality as a weak coupling limit $\lambda \ll \pi/t$, which suggests that in this limit the purity appears to be only decaying (i.e., there is no time for a recurrence). However, if the coupling between the system and the bath is strong, that is $\lambda \gg 1$, then we may not necessarily resolve the oscillations in purity and instead measure an average purity significantly lower than $1$. In both these limits an observer would conclude that the state of the system is mixed, even though it started out pure.


Can we relate this model to the phase damping map discussed in Sec.~\ref{sec:PD}? Clearly, this requires $2p-1 = f(\theta) = f^*(\theta)$. Thus, equality only holds subject to the additional constraint that $f(\theta)$ is real. The constraint that $f(\theta) = f^*(\theta)$ requires that $\lambda_1=\lambda_2 = 1/2$, i.e., the initial bath state is $I/2$. Therefore this Hamiltonian model is more general than the phase damping map. This is because the bath operator $Z_B$ in the former does not affect the bath state $\r_B(0) = \text{diag}(\lambda_0,\lambda_1)$, with which it commutes. This keeps $\lambda_0$ and $\lambda_1$ in play, unlike the previous case where they were averaged out.


\subsection{Irreversible open system dynamics: infinite dimensional bath with a continuous density of states}
\label{sec:irr-dyn}

Our previous example involved a finite-dimensional bath, and we saw that the purity in this case is periodic. This reflect reversibility, which is a general characteristic of the finite dimensional case. To exhibit irreversibility we shall investigate an infinite-dimensional bath, but as we shall see, one additional ingredient (a continuum) will be needed as well.

Assume the system is either a qubit or a quantum harmonic oscillator (QHO). We will work out both cases. The system-bath interaction Hamiltonian has one of the following forms:
\bes
\label{eq:138}
\begin{align}
H_{SB}&=\lambda  \s_{S}^{z} \ox \hat{n}_{B} \\
H_{SB}&=\lambda  \hat{n}_{S} \ox \hat{n}_{B}\ ,
\end{align}
\ees
where $\hat{n}_S$ is the number operator satisfying $\hat{n}_S\ket{n}=n\ket{n}$ for $n=0,1,...,\infty$. The total Hamiltonian is $H=H_{SB}+H_B$, where we have set $H_S=0$ for simplicity.
We assume that the bath is itself a QHO with Hamiltonian 
\beq
H_B = \sum_{\n=0}^{\infty} E_{\n} \ketbra{\n} \ ,
\eeq
where $\hat{n}_B\ket{\n}=\n\ket{\n}$ and $\hat{n}_B$ is the number operator, and $E_{\n}$ are QHO energies: $E_{\n}=\o(\n+\frac{1}{2})$ (where as before we set $\hbar \equiv 1$). 
We assume that the initial state of the bath is a Gibbs state:
\bea
\r_{B}(0)=\frac{1}{Z}e^{-\b H_B}=\frac{1}{Z}\sum_{\n=0}^{\infty} e^{-\b {E_{\n}}}\ketbra{\n} \equiv \sum_{\n=0}^{\infty} \lambda_{\n}\ketbra{\n} \ ,
\label{eq:Gibbs1}
\eea
where $\b=\frac{1}{k_{B}T}$ and $\lambda_{\n}=\frac{1}{Z}e^{-\b E_{\n}}$ are the eigenvalues of the bath density matrix.
The denominator is the partition function: $Z=\Tr[e^{-\b H_B} ]=\sum_{\n=0}^{\infty}{e^{-\b E_{\n}}}$. 

Using the Hamiltonians in Eq.~\eqref{eq:138} and again defining $\t\equiv \lambda t$, the joint unitary evolution operator becomes
\beq
U(t)=e^{-itH}=e^{-itH_{SB}}e^{-itH_B} \ ,
\eeq
where we have used the fact that $[H_{SB},H_B]=0$.
Thus:
\bes
\begin{align}
U(t)&=\exp\left[{-i\t\left\{\begin{array}{l}
\s^z\\
\hat{n}_S \end{array}\right\}\ox \hat{n}_B}\right] \exp\left[-it I_S\ox\sum_{\n=0}^{\infty} E_{\n} \ketbra{\n}\right]\\
&= \sum_{\n=0}^{\infty} \exp\left[{-i\t\left\{\begin{array}{l}
\s^z\\
\hat{n}_S \end{array}\right\}\ox \hat{n}_B}\right] \exp\left(-it E_{\n}  I_S\right) \ox \ketbra{\n}\\
&= \sum_{\n'=0}^{\infty} \exp\left(-it E_{\n'}I_S  \right) \exp\left[{-i\t\left\{\begin{array}{l}
\s^z\\
\hat{n}_S \end{array}\right\} \n' }\right] \ox \ketbra{\n'}\ ,
\end{align}
\ees
where in the last equality we used $\hat{n}_B\ket{\n}=\n\ket{\n}$.

Taking the partial matrix element with respect to the bath, we find:
\bes
\begin{align}
\bra{\m}U(t)\ket{\n}&=\sum_{\n'=0}^{\infty} \exp\left(-it E_{\n'} I_S \right) \exp\left[{-i\t\left\{\begin{array}{l}
\s^z\\
\hat{n}_S \end{array}\right\} \n' }\right] \langle{\m}\ketbra{\n'} \n\rangle\\
&=
\sum_{\n'=0}^{\infty} \exp\left(-it E_{\n'} I_S \right) \exp\left[{-i\t\left\{\begin{array}{l}
\s^z\\
\hat{n}_S \end{array}\right\} \n' }\right] \delta_{\m\n'}\delta_{\n'\n}\\
&=\exp\left[-it E_{\n} I_S \right] \exp\left[{-i\t\left\{\begin{array}{l}
\s^z_S\\
\hat{n}_S \end{array}\right\} \m}\right] \d_{\m\n}\ .
\end{align}
\ees
Thus, the Kraus operators $K_{\m\n}(t)=\sqrt{\lambda_{\n}}\bra{\m}U(t)\ket{\n}$ can be written as
\bea
K_{\m\n}(t)=\sqrt{\lambda_{\n}}\exp\left[{-i\t\n\left\{\begin{array}{l}
\s^z\\
\hat{n} \end{array}\right\}}\right]\d_{\m\n} \ ,
\eea
where we dropped the $S$ subscripts since it is now clear that the remaining operators act only on the system, and also dropped the term $\exp\left[-it E_{\n} I \right]$ (whose origin was $H_B$), since it acts as an overall phase and will drop out once we apply $K_{\m\n}(t)[\cdot]K_{\m\n}^\dgr(t)$.


Let us write the initial system density matrix as:
\bea
\r_S(0)=\sum_{m,n=0}^{1\ \text{or}\ \infty}{r_{mn}\ketb{m}{n}},
\eea
where we expanded the density matrix in the eigenvectors of the $\s^z$ or $\hat{n}$ operator, with the upper limits being $1$ or $\infty$, respectively.  

In the case where the system is a qubit, we have, using Eq.~\eqref{eq:Kraus OSR}:
\bes
\begin{align}
\r_S(t)&=\sum_{m,n=0}^1{r_{mn}\sum_{\n}{\lambda_{\n} e^{-i\t \n \s^z}\ketb{m}{n}e^{i\t \n \s^z}}}\\
&=\sum_{m,n=0}^1{r_{mn}\sum_{\n}{\lambda_{\n} e^{-i\t \n (-1)^m}\ketb{m}{n}e^{i\t \n (-1)^n}}}.
\end{align}
\ees
Let us rewrite this as
\bea
\r_S(t)=\sum_{m,n=0}^1{r_{mn}\ketb{m}{n}g_{n,m}(\t)},
\eea
where 
\beq
g_{n,m}(\t)\equiv\sum_{\n=0}^\infty{\lambda_{\n} e^{i [(-1)^n-(-1)^m] \n \t}}.
\eeq 
The diagonal terms $g_{m,m}=\sum_{\n=0}^\infty{\lambda_{\n}}=1$ are constant, and therefore they do not evolve in this case.
Let us focus next on the case in which both system and bath are QHO's. We then have, using Eq.~\eqref{eq:Kraus OSR}:
\bes
\begin{align}
\r_S(t)&=\sum_{m,n}{r_{mn}\sum_{\n}{\lambda_{\n} e^{-i\t \n \hat{n}}\ketb{m}{n}e^{i\t \n \hat{n}}}}\\
&=\sum_{m,n}{r_{mn}\sum_{\n}{\lambda_{\n} e^{-i\t \n m}\ketb{m}{n}e^{i\t \n n}}}.
\end{align}
\ees
Let us rewrite this as
\bea
\r_S(t)=\sum_{m,n}{r_{mn}\ketb{m}{n}f_{n-m}(\t)},
\eea
where 
\beq
f_{x}(\t)\equiv\sum_{\n}{\lambda_{\n} e^{i x \n \t}}.
\eeq 

Note that $f_{0}(\t)=\sum_{\n}{\lambda_{\n} }=1$, 
so the state of the system at time $t$ can be split into diagonal (population) and off-diagonal (coherence) terms:
\bea
\r_{S}(t)=\sum_{n}{r_{nn}\ketb{n}{n}}+\sum_{m\neq n}{r_{mn}\ketb{m}{n}f_{n-m}(\lambda t)}
\label{eq:rhoHO}
\eea
The population term is time-independent, i.e., is the same as in $\r_S(0)$. The coherence term is time-dependent and is affected by the coupling to the bath. Its behavior is completely determined by the modulation function $f$, which can be computed explicitly by performing the geometric sum:
\bes
\begin{align}
\label{eq:fxtheta}
f_{x}(\t)&=\frac{1}{Z}\sum_{\n=0}^\infty e^{-\b \o(\n+\frac{1}{2})} e^{i x \n \t}=\frac{e^{-\frac{1}{2}\b \o}}{Z}\sum_{\n=0}^\infty e^{-\b \o\n} e^{i x \n \t}\\
& = \frac{e^{-\frac{1}{2}\b \o}}{Z}\sum_{\n=0}^\infty q^\n \  , \ \ \ q\equiv e^{-(\b \o -i x \t)} \\
=& \frac{e^{-\frac{1}{2}\b \o}}{Z} \frac{1}{1-q} ,
\end{align}
\ees
where convergence of the infinite series is guaranteed since $|q|=e^{-\b\o}<1$ due to $\b\o>0$. 

Note that $f_x(\t) = f_x(\t+2\p/x)$, i.e., $f$ is periodic, with period $T(x) = 2\p/(\lambda x)$. Each off-diagonal element $\ketb{m}{n}$ in Eq.~\eqref{eq:rhoHO} thus has a different period $\tau_{mn} = 2\p/(\lambda |m-n|)$. This suggests that we might have an example of irreversible decoherence [decay of the off-diagonal elements of $\r_S(t)$], if $\r_S(t)$ isn't periodic. But is it? Periodicity requires there to be a time $\tilde{\tau}$ that is simultaneously divisible by all periods $\tau_{mn}$ (i.e., all such periods fit an integer number of times into $\tilde{\tau}$). Clearly, $\tilde{\tau} = 2\p/\lambda$ is just such a time: $\tilde{\tau}/\tau_{mn} = |m-n|$. Thus $\r_S(t)$ is periodic after all, with a period of $2\p/\lambda$, and we do not have irreversibility.

Note that the qubit-system case is just a special case of the QHO-system. To see this observe that $g_{00}=g_{11}=\sum_\nu \lambda_\nu  = 1$, and $g_{01}=g_{10}^*=\sum_\nu \lambda_\nu e^{2i\theta \nu} = f_2(\theta)$.

To better understand the emergence of irreversibility, we thus consider a modified model, where we introduce a mode density $\Omega(\n)$ (a standard trick in condensed matter physics; consider, e.g., the Debye model). We thus replace the sum by an integral over $\n$, and write 
\bea
f_{x}(\t)=\frac{1}{Z}\int_{0}^{\infty}{d\n e^{-\b \o(\n+\frac{1}{2})} e^{i x \t \n} \Omega(\n)} .
\eea
If $\Omega(\n) = \sum_{\n'=0}^\infty \delta(\n-\n')$ then we recover Eq.~\eqref{eq:fxtheta}. The modified model has the following mode density:
\bea
\Omega(\n)= \left\{ \begin{array}{ll}
         \Omega_{0} & \mbox{if $\n_{c}\geq \n \geq 0$};\\
        0 & \mbox{otherwise}.\end{array} \right. ,
        \label{eq:Om}
\eea
i.e., it has a continuous set of modes with a high-mode cutoff of $\n_c$. The cutoff is physically well-motivated: it reflects the fact that any physical model must have a highest but finite accessible energy. Then:
\bes
\begin{align}
f_{x}(\t)&=\frac{\Omega_{0}}{Z}\int^{\n_c}_{0}{e^{-\b \o (\n+\frac{1}{2})}e^{ix\t \n}} d\n\\
&=
\frac{\Omega_{0}e^{-\frac{1}{2}\b\o}}{Z}\int^{\n_c}_{0}{e^{-(\b \o -ix\t) \n }d\n}\\
&= \frac{\Omega_{0}e^{-\frac{1}{2}\b\o}}{Z}\frac{e^{-(\b\o-i x \t) \n_c} -1}{-\b \o +i x \t}
\end{align}
\ees
The numerator is periodic just like in the previous case, so the same comments apply. However, the denominator contains a $(n-m) \lambda t$ dependence (the $x\t$ term), which shows that the coherences decay irreversibly as $1/t$, with the decay being faster for off-diagonal elements that are farther apart. 

We have thus seen how an infinite-dimensional bath with a continuous mode density can result in a decay which is truly irreversible. The decay of the off-diagonal elements is often called \emph{decoherence}, since it refers to the gradual disappearance of coherence, the name given to the off-diagonal elements. This is not an entirely satisfactory definition of decoherence, since it is obviously basis dependent. We shall give a more careful definition later.


\section{Derivation of the Lindblad equation from a short time expansion of quantum maps}
\label{sec:LE-deriv1}

Just as the Hamiltonian is the generator of unitary evolution, we may ask if there is a generator for open system dynamics. By this we mean that the solution of the differential equation $\dot{\r} = \mc{L}\r$ is a quantum map, and $\mc{L}$ plays the role of a generator. In this section we will see how to find such a generator for very short evolution times using just a short time expansion of the Kraus OSR. We will then postulate that the same generator applies for all times (a type of Markovian approximation), and thus arrive at a ``master equation" of the form $\dot{\r} = \mc{L}\r$ that generates a quantum map. The generator $\mc{L}$ is called the Lindbladian, and the master equation is the Lindblad equation, whose special form guarantees complete positivity (i.e., that the evolution it generates is a quantum map).

\subsection{Derivation}
\label{sec:LE-derivation}

By Taylor expansion around $t=0$ we have:
\beq
\label{eq:deriv}
\rho (dt) = \rho (0)+\dot{\rho}|_{0} dt + O(dt^2)\ .
\eeq
On the other hand, the Kraus OSR tells us that:
\beq
\rho (dt) =\sum_{\alpha} K_{\alpha}(dt)\rho (0)
K^{\dag}_{\alpha}(dt)\ .
\label{eq:245}
\eeq
Let's try to find the Kraus operators that make these two equations agree up to $O(dt)$. Clearly, to get the $\r(0)$ term in Eq.~\eqref{eq:deriv} one of the Kraus operators must contain the identity operator. Thus, let us write 
\beq
K_0 = I+L_0 dt \ ,
\label{eq:K_0}
\eeq
so that
\beq
K_{0}\rho (0) K^{\dag}_{0} = \rho(0)+[L_0 \rho (0) +\rho (0)L^{\dag}_{0}]dt+O(dt^2)\ .
\eeq
This contributes one term of order $dt$, but there must be more (since as we know a Kraus OSR with a single Kraus operator is equivalent to unitary evolution). Thus, we can pick all other Kraus operators as
\beq
K_{\alpha} = \sqrt{dt}L_{\alpha} \ , \quad \alpha \geq 1\ ,
\eeq
so that
\beq
K_{\alpha}\rho (0) K^{\dag}_{\alpha} = L_{\alpha}\rho (0) L^{\dag}_{\alpha} dt \ .
\eeq
Let us now enforce the normalization condition $\sum_{\alpha=0} K^{\dag}_{\alpha} K_{\alpha} = I$, up to $O(dt)$:
\beq
I = K^{\dag}_{0} K_{0}+ \sum_{\alpha\geq 1} K^{\dag}_{\alpha} K_{\alpha} =
I+dt\left(L_0+L_0^\dgr  + \sum_{\a\geq 1} L^{\dag}_{\alpha}L_{\alpha} \right) + O(dt^2)\ .
\label{eq:250}
\eeq
Without loss of generality we can decompose the general operator $L_0$ into a Hermitian and anti-Hermitian part: $L_0=A-iH$, with $A=A^{\dag}$ and $H=H^{\dag}$. Thus, Eq.~\eqref{eq:250} tells us that to $O(dt)$: 
\beq
A = -\frac{1}{2} \sum_{\a\geq 1} L^{\dag}_{\alpha}L_{\alpha}\ .
\eeq
Plugging all this back into the Kraus OSR, Eq.~\eqref{eq:245}, we find:
\bes
\begin{align}
\rho (dt) &= K_0 \rho (0)K^{\dag}_0 + \sum_{\alpha \geq 1} K_{\alpha} \rho (0) K^{\dag}_{\alpha}\\
& = \rho (0) + (A-iH)dt \rho (0) + \rho (0)(A+ iH)dt + \sum_{\alpha \geq 1}L_{\alpha} \rho (0)
L^{\dag}_{\alpha} dt + O(dt^2)\\
&= \rho (0) -i[H,\r(0)]dt + \{A,\rho(0)\}dt + \sum_{\alpha \geq 1}L_{\alpha} \rho (0)
L^{\dag}_{\alpha} dt + O(dt^2)\\
&=\rho (0) -i[H,\r(0)]dt + \sum_{\alpha \geq 1}\left(L_{\alpha} \rho (0)
L^{\dag}_{\alpha} -\frac{1}{2} \left\{L^{\dag}_{\alpha}L_{\alpha},\r(0)\right\} \right) dt + O(dt^2)\ .
\end{align}
\ees
Therefore:
\beq
\dot{\rho}(t)|_{0} =\lim_{dt \rightarrow 0} \frac{\rho(dt)-\rho(0)}{dt} = 
 -i[H,\rho(0)]+\sum_{\alpha \geq 1}\left(L_{\alpha} \rho (0)
L^{\dag}_{\alpha} -\frac{1}{2} \left\{L^{\dag}_{\alpha}L_{\alpha},\r(0)\right\} \right)\ .
\label{eq:253}
\eeq
This is almost the form of the master equation we are after. Note that the operators $L_{\a}$ are not dimensionless, but must have units of $1/\sqrt{\text{time}}$. To make them dimensionless, let us replace them by $\sqrt{\g'_\a} L_\a$, where $\g'_\a$ has units of $1/\text{time}$, so that the new $L_\a$ are dimensionless. Substituting this into Eq.~\eqref{eq:253} only generates the combinations $\sqrt{\g'_\a}\sqrt{\g'^*_\a} = |\g'_\a| \equiv \g_\a \geq 0$. Thus:
\beq
\dot{\rho}(t)|_{0}  = 
 -i[H,\rho(0)]+\sum_{\alpha \geq 1}\g_\a\left(L_{\alpha} \rho (0)
L^{\dag}_{\alpha} -\frac{1}{2} \left\{L^{\dag}_{\alpha}L_{\alpha},\r(0)\right\} \right)\ .
\label{eq:254}
\eeq
This result is valid as a short time expansion near $t=0$. We now make an extra, very significant assumption:
\begin{myassumption}
Eq.~\eqref{eq:254} is valid for \emph{all} times $t>0$. 
\end{myassumption}
This is essentially the Markovian limit, which states (informally) that there is no memory in the evolution, as manifested by the fact that the evolution ``resets" every $dt$. It is motivated in part by the observation that if we limit our attention just to $\dot{\rho}(t)|_{0}  = 
 -i[H,\rho(0)]$, then we already know that this replacement is valid, i.e., that we can indeed replace this with $\dot{\rho}(t)  = 
 -i[H,\rho(t)]$ for all $t$, since this is just the Schr\"odinger equation. With this we finally arrive at the \emph{Lindblad equation}:
\beq
\frac{d\rho}{dt} = -i[H,\rho(t)] + \sum_{\alpha}\g_\a\left(L_{\alpha} \rho (t)
L^{\dag}_{\alpha} -\frac{1}{2} \left\{L^{\dag}_{\alpha}L_{\alpha},\r(t)\right\} \right)
 \equiv {\cal L}\rho\ .
 \label{eq:Lindblad-eq}
\eeq
The generator of the evolution, $\mc{L}$, is called the Lindbladian. The $L_{\alpha}$  are called the Lindblad operators. The operator $H$ is Hermitian and will be interpreted later as the Hamiltonian of the system (plus a correction called the Lamb shift). The form of the dissipative part of the Lindbladian,  also known as the dissipator, is:
\beq
\mc{L}_D[\cdot] = \sum_\a \g_\a\left( L_{\alpha} \cdot L^{\dag}_{\alpha} -\frac{1}{2}\{L^{\dag}_{\alpha}L_{\alpha},\cdot\}\right)\ , \quad \g_\a \geq 0 \ .
\label{eq:L_D}
\eeq
We can now define decoherence in a basis-independent manner. Decoherence is what happens when $\mc{L}_D \neq 0$. In this case the evolution of the density matrix is governed not only by the Schr\"odinger component $-i[H,\cdot]$ (responsible for unitary evolution), but also by the dissipator, which gives rise to non-unitary evolution.

The positivity of the Lindblad rates (they have units of $1/\text{time}$) is a direct consequence of complete positivity. Conversely, it guarantees that the map generated by the Lindblad equation~\eqref{eq:Lindblad-eq} is CP, as we will show in Sec.~\ref{sec:LEtoCP}. As our derivation shows, this map has Kraus operators given by
\bes
\begin{align}
K_0 &= I + (-iH+A)dt \\
K_\a &= \sqrt{\g_\a} L_\a \sqrt{dt} \ , \quad \a\geq 1 \ .
\end{align}
\ees

\subsection{The Markovian evolution operator as a one-parameter semigroup}

The formal solution of the (Lindblad) equation $\dot{\rho}(t) = {\mathcal{L}} \rho$ is 
\beq
\rho(t) = e^{{\mathcal{L}} t} \rho(0) \equiv \Lambda_t \rho(0) \ ,
\eeq
where $\Lambda_t$ is called the Markovian evolution operator (it is also a quantum map). The set $\{\Lambda_t\}_{t\geq 0}$ forms a one-parameter semigroup. The one-parameter part is clear: the set depends only on the time $t$, once the Lindblad generator $\mc{L}$ is fixed. The reason this is a semi-group is that the superoperators $\Lambda_t$ only satisfy three of the four properties of a group: 
\begin{enumerate}
\item Identity operator: $\Lambda_0 = \mathcal{I}$.
\item Closed under multiplication: $\Lambda_t \Lambda_s = e^{\mathcal{L} t} e^{\mathcal{L} s} = e^{\mathcal{L} (t+s)} = \Lambda_{t+s}$.
\item Associative: $(\Lambda_t \Lambda_s)\Lambda_r = \Lambda_t (\Lambda_s \Lambda_r)$.
\end{enumerate}
However, not every element has an inverse: as we shall see, complete positivity forces all the eigenvalues of $\mc{L}$ to be non-positive, so that the map $\Lambda_t$ is contractive, corresponding to exponential decay. This means that $\Lambda_\infty$ has at least one zero eigenvalue, so it does not possess an inverse. We shall shortly see this in examples.

\subsection{Proof that the solution of the Lindblad Equation is a CP map}
\label{sec:LEtoCP}

The argument we use to prove that the solution of the Lindblad Equation is a CP map is essentially the reverse of that presented in Sec.~\ref{sec:LE-derivation}, plus a proof that the concatenation of CP maps (and in particular of a CP map with itself) is still a CP map.

Let us start from the Lindblad equation and let $A \equiv -\frac{1}{2} \sum_{\a\geq 1} L^{\dag}_{\alpha}L_{\alpha}$:
\bes
\begin{align}
\label{eq:365a}
\dot{\rho}(t) &= \mc{L}\r(t) = 
 -i[H,\rho(t)]+\sum_{\alpha \geq 1}\left(L_{\alpha} \rho (t)
L^{\dag}_{\alpha} -\frac{1}{2} \left\{L^{\dag}_{\alpha}L_{\alpha},\r(t)\right\} \right)\\
&=-i[H,\r(t)] + \sum_{\alpha \geq 1}\left(L_{\alpha} \rho (t)
L^{\dag}_{\alpha} -\frac{1}{2} \left\{L^{\dag}_{\alpha}L_{\alpha},\r(t)\right\} \right) \\
&= -i[H,\r(t)] + \{A,\rho(t)\} + \sum_{\alpha \geq 1}L_{\alpha} \rho (t)
L^{\dag}_{\alpha} \\
& =  (A-iH) \rho (t) + \rho (t)(A+ iH)+ \sum_{\alpha \geq 1}L_{\alpha} \rho (t) L^{\dag}_{\alpha} \ .
\label{eq:253'}
\end{align}
\ees
Now define $K_0 \equiv  [I+(A-iH)dt] + O[(dt)^2]$,
where in the end we will take the $dt\to 0$ limit to remove any residual $O[(dt)^2]$ terms.
Then:
\bes
\begin{align}
K_0\r(t)K_0^\dgr &= 
[I+(A-iH)dt]\r(t)[I+(A+iH)dt] + O[(dt)^2] \\
&= \r(t) + 
[(A-iH)dt] \rho (t) + \rho (t)[(A+ iH)dt] + O[(dt)^2]
 \ .
 \label{eq:367c}
\end{align}
\ees

%
Thus, using Eq.~\eqref{eq:253'}, Eq.~\eqref{eq:367c}, and defining $K_\a \equiv  L_{\alpha}\sqrt{dt}$:
\bes
\begin{align}
\rho(t+dt) &= \rho(t)+ \dot{\rho}(t)dt + O[(dt)^2]  \\
&= \rho(t) + [(A-iH)dt] \rho (t) + \rho (t)[(A+ iH)dt] + O[(dt)^2] + \sum_{\alpha \geq 1} (L_{\alpha}\sqrt{dt} ) \rho (t) (L^{\dag}_{\alpha}\sqrt{dt})  \\
&= \sum_{\alpha\geq 0} K_{\alpha} \r(t) K^{\dag}_{\alpha} + O[(dt)^2] \\
&\equiv \Phi[\r(t)] \ ,
\label{eq:367d}
\end{align}
\ees
which is in Kraus OSR form. However, to prove that this is a valid quantum map we still need to show that the set $\{K_\a\}_{\a\geq 0}$ satisfies the normalization condition. Indeed, we have:
\beq
 \sum_{\alpha\geq 0} K^{\dag}_{\alpha} K_{\alpha} = 
I+dt\left(2A  + \sum_{\a\geq 1} L^{\dag}_{\alpha}L_{\alpha} \right) + O(dt^2)
= I  + O(dt^2) \ ,
\label{eq:250'}
\eeq
where in the first equality we used the Hermiticity of $A$ and $H$, and in the second equality we used the definition of $A$.

Thus, we have shown that in the $dt\to 0$ limit the map $\Phi$ [Eq.~\eqref{eq:367d}] is a quantum map from $\r(t)$ to $\r(t+dt)$. Let $dt = \lim_{n\to\infty} t/n$, and consider the concatenated sequence of maps $\lim_{n\to\infty} \Phi^{\circ n}[\r(0)] = \Phi[\Phi[\cdots \Phi[\r(0)]]] = \Lambda_t\r(0)$, which is clearly equivalent to the solution of the Lindblad equation [since it maps $\r(0) \to \r(dt) \to \r(2dt)\to  \cdots\to \r(t)$], i.e., if $\dot{\r} = \mc{L}\r(t)$, with $\mc{L}$ the Lindbladian of Eq.~\eqref{eq:365a}, then $ \Lambda_t = e^{\mc{L}t}$. Since we have shown that $\Phi$ is a CP map, it remains to be shown that a concatenation of quantum maps is still a quantum map. This is true, since if $\Phi_1$ and $\Phi_2$ are quantum maps then
\beq
\Phi_2 \circ \Phi_1(\r) = \Phi_2[\Phi_1(\r)] = \sum_{\b} K'_\b \Phi_1(\r) {K'_\b}^\dgr = \sum_{\a\b} K'_\b K_\a(\r) K_\a^\dgr {K'_\b}^\dgr = \sum_\g K''_\g \r {K''_\g}^\dgr\ ,
\eeq  
where $K''_\g = K'_\b K_\a$, and $\sum_\g {K''_\g}^\dgr K_\g = \sum_{\a} K_\a^\dgr (\sum_\b {K'_\b}^\dgr K_\b) K_\a =\sum_{\a} K_\a^\dgr  K_\a = I$ as required. 



\subsection{Examples}

\subsubsection{Just $H$ for a single qubit: the Bloch equations}
\label{sec:Blocheq}

Consider the Lindblad equation with all $\g_\a = 0$, i.e., $\dot{\r} = -i[H,\r]$. This is just the Schr\"{o}dinger equation written for density matrices (also known as the Liouville-von Neumann equation). Let us solve it for the case of a single qubit. We can always write $H = h_0 I+ \sum_{i\in\{x,y,z\}}h_i \s^i$, with $\vec{h} = (h_x,h_y,h_z)\in\mathbb{R}^3$ (since the Pauli matrices with identity form a basis over $\mathbb{R}^4$ for all $2\times 2$ matrices). Thus, using $\rho=\frac{1}{2}(I+\vec{v}\cdot\vec{\sigma})$:
\beq
-i[H,\rho] = -\frac{i}{2}\sum_{i\in\{x,y,z\}}h_i [\s^i,\vec{v}\cdot\vec{\s}] = -\frac{i}{2}\sum_{i,j\in\{x,y,z\}}h_i v_j [\s^i,\s^j] = \sum_{i,j,k\in\{x,y,z\}}\varepsilon_{ijk} h_i v_j\s^k = (\vec{h}\times\vec{v})\cdot\vec{\s} \ .
\eeq
Since $\dot{\r} = \frac{1}{2}(\dot{\vec{v}}\cdot\vec{\sigma})$, we find
\beq
 \dot{v}\cdot\vec{\s} = 2(\vec{h}\times\vec{v})\cdot\vec{\s}\ ,
\label{eq:Blocheq}
\eeq
which are three coupled first order differential equations for the components of $\vec{v}$. These are known as the Bloch equations, and their solution has the Bloch vector $\vec{v}$ rotating around the vector $\vec{h}$ with a frequency equal to ${2}\|\vec{h}\|$, as is easily checked. For example, consider a rotation about the $v_x$ axis, i.e., let $\vec{h}=(h,0,0)$. Then Eq.~\eqref{eq:Blocheq} becomes:
$\dot{v}_x = 0$, $\dot{v}_y = -2hv_z$, and $\dot{v}_z = 2hv_y$. Differentiating again gives $\ddot{v}_y = -4h^2 v_y$. The solution of these equations is 
\bes
\begin{align}
v_x(t)&=v_x(0)\\
v_y(t) &=v_y(0)\cos(2ht)-v_z(0)\sin(2h t) \\
v_z(t) &=v_z(0)\cos(2h t)+v_y(0)\sin(2ht)\ .
\end{align}
\ees
The general case follows from this one by a reorientation of the axes to align with what we called the $v_x$ axis in the solution above.

\subsubsection{Phase Damping for a single qubit}
\label{sec:PD-Lind}

We already encountered the phase damping model in the Kraus OSR setting in Sec.~\ref{sec:PD}. Let us now study a Lindblad equation model that generates the same map.

Let $L_1 = \sigma^z = Z$, $\gamma_1 = \gamma$, $\g_{\a\geq 2}=0$, and $H = 0$. Thus,
\beq
\dot{\rho}(t) = \gamma(Z\rho Z^{\dagger} -\frac{1}{2}\lbrace Z^{\dagger} Z,\rho \rbrace) = \gamma(Z\rho Z - \rho)\ .
\label{eq:267}
\eeq
Using $\rho=\frac{1}{2}(I+\vec{v}\cdot\vec{\sigma})$, the left-hand side evaluates to $\dot{\rho} = \frac{1}{2} \dot{\vec{v}}\cdot\vec{\sigma}$. For the right hand side $Z\rho Z = \frac{1}{2}(I-v_x X -v_y Y +v_z Z)$, and we thus arrive at:
\beq
\frac{1}{2}\left(\dot{v_x}X +\dot{v_y}Y+\dot{v_z}Z \right)= -\g \left( v_x X + v_y Y\right) \ .
\eeq
Equating the two sides componentwise (multiply both sides by $X$, $Y$, or $Z$, and take the trace) gives:
\bes
\begin{align}
\dot{v}_x &= -2\gamma v_x \implies v_x(t) = v_x(0) e^{-2\gamma t} \\
\dot{v}_y &= -2\gamma v_y \implies v_y(t) = v_y(0) e^{-2\gamma t} \\
\dot{v}_z &= 0 \implies v_z(t) = v_z(0) \ .
\end{align}
\ees
We can see that, since $\gamma \geq 0$, the map is contractive, and the Bloch sphere collapses to the $v_z$-axis exponentially fast with time. In the limit $t\to\infty$, this simply projects every state directly to the $v_z$-axis, which is manifestly uninvertible.

We can now match the Lindblad equation solution to the Kraus OSR result from Sec.~\ref{sec:PD}, where we found the Kraus operators $K_0 = \sqrt{p} I$ and $K_1 = \sqrt{1-p} Z$, and found that the Bloch vector is mapped to
\beq
\vec{v}' = ((2p-1)v_x(0),(2p-1)v_y(0),v_z(0)) \equiv (v_x(t),v_y(t),v_z(t)) \ .
\eeq
The Lindblad phase damping result and the Kraus OSR thus have exactly the same effect provided we identify
\beq
2p-1 = e^{-2\gamma t} \implies p(t) = \frac{1}{2}(1+e^{-2\gamma t}) \ .
\eeq
The probability in this model approaches $1/2$ in the limit $t \rightarrow \infty$.

If we now allow $H\neq0$, i.e., solve the full Lindblad equation $\dot{\r} = -i[H,\r]+\gamma(Z\rho Z^{\dagger} - \rho)$, then the result in Sec.~\ref{sec:Blocheq} shows that this gives rise to a rotating Bloch ellipsoid that is simultaneously shrinking exponentially along its principal axis.

\subsubsection{Amplitude damping / Spontaneous Emission for a single qubit}
\label{sec:Lind-AD}

Likewise, we can construct a Lindblad equation for amplitude damping, which we encountered as a quantum map in Sec.~\ref{sec:bitflipmap}.

Let $L_1 = \sigma^- = \ketb{0}{1} = (\s^+)^\dgr$, $\gamma_1 = \gamma$, $\g_{\a\geq 2}=0$, and $H = 0$. Plugging these into the Lindblad equation we get:
\beq
\dot{\rho}(t) = \gamma \left(\sigma^- \rho \sigma^+ - \frac{1}{2}\lbrace \sigma^+ \sigma^-,\rho \rbrace \right) 
= \gamma \left(\ketb{0}{1}\rho\ketb{1}{0} -\frac{1}{2} \lbrace \ket{1}\bracket{0}\bra{1},\rho \rbrace\right)
\label{eq:282}
\eeq
Using $\rho=\frac{1}{2}(I+\vec{v}\cdot\vec{\sigma})$ we find, for the right-hand side:
\bes
\begin{align}
& \ketb{0}{1}\left[\frac{1}{2}(I+v_x X +v_y Y+v_z Z) \right]\ketb{1}{0} = \frac{1}{2}\left(\ketb{0}{0}-v_z \ketb{0}{0} \right) \\
& -\frac{1}{4} \ketb{1}{1}(I+v_x X +v_y Y+v_z Z)  = -\frac{1}{4}\left(\ketb{1}{1} +(v_x+iv_y)\ketb{1}{0} - v_z\ketb{1}{1}\right)\\
& -\frac{1}{4} (I+v_x X +v_y Y+v_z Z)\ketb{1}{1}  = -\frac{1}{4}\left(\ketb{1}{1} +(v_x-iv_y)\ketb{1}{0} - v_z\ketb{1}{1}\right) .
\end{align}
\ees
Adding up all these terms gives:
\beq
\frac{1}{2}\left(Z -\frac{1}{2}v_x X-\frac{1}{2}v_y Y-v_z Z\right) ,
\eeq
which we need to equate with $\frac{1}{2}\dot{\vec{v}}\cdot\vec{\sigma}$. Therefore:
\beq
\dot{v}_x = -\frac{1}{2}\g v_x\ , \quad \dot{v}_y = -\frac{1}{2}\g v_y\ , \quad \dot{v}_z = -\g (v_z-1) \ .
\eeq
The last of these is solved by writing $dv_z/(v_z-1) = -\g dt$ and integrating, to give $\ln(v_z-1) = -\g t + c$, i.e., $v_z(t) = c' e^{-\g t}+1$, so that $c'=v_z(0)-1$. Thus: 
\bes
\label{eq:270}
\begin{align}
v_x (t) &= v_x(0)e^{-\gamma t/2} \\
v_y (t) &= v_y(0)e^{-\gamma t/2}\\
v_z (t) &= 1+[v_z(0)-1]e^{-\gamma t} \ .
\end{align}
\ees
As $t \rightarrow \infty$, $v_x,v_y \rightarrow 0$ and $v_z \rightarrow 1$. This represents a contraction of the Bloch sphere to the north pole state $\ketb{0}{0}$. Eq.~\eqref{eq:270} also show that the contraction rate is twice as high along the $v_z$ axis than the $v_x$ and $v_y$ axes.

Now recall that in our Kraus OSR treatment of amplitude damping (Sec.~\ref{sec:bitflipmap}) we   had the Kraus operators $K_0 = \ketb{0}{0} + \sqrt{1-p}\ketb{1}{1}$ and $K_1=\sqrt{p}\ketb{0}{1}$, and found that the Bloch vector was mapped to $\vec{v}'=(\sqrt{1-p}v_x(0),\sqrt{1-p}v_y(0),(1-p)v_z(0)+p)$. The Lindblad amplitude damping result and the Kraus OSR thus have exactly the same effect provided we identify $p = 1-e^{-\gamma t}$. Thus, the probability of a transition from the excited state to the ground state increases exponentially with time, and in the limit $t \rightarrow \infty$ we have $p\to 1$.

Note that this dynamical description is not unique, as the Kraus map only fixes the discrete mapping from the initial to the final state, and there are many dynamical descriptions which will recreate the mapping. Markovian dynamics is only one of the possible evolutions.

\section{The Lindblad equation via coarse graining}
\label{sec:LE-CG}

In this section we provide an alternative analysis leading to the Lindblad equation. The derivation is longer than the one we saw in Sec.~\ref{sec:LE-deriv1}, but provides additional insight and generalizability. Our analysis follows Ref.~\cite{Lidar200135}, with some changes of notation as well as clarifications and minor corrections.

\subsection{Derivation}
\label{sec:deriv-LE-CG}

Let us start again with the Kraus OSR, and recall that the Kraus operators act on $\mc{H}_S$, i.e., $K_\a \in \mc{B}(\mc{H}_S)$. Let us introduce a fixed (time-independent) operator basis $\{F_i\}_{i=0}^{d_S^2-1}$ for $\mc{B}(\mc{H}_S)$, where $d_S = \dim(\mc{H}_S)$, such that $F_0=I$. We can then expand the Kraus operators in this basis:
\beq
K_\a(t) = \sum_{i=0}^{d_S^2-1} b_{i\a}(t) F_i \ ,
\eeq
where $b_{i\a}$ are the time-dependent elements of a (rectangular) $d_S^2\times d_B^2$-dimensional matrix $b$, and $d_B = \dim(\mc{H}_B)$. Then the Kraus OSR becomes:
\bes
\begin{align}
\r(t) &= \sum_{\a} K_\a(t) \r(0) K^\dgr_\a(t) = \sum_{ij} \chi_{ij}(t) F_i\r(0)F_j^\dgr \\
&= \chi_{00}(t)\r(0) + \sum_{i>0}[\chi_{0i}(t)\r(0)F_i^\dgr + \chi_{i0}(t)F_i\r(0)] + \sum_{i,j>0} \chi_{ij}(t) F_i\r(0)F_j^\dgr \ ,
\label{eq:272b}
\end{align}
\ees
where
\beq
\chi_{ij}(t) = \sum_\a b_{i\a}(t) b_{j\a}^*(t)\ ,
\eeq
i.e., $\chi = b b^\dgr$. It follows immediately that $\chi$ is positive semidefinite: $\bra{v}\chi\ket{v} = \| b^\dgr\ket{v}\|^2 \geq 0$. Note that $\chi$ is a $d_S^2\times d_S^2$ matrix.

Now consider the normalization condition:
\bes
\begin{align}
I &= \sum_\a K^\dgr_\a(t) K_\a(t) = \sum_{ij}\chi_{ij}(t)F_j^\dgr F_i \\
&= \chi_{00}(t)I + \sum_{i>0}(\chi_{0i}(t)F_i^\dgr + \chi_{i0}(t)F_i) + \sum_{i,j>0} \chi_{ij}(t)F_j^\dgr F_i \ .
\label{eq:274b}
\end{align}
\ees
We can use this to eliminate the $\chi_{00}\r(0)$ term from Eq.~\eqref{eq:272b}. Multiply Eq.~\eqref{eq:274b} first from the right by $\frac{1}{2}\r(0)$, then from the left, and add the resulting two equations:
\beq
\r(0) =  \chi_{00}(t)\r(0) + 
\frac{1}{2}\sum_{i>0} \left[\chi_{0i}(t)\left(F_i^\dgr\r(0)+\r(0)F_i^\dgr\right) + \chi_{i0}(t)\left(F_i\r(0)+\r(0)F_i\right)\right] 
+ \frac{1}{2}\sum_{i,j>0}\chi_{ij}(t)\left\{F_j^\dgr F_i,\r(0)\right\}\ .
\eeq
Subtracting this from Eq.~\eqref{eq:272b} yields:
\beq
\r(t)-\r(0) = \frac{1}{2}\sum_{i>0} \left[\chi_{i0}(t)\left(F_i\r(0)-\r(0)F_i\right) - \chi_{0i}(t)\left(F_i^\dgr\r(0)-\r(0)F_i^\dgr\right) \right] + \sum_{i,j>0}\chi_{ij}(t)\left(F_i\r(0)F_j^\dgr - \frac{1}{2}\left\{F_j^\dgr F_i,\r(0)\right\}\right) \ .
\label{eq:276}
\eeq
Let us now define
\beq 
Q(t) \equiv \frac{i}{2} \sum_{j>0} \chi_{j0}(t) F_j-\chi_{0j}(t)F_j^\dgr \ , 
\label{eq:Q}
\eeq
and note that $Q = Q^\dgr$, i.e., $Q$ is Hermitian.
Then we can rewrite Eq.~\eqref{eq:276} as:
\beq
\r(t)-\r(0) = -i [Q(t),\r(0)] + \sum_{i,j>0}\chi_{ij}(t)\left(F_i\r(0)F_j^\dgr - \frac{1}{2}\left\{F_j^\dgr F_i,\r(0)\right\}\right)\ .
\label{eq:278}
\eeq
This obviously resembles the Lindblad equation, but it relates the state at $t=0$ to the state at some arbitrary later time $t$, i.e., it still represents a quantum map. Indeed, everything we have done so far is exact and we have simply rewritten the Kraus OSR in a fixed operator basis. As a first step towards getting this closer to standard Lindblad form, let us diagonalize the $\chi$ matrix, which will allow us to rewrite the double sum in Eq.~\eqref{eq:278} as a single sum. We have already noted that $\chi\geq 0$, so that it can be diagonalized via some unitary matrix $u$: $\tilde{\g} = u\chi u^\dgr$,
where $\tilde{\g}$ is diagonal and positive semidefinite. Define $L_k = \sum_{j>0}u^*_{kj} F_j$, so that, using the unitarity of $u$:
\beq
F_i = \sum_{k>0}u_{ki} L_k\ ,
\label{eq:285}
\eeq
where the sum over $k>0$ excludes $L_0 = I$.
Thus, again using the unitarity of $u$:
\bes
\begin{align}
& \sum_{i,j>0}\chi_{ij} F_i\r(0)F_j^\dgr = \sum_{k,l>0} L_k \r(0) L_l^\dgr \sum_{i,j>0} u_{ki} \chi_{ij} (u^\dgr)_{jl} =
\sum_{k>0} {\g}_k L_k\r(0)L_k^\dgr \\
& \sum_{i,j>0}\chi_{ij} F_j^\dgr F_i = \sum_{k,l>0} L_l^\dgr L_k \sum_{i,j>0} u_{ki} \chi_{ij} (u^\dgr)_{jl} = \sum_{k>0} {\g}_k L_k^\dgr L_k \ ,
\end{align}
\ees
where ${\g}_k \geq 0$ are the eigenvalues of $\chi$. We can now rewrite Eq.~\eqref{eq:278} as:
\beq
\r(t)-\r(0) = -i [Q(t),\r(0)] + \sum_{k>0}{\g}_k(t)\left(L_k\r(0)L_k^\dgr - \frac{1}{2}\left\{L_k^\dgr L_k,\r(0)\right\}\right)\ .
\label{eq:281}
\eeq
This is as far as we can go towards the Lindblad equation without introducing an approximation.

Let us now take a step back and introduce a generator for the exact quantum map. I.e., let us write $\r(t) = \Lambda(t,0)[\r(0)]$, where
\beq
\Lambda(t,0) = T_+ e^{\int_0^t \mc{L}(s)ds}\ .
\eeq
Let $\tau$ denote a short time interval, where the meaning of short will become clear momentarily. We define a ``coarse-grained" generator $\mc{L}_j$ as follows:
\beq
\mc{L}_j = \frac{1}{\tau} \int_{j\tau}^{(j+1)\tau} \mc{L}(s)ds \ .
\eeq
Then $\frac{1}{\tau}\int_0^t \mc{L}(s)ds = \sum_{j=0}^{n-1} \mc{L}_j$ provided $t=n\tau$, so that
\beq
\Lambda(t,0) = T_+ e^{ \tau\sum_{j=0}^{n-1} \mc{L}_j}\ .
\eeq

We now make a (strong) assumption:
\begin{myassumption}
The coarse-grained generators belonging to different time intervals commute:
\beq
[\mc{L}_j,\mc{L}_k] = 0 \quad \forall j,k\ .
\label{eq:286}
\eeq
\end{myassumption}
This assumption amounts to there being no memory of the evolution from one interval to the next.\footnote{It is an interesting open problem to derive rigorous conditions for this to hold from first principles.} Under this assumption, which we can also understand as a Markovian approximation, the time-ordered exponential becomes a product of exponentials:
\beq
\Lambda(t,0) = \prod_{j=0}^{n-1} e^{ \tau\sum \mc{L}_j} \equiv \prod_{j=0}^{n-1} \Lambda_j\ .
\eeq
Thus, $\r_{j+1} = \Lambda_j[\r_j]$, where $\r_j \equiv \r(j\tau)$, or, after Taylor expansion:
\beq
\r_{j+1} = \left(I+\tau \mc{L}_j + O(\tau^2)\right)\r_j \implies \frac{\r_{j+1}-\r_j}{\tau} = \mc{L}_j\r_j \ 
\label{eq:289a}
\eeq
where we dropped the higher order corrections subject to the following, additional assumption:
\begin{myassumption}
\beq
\label{eq:tau-cond}
\tau\| \mc{L}_j\| \ll 1 \quad \forall j \ .
\eeq
\end{myassumption}
Note that Eq.~\eqref{eq:tau-cond} sets an upper bound on $\tau$ in terms of the largest eigenvalue of the coarse-grained Lindblad generator. This eigenvalue determines the fastest timescale for the system evolution (we'll see later that these eigenvalues are all possible differences of energies, i.e., they correspond to transition frequencies). Thus, Eq.~\eqref{eq:tau-cond} can also be interpreted as stating that the coarse-graining timescale should be small compared to the timescale over which $\r_j$ changes.

Eq.~\eqref{eq:289a} implies that, in particular, for $j=0$:
\beq
\frac{\r(\tau)-\r(0)}{\tau} = \mc{L}_0[\r(0)] \ .
\label{eq:289}
\eeq

\begin{mylemma}
\beq
\chi_{ij}(0) = \d_{i0}\d_{j0} \ .
\eeq
\end{mylemma}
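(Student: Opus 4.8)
The plan is to exploit the fact that at $t=0$ no evolution has yet occurred, so the quantum map must be the identity superoperator, and then simply read off what this forces on the matrix $\chi$ in the fixed operator basis $\{F_i\}$ with $F_0 = I$.

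First I would observe that $U(0) = e^{-iH\cdot 0} = I_S \otimes I_B$. Hence the Kraus operators, which in the underlying physical derivation take the form $K_{\mu\nu}(t) = \sqrt{\lambda_\nu}\braket{\mu|U(t)|\nu}$, collapse at $t=0$ to $K_{\mu\nu}(0) = \sqrt{\lambda_\nu}\braket{\mu|\nu} I_S = \sqrt{\lambda_\nu}\,\delta_{\mu\nu}\,I_S$. Collecting $\alpha=(\mu\nu)$, every Kraus operator $K_\alpha(0)$ is therefore proportional to the identity $I=F_0$ and vanishes unless $\mu=\nu$. Since $\{F_i\}_{i=0}^{d_S^2-1}$ is a basis, the expansion $K_\alpha(0) = \sum_i b_{i\alpha}(0) F_i$ is unique, so $K_\alpha(0)\propto F_0$ forces $b_{i\alpha}(0)=0$ for all $i>0$ and all $\alpha$. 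Substituting into $\chi_{ij}(0) = \sum_\alpha b_{i\alpha}(0) b_{j\alpha}^*(0)$ then kills every term with $i>0$ or $j>0$, leaving $\chi_{ij}(0)=0$ unless $i=j=0$. For the surviving entry, $b_{0;\nu\nu}(0)=\sqrt{\lambda_\nu}$ gives $\chi_{00}(0) = \sum_\alpha |b_{0\alpha}(0)|^2 = \sum_\nu \lambda_\nu = 1$, where the last equality is the normalization $\Tr[\rho_B(0)]=1$. This yields $\chi_{ij}(0) = \delta_{i0}\delta_{j0}$ as claimed.

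A cleaner, representation-independent version avoids the explicit Kraus operators altogether: at $t=0$ the map $X \mapsto \sum_{ij}\chi_{ij}(0) F_i X F_j^\dagger$ must equal $\mathcal{I}[X]=X=F_0 X F_0^\dagger$ for all $X$, since density operators span $\mc{B}(\mc{H}_S)$. Because the product superoperators $\{F_i(\cdot)F_j^\dagger\}$ are linearly independent — this follows since $\{F_i\}$ is a basis, so $\{F_i \otimes \bar{F_j}\}$ is linearly independent under vectorization — the matrix $\chi$ associated with any given map is unique, and matching against $F_0(\cdot)F_0^\dagger$ immediately gives $\chi_{ij}(0)=\delta_{i0}\delta_{j0}$.

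The only genuinely nontrivial point, and thus the main obstacle, is this uniqueness: one must confirm that a fixed operator basis yields a one-to-one correspondence between maps and their $\chi$ matrices. In the direct route this is the innocuous uniqueness of the basis expansion of each $K_\alpha(0)$; in the abstract route it is the linear independence of the product superoperators, which is standard (the process-matrix / Choi correspondence) but is the one structural fact the argument leans on. Everything else is immediate substitution together with the bath normalization $\sum_\nu \lambda_\nu = 1$.
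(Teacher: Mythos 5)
Your first argument is correct and is essentially identical to the paper's own proof: both set $U(0)=I$ so that $K_{\mu\nu}(0)=\sqrt{\lambda_\nu}\,\delta_{\mu\nu}I=\sqrt{\lambda_\nu}\,\delta_{\mu\nu}F_0$, read off $b_{i\alpha}(0)=\sqrt{\lambda_\nu}\delta_{\mu\nu}\delta_{i0}$ from uniqueness of the basis expansion, and conclude via $\chi_{ij}(0)=\sum_\alpha b_{i\alpha}(0)b_{j\alpha}^*(0)=\sum_\nu\lambda_\nu\,\delta_{i0}\delta_{j0}=\delta_{i0}\delta_{j0}$. Your alternative representation-independent argument (uniqueness of the process matrix for the identity map) is also sound, but it is not needed here.
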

\begin{proof}
Using $U(t)=e^{-iHt}$, we have for the Kraus operators:
\bes
\begin{align}
K_\a(0) &= b_{0\a}(0)I + \sum_{i>0}b_{i\a}(0)F_i \\
&= \sqrt{\lambda_\nu}\bra{\mu}U(0)\ket{\nu} =  \sqrt{\lambda_\nu}\delta_{\mu\nu} I \ ,
\end{align}
\ees
so that [recall that $\a = (\m\n)$]
\beq
b_{i\a}(0) = \sqrt{\lambda_\nu}\delta_{\mu\nu}\d_{i0}\ .
\eeq
Therefore
\beq
\chi_{ij}(0) = \sum_\a b_{i\a}(0)b_{j\a}^*(0) = \sum_\n \lambda_\n \d_{i0}\d_{j0}  \ ,
\eeq
which proves the lemma, since $\sum_\n \lambda_\n = 1$.
\end{proof}
It follows immediately that $\chi(0)$ is already diagonal, and its eigenvalues are ${\g}_0(0)=1$ and ${\g}_{k>0}(0)=0$. It also follows immediately from Eq.~\eqref{eq:Q} that $Q(0)=0$.

Now define
\beq
\ave{X}_j \equiv \frac{1}{\tau}\int_{j\tau}^{(j+1)\tau} X(s) ds \ .
\label{eq:aveX}
\eeq
Then 
\bes
\begin{align}
\ave{\dot{\rho}}_0 &= \frac{\r(\tau)-\r(0)}{\tau} \\
\ave{\dot{Q}}_0 &= \frac{Q(\tau)-Q(0)}{\tau} = \frac{Q(\tau)}{\tau} \\
\ave{\dot{\g}_k}_0 &= \frac{\g_k(\tau)-\g_k(0)}{\tau} = \frac{\g_k(\tau)-\d_{k0}}{\tau} \ .
\end{align}
\ees
We can therefore rewrite Eq.~\eqref{eq:281} as:
\beq
\frac{\r(\tau)-\r(0)}{\tau} = -i [\frac{Q(\tau)}{\tau},\r(0)] + \sum_{k>0}\frac{{\g}_k(\tau)-\d_{k0}}{\tau}\left(L_k\r(0)L_k^\dgr - \frac{1}{2}\left\{L_k^\dgr L_k,\r(0)\right\}\right)\ ,
\label{eq:281-mod}
\eeq
which must equal $\mc{L}_0[\r(0)]$ by Eq.~\eqref{eq:289}.
Hence, we can read off $\mc{L}_0$: 
\beq
\mc{L}_0[X] = -i [\ave{\dot{Q}}_0,X] + \sum_{k>0}\ave{\dot{\g}_k}_0\left(L_k X L_k^\dgr - \frac{1}{2}\left\{L_k^\dgr L_k,X\right\}\right)\ ,
\eeq
This generator is precisely in Lindblad form. However, it only connects $\r(0)$ to $\r(\tau)$. In order to connect $\r(j\tau)$ to $\r((j+1)\tau)$ we may now postulate that the same generator form remains valid, i.e., that 
\beq
\mc{L}_j[X] = -i [\ave{\dot{Q}}_j,X] + \sum_{k>0}\ave{\dot{\g}_k}_j\left(L_k X L_k^\dgr - \frac{1}{2}\left\{L_k^\dgr L_k,X\right\}\right)\quad \forall j \ ,
\label{eq:292}
\eeq
which we can do as long as Eq.~\eqref{eq:286} is satisfied. The simplest way to ensure this is to demand that in fact 
\beq
\mc{L}_j = \mc{L}_0\quad  \forall j\ .
\label{eq:293}
\eeq
This is again the Markovian limit, where there is no memory of the previous evolution segment. If, instead, we keep the more general form of Eq.~\eqref{eq:292} [again, subject to Eq.~\eqref{eq:286}], then we have a \emph{time-dependent Markovian process}, where the generator is allowed to change over time, as long as these changes are uncorrelated between different time-segments.

Retaining the time-independent Markovian form of Eq.~\eqref{eq:293}, 
and further replacing $\ave{\dot{\r}}_j$ by $\dot{\r}$ (another approximation, that becomes exact in the limit $\tau\to 0$),
we finally have the following result for the coarse-grained Lindblad equation, representing a time-independent Markovian limit:
\beq
\boxed{
\dot{\rho}(t) = -i [\ave{\dot{Q}}_0,\rho(t)] + \sum_{k>0}\ave{\dot{\g}_k}_0\left(L_k \rho(t) L_k^\dgr - \frac{1}{2}\left\{L_k^\dgr L_k,\rho(t)\right\}\right)
}
\label{eq:295}
\eeq

One point remains, which is to show that the coefficients $\ave{\dot{\g}_k}_0$ are non-negative, which is a requirement for complete positivity of the map generated by the Lindblad equation. To show this, note that
\beq
\ave{\dot{\g}_k}_0 = \frac{1}{\tau}\int_0^\tau \dot{\g}_k(t)dt = \frac{1}{\tau}\left({\g}_k(\tau)-{\g}_k(0)\right) \ .
\label{eq:296}
\eeq
We already know that ${\g}_k(t)\geq 0$ $\forall t$ (recall that these are the eigenvalues of $\chi$), so we need to show that nothing is spoiled by subtracting ${\g}_k(0)$. But, this is true since we already showed above that ${\g}_{k>0}(0)=0$. Thus, Eq.~\eqref{eq:296} shows that the coefficients are all non-negative, as required for the Lindblad equation.



\subsection{Interaction picture}
\label{sec:IP}

As a brief digression, let us review the interaction picture, in preparation for the example we shall study in the next subsection.

Consider a (time-dependent) Hamiltonian $H$ of the form: 
\begin{equation}
H(t)=H_{0}(t)+V(t)\ .  
\label{eq:ht}
\end{equation}%
The unitary evolution operators satisfy:%
\bes
\begin{align}
\frac{dU(t)}{dt}&=-iH(t)U(t)  \label{eq:origU}\\
\frac{dU_{0}(t)}{dt}&=-iH_{0}(t)U_{0}(t)\ .  \label{eq:U0}
\end{align}
\ees
Define the interaction picture propagator with respect to $H_{0}$ via:%
\begin{equation}
\tilde{U}(t)=U^{\dag }_{0}(t)U(t,0)\ .  \label{eq:int1}
\end{equation}%
\begin{myclaim}
$\tilde{U}$ satisfies the Schr\"{o}dinger equation 
\begin{equation}
\frac{d\tilde{U}(t)}{dt}=-i\tilde{H}(t)\tilde{U}(t),  \label{eq:int2}
\end{equation}%
with the interaction picture Hamiltonian%
\begin{equation}
\tilde{H}(t)=U_{0}^{\dagger }(t)V(t)U_{0}(t).  \label{eq:int3}
\end{equation}%
\end{myclaim}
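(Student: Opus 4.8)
The plan is to prove the claim by direct differentiation of the definition $\tilde U(t)=U_0^{\dagger}(t)U(t,0)$ using the product rule, and then to massage the result into the stated form by inserting a resolution of the identity $U_0 U_0^{\dagger}=I$. The two ingredients I can assume are the two evolution equations~\eqref{eq:origU} and~\eqref{eq:U0}, together with the fact that $H_0$ is Hermitian (so that $H=H_0+V$ with both $H_0$ and $V$ Hermitian). The only preparatory observation I need is an expression for the derivative of the adjoint propagator $U_0^{\dagger}$, which I would obtain by taking the Hermitian conjugate of Eq.~\eqref{eq:U0}: since $\frac{dU_0}{dt}=-iH_0 U_0$, conjugation gives $\frac{dU_0^{\dagger}}{dt}=i U_0^{\dagger}H_0^{\dagger}=i U_0^{\dagger}H_0$, where Hermiticity of $H_0$ is used in the last step.

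The main computation then proceeds in three short moves. First I would apply the product rule:
\beq
\frac{d\tilde U(t)}{dt}=\left(\frac{dU_0^{\dagger}(t)}{dt}\right)U(t)+U_0^{\dagger}(t)\left(\frac{dU(t)}{dt}\right)
= i U_0^{\dagger}H_0 U + U_0^{\dagger}(-iH U)\ .
\eeq
Second, I would combine the two terms and use $H=H_0+V$ to cancel the $H_0$ contribution, leaving
\beq
\frac{d\tilde U(t)}{dt}= -i U_0^{\dagger}(H-H_0)U = -i U_0^{\dagger}(t)\,V(t)\,U(t)\ .
\eeq
Third, I would insert $U_0(t)U_0^{\dagger}(t)=I$ between $V(t)$ and $U(t)$ and regroup, recognizing $U_0^{\dagger}V U_0=\tilde H$ from Eq.~\eqref{eq:int3} and $U_0^{\dagger}U=\tilde U$ from Eq.~\eqref{eq:int1}:
\beq
-i U_0^{\dagger}V\left(U_0 U_0^{\dagger}\right)U = -i\left(U_0^{\dagger}V U_0\right)\left(U_0^{\dagger}U\right) = -i\tilde H(t)\tilde U(t)\ ,
\eeq
which is exactly Eq.~\eqref{eq:int2}.

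There is no genuine obstacle here; the argument is a routine application of the product rule. The only point requiring a little care is bookkeeping: one must correctly conjugate Eq.~\eqref{eq:U0} to get the sign and operator ordering in $dU_0^{\dagger}/dt$ right (this is where Hermiticity of $H_0$ enters), and one must insert the identity $U_0 U_0^{\dagger}=I$ in the correct place so that the factors regroup into $\tilde H$ and $\tilde U$ rather than into some other product. I would also remark that the same manipulation could alternatively be packaged by differentiating $U(t)=U_0(t)\tilde U(t)$ and solving for $d\tilde U/dt$, but the direct route above is the cleanest.
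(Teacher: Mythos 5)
Your proposal is correct and follows essentially the same route as the paper's own proof: differentiate $\tilde U=U_0^\dagger U$ by the product rule, use the conjugated equation $\dot U_0^\dagger=iU_0^\dagger H_0$ to cancel the $H_0$ contribution via $H=H_0+V$, and insert $U_0U_0^\dagger=I$ to regroup into $-i\tilde H\tilde U$. If anything, your write-up is cleaner than the paper's, which contains a couple of typographical slips (e.g., $iU_0H_0U$ where $iU_0^\dagger H_0U$ is meant).
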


\begin{proof}
Differentiate both sides of Eq.~\eqref{eq:int1}, while making use of Eqs.~\eqref{eq:ht},~\eqref{eq:int2} and \eqref{eq:int3}: 
\begin{eqnarray}
\frac{d\tilde U(t)}{dt} &=&\frac{d\left[ U^\dag_{0}(t){U}(t)\right] }{dt} = \dot U_0^\dagger U + U_0^\dag \dot U = iU_0 H_0 U + U_0^\dag (-i H U) \notag \\
&=& iU_0 H_0 U -i U_0^\dag (H+V) U_0 \tilde U = -i  U_0^\dag V U_0 \tilde U= -i \tilde H \tilde U\ .
\end{eqnarray}%
The initial
conditions of the equations are also the same [$U(0)=I$], thus Eqs.~\eqref
{eq:int1}-\eqref{eq:int3} describe the propagator generated by $H(t)$.
\end{proof}

To make contact with open quantum systems, let $V = H_{SB}$ and $H_0 = H_S+H_B$. Then $U_0 = e^{-itH_S}\ox e^{-itH_B}$. We can now transform the Schr\"{o}dinger picture density matrix to the interaction picture via $\tilde{\r}_{SB}(t) = U_0^\dgr(t)\r_{SB}(t)U_0(t)$, and if we write $H_{SB} = \sum_a \lambda_a S_a\ox B_a$ ($S_a$ and $B_a$ are system-only and bath-only operators, respectively), then $\tilde{H}_{SB}(t) = \sum_a \lambda_a S_a(t)\ox B_a(t)$, where $S_a(t) = e^{itH_S}S_a e^{-itH_S}$ and $B_a(t) = e^{itH_B}B_a e^{-itH_B}$. This interaction picture density matrix satisfies 
\beq
\tilde{\r}_{SB}(t) = \tilde{U}(t){\r}_{SB}(0)\tilde{U}^\dgr(t)
\eeq
(note that the Schr\"{o}dinger picture and the interaction picture coincide at $t=0$).

At this point everything we've shown for quantum maps and the Lindblad equation carries through with appropriate modifications. The Kraus OSR in the interaction picture becomes
\beq
\tilde{\r}(t) = \Tr_B [\tilde{\r}_{SB}(t)] = \sum_\a \tilde{K}_\a (t) \r(0)\tilde{K}^\dgr_\a (t) \, 
\eeq
where the interaction picture Kraus operators are
\beq
\tilde{K}_\a (t) = \sqrt{\lambda_\n} \bra{\m}\tilde{U}(t)\ket{\n} \ .
\eeq
The interaction picture Lindblad equation, replacing Eq.~\eqref{eq:295}, becomes:
\beq
\dot{\tilde{\rho}}(t) = -i [\ave{\dot{\tilde{Q}}}_0,\tilde{\rho}(t)] + \sum_{k>0}\ave{\dot{\tilde{\g}}_k}_0\left(L_k \tilde{\rho}(t) L_k^\dgr - \frac{1}{2}\left\{L_k^\dgr L_k,\tilde{\rho}(t)\right\}\right)\ ,
\label{eq:LE-IP}
\eeq
where $\tilde{Q} = Q-H_S$ and $\tilde{\g}_k$ are the eigenvalues of the interaction picture $\chi$-matrix $\tilde{\chi} = \tilde{b}\tilde{b}^\dgr$, with $\tilde{b}$ the expansion matrix of the interaction picture Kraus operators: $\tilde{K}_\a (t) = \sum_i \tilde{b}_{i\a}(t) F_i$.

\subsection{Example: the spin-boson model for phase damping}
\label{sec:spin-boson-1q}

To illustrate the predictions of the coarse-grained Lindblad equation, we consider the spin-boson model for phase damping of a single qubit, described by the Hamiltonian
\bes
\label{eq:spin-boson-1q}
\begin{align}
H &= H_S + H_B + H_{SB} \\
H_S &= -\frac{1}{2}g Z \ , \quad H_B = \sum_k \o_k (n_k+1/2) \ , \quad H_{SB} = Z\ox \left(\sum_k \lambda_k b_k + \lambda^*_k b^\dgr_k\right) \ ,
\end{align}
\ees
where $n_k=b_k^\dgr b_k$ and $b_k$ are the bosonic number and annihilation operator for mode $k$, respectively ($[b_k,b_l^\dgr]=\d_{kl}I$). Here $H_{SB}$ describes coupling of the qubit phase to the position $x$ of each oscillator; recall that quantization means replacing $x$ by $\left(b + b^\dgr\right)/\sqrt{2m\o} $ (where $m$ is the oscillator mass), so that 
\beq
\lambda_k \propto 1/\sqrt{\o_k} \ ,
\label{eq:311}
\eeq
a relation we will need later. In the interaction picture, it is easy to show that:\footnote{Some basic quantum mechanics would make this process very simple. Note that $[b,n]=b$ gives $bn=(n+1)b$. And therefore we would have $be^{n}=e^{n+1}b$.}
\beq
\tilde{H}_{SB}(t) = Z\ox \left(\sum_k \lambda_k e^{-i\o_k t} b_k + \lambda^*_k e^{i\o_k t} b^\dgr_k\right)\ .
\label{eq:HSB-sb}
\eeq

Assume that the bath is initially in a thermal Gibbs state at inverse temperature $\b=1/T$: $\r_B(0) = e^{-\b H_B}/Z$ [Eq.~\eqref{eq:Gibbs1}], and let $\ave{X}_B \equiv \Tr(X\r_B)$. It is then a standard exercise to show that 
\beq
\ave{b_k^\dgr b_l}_B = \d_{kl} \frac{1}{e^{\b\o_k}-1}\ , \quad 
\ave{b^\dgr_k}_B = \ave{b_k}_B = \ave{b_k b_l}_B = \ave{b^\dgr_k b^\dgr_l}_B = 0 \ .
\eeq

Using this, it can be shown that the coarse-grained, interaction picture Lindblad equation Eq.~\eqref{eq:LE-IP} becomes \cite{Lidar200135}:
\beq
\dot{\tilde{\rho}}(t) =  \g(\tau) \left(Z \tilde{\rho}(t) Z - \tilde{\rho}(t)\right)\ ,
\label{eq:LE-IP2}
\eeq
i.e., $\ave{\dot{\tilde{Q}}}_0=0$, $L_1=Z$, and there are no other Lindblad operators (as should be obvious from the form of $H_{SB}$ above), and where
\beq
\g(\tau) = \pi \sum_k|\lambda_k|^2 \coth(\b\o_k/2)\bar{\d}(\o_k,\tau)
\label{eq:g-CGLE}
\eeq
is the dephasing rate, where we have defined
\beq
\bar{\d}(\o,\tau)\equiv \frac{1}{\pi} \tau \sinc^2(\o\tau/2)\ .
\eeq

\begin{figure}[t]
	\includegraphics[width=0.5\linewidth]{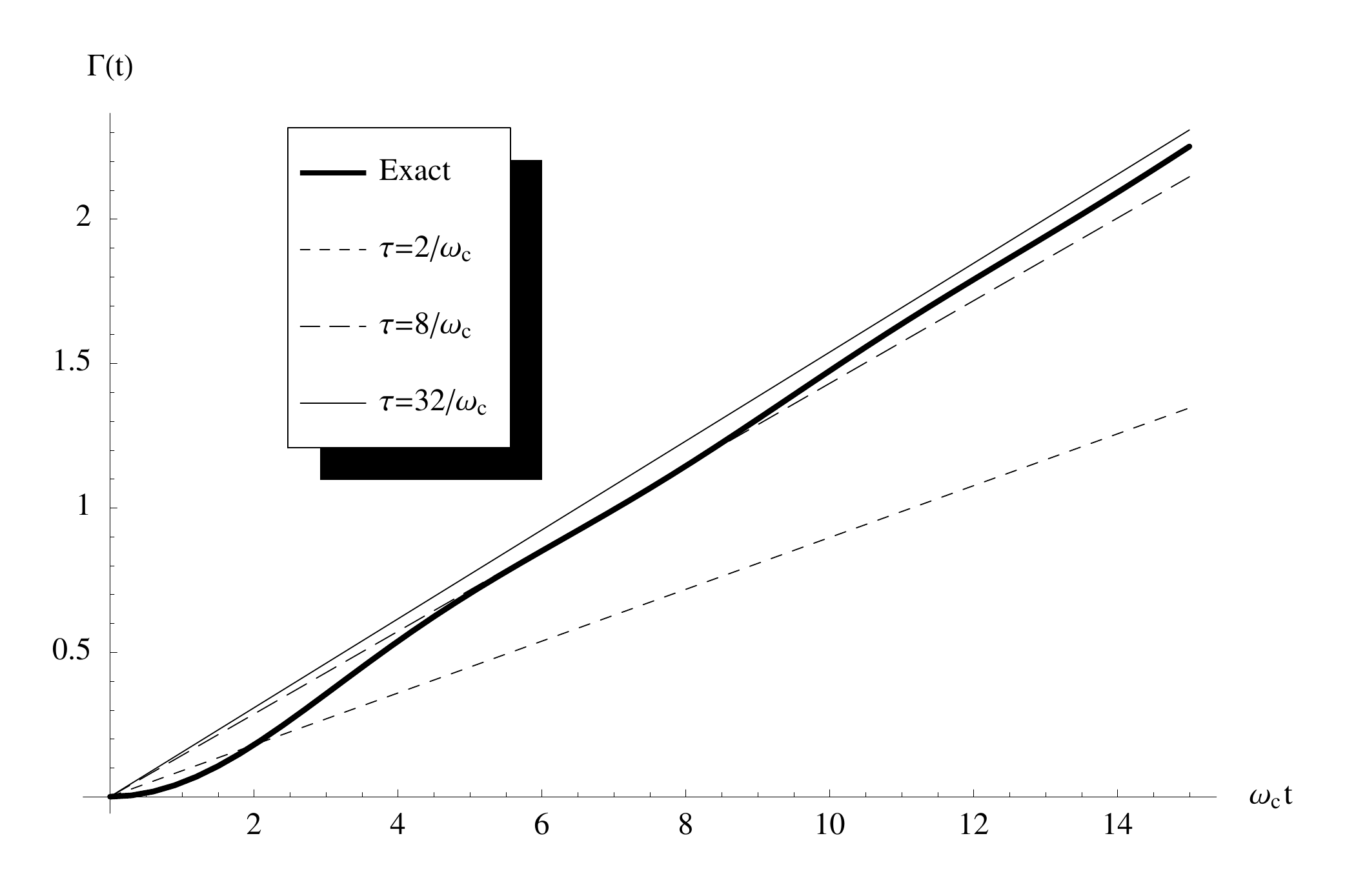}
\caption{Comparison of the exact solution of the spin-boson model for
single-qubit phase damping to the result obtained from the coarse-grained Markovian master 
equation. Plotted are the arguments $\Gamma(t)$ of the exponentials in Eq.~\eqref{eq:324}. Straight
lines correspond to the Markovian solution, which intersects the exact
solution (thick line) at $t=\tau$, as seen from
Eqs.~\eqref{eq:SME-Debye} and \eqref{eq:exact-Debye}. The bosonic bath density of states is represented by the Debye model [Eq.~\eqref{eq:Debye}]. The results shown correspond to $C=0.05$ and $\omega_c=1$. Reproduced from Ref.~\cite{Lidar200135}.}
\label{fig:cpfig}
\end{figure}

We already encountered Eq.~\eqref{eq:LE-IP2} in Sec.~\ref{sec:PD-Lind}, and as we saw there its solution for the coherence (off-diagonal elements) is
\beq
\tilde{\r}_{01}(t)  = e^{-2\g(\tau)t} {\r}_{01}(0)\ .
\eeq
As we shall see in Sec.~\ref{sec:SBmodel}, the spin-boson model we are considering here has an exact analytical solution. The exact solution for the coherence is:
\beq
\tilde{\r}_{01}^{(e)}(t)  = e^{-2\g(t)t} {\r}_{01}(0)\ .
\eeq
This allows us to compare the Markovian result to the exact one, and better understand the condition the coarse-graining timescale $\tau$ must satisfy. The only difference between the two is the argument of $\g$: $\tau$ versus $t$. However, this is a very significant difference, since while the Markovian solution represents irreversible exponential decay, the exact solution is oscillatory: $\g(t)t \sim \sum_k \sin^2(\o_k t)$. In order to observe closer agreement, we must once again invoke a continuous density of states $\g(\o)$, as we did in Sec.~\ref{sec:irr-dyn} [recall Eq.~\eqref{eq:Om}], which results in irreversible decay also in the case of the exact solution. Doing so replaces Eq.~\eqref{eq:g-CGLE} by
\beq
\g(\tau) = \pi \int_0^{\o_c} \Omega(\o) |\lambda(\o)|^2 \coth(\b\o/2)\bar{\d}(\o,\tau) d\o\ ,
\label{eq:g-CGLE2}
\eeq
where we assumed that $\Omega(\o)$ has a high-frequency cutoff at $\o_c$. Now note that $\bar{\d}$ behaves similarly to the Dirac-$\delta$ function:
\beq
\int_0^\infty \bar{\d}(\o,\tau)d\o = 1\ , \qquad \lim_{\tau\to\infty}\bar{\d}(\o,\tau) = \d(\o) \ ,
\eeq
i.e., it is sharply peaked at $\o=0$, and the peak becomes sharper as $\tau$ grows. The peak width is $\sim 1/\tau$. This suggests under what condition $\g(t)\approx\g(\tau)$, such that the exact and Markovian solutions agree: $\tau\gg1/\o_c$. The reason is that then $\int_0^{\o_c} $ captures nearly all the area under $\bar{\d}(\o,\tau)$, whereas in the opposite case ($\tau\lesssim1/\o_c$), most of the area under $\bar{\d}(\o,\tau)$ is not captured by the same integral. Thus, assuming $\tau\gg1/\o_c$, $\bar{\d}(\o,\tau)$ effectively behaves as a Dirac-delta function, and if we assume in addition that $t>\tau$, then certainly also $\bar{\d}(\o,t)$ behaves as a Dirac-$\delta$ function. Thus, assuming
\beq
t>\tau\gg1/\o_c\ ,
\eeq
we have 
\beq
\g(\tau) \approx \g(t) \approx \pi \int_0^{\o_c} \Omega(\o) |\lambda(\o)|^2 \coth(\b\o/2){\d}(\o) d\o\ ,
\eeq
so that the exact and Markovian cases agree. This is borne out numerically as well. Assume a Debye model, so that 
\beq
\label{eq:Debye}
\Omega(\omega )\propto \left\{ 
\begin{array}{c}
\omega^{2}\text{ for }\omega <\omega _{c} \\ 
0\text{ for }\omega \geq \omega _{c}
\end{array}
\right. \ , 
\eeq
and that $|\lambda (\omega )|^{2}\propto \omega ^{-1}$, in accordance with Eq.~\eqref{eq:311}. In
the high-temperature limit $\coth ({\beta\omega }/{2})\propto
\omega ^{-1}$, so that in all we have
\bes
\label{eq:324}
\begin{align}
\tilde{\rho}_{01}(t) &\propto \exp \left( -Ct\tau \int_{0}^{\omega
_{c}}d\omega {\sinc}^{2}\left( \omega \tau /2\right) \right)
\label{eq:SME-Debye} \\
\tilde{\rho}_{01}^{(e)}(t) &\propto \exp \left( -Ct^{2}\int_{0}^{\omega
_{c}}d\omega {\sinc}^{2}\left( \omega t/2\right) \right)\ ,
\label{eq:exact-Debye}
\end{align}
\ees
where $C$ is the temperature-dependent coupling-strength, with dimensions of
frequency. Figure~\ref{fig:cpfig} shows the argument of the exponentials in Eq.~\eqref{eq:324}, $\Gamma (t)$, for
the exact solution and for the coarse-grained Lindblad equation, corresponding to different
values of the course-graining time-scale, $\tau $. The curves corresponding
to the Markovian solutions are just straight lines, as they all describe
simple exponential decays. It is clear that the Markovian solutions cannot account
for the initial transition period, but for sufficiently large $\tau $
(in units of the bath cutoff time $1/\omega_c$) the
Lindblad result approximates the exact solution very well at large times.

To summarize, the Markovian approximation gives reliable
results for times greater than the coarse-graining time-scale, which in turn
must be greater than the inverse of the bath high-frequency cut-off. It does not account for the initial (Zeno-like) time evolution.


\section{Analytical solution of the spin-boson model for phase damping}
\label{sec:SBmodel}

We present the analytical solution of the spin-boson model for pure
dephasing. The derivation is based on \cite{Duan:97,Lidar200135}.

The model is the same as the one we considered in Sec.~\ref{sec:spin-boson-1q}, except that we will consider a system of multiple qubits (indexed by $i$). Starting from the interaction picture system-bath Hamiltonian [generalizing Eq.~\eqref{eq:HSB-sb}]:
\begin{equation}
\tilde{H}_{SB}(t)=\sum_{i,k}Z_i\otimes \left[ \lambda_{k}^{i}e^{-i\omega _{k}t}{a}_{k}+\left( \lambda _{k}^{i}\right) ^{\ast }e^{i\omega _{k}t}{a}_{k}^{\dagger }\right]\ ,
\end{equation}
we want to find the system density matrix 
\begin{equation}
\tilde{\rho}(t)={\rm Tr}_{B}\left[ \tilde{\rho}_{{\rm tot}}(t)\right] ={\rm Tr}_{B} 
\left[ \tilde{U}(t)\rho (0)\otimes \rho _{B}(0)\tilde{U}^{\dagger }(t)\right]\ ,
\end{equation}
where 
\beq
\tilde{U}(t)={T}_+\exp \left[ -i\int_{0}^{t}\tilde{H}(\tau )d\tau \right] . 
\eeq

\subsection{Calculation of the Evolution Operator}

Note that $\tilde{H}(t)$ does not commute with itself at different times,
which is why we need the time-ordered product: 
\bes
\begin{align}
\left[ \tilde{H}(t),\tilde{H}(t^{\prime })\right] &=  \sum_{i,i';k,k'} Z_i Z_{i'}\ox \lambda_k^i(\lambda_{k'}^{i'})^* e^{-i(\o_k t-\o_{k'}t')}[a_k,a_{k'}^\dgr] + Z_i Z_{i'}\ox (\lambda_k^i)^*\lambda_{k'}^{i'} e^{i(\o_k t-\o_{k'}t')}[a_k^\dgr,a_{k'}] \\
&= 2i\sum_{i,i^{\prime}}Z_iZ_{i'}\sum_k \Im\left[\lambda _{k}^{i}\left( \lambda _{k}^{i^{\prime }}\right) ^{\ast }e^{-i\o_k( t-t')}\right]\otimes  I_B
\label{eq:373}
\end{align}
\ees
where we used the canonical bosonic commutation relations $\left[ {a}_{k},{a}_{k'}^{\dagger }\right] = -\left[ {a}_{k}^{\dagger },{a}_{k'}\right] =I \delta _{kk'}$, $\left[ {a}_{k},{a}_{l}\right] =\left[ {a}^\dgr_{k},{a}^\dgr_{l}\right] =0$. Note that further, 
\begin{equation}
\left[ \left[ \tilde{H}(t),\tilde{H}(t^{\prime })\right] ,\tilde{H}(t^{\prime \prime })\right] =0.
\end{equation}
This means that we can use the Baker-Hausdorf formula $\exp (A+B)=\exp
(-[A,B]/2)\exp (A)\exp (B)$ (valid if $[[A,B],A]=[[A,B],B]=0$) to calculate $
{U}(t)$. To do so note the generalization 
\begin{equation}
\exp \left( \sum_{n}A_{n}\right) =\left( \prod_{n<n^{\prime }}\exp \left( -
\frac{1}{2}[A_{n},A_{n^{\prime }}]\right) \right) \left( \prod_{n}\exp
(A_{n})\right) \ ,
\end{equation}
which is valid if every second-order commutator vanishes. To apply this result for our case let us formally discretize the integrals and denote ${\cal H}_{n}\equiv -i\tilde{H}(n\Delta t)$. We let $\Delta t = t/N$ and take the limit $N\to\infty$. Then: 
\bes
\begin{align}
{U}(t) &=T_+\exp \left[ -i\int_{0}^{t}\tilde{H}(\tau )d\tau \right] =T_+\lim_{\Delta t\rightarrow 0}\exp \left[
\sum_{n=0}^{N}{\cal H}_{n}\Delta t\right]  \\
&=\lim_{\Delta t\rightarrow 0}\prod_{n<n^{\prime }}\exp \left( -\frac{1}{2}[
{\cal H}_{n},{\cal H}_{n^{\prime }}]\left( \Delta t\right) ^{2}\right)
\prod_{n}\exp ({\cal H}_{n}\Delta t) \\
&=\lim_{\Delta t\rightarrow 0}\prod_{n<n^{\prime }}\left( 1-\frac{1}{2}[
{\cal H}_{n},{\cal H}_{n^{\prime }}]\left( \Delta t\right) ^{2}\right)
\prod_{n}\left( 1-{\cal H}_{n}\Delta t\right)  \\
&=\lim_{\Delta t\rightarrow 0}\left[ 1-\frac{1}{2}\sum_{n<n^{\prime }}[
{\cal H}_{n},{\cal H}_{n^{\prime }}]\left( \Delta t\right) ^{2}\right] \left[
1-\sum_{n}{\cal H}_{n}\Delta t\right]  \\
&=\lim_{\Delta t\rightarrow 0}\exp \left( -\frac{1}{2}\sum_{n<n^{\prime }}[
{\cal H}_{n},{\cal H}_{n^{\prime }}]\left( \Delta t\right) ^{2}\right) \exp
\left( \sum_{n}{\cal H}_{n}\Delta t \right)  \\
&=\exp \left[ \frac{1}{2}\int_{0}^{t}dt_{1}\int_{0}^{t_{1}}dt_{2}\left[ \tilde{H}(t_{2}),\tilde{H}(t_{1})\right] \right] \exp \left[ -i\int_{0}^{t}\tilde{H}(\tau )d\tau \right] \ .
\label{eq:375f}
\end{align}
\ees
Note that in the second line we enforced time-ordering by keeping $n<n'$. To go from the third to the fourth line we kept the lowest relevant order in each term, inherited from the second line. Note how in the last line
the time-ordering is implemented via $t_2\leq t_1$. We find: 
\begin{equation}
-i\int_{0}^{t}\tilde{H}(\tau )d\tau =\sum_{i}Z_{i}\otimes\sum_k \left( (\alpha _{k}^{i})^{\ast }{a}_{k}-\alpha _{k}^{i}{a}_{k}^{\dagger}\right) \ ,
\end{equation}
where
\begin{equation}
\alpha _{k}^{i}(t)=\frac{\left( \lambda _{k}^{i}\right) ^{\ast }(e^{i\omega
_{k}t}-1)}{ \omega _{k}}\ .
\end{equation}

Now, since 
\begin{equation}
\int_{0}^{t}dt_{1}\int_{0}^{t_{1}}dt_{2}e^{-i\o_k (t_2-t_1)} = \int_{0}^{t} dt_1 e^{i\o_{k}t_1} \frac{e^{-i\o_k t_1}-1}{-i\o_k} = \frac{1-e^{i\o_k t}+i\o_k t}{\o_k^2}\ ,
\end{equation}
we have, using Eq.~\eqref{eq:373}:
\bes
\begin{align}
&-\frac{i}{2}\int_{0}^{t}dt_{1}\int_{0}^{t_{1}}dt_{2}\left[ \tilde{H}(t_{2}),\tilde{H}(t_{1})\right]   
=  \sum_{jj'}Z_jZ_{j'}\sum_k \Im\left[\lambda _{k}^{j}\left( \lambda _{k}^{j^{\prime }}\right) ^{\ast }\int_{0}^{t}dt_{1}\int_{0}^{t_{1}}dt_{2}e^{-i\o_k( t_2-t_1)}\right]\otimes  I_B\\
&=  \sum_{jj'}Z_jZ_{j'}\sum_k \Im\left[\lambda _{k}^{j}\left( \lambda _{k}^{j^{\prime }}\right) ^{\ast }\frac{1-e^{i\o_k t}+i\o_k t}{\o_k^2}\right]\otimes  I_B\ .
\end{align}
\ees
Therefore, defining  
\beq
f_{jj'}(t)\equiv \sum_k \Im\left[\lambda _{k}^{j}\left( \lambda _{k}^{j^{\prime }}\right) ^{\ast }\frac{e^{i\o_k t}-i\o_k t-1}{\o_k^2}\right]\ ,
\eeq
we can write the first term in Eq.~\eqref{eq:375f} as follows:
\beq
 \exp \left[ \frac{1}{2}\int_{0}^{t}dt_{1}\int_{0}^{t_{1}}dt_{2}\left[ \tilde{H}(t_{2}),\tilde{H}(t_{1})\right] \right] = e^{i\sum_{jj'} f_{jj'}(t)Z_jZ_{j'}}\ox I_B \ .
\eeq
Note that this is an operator acting non-trivially just on the system, and is a global phase for the case of a single qubit. Its action is, however, non-trivial for multiple qubits (it represents a Lamb shift).

Since the ${a}_k$ operators
commute for different modes we have as our final simplified result for the
evolution operator: 
\begin{equation}
\tilde{U}(t)=e^{if(t)}\prod_{i,k}\exp \left[ Z_i\otimes \left(
\alpha _{k}^{i}(t){a}_{k}-\alpha _{k}^{i}(t)^{\ast }{a}_{k}^{\dagger
}\right) \right] .
\end{equation}

\subsection{Calculation of the Density Matrix}

Now recall the definition of the coherent states. These are eigenstates of
the annihilation operator: 
\begin{equation}
{a}|\alpha \rangle =\alpha |\alpha \rangle \ .
\end{equation}
They are minimum-uncertainty states in a harmonic potential, and can be expanded as  
\begin{equation}
|\alpha \rangle =e^{-|\alpha |^{2}/2}\sum_{n=0}^{\infty }\frac{\alpha ^{n}}{
\sqrt{n!}}|n\rangle
\end{equation}
where $|n\rangle $ are number (Fock) states. The completeness relation for coherent states is: 
\begin{equation}
\frac{1}{\pi }\int d^{2}\alpha \,|\alpha \rangle \langle \alpha |=1
\end{equation}
where the integration is over the entire complex plane. They are useful in
our context since they are created by the displacement operator 
\begin{equation}
D\left( \alpha \right) \equiv \exp \left( \alpha {a}^{\dagger }-\alpha
^{\ast }{a}\right) =D(-\alpha )^{\dagger }
\end{equation}
acting on the vacuum state: 
\begin{equation}
D\left( \alpha \right) |{0}\rangle =|\alpha \rangle ,
\end{equation}
which is clearly related to ${U}(t)$. We will need the result: 
\begin{equation}
D\left( \alpha \right) D\left( \beta \right) =\exp \frac{\alpha \beta ^{\ast
}-\alpha ^{\ast }\beta }{2}D(\alpha +\beta ),
\end{equation}
which is easily derived from $D\left( \alpha \right) =\exp \left( \alpha 
{a}^{\dagger }-\alpha ^{\ast }{a}\right) $, $[{a,a}^{\dagger
}]=1 $, and the Baker-Hausdorf formula $\exp (A+B)=\exp (-[A,B]/2)\exp
(A)\exp (B) $ (again, valid if $[[A,B],A]=[[A,B],B]=0$).

Now let $R_{ik}(t)\equiv \alpha _{k}^{i}(t){a}_{k}^{\dagger }-\alpha
_{k}^{i}(t)^{\ast }{a}_{k}$ and consider $\exp \left[ Z_{i}\otimes R_{ik}(t)\right] $: 
\bes
\begin{align}
\exp \left[ Z\otimes R\right]  &=I_{S}\otimes
\sum_{n=0}^{\infty }\frac{R^{2n}}{(2n)!}+Z\otimes
\sum_{n=0}^{\infty }\frac{R^{2n+1}}{(2n+!)!}  =I_{S}\otimes \cosh R+Z\otimes \sinh R   \\
&=I_{S}\otimes \frac{1}{2}[D\left( \alpha \right) +D\left( -\alpha \right) ]+Z\otimes \frac{1}{2}[D\left( \alpha \right) -D\left( -\alpha
\right) ]  =|0\rangle \langle 0|\otimes D\left( \alpha \right) +|1\rangle \langle
1|\otimes D\left( -\alpha \right) \ .
\end{align}
\ees
This shows that depending on whether the
field is coupled to the qubit $|0\rangle $ or $|1\rangle $ state, the field
acquires a different displacement.\footnote{Note that this is the source of the dephasing the
qubits undergo, since when acting on a superposition state of a qubit, the
qubit and field become entangled: 
\[
\exp \left[ {\sigma }_{z}\otimes R\right] (a|0\rangle +b|1\rangle )|\beta
\rangle  =a|0\rangle \otimes D\left( \alpha \right) |\beta \rangle  
+b|1\rangle \otimes D\left( -\alpha \right) |\beta \rangle  =e^{(\alpha \beta ^{\ast }-\alpha ^{\ast }\beta )/2}a|0\rangle \otimes
|\alpha +\beta \rangle +e^{-(\alpha \beta ^{\ast }-\alpha ^{\ast }\beta )/2}b|1\rangle \otimes
|\beta -\alpha \rangle \ .
\]
}
The evolution operator can thus be written as: 
\begin{equation}
{U}(t)=e^{i\sum_{jj'} f_{jj'}(t)Z_jZ_{j'}}\prod_{i,k}\left[ |0\rangle _{i}\langle 0|\otimes
D\left( \alpha _{k}^{i}\right) +|1\rangle _{i}\langle 1|\otimes D\left(
-\alpha _{k}^{i}\right) \right] .
\end{equation}
Now assume that the bosonic bath is in thermal equilibrium: 
\bes
\begin{align}
\rho _{B} &=\frac{1}{Z}e^{-\beta {H}_{B}}  =\left[ \prod_{k}\frac{e^{-\beta  \omega _{k}/2}}{1-e^{-\beta 
\omega _{k}}}\right] ^{-1}\exp \left( -\beta \sum_{k} \omega _{k}\left( 
{N}_{k}+\frac{1}{2}\right) \right)  = \prod_{k}\rho _{B,k}\ ,
\end{align}
\ees
where 
\beq
\rho _{B,k}=\frac{1}{\langle {N}_{k}\rangle }\exp \left( -\beta \omega _{k}{N}_{k}\right)\ ,
\eeq
and the mean boson occupation number is given by the Bose-Einstein distribution: 
\begin{equation}
\langle {N}_{k}\rangle =\frac{1}{e^{\beta  \omega _{k}}-1}.
\end{equation}
As shown in \cite{Gardiner:book}, p.122-3, this can be transformed into the
coherent-state representation, with the result: 
\begin{equation}
\rho _{B,k}=\frac{1}{\pi \langle {N}_{k}\rangle }\int d^{2}\alpha
_{k}\,\exp \left( -\frac{|\alpha _{k}|^{2}}{\langle {N}_{k}\rangle }
\right) |\alpha _{k}\rangle \langle \alpha _{k}|\ .
\end{equation}


For simplicity let us from now on consider the case of a single qubit. It suffices to
calculate the evolution of each of the four pure states $|x\rangle \langle
y| $, where $x,y\in\{0,1\}$, separately. Thus
\begin{eqnarray*}
\rho _{x,y}(t) &=&{\rm Tr}_{B}\left[ {U}(t)|x\rangle \langle y|\otimes
\rho _{B}(0){U}^{\dagger }(t)\right]  \\
&=&{\rm Tr}_{B}\left[ \prod_{k}\left[ |0\rangle \langle 0|\otimes
D\left( \alpha _{k}\right) +|1\rangle \langle 1|\otimes D\left( -\alpha
_{k}\right) \right]  |x\rangle \langle y|\otimes \prod_{m}\rho _{B,m}\prod_{l}\left[ |0\rangle \langle 0|\otimes D^{\dagger }\left(
\alpha _{l}\right) +|1\rangle \langle 1|\otimes D^{\dagger }\left( -\alpha
_{l}\right) \right] \right] .
\end{eqnarray*}
The terms in the three products match one-to-one for equal indices, so we
can write everything as a product over a single index $k$. Using ${\rm Tr}
(A\otimes B)={\rm Tr}A\times {\rm Tr}B$ to rearrange the order of the trace
and the products, and $D^{\dagger }\left( -\alpha \right) =D\left( \alpha
\right) $, we have:
\bes
\begin{align}
\rho _{x,y}(t) &=\delta _{x,0}\delta _{y,0}|0\rangle \langle
0| \otimes \prod_{k}{\rm Tr}\left[ D\left( \alpha _{k}\right) \rho
_{B,k}D\left( -\alpha _{k}\right) \right]  \\
&+\delta _{x,0}\delta _{y,1}|0\rangle \langle 1| \otimes \prod_{k}{\rm Tr}\left[ D\left( \alpha _{k}\right) \rho
_{B,k}D\left( \alpha _{k}\right) \right]  \\
&+\delta _{x,1}\delta _{y,0}|1\rangle \langle 0| \otimes \prod_{k}{\rm Tr}\left[ D\left( -\alpha _{k}\right) \rho
_{B,k}D\left( -\alpha _{k}\right) \right]  \\
&+\delta _{x,1}\delta _{y,1}|1\rangle \langle 1| \otimes \prod_{k}{\rm Tr}\left[ D\left( -\alpha _{k}\right) \rho
_{B,k}D\left( \alpha _{k}\right) \right] .
\end{align}
\ees
Consider the ${\rm Tr}$ terms:\ for $|0\rangle \langle 0|$ and $
|1\rangle \langle 1|$ by cycling in the trace the displacement operators
cancel and ${\rm Tr}\left[ \rho _{B,k}\right] =1$. Thus, as expected the
\ diagonal terms do not change:
\beq
\r_{0,0}(t) = \r_{0,0}(0)\ , \qquad \r_{1,1}(t) = \r_{1,1}(0)\ .
\eeq 

As for the off-diagonal terms: 
\begin{align}
{\rm Tr}\left[ D\left( \pm 2\alpha _{k}\right) \rho_{B,k}\right] &=\frac{1}{\pi \langle {N}_{k}\rangle }\int d^{2}\beta _{k}\,\exp
\left( -\frac{|\beta _{k}|^{2}}{\langle {N}_{k}\rangle }\right) 
\langle \beta _{k}|D\left( \pm 2\alpha _{k}\right) |\beta _{k}\rangle\ .
\end{align}
Now: 
\bes
\begin{align}
\langle \beta |D\left( \pm 2\alpha \right) |\beta \rangle  &= \exp \left[
\pm \left( \alpha \beta ^{\ast }-\alpha ^{\ast }\beta \right) \right]
\langle \beta |\pm 2\alpha +\beta \rangle  \\
&=\exp \left[ \pm \left( \alpha \beta ^{\ast }-\alpha ^{\ast }\beta \right) 
\right] \exp \left[ \beta ^{\ast }\left( \pm 2\alpha +\beta \right) -\frac{1}{2}
\left( |\beta |^{2}+|\pm 2\alpha +\beta |^{2}\right) \right]  \\
&=\exp \left( -2|\alpha |^{2}\pm 2\left( \alpha \beta ^{\ast }-\alpha
^{\ast }\beta \right) \right)\ .
\end{align}
\ees
Thus: 
\bes
\begin{align}
{\rm Tr}\left[ D \left( \pm 2\alpha _{k}\right) \rho _{B,k}\right] 
&=\exp \left( -2|\alpha _{k}|^{2}\right) \frac{1}{\pi \langle {N}
_{k}\rangle }\int d^{2}\beta _{k}\,\exp \left( -\frac{|\beta _{k}|^{2}}{\langle {N}
_{k}\rangle }\pm 2\left( \alpha _{k}\beta _{k}^{\ast }-\alpha _{k}^{\ast
}\beta _{k}\right) \right)  \\
&= \frac{\exp \left( -2|\alpha _{k}|^{2}\right)}{\pi \langle {N}
_{k}\rangle }\left[ \pi \langle {N}_{k}\rangle \exp \left( -4|\alpha
_{k}|^{2}\langle {N}_{k}\rangle \right) \right]  \\
&=\exp \left[ -4|\alpha _{k}|^{2}\left( \langle {N}_{k}\rangle +\frac{1
}{2}\right) \right]  \\
&=\exp \left[ -4\left| \frac{\lambda _{k}^{\ast }(e^{i\omega _{k}t}-1)}{
 \omega _{k}}\right| ^{2}\left( \frac{1}{e^{\beta  \omega _{k}}-1}+
\frac{1}{2}\right) \right]  \\
&=\exp \left[ -4|\lambda _{k}|^{2}\frac{1-\cos (\omega _{k}t)}{ 
\omega _{k} ^{2}}\coth \frac{\beta  \omega _{k}}{2}\right]\ .
\end{align}
\ees
Thus decay of the off-diagonal terms goes as $e^{-2\gamma (t)t}$, with 
\begin{equation}
\gamma(t) =2\sum_{k}|\lambda _{k}|^{2}\coth \frac{\beta \omega _{k}}{2}\frac{1-\cos( \omega _{k}t)}{\omega _{k}^{2}t} = \sum_{k}|\lambda _{k}|^{2}\coth \frac{\beta \omega _{k}}{2} t \text{sinc}^2\frac{\o_k t}{2}\ ,
\label{eq:401}
\end{equation}
which coincides with the exact result quoted in Sec.~\ref{sec:spin-boson-1q}, specifically Eq.~\eqref{eq:g-CGLE} with $\tau$ replaced by $t$.



\section{Quantum trajectories and unravelling the Lindblad equation}
\label{sec:trajectories}
 
Solving the Lindblad equation numerically is demanding. For a $d$-dimensional system Hilbert space, the density matrix is $d\times d$, involving $d^2-1$ real numbers that one must store and update at each time-step. Is there a more space-efficient alternative? It turns out that instead one can propagate a wavefunction (only $2d-1$ real numbers, so a quadratic savings), at the expense of introducing statistical averaging over many runs. A very interesting side-benefit of this so-called unravelling procedure is that each wavefunction undergoes a ``quantum trajectory", that can be correlated to an individual sequence of quantum events, whereas the density matrix instead corresponds to an ensemble of such events.
 
Let us write down the Lindblad equation [Eq.~\eqref{eq:Lindblad-eq}] in the following form:
\begin{equation}
     \dot\rho = -i[H,\rho] + \sum_{k=1}^{d^2}\gamma_k \left(L_k\rho L_k^\dag - \frac{1}{2}\{L_k^\dag L_k,\rho\}\right)
     \label{eq:318}
 \end{equation}
Here $L_k$ are the Lindblad operators and $\gamma_k$ are scalars. As is clear from the derivation presented in Sec.~\ref{sec:LE-CG}, the number of non-zero terms in the sum is at most $d^2$. If one sets $\|L_k\|=1$ then the scalars $\gamma_k$ can be understood as rates of the corresponding relaxation process.%
\footnote{ 
Here we use the operator norm $\|O\|$:
\[
     \|O\| = \text{max}_{|v\rangle: \langle v| v\rangle =1}\sqrt{\langle v| O^\dag O|v\rangle}
\]
This norm is the largest eigenvalue of $\sqrt{O^\dag O}$. For Hermitian $O$, it reduces to the largest absolute value of eigenvalues of $O$.}
 
 There are multiple ways we can proceed to study this equation:
 
 \begin{enumerate}
     \item Derive $\gamma_k, L_k$ given the description of open system;
     \item Find equivalent dynamics of the wavefunction $|\psi(t)\rangle$ 
      (in the closed system case the wavefunction is a $d$-dimensional vector over $\mathbb{C}$ such that $\langle \psi|\psi\rangle =1$; this time we will let its norm be arbitrary);
      \item Suppose that measurements are performed repeatedly on the system, and derive the equation for dynamics given a string of measurement outcomes.
 \end{enumerate}
Here we will address points 2 and 3. In a very narrow sense we will address 1, if the closed system $+$ measurement apparatus are thought of as an open system.

\subsection{Method summary}
\label{sec:traj-summary}

To begin, we rewrite the Lindblad equation, Eq.~\eqref{eq:Lindblad-eq}, in the form
\beq
\dot{\r} = -i\left(H_{\text C}\rho(t)-\r(t)H_{\text C}^\dgr\right) + \sum_{\alpha}\g_\a L_{\alpha} \rho (t)
L^{\dag}_{\alpha}\ ,
 \label{eq:Lindblad-eq2}
\eeq
where
\beq
H_{\text C} = H-\frac{i}{2}\sum_\a\g_\a L_\a^\dgr L_\a \ 
\eeq
is called the ``conditional Hamiltonian". Note that it is non-Hermitian. Consider the evolution of a pure state $\ket{\psi(0)}$ subject to $H_{\text C}$:
\beq
\ket{\psi(0)} \overset{H_{\text C}}{\longmapsto} e^{-iH_{\text C} t} \ket{\psi(0)} = \ket{\tilde{\psi}(t)} \ .
\eeq
Since $H_{\text C}$ is non-Hermitian, the norm of $\ket{\tilde{\psi}(t)}$ decreases over time (hence the tilde):
\bes
\begin{align}
\frac{d}{dt} \| \ket{\tilde{\psi}(t)}\|^2 &= \bra{\psi(0)} e^{iH_{\text C}^\dgr t}(iH_{\text C}^\dgr) e^{-iH_{\text C} t} + e^{iH_{\text C}^\dgr t}(-iH_{\text C}) e^{-iH_{\text C} t}\ket{\psi(0)} \\
&= i \bra{\psi(0)} e^{iH_{\text C}^\dgr t} (H_{\text C}^\dgr-H_{\text C})e^{-iH_{\text C} t} \ket{\psi(0)} \\
&= -\sum_\a \g_\a \bra{\psi(0)} e^{iH_{\text C}^\dgr t} L_\a^\dgr L_\a e^{-iH_{\text C} t} \ket{\psi(0)} \\
&= -\sum_\a \g_\a \|L_\a e^{-iH_{\text C} t}  \ket{\psi(0)}\|^2 \leq 0 \ .
\end{align}
\ees

The action of the other term in Eq.~\eqref{eq:Lindblad-eq2} can be viewed as inducing a ``quantum jump":
\beq
 \ket{\tilde{\psi}(t)} {\longmapsto} \frac{L_\a \ket{\tilde{\psi}(t)}}{\| L_\a \ket{\tilde{\psi}(t)} \|} = \ket{\psi(t)}
 \qquad \text{with probability}\ p_\a = \frac{\g_\a \|L_\a \ket{\tilde{\psi}(t)}\|^2}{\sum_\a \g_\a \|L_\a \ket{\tilde{\psi}(t)}\|^2}\ ,
\eeq
where the post-jump state $\ket{\psi(t)}$ is normalized, and $p_\a$ tells us the probability that the particular jump $L_\a$ was realized.

If a jump took place at time $t$, then the probability that the next jump takes place in the interval $(t,t+\tau]$ is given by 
\beq
\Pr(\text{jump in}\ (t,t+\tau]\ | \text{ jump at}\ t) = 1- \|e^{-iH_{\text C}\tau}\ket{\tilde{\psi}(t)}\|^2 \ .
\label{eq:jump-prob}
\eeq 
In this way, the probability of a second jump at $\tau=0$ is zero, but the probability increases exponentially as $\tau$ grows.

Putting these steps together one arrives at the following algorithm for evolution from $t=0$ to $t_f$:
\begin{enumerate}
\item Initialize the state as $\ket{\psi(0)}$, set $j=1$
\item Evolve under the conditional Hamiltonian: $\ket{\tilde{\psi}(t_{j})} = e^{-iH_{\text C} t} \ket{\psi(t_{j})}$
\item Perform a jump at $t_j+\tau$ with probability given by Eq.~\eqref{eq:jump-prob}: $\ket{\tilde{\psi}(t_j+\tau)} \longmapsto \frac{L_\a \ket{\tilde{\psi}(t_j+\tau)}}{\| L_\a \ket{\tilde{\psi}(t_j+\tau)} \|}$, with the index $\a$ chosen with probability $p_\a = \frac{\g_\a \|L_\a \ket{\tilde{\psi}(t_j+\tau)}\|^2}{\sum_\a \g_\a \|L_\a \ket{\tilde{\psi}(t_j+\tau)}\|^2}$ 
\item If a jump took place, advance $j$ to $j+1$: call the new (normalized state) $\ket{\psi(t_{j+1})}$ and set $t_{j+1} = t_j+\tau$
\item Return to step 2, unless $t_{j+1} \geq t_f$
\item Repeat $K$ times from step 1, calling the output from the $k^{\text{th}}$ round $\psi_k(t_f)$, and construct $\r(t_f) = \frac{1}{K}\sum_{k=1}^K \ketbra{\psi_k(t_f)}$, stop when $\r(t_f)$ has converged
\end{enumerate}

It turns out that this algorithm converges to the solution $\r(t_f)$ of the Lindblad equation at $t=t_f$ (see, e.g., Section 7.1 of Ref.~\cite{Breuer:book}, and also the proof below). Its major advantage is that, as mentioned above, it propagates wavefunctions rather than density matrices, thus resulting in a quadratic space savings. The error in the approximation of $\r(t_f)$ decreases as $1/\sqrt{K}$. By the ``no-free lunch theorem" it should be the case that it is sufficient to use $K$ on the order of the Hilbert space dimension, so that the total cost is conserved. However, in practice fewer repetitions may suffice, so that the quantum trajectories algorithm may in fact be more efficient than brute force solution of the Lindblad equation. 

Each sequence $\{\psi_k(0),\tilde{\psi}_k(t_1),\psi_k(t_1),\tilde{\psi}_k(t_2),\psi_k(t_2),\dots,\psi_k(t_j),\tilde{\psi}_k(t_j),\dots\}$ is a ``quantum trajectory". It describes a series of norm-decreasing evolutions interrupted by quantum jumps. This provides an interesting and insightful interpretation of what actually takes place during open quantum system evolution. Consider, e.g., generalized amplitude damping (Sec.~\ref{sec:gen-AD}). An atom undergoes spontaneous emission to its ground state, but due to thermal excitation it can repopulate its excited state. As we saw in Sec.~\ref{sec:Lind-AD}, the probability of a transition from the excited state to the ground state increases exponentially with time, which is in accordance with Eq.~\eqref{eq:jump-prob}. But now we see that the actual emission event is a ``jump", whereby the atom suddenly and discontinuously finds itself in the ground state. The process can also work in the opposite direction, and by absorbing energy from the bath, the atom can find itself in an excited state, etc.  The downward transition event is accompanied by the emission of a photon (by energy conservation), or phonon, or some other elementary excitation, which can be detected. And indeed, such quantum trajectories have been measured in quantum optics experiments (see, e.g., Ref.~\cite{brun:719} and references therein).

We now proceed to give a more careful and detailed description and analysis.

\subsection{Equivalent dynamics of the wavefunction}

\subsubsection{Naive form}

Starting over, we note that we can rewrite the Lindblad equation as follows, in the limit $dt\to 0$:
\bes
\begin{align}
     \rho(t+dt) &= \rho(t)-i[H,\rho(t)]dt - \sum_{k=1} \gamma_k   \frac{1}{2}\{L_k^\dag L_k,\rho(t)\}dt+ \sum_{k=1} \gamma_k L_k\rho(t) L_k^\dag dt \\
     &=
     e^{-iH_{\text{C}} dt}\rho(t) e^{iH_{\text{C}}^\dag dt} + \sum_{k=1} M_k \rho(t) M_k^\dag \quad \text{where} \\
     H_{\text{C}} &= H - \frac{i}{2}\sum_{k=1} \gamma_k L_k^\dag L_k, \quad M_k = \sqrt{\gamma_k dt}L_k \ .
 \end{align}
 \ees
Here again $H_{\text{C}}$ is the non-Hermitian conditional Hamiltonian. If we define $M_0 = e^{iH_{\text{C}}^\dag dt}$ then this is the standard channel decomposition that we started with:
\begin{equation}
    \rho(t+dt)  =\sum_{k=0} M_k\rho(t) M_k^\dag
    \label{eq:326}
\end{equation}
We note that instead of using a differential equation solver to obtain $\rho(t+dt)$, using the non-selective measurement formalism of Sec.~\ref{sec:non-selective} we can instead mathematically ``simulate" the above formula in the following way:
\begin{enumerate}
    \item choose $k\geq 0$ with probability $p_k = \Tr [M_k\rho(t) M_k^\dag]$;
    \item set $\rho(t+dt) = \frac{1}{\mathcal{N}} M_k\rho(t) M_k^\dag$;
    \item repeat for the next time step $dt$.
\end{enumerate}
This simulation uses random numbers $\{k\}$. It is easy to see that the expectation value of the density matrix at some later time $T$ is exactly the same as the solution of the master equation:
\begin{equation}
    \text{lim}_{dt \to 0}\text{Av}_{\{k\}}\rho(T,\{k\}) = \rho(T)
\end{equation}
Now we note that the whole process was linear with respect to $\rho(t) = \sum_i p_i |\psi_i(t)\rangle \langle \psi_i(t)|$. So we can work with the states instead! Generate a random number $i$ with probability $p_i$ given by initial conditions, so as to choose $\ket{\psi_i(0)}$ as the initial state (a pure state).  Then follow these instructions with normalized $|\psi(t)\rangle$ at each step to produce $|\psi(t+dt)\rangle$:
\begin{enumerate}
    \item choose $k\geq 0$ with probability $p_k=\langle \psi(t)|M_k^\dag M_k|\psi(t)\rangle $
    \item set $|\psi(t+dt)\rangle = \frac{1}{\mathcal{N}} M_k|\psi(t)\rangle$
    \item repeat for the next time step $dt$
\end{enumerate}
We have derived the equivalence, so we know that
\begin{equation}
    \text{lim}_{dt \to 0}\text{Av}_{\{k\},i}|\psi(T,\{k\},i)\rangle \langle \psi(T,\{k\},i)| = \rho(T) \ .
\end{equation}
Here we average over random numbers $\{k\},i$ to obtain the same density matrix as the solution of the master equation. For a small range of $k$ and simple operators $L_k$ this method already leads to substantial savings, as one never needs to store $d\times d$ matrices during the simulation, only $d$-dimensional vectors. However, note that to obtain the average in practice one needs to sample from $\{k\},i$ some number of times $K$, repeating the whole simulation. In principle $K$ can be as large as $d$, thus defeating the purpose of the method, but in practice one can observe convergence of the average with increasing $K$, e.g. by studying the dispersion of some observable
\begin{equation}
   D(O) =  \text{A}v_{\{k\},i}(\langle \psi(T,\{k\},i)| O|\psi(T,\{k\},i)\rangle)^2  \ .
\end{equation}
Convergence is often achieved for $K\ll d$.

\subsubsection{Telegraph noise form}
We consider a slightly different perspective that is essentially the same as above, but we note that $k=0$ corresponding to $M_0 = e^{iH_{\text{C}}^\dag dt}$ dominates the probability distribution for $k$ in the limit $dt \to 0$:
\begin{equation}
    \langle \psi(t)|M_0^\dag M_0|\psi(t)\rangle = 1-O(dt) =1 - \sum_{k=1} \gamma_k   \langle \psi(t)|L_k^\dag L_k|\psi(t)\rangle dt +O(dt^2)\label{prob}
\end{equation}
this means that one does not need to calculate $\langle \psi(t)|M_k^\dag M_k|\psi(t)\rangle$ every $dt$. One only calculates 
\begin{equation}
    p_{\text{no-jump}}=\langle \psi(t)|M_0^\dag M_0|\psi(t)\rangle\ ,
\end{equation}
and generates an auxilliary random variable {\sc{jump}}$=0,1$ with probability $p_{\text{no-jump}},1-p_{\text{no-jump}}$ respectively.  Only if {\sc{jump}}$=1$ we ask which $k$ actually happened.

Looking at Eq.~\eqref{prob} we see that at first the probability of a jump happening within an interval $[t,t+\tau]$ increases from $0$ linearly with $\tau$, and at large $\tau$ it approaches $1$ exponentially. The coefficient in front of the linear dependence is $|\psi\rangle$-dependent, but weakly so. There is a well-known stochastic process given by
\begin{equation}
  p_{\text{jump}} = r dt \ . 
\end{equation}
In other words, independent jumps occur with rate $r$ per unit of time. This process is called \emph{telegraph noise}. The simulation method described above is a quantum evolution interrupted by essentially independent jumps following a telegraph noise distribution. Below we will study a different type of noise.

\subsubsection{Stochastic Schr\"odinger equation approach}
What we did above was produce a map from a wavefunction $|\psi(t)\rangle$ plus a random variable $\xi$ to the wavefunction at the next time step $|\psi(t+dt)\rangle$. The way we proved that this map is equivalent to the original master equation is by observing that
 \begin{equation}
   \text{Av}_{\xi}|\psi(t+\Delta t,\xi)\rangle \langle \psi(t+\Delta t,\xi)| =|\psi(t)\rangle \langle \psi(t)| + \mathcal{L}(|\psi(t)\rangle \langle \psi(t)|) \Delta t + O(\Delta t) \ .
\end{equation}
Here $\mathcal{L}$ is the generator of the Lindblad equation we are trying to simulate.

Let us now demonstrate that the Lindblad equation can also be derived from a stochastic Schr\"odinger equation approach. For simplicity, let us consider a generator with just one Hermitian term:
\begin{equation}
    \mathcal{L} = A\rho A - \frac{1}{2}A^2 \rho - \frac{1}{2}\rho A^2 \ .
    \label{eq:334}
\end{equation}
Let the random variable $\xi$ actually be a stochastic function of time $\xi(t)$ on the interval $[t,t+\Delta t]$. Define the time-step for our trajectory as:
\begin{equation}
    |\psi(t +\Delta t )\rangle =e^{ iA\int_t^{t+\Delta t}\xi(\tau) d\tau} |\psi(t   )\rangle  \ .
    \label{eq:335}
\end{equation}
This is the solution of the differential equation:
\begin{equation}
    \frac{d}{dt}|\psi(t  )\rangle =iA\xi(t) |\psi(t   )\rangle  \ .
\end{equation}
We can do a second order Taylor series expansion of Eq.~\eqref{eq:335}. The average of the density matrix after our time-step is then given by:
\bes
 \begin{align}
   &\text{Av}_{\xi}|\psi(t+\Delta t,\xi)\rangle \langle \psi(t+\Delta t,\xi)| =|\psi(t)\rangle \langle \psi(t)| \\
   &\qquad +  \text{Av}_{\xi}\left( iA\int_t^{t+\Delta t}\xi(\tau) d\tau|\psi(t)\rangle \langle \psi(t)|   -i|\psi(t)\rangle \langle \psi(t)|A\int_t^{t+\Delta t}\xi(\tau) d\tau \right) \\
   &\qquad +\text{Av}_{\xi} \left[ \int_t^{t+\Delta t}\int_t^{t+\Delta t}\xi(\tau)\xi(\tau') d\tau d\tau'\left(A|\psi(t)\rangle \langle \psi(t)|A -  \frac{A^2}{2}|\psi(t)\rangle \langle \psi(t)| -|\psi(t)\rangle \langle \psi(t)| \frac{A^2}{2}\right) \right]+O(\Delta t^2) \ .
\end{align}
\ees
Note that before the choice of $\xi$ is made, we don't really know what the smallness of the next order in Taylor series is. First of all we set 
\beq\text{Av}_\xi \xi(\tau) =0\ ,
\eeq 
to get rid of the first order in $A$. We also define the correlation function 
\beq
C(\tau,\tau') = \text{Av}_\xi \xi(\tau)\xi(\tau') \equiv C(\tau-\tau')
\eeq 
to be translation-invariant in time (i.e., to depend only the difference $\tau-\tau'$). Together these two conditions define the first two moments of Gaussian stochastic random variable. We then have:
\bes
\begin{align}
&   \text{Av}_{\xi}|\psi(t+\Delta t,\xi)\rangle \langle \psi(t+\Delta t,\xi)| =|\psi(t)\rangle \langle \psi(t)| \\
 &\qquad  + \int_t^{t+\Delta t}\int_t^{t+\Delta t}C(\tau-\tau') d\tau d\tau'\left(A|\psi(t)\rangle \langle \psi(t)|A -  \frac{A^2}{2}|\psi(t)\rangle \langle \psi(t)| -|\psi(t)\rangle \langle \psi(t)| \frac{A^2}{2}\right) +O(\Delta t^2) \ .
\end{align}
\ees
We would like
\begin{equation}
    \int_t^{t+\Delta t}\int_t^{t+\Delta t}C(\tau-\tau') d\tau d\tau'\sim \Delta t \ .
\end{equation}
We note that this will be the case if $C(t)$ is peaked at $0$ with width $w\ll \Delta t$ and height $C_0$:
\begin{equation}
    \int_t^{t+\Delta t}\int_t^{t+\Delta t}C(\tau-\tau') d\tau d\tau'\approx wC_0\Delta t
\end{equation}
Setting $wC_0=1$ will recover the desired Lindblad generator $\mathcal{L}$ given in Eq.~\eqref{eq:334}. Since $w$ is the smallest timescale in the problem we can just choose
\begin{equation}
    C(t) = \delta(t)
\end{equation}
where $\delta(t)$ is the Dirac delta function.
We have proven:
\bes
\begin{align}
&   \text{Av}_{\xi}|\psi(t+\Delta t,\xi)\rangle \langle \psi(t+\Delta t,\xi)| =|\psi(t)\rangle \langle \psi(t)| \\
&\qquad   + \left(A|\psi(t)\rangle \langle \psi(t)|A -  \frac{A^2}{2}|\psi(t)\rangle \langle \psi(t)| -|\psi(t)\rangle \langle \psi(t)| \frac{A^2}{2}\right) \Delta t +O(\Delta t^2)\ .
\end{align}
\ees
Now the smallness of the remaining terms can be guaranteed as $O(\Delta t^2)$, and we have indeed recovered the Lindblad generator $\mathcal{L}$ given in Eq.~\eqref{eq:334}.

Using the same idea for the derivation, we can prove the equivalence between the original Lindblad equation~\eqref{eq:318}
and the following differential equation on $|\psi(t)\rangle$:
 \begin{equation}
    \frac{d}{dt}|\psi(t  )\rangle = -i(H -\sum_k L_k\xi_k(t)) |\psi(t   )\rangle , \quad \text{Av}_\xi \xi_k(t) \xi_m(t') =  \delta_{km}\delta(t-t')\sqrt{\gamma_k} \ .
\end{equation}
Here $\delta_{km}$ is the Kronecker delta function. The equivalence states that
\begin{equation}
     \text{Av}_{\xi}|\psi(T,\xi)\rangle \langle \psi(T,\xi)| = \rho(T)
\end{equation}
the limit is included in $\delta$-function and the definition of the differential equation, so no additional limit needs to be taken here. In practice, though, some discretization scheme needs to be applied and the numerical simulation uses $\int_t^{t+\Delta t}\xi(\tau) d\tau$ instead of the raw $\xi(t)$.

\subsubsection{Comparison between the telegraph noise and Stochastic Schr\"odinger equation approaches}
If we compare the Stochastic Schr\"odinger equation approach to the telegraph noise method, we find that $\xi^{\text{tel}}(t)$ is a sequence of randomly spaced peaks with $0$ in between. It is possible to arrange for the correlation function of that signal to be $C(t) = \delta(t)$, however the higher order correlation functions will be vastly different from the Gaussian noise that is usually used for stochastic differential equations. The defining characteristic of the Gaussian noise is that higher order correlations (or moments) are expressed via $C(t)$ according to Wick's theorem. Another way to think about it is that the Fourier transforms $\int_0^T \xi(t)e^{ikt} dt$ are i.i.d. random variables for each $k$ for Gaussian noise, but not for telegraph noise.

Let us discuss the properties of individual $|\psi(t,\xi)\rangle $ or $|\psi(t,\{k\})\rangle $ for a given realization of random variables, under the two approaches. One way to look at this is to take an observable $O$ (s.t. $\|O\|=1$) and follow its average:
\begin{equation}
  \langle\psi(t,\xi)|O |\psi(t,\xi)\rangle \quad \text{or} \quad \langle\psi(t,\{k\})|O |\psi(t,\{k\})\rangle\ .
\end{equation}
If the closed system evolution of the observable $O(t)$ has a characteristic frequency $\omega \sim \|[O,H]\|$ and the relaxation has the characteristic rates $r \sim $max$_k \gamma_k$, then there are two possible regimes: $\omega \ll r$ and $\omega \gg r$. The qualitative picture that we will see is as follows:
\begin{center}
\begin{tabular}{ |c| c| c| }
\hline
 ~ & $\omega \ll r$ & $\omega \gg r$ \\ 
 \hline
 telegraph & smooth curves interrupted  & rapid sine wave interrupted  \\  
 ~ & by discontinuities & by discontinuities \\
 \hline
 stochastic & noisy diffusive behaviur & noisy almost periodic behavior   \\
 \hline
\end{tabular}
\end{center}
Even though the two methods are both equivalent to the same master equation, other characteristics such as the dispersion $D(O)$ or the diffusion coefficient of individual trajectories vary between the two methods. Thus, we find very different visual behavior of individual trajectories. It is possible to interpolate between the two by chossing a non-Gaussian $\xi(t)$. We note that the results for a single trajectory are reminiscent of experimental measurements. We next make this analogy more precise.

\subsection{Weak measurements}
One way is to choose the distribution of the random process $\xi(t)$ in such a way that an individual trajectory $\langle\psi(t,\xi)|O |\psi(t,\xi)\rangle $ matches the measurement output $M(t)$ of some repeated measurement. However this is an unphysical approach. What we should be doing is to come up with a mapping $M(\xi)$ since $\xi$ contain the information about random choices made outside of the system, while $\langle\psi(t,\xi)|O |\psi(t,\xi)\rangle $ contains information ``private" to the system, something that has not been measured yet.

We note that the first method with the decomposition given in Eq.~\eqref{eq:326}
can be directly interpreted as a measurement where $k$ is an answer. The stochastic one requires some transformations before this can be done, as the width of the $\delta$-function is the smallest time-scale that is faster than the supposed data collection timescale. We do not know of any research that makes this connection. There is a lot of research connecting weak measurements with trajectories, which could be seen as such an interpretation of stochastic equations. The difference with telegraph noise is that every $M_k$ is close to identity $I$ with a small probability in front.


\section{Analytical solution of the general Lindblad equation}

In this section we discuss the analytical solution of the Lindblad equation in arbitrary dimensional (but finite) Hilbert spaces.

\subsection{The coherence vector}

Let us first introduce a ``nice" operator basis for $\mc{B}(\mc{H}_S)$, where $d=\dim(\mc{H}_S)$. Let $F_0 = I_S$ and choose $M$ other traceless, Hermitian operators $\{F_j\}_{j=1}^M$, where $M=d^2-1$, such that
\beq
\label{eq:F-reqs}
\Tr(F_j) = 0 \ , \quad \Tr(F_j F_k)=\d_{jk}\ , \quad F_j^\dgr = F_j \ .
\eeq
A common choice is the generators of $\text{su}(d)$ (just as in the single-qubit case we chose the Pauli matrices), but for our purposes the explicit form of the operator basis won't matter. Note that this is similar to what we did in Sec.~\ref{sec:deriv-LE-CG}, except that for later convenience we make our basis choice somewhat more explicit here.

We can now expand any operator in this basis, including the density matrix:
\beq
\r = \frac{1}{d}F_0 + \sum_{j=1}^M v_j F_j = \frac{1}{d}I + \vec{F}\cdot\vec{v} \ ,
\label{eq:333}
\eeq
where $\vec{v} = (v_1,\dots,v_M)^T \in \mathbb{R}^M$ is called the ``coherence vector" (a generalized Bloch vector), and $\vec{F} = (F_1,\dots,F_M)$ collects the operator basis into a vector. Thus the components of the coherence vector are
\beq
v_j = \Tr(\r F_j) \ .
\label{eq:v_j}
\eeq

In analogy to Eq.~\eqref{eq:Phi(v)} for the single qubit case, we shall see that as a consequence of the Lindblad equation $\dot{\r} = \mc{L}\r$, the coherence vector satisfies the first order, inhomogeneous differential equation
\beq
\dot{\vec{v}} = G\vec{v} + \vec{c} \ .
\label{eq:335}
\eeq
Moreover, the decomposition of $\mc{L}$ as 
\beq
\mc{L} = \mc{L}_H+\mc{L}_D\ ,
\eeq 
with
\bes
\begin{align}
\mc{L}_H[\cdot] &= -i[H,\cdot] \\
\mc{L}_D[\cdot] &= \sum_{ij} a_{ij}\left(F_i \cdot F_j -\frac{1}{2}\{F_j F_i,\cdot\}\right)\ ,
\label{eq:337b}
\end{align}
\ees
induces the decomposition of $G$ into $G=Q+R$, where $\mc{L}_H[\r] \leadsto Q\vec{v}$ and $\mc{L}_D[\r] \leadsto R\vec{v} + \vec{c}$.

To explain the form of the dissipative term given in Eq.~\eqref{eq:337b}, recall the original form given in Eq.~\eqref{eq:L_D}. Combine this with the unitary transformation between the operator basis and the Lindblad operators given in Eq.~\eqref{eq:285}, to see that we can always transform between the non-diagonal and diagonal forms of the Lindblad equation. This transformation preserves positivity, i.e., we know that the coefficient matrix $a\equiv(a_{ij})$ is positive semi-definite.

Note that the normalization convention we have chosen for the coherence vector is slightly different from the Bloch vector, since we did not divide $\vec{v}\cdot\vec{F}$ by $d$ in Eq.~\eqref{eq:333}. As a result, the coherence vector is confined to a sphere with a radius less than one. Recall that the purity $P = \Tr(\r^2)$ [Eq.~\eqref{eq:purity}] satisfies $P\leq 1$. Thus
\beq
1 \geq  \Tr(\r^2) = \Tr\left[\left(\frac{1}{d}I + \vec{F}\cdot\vec{v}\right)^2\right] = \frac{1}{d} + \sum_{ij} \Tr(F_iF_j)v_iv_j = \frac{1}{d}+\|\vec{v}\|^2\ ,
\eeq
i.e.,
\beq
0\leq \|\vec{v}\| \leq \left(1-\frac{1}{d}\right)^{1/2} \ .
\label{eq:v-bounded}
\eeq
The upper bound is saturated for pure states, which thus live on the surface of an $d^2-1$-dimensional sphere with radius $\left(1-\frac{1}{d}\right)^{1/2}$.

\subsection{Just the non-dissipative part}
Let us assume that $\mc{L}_D = 0$. In this case we have, starting from Eq.~\eqref{eq:v_j}:
\bes
\label{eq:338}
\begin{align}
\dot{v}_k &= \Tr\left(\dot{\r} F_k\right) = -i\Tr\left([H,\vec{F}\cdot\vec{v}]F_k\right) \\
&= -i\Tr\left(\sum_j (HF_jF_k-F_jH F_k) v_j \right) = i\sum_j \Tr\left(H[F_k,F_j]\right)v_j\\
&= (Q \vec{v})_k\ ,
\end{align}
\ees
i.e.,
\beq
\dot{\vec{v}} = Q\vec{v} \ ,
\label{eq:339}
\eeq
where
\beq
\label{eq:Qjk}
Q_{jk} \equiv i \Tr\left(H[F_j,F_k]\right) \ .
\eeq
Note that the appearance of the commutator $[F_j,F_k]$ is a good reason to use as an operator basis the generators of a Lie algebra, for which the commutator can be expressed in terms of the algebra's structure constants.

The matrix $M\times M$ dimensional $Q$ is clearly skew symmetric: $Q_{jk} = -Q_{kj}$, i.e., 
\beq
Q = -Q^T \ .
\eeq
The solution of Eq.~\eqref{eq:339} is
\beq
\vec{v}(t) = e^{Qt} \vec{v}(0) \equiv \Omega(t)\vec{v}(0) \ .
\eeq
The evolution operator $\Omega$ is orthogonal:
\beq
\Omega^T\Omega= e^{Q^T t}e^{Qt} = e^{-Q t}e^{Qt} = I \ ,
\eeq
where we used the skew-symmetry of $Q$. This immediately implies that the norm of the coherence vector is preserved: $\|\vec{v}(t)\|^2 = \vec{v}^T(0) \Omega^T\Omega\vec{v}(0) = \|\vec{v}(0)\|^2$.

Thus, the evolution of the coherence vector in the absence of the dissipative part $\mc{L}_D = 0$ is a rotation in $\mathbb{R}^M$, generated by $Q$. 

\subsection{Full Lindblad equation for the coherence vector}
Let us now assume that both $\mc{L}_H,\mc{L}_D \neq 0$. Starting again from Eq.~\eqref{eq:v_j}, and using Eq.~\eqref{eq:337b}, we have:
\bes
\label{eq:338}
\begin{align}
\dot{v}_k &=  (Q \vec{v})_k +\sum_{ij} a_{ij}\Tr\left[F_i  \left(\frac{1}{d}I + \vec{F}\cdot\vec{v}\right)F_j F_k -\frac{1}{2}\{F_j F_i,\frac{1}{d}I + \vec{F}\cdot\vec{v}\}F_k\right]  \\
&= (Q \vec{v})_k + \sum_l \sum_{ij} a_{ij}\Tr\left[\left(F_j F_k F_i  -  \frac{1}{2} (F_k F_j F_i + F_j F_i F_k)\right)F_l \right] v_l 
+ \frac{1}{d}\sum_{ij} a_{ij}\Tr(\left[F_i , F_j \right]F_k) \\
& =  [(Q + R)\vec{v}]_k + c_k\ ,
\end{align}
\ees
where
\bes
\label{eq:Randc}
\begin{align}
R_{kl} &\equiv \sum_{ij} a_{ij}\Tr\left[\left(F_i F_l F_j   -  \frac{1}{2} \left\{F_j F_i, F_l\right\} \right)F_k \right]\\
c_k &\equiv \frac{1}{d}\sum_{ij} a_{ij}\Tr(\left[F_i , F_j \right]F_k) \ .
\end{align}
\ees
Thus, we have established that Eq.~\eqref{eq:335} holds, with $G=Q+R$, and with $Q$, $R$, and $\vec{c}$ as given in Eqs.~\eqref{eq:Qjk} and \eqref{eq:Randc}, respectively.

\subsection{Solution for diagonalizable and invertible $G$}
\label{sec:diag-inv-G}

Equation~\eqref{eq:335} is a linear, first order, inhomogeneous differential equation. Solving it is a standard exercise in linear algebra. For simplicity, let us assume that $G$ is diagonalizable over $\mathbb{R}^M$ and also invertible. Neither of these assumptions holds in general, and we deal with the general case in the next subsection.

We look for a solution in the form
\beq
\vec{v}(t) = \vec{v}^{(0)}(t) + \vec{v}^{(\infty)} \ ,
\eeq
where $\vec{v}^{(0)}(t)$ is the homogeneous part and $\vec{v}^{(\infty)}$ is the inhomogeneous part. Let $\vec{x}^{(k)}$ and $\lambda_k$ represent the eigenvectors and (possibly degenerate and complex) eigenvalues of $G$, i.e.,
\beq
G \vec{x}^{(k)} = \lambda_k \vec{x}^{(k)}\ , \qquad k=1,\dots,M \ .
\eeq
It is then straightforward to check by direct differentiation and substitution that 
\bes
\begin{align}
\vec{v}^{(0)}(t) &= \sum_{k=1}^M s_k e^{\lambda_k t}  \vec{x}^{(k)} \\
\vec{v}^{(\infty)} &= -G^{-1}\vec{c} 
\end{align}
\ees
in the solution of Eq.~\eqref{eq:335}. Indeed:
\beq
\dot{\vec{v}} = G\vec{v}(t)+\vec{c} = G \vec{v}^{(0)}(t) + G \vec{v}^{(\infty)} + \vec{c}
= \sum_{k=1}^M s_k e^{\lambda_k t}  \lambda_k \vec{x}^{(k)}  -G G^{-1}\vec{c} + \vec{c} = \dot{\vec{v}}^{(0)}\ ,
\eeq
as required. The coefficients $s_k$ are determined by the initial condition $\vec{v}(0)$:
\beq
\vec{v}^{(0)}(0) = \sum_{k=1}^M s_k  \vec{x}^{(k)} = X \vec{s} \ ,\qquad \text{col}_k(X) = \vec{x}^{(k)}\ ,
\eeq
i.e., $X$ is the matrix whose columns are the eigenvectors of $G$. Also, $\vec{v}^{(0)}(0) = \vec{v}(0) - \vec{v}^{(\infty)}$. Thus
\beq
\vec{s} = X^{-1} (\vec{v}(0)+G^{-1}\vec{c}) \ .
\eeq

Now, since the eigenvalues are in general complex numbers, they can be decomposed as 
$\lambda_k = \Re(\lambda_k ) + i \Im(\lambda_k )$. The imaginary part describes a rotation of the coherence vector (though we can be sure that since this vector lives in $\mathbb{R}^M$, such rotations are ultimately described by an orthogonal (purely real) matrix). The real part is constrained by complete positivity and trace preservation to be non-positive, or else the norm of the coherence vector would not be bounded [recall Eq.~\eqref{eq:v-bounded}]. Thus, the overall behavior of the coherence vector is described by rotations at frequencies given by $\{\Im(\lambda_k )\}$, some of which are exponentially damped on a timescale given by the set of non-zero $\{\Re(\lambda_k )\}$.

\subsection{Solution for general $G$}
\label{sec:gen-G-sol}

The general case is where $G$ is not diagonalizable over $\mathbb{R}^M$, and may not be invertible. In this case we can still use a similarity transformation $S$ to transform $G$ into Jordan canonical form:
\beq
G_J = SGS^{-1} = 
\left(
\begin{array}{ccc}
 J_1 &   &   \\
   & \ddots &   \\
   &   & J_q \\
\end{array}
\right)\ ,
\eeq
where the $q$ Jordan blocks have the form
\beq
J_j = 
\left(
\begin{array}{ccccc}
 \mu _j & 1 &   &   &   \\
   & \mu _j & \ddots &   &   \\
   &   & \ddots & \ddots &   \\
   &   &   & \mu _j & 1 \\
   &   &   &   & \mu _j \\
\end{array}
\right) = \m_j I + K_j \ .
\eeq
The $\m_j$'s are the (possibly degenerate, complex) eigenvalues and $K_j$ are nilpotent matrices: $K_j^{d_j} = 0$, where $d_j$ is the dimension of $J_j$. When all $d_j=1$, $G$ is diagonalizable and $G_J$ reduces to the diagonalized form of $G$. 

Applying $S$ from the left to Eq.~\eqref{eq:335} yields
\beq
S\dot{\vec{v}} = SGS^{-1}S\vec{v} + S\vec{c} \quad \implies \quad \dot{\vec{w}} = G_J \vec{w} + \vec{c}'\ ,
\label{eq:354}
\eeq
where $\vec{w} = S\vec{v}$ and we defined $\vec{c}' = S\vec{c}$.
This is still a linear, first order, inhomogeneous differential equation. The different Jordan blocks don't couple, so we can solve this as a set of $q$ independent problems, and take the direct sum of all the sub-solutions. 

Consider first the case of a $2\times 2$ Jordan block, i.e., $d_j=2$. The homogeneous part becomes:
\bes
\begin{align}
\dot{\vec{w}}_j^{(0)} = \left(
\begin{array}{cc}
 \mu _j & 1 \\
   & \mu _j \\
\end{array}
\right)\vec{w}_j^{(0)} \ ,
\end{align}
\ees
i.e.,
\bes
\begin{align}
\dot{w}_{j,1}^{(0)} & = \m_j {w}_{j,1}^{(0)} + {w}_{j,2}^{(0)} \\
\dot{w}_{j,2}^{(0)} & = \m_j {w}_{j,2}^{(0)}  \ .
\end{align}
\ees
Solving the second of these yields ${w}_{j,2}^{(0)}(t) = e^{\m_j t} {w}_{j,2}^{(0)}(0) $, which can be substituted into the first, and solved to yield ${w}_{j,1}^{(0)}(t) = e^{\m_j t} ({w}_{j,1}^{(0)}(0) + {w}_{j,2}^{(0)}(0)t)$.

Similarly, the $d_j=3$ case yields:
\bes
\begin{align}
\dot{w}_{j,1}^{(0)} & = \m_j {w}_{j,1}^{(0)} + {w}_{j,2}^{(0)} \\
\dot{w}_{j,2}^{(0)} & = \m_j {w}_{j,2}^{(0)} + {w}_{j,3}^{(0)} \\
\dot{w}_{j,3}^{(0)} & = \m_j {w}_{j,3}^{(0)}  \ ,
\end{align}
\ees 
which is easily solved in the same manner, and gives:
\bes
\begin{align}
{w}_{j,3}^{(0)}(t) &= e^{\m_j t} {w}_{j,3}^{(0)}(0)\\
{w}_{j,2}^{(0)}(t) &= e^{\m_j t} ({w}_{j,2}^{(0)}(0) + {w}_{j,3}^{(0)}(0)t)\\
{w}_{j,1}^{(0)}(t) &= e^{\m_j t} ({w}_{j,1}^{(0)}(0) + {w}_{j,2}^{(0)}(0)t + {w}_{j,3}^{(0)}(0)\frac{t^2}{2!})\ .
\end{align}
\ees
The general pattern can now be inferred. The solution for a general $d_j$ dimensional Jordan block is a vector $\vec{w}_j^{(0)} = (\vec{w}_{j,1}^{(0)},\dots,\vec{w}_{j,d_j}^{(0)})^T$ with components:
\beq
\vec{w}_{j,k}^{(0)}(t) = e^{\m_j t} \sum_{n=k}^{d_j} \vec{w}_{j,n}^{(0)}(0) \frac{t^{n-k}}{(n-k)!}\ , \quad k=1,\dots,d_j \ .
\label{eq:jordan-sol}
\eeq

The general solution of the homogenous part is then
\beq
\vec{w}^{(0)}(t) = \bigoplus_{j=1}^q \vec{w}_{j}^{(0)}(t) \ ,
\eeq
where the direct sum notation means that the summands need to be joined into a single column vector. The new aspect of the general $G$ case is thus the appearance of the degree $d_j-1$ polynomials in $t$. These polynomials induce an additional non-trivial time-dependence in addition to the rotations and exponential decay we found for the case of diagonalizable $G$. Note that we can be certain that for all $d_j>1$ [when the degree of the polynomial in Eq.~\eqref{eq:jordan-sol} is $\geq 1$], the corresponding $\Re(\m_j)< 0$, since a positive or zero real part would violate the general norm upper bound~\eqref{eq:v-bounded}.

As for the inhomogeneous part, we can write the solution of Eq.~\eqref{eq:354} as
\beq
\vec{w}(t) = \vec{w}^{(0)}(t) + \vec{w}^{(\infty)}\ ,
\eeq
and find the particular solution that satisfies
\beq
G_J\vec{w}^{(\infty)} = -\vec{c}' \ .
\eeq
Depending on the rank $r(G)$ of $G$, this equation has either zero [$r(G)=1$], one [$r(G)=M$], or infinitely many [$0<r(G)<M$] solutions. The first case is unphysical, the second is unproblematic, and for the third every initial condition still determines a corresponding final state in a unique way.

\subsection{Phase Damping Example}

As a simple example meant to illustrate how we construct and solve the differential equation for the coherence vector, assume that a single qubit is subject to a magnetic field along the $z$ direction along with dephasing:
\beq
\dot{\r} = -i[\o Z,\r] + \g(Z\r Z-\r) \ .
\eeq
As a fixed operator basis satisfying the conditions in Eq.~\eqref{eq:F-reqs}, we choose the Pauli matrices:
\beq
F_j = \sigma_j/\sqrt{2}\ , 
\eeq
with the normalization due to the requirement that $\Tr(F_i F_j) = \d_{ij}$.
The $Q$ matrix elements [Eq.~\eqref{eq:Qjk}] are then
\beq
Q_{jk} = i\o \frac{1}{2}\Tr\left(Z[\s_j,\s_k]\right) \ ,
\eeq
and are non-vanishing only when $[\s_j,\s_k] \propto Z$, i.e., $[X,Y]=2iZ$ and $[Y,X]=-2iZ$. Therefore $Q_{12} = 2\o = -Q_{12}$, and all other $Q$ matrix elements are zero. 

Next, we need to calculate the $R$ matrix and the $\vec{c}$ vector, using Eq.~\eqref{eq:Randc}. Note that, in this case, only $a_{33}=2\gamma$ is non-zero in the $a$-matrix of the Lindblad equation (the factor of $2$ is due to the normalization of the $F$'s). Therefore $\sum_{ij}$ reduces to just the term with $i=j=3$:
\bes
\begin{align}
c_k &= \frac{1}{{2}^{3/2}d}\g\Tr(\left[Z , Z \right]\s_k) = 0\\
R_{kl} &=\frac{1}{4}\Tr\left(Z \s_l Z \s_k   -  \frac{1}{2} \left\{Z^2, \s_l\right\}\s_k \right)  = \frac{1}{4}\g\Tr\left(Z \s_l Z \s_k   -    \s_l \s_k \right)    \ .
\end{align}
\ees
Clearly, $\s_k$ must equal $\s_l$ in order for the trace to be non-zero. When $\s_k=\s_l=X$, or when $\s_k=\s_l=Y$, we get $\frac{1}{4}\g\Tr(-I -I) = -\g$, whereas when $\s_k=\s_l=Z$ we get $0$. Thus $R = \text{diag}(-\g,-\g,0)$. Combining with the result for $Q$, we have:
\beq
G = \left(
\begin{array}{ccc}
-\g  & 2\o  &  0 \\
-2\o  &  -\g  &   0\\
 0 & 0  &   0 
\end{array}
\right)\ .
\eeq
This $G$ matrix is diagonalizable but not invertible (its rank is $2$), so we are in a scenario that is in between that of Secs.~\ref{sec:diag-inv-G} and~\ref{sec:gen-G-sol}. Non-invertibility only affects the existence of the limit of $\vec{v}(t)$ as $t\to\infty$. Since $G$ is diagonalizable, all its Jordan blocks have dimension $d_j=1$, i.e., they are simply the eigenvalues. The eigenvalues are $-\g\pm 2i\o$ and $0$. This corresponds to a coherence vector rotating at angular frequency $2\o$ in the $X-Y$ plane, while exponentially decaying towards the $Z$ axis with rate $\gamma$. This means that the entire $Z$ axis is the limit as $t\to\infty$, hence there is no unique final state. However, every initial state decays to a unique final state (its projection onto the $Z$ axis).


\section{Derivation of the Lindblad equation from the cumulant expansion and coarse graining}
\label{sec:LE-CCG}
%
We now present a derivation of the Lindblad equation (LE) from first principles, following Ref.~\cite{Majenz:2013qw}. This derivation avoids the so-called rotating wave approximation (RWA), which is the most commonly used approach to deriving the LE. We shall return to an RWA-based approach later. 

\subsection{Cumulant expansion}
\label{sec:cumulant-exp}
Let $\lambda$ be a small, dimensionless parameter, and consider the Hamiltonian
\begin{equation}
H = H_S + H_B + \lambda H_{SB}
\label{eq:402H}
\end{equation}
with
\begin{equation}
H_{SB} = A \otimes B
\end{equation}
where $A$ is a Hermitian system operator and $B$ is a Hermitian bath operator.  We have restricted ourself to a single term to simplify the notation, but the more general case with multiple terms follows in an analogous fashion.  

Define:
\bes
\begin{align}
H_0 &\equiv H_S \otimes I_B + I_S\otimes H_B \ , \\
U_0(t) &\equiv  \exp \left( - i t H_0\right) = U_S(t)\ox U_B(t) = e^{-itH_S}\ox e^{-itH_B}\ , \\
\tilde{\rho}_{SB}(t) &\equiv U_0^\dagger(t) \rho_{SB}(0) U_0(t) \ ,
\end{align}
\ees
where $\tilde{\rho}_{SB}(t)$ is the state in the interaction picture (recall Sec.~\ref{sec:IP}).  We have the interaction picture Hamiltonian
\begin{equation}
\tilde{H}(t) = U_0^\dgr(t)H_{SB} U_0(t) = U_{S}^\dagger(t) A U_{S}(t) \otimes U_B^\dagger(t) B U_{B}(t) \equiv A(t)\ox B(t)  \ .
\end{equation}
The density matrix in the interaction picture satisfies
\begin{equation}
\frac{d}{dt} \tilde{\rho}_{SB}(t) =  - i \left[ \lambda \tilde{H}(t), \tilde{\rho}_{SB}(t) \right] \ ,
\label{eq:408}
\end{equation}
which we can solve formally by integration followed by substitution and iteration:
\bes
\label{eqt:formal}
\begin{align}
\label{eqt:formal-a}
\tilde{\rho}_{SB}(t) &= \rho_{SB}(0) - i \int_0^t ds \left[ \lambda \tilde{H}(s), \tilde{\rho}_{SB}(s) \right] \\
\label{eqt:formal-b}
&= \rho_{SB}(0) - i \lambda \int_0^t ds \left[  \tilde{H}(s), \rho_{SB}(0) \right] +(-i \lambda)^2 \int_0^t d s \int_0^s ds' \left[\tilde{H}(s), \left[ \tilde{H}(s') , \rho_{SB}(0) \right] \right] + \cdots \ ,
\end{align}
\ees
and it is clear how this continues.
A simple norm estimate (see Sec.~\ref{sec:norms}) shows that the norm of the $n$th order term is $O[(\|H_{SB}\|t)^n]$. Therefore a sufficient convergence condition is  $\lambda \|H_{SB}\| t < 1$. Terms of third order and above can be neglected provided $\lambda \|H_{SB}\| t \ll 1$. This is known as the \emph{Born approximation}.

We are interested in the reduced density matrix:
\begin{equation}
\tilde{\rho}(t) = \Tr_B \left[ \tilde{\rho}_{SB}(t) \right] \equiv \Lambda_\lambda(t) {\rho}(0) \ .
\end{equation}
The \emph{cumulant expansion} is given by introducing unknown, to be determined operators $K^{(n)}$ in the exponent:
\bes
\begin{align}
\Lambda_\lambda(t) &= \exp \left( \sum_{n=1}^{\infty} \lambda^n K^{(n)}(t) \right) \\
&= I + \lambda K^{(1)}(t) + \lambda^2 \left( K^{(2)}(t) + \frac{1}{2} \left( K^{(1)} (t) \right)^2 \right) + O(\lambda^3)  \ ,
\end{align}
\ees
where in the second line we used a Taylor expansion of the exponential. We solve for $K^{(n)}$ by matching powers of $\lambda$ with Eq.~\eqref{eqt:formal}.  We get:
\begin{equation}
K^{(1)} (t) \rho(0) = - i \int_0^t ds \ \Tr_B \left( \left[ \tilde{H}(s) , \rho_{SB}(0) \right] \right)  \ .
\end{equation}
We will see later that, without loss of generality, this can always be made to vanish (for a stationary bath) by shifting the operator $B$, i.e:
\begin{equation}
K^{(1)} (t) \rho(0) = 0  \ .
\label{eq:K1=0}
\end{equation}
The next order in $\lambda$ gives:
\begin{equation}
K^{(2)} (t) \tilde{\rho} (0) =  - \int_0^t d s \int_0^s ds' \ \Tr_B \left( \left[ \tilde{H}(s), \left[ \tilde{H}(s') , \rho_{SB}(0) \right] \right] \right) \ .
\end{equation}
Expanding the double commutator gives:
\bes
\begin{align}
\label{eq:AArho}
 \Tr_B \left( \left[ \tilde{H}(s), \left[ \tilde{H}(s') , \rho_{SB}(0) \right] \right] \right) &= \left[ A(s) A(s') \rho(0) - A(s') \rho(0) A(s) \right]  \Tr \left[ B(s) B(s') \rho_B \right]  + \mathrm{h.c.} \\
 &=  \left[ A^\dagger(s) A(s') \rho(0) - A(s') \rho(0) A^\dagger(s) \right]  \Tr \left[ B^\dagger(s) B(s') \rho_B \right]  + \mathrm{h.c.}  \\
 &= \left[ A^\dagger(s) A(s') \rho(0) - A(s') \rho(0) A^\dagger(s) \right]  \mathcal{B}(s,s')  + \mathrm{h.c.}
\end{align}
\ees
where 
\beq
\mathcal{B}(s,s') \equiv  \langle B^\dagger(s) B(s') \rangle = \mathcal{B}(s',s)^\ast \ ,
\label{eq:B-corr-func}
\eeq
and
\beq
\label{eq:404}
\langle X \rangle_B \equiv {\rm Tr}[\rho_B X] \ ,
\eeq
and $\rho_B$ is, e.g., the thermal (Gibbs) state of the bath [Eq.~\eqref{eq:Gibbs1}]. Equation~\eqref{eq:B-corr-func} holds since:
\beq
\label{eq:421}
\langle B^\dagger(s) B(s') \rangle = \Tr[\r_B(0) B^\dagger(s) B(s')] = (\Tr[B(s')^\dgr B(s) \r_B(0) ]^\dgr)^* = (\Tr[\r_B(0)B(s')^\dgr B(s)]^\dgr)^* = \langle B^\dagger(s') B(s) \rangle^*\ .
\eeq

\subsection{The second order cumulant}
\label{sec:cumulant-2}

It turns out to be convenient to express the interaction picture system operator $A(t)$ in the frequency domain. To do so, let us first expand $H_S$ in its eigenbasis:
\beq
H_S = \sum_a \varepsilon_a \ketb{\varepsilon_a}{\varepsilon_a} \ ,
\eeq
where $\{\varepsilon_a\}$ are the eigenenergies of $H_S$.
Thus
\begin{equation}
A(t) = U_S^\dagger(t) A U_S(t) = \sum_{a, b} e^{-i \left( \varepsilon_b - \varepsilon_a \right) t } |\varepsilon_a \rangle \langle \varepsilon_a | A | \varepsilon_b \rangle \langle \varepsilon_b| =  \sum_{\omega} A_{\omega} e^{-i \omega t } \ ,
\label{eq:417A}
\end{equation}
where $\o \equiv \varepsilon_b - \varepsilon_a$ is a Bohr frequency, and  
\beq
A_{\omega} \equiv \sum_{\varepsilon_b - \varepsilon_a=\o} \langle \varepsilon_a | A | \varepsilon_b \rangle |\varepsilon_a \rangle  \langle \varepsilon_b|\ .
\label{eq:A_om}
\eeq
To clarify, the sum over $\varepsilon_b - \varepsilon_a=\o$ in Eq.~\eqref{eq:A_om} is over all pairs of eigenenergies $\{\varepsilon_b,\varepsilon_a\}$ whose difference gives the same Bohr frequency $\o$. The sum over $\o$ in Eq.~\eqref{eq:417A} is a sum over all Bohr frequencies (negative, zero, and positive).
%
%
This then gives the following map from time $0$ to $t$:
\beq
\label{eqt:TimeIndependent1}
K^{(2)} (t) \rho(0) =   \sum_{\omega, \omega'} \mathcal{B}_{\omega \omega'}(t) \left( A_\omega \rho(0)  A^\dagger_{\omega'}  -A^\dagger_{\omega'} A_\omega \rho(0) \right)  + \mathrm{h.c.} ,
\eeq
where
\begin{align}
\label{eq:uneq}
\mathcal{B}_{\omega \omega'}(t) &\equiv \int_0^t d s \int_0^s ds' e^{i ( \omega' s - \omega s' )}  \mathcal{B}(s,s') \ .
\end{align}

We will see that Eq.~\eqref{eqt:TimeIndependent1} can be rewritten in the form of a Lindblad generator:
\begin{eqnarray}
\label{eq:K2-gen}
K^{(2)}(t) {\rho}(0) = - i \left[\mathcal{Q}(t), {\rho}(0) \right] + \sum_{\omega, \omega'} b_{\omega \omega'}(t) \left[ A_{\omega} {\rho}(0) A_{\omega'}^\dagger - \frac{1}{2} \left\{A_{\omega'}^{\dagger} A_\omega , {\rho}(0) \right\} \right] \ ,
\end{eqnarray}
where the elements of the matrix $b(t)$ are given by
\beq
\label{eq:bomompr}
b_{\omega \omega'} (t) \equiv \int_0^t d s \int_0^t ds'  e^{i ( \omega' s - \omega s' )}  \mathcal{B}(s,s') = b^{\ast}_{\omega' \omega}(t)\ ,
\eeq
and we will show that $b(t)$ is positive semi-definite.

The ``Lamb shift" term is
\begin{equation}
\label{eq:Q-LS}
\mathcal{Q}(t) =  \sum_{\omega, \omega'} Q_{\omega \omega'}(t) A_{\omega'}^{\dagger} A_{\omega} ,
\end{equation}
where

\bes
\label{eq:Lambshift}
\begin{align}
Q_{\omega \omega'}(t) &= - \frac{i}{2} \left( B_{\o\o'}-B_{\o'\o}^*\right)\\
&= - \frac{i}{2} \int_0^t d s \int_0^s d s' \left( e^{i (\omega' s - \omega s') } \mathcal{B}(s,s') - e^{-i (\omega s - \omega' s') } \mathcal{B}(s',s) \right) \ .
\end{align}
\ees
Note that $\left(Q_{\omega \omega'} \right)^{\ast} =  Q_{\omega' \omega}$, so that $\mathcal{Q}^\dagger = \mathcal{Q}$, as required for the interpretation of $\mathcal{Q}$ as a Hamiltonian.

\subsection{Why the first order cumulant can be made to vanish}
\label{sec:cumulant1}

We argued [Eq.~\eqref{eq:K1=0}] that we can shift the bath operator $B$ such that $K^{(1)} (t) \rho(0) = 0$. Here we show why.

Let $\r_B(0) = \sum_\m \lambda_\m \ketbra{\m}$ and 
\beq
B_d(t) \equiv \text{diag}(B(t)) = \sum_\m B_{\m\m}(t)\ketb{\m}{\m}\ ,
\eeq 
i.e., the diagonal part of $B$ in the eigenbasis of $\r_B(0)$. Here $B_{\m\m}(t) = \bra{\m}B(t)\ket{\m}$.
Let us define a new bath operator 
\beq
B'(t) \equiv B(t) - B_d(t)\ .
\eeq
Then
\beq
\ave{B'(t)} = \ave{B(t)} - \ave{B_d(t)} =  \sum_\m \lambda_\m \bra{\m}B(t)\ket{\m} - \sum_\m \lambda_\m \bra{\m}\left[\sum_\n B_{\n\n}(t)\ketb{\n}{\n}\right] \ket{\m} = 0\ .
\label{eq:429}
\eeq

Let $H'_{SB} = A\ox B'$, so that $\tilde{H}'(t) = U_0^\dag (t) H'_{SB} U_0(t)$. Then
\beq
\Tr_B \left( \left[ \tilde{H}'(t) , \rho_{SB}(0) \right] \right) = \Tr_B \left( \left[ A(t)\ox B'(t) , \r_S(0)\ox \rho_{B}(0) \right] \right) = \ave{B'(t)}[A(t),\r_S(0)] = 0\ .
\eeq
Therefore, $K^{\prime(1)} (t) \rho(0) = 0$, with $K^{\prime(1)}$ defined with the modified system-bath interaction $H'_{SB}$. The price we have to pay for this is the shift of $B$ to $B'$. This shift manifests itself only through the bath correlation function
$\mathcal{B}(s,s')$ [Eq.~\eqref{eq:B-corr-func}]. The shifted correlation function becomes $\mathcal{B}'(s,s') =  \langle B^{'\dagger}(s) B'(s') \rangle$, and nothing else changes, since the bath operators only appear through the bath correlation function.

\subsection{Derivation of the Lindblad equation}

We will now prove that Eq.~\eqref{eqt:TimeIndependent1} can be transformed into Eq.~\eqref{eq:K2-gen}. It turns out that the unequal upper integration limits in $B_{\o\o'}$ [Eq.~\eqref{eq:uneq}] are problematic, while the equal upper integration limits in $b_{\o\o'}$ [Eq.~\eqref{eq:bomompr}] are what allows us to prove complete positivity, as we show in Sec.~\ref{sec:CP-bB} directly below. To replace the unequal upper limits by equal limits we note the following relations for the integral, where for notational simplicity we suppress the $t$-dependence for now:
\bes
\begin{align}
\mathcal{B}_{\omega \omega'} \equiv \int_0^t d s \int_0^s ds' e^{i ( \omega' s - \omega s' )}  \mathcal{B}(s,s') \ , =&\left[ \int_0^t d s \int_0^t ds' - \int_0^t d s \int_s^t ds'  \right] e^{i ( \omega' s - \omega s' )}  \mathcal{B}(s,s') \ ,  \\
=&  \left[\int_0^t d s \int_0^t ds' -  \int_0^t d s' \int_0^{s'} ds  \right] e^{i ( \omega' s - \omega s' )}  \mathcal{B}(s,s') \ , \\
=& \int_0^t d s \int_0^t ds'  e^{i ( \omega' s - \omega s' )}  \mathcal{B}(s,s')  -  \int_0^t d s \int_0^{s} ds'  e^{i ( \omega' s' - \omega s)}  \mathcal{B}(s',s) \ , \\
=& \ b_{\omega \omega'} - \mathcal{B}_{\omega' \omega}^{\ast} \ ,
\end{align}
\ees
where $b_{\omega \omega'}$ [Eq.~\eqref{eq:bomompr}] has the desired equal upper integration limits. It follows immediately that
\beq
\mathcal{B}_{\omega \omega'}^{\ast} = b_{\omega' \omega} - \mathcal{B}_{\omega' \omega} \ .
\label{eq:428}
\eeq
Therefore, the first summand $+\mathrm{h.c.}$ in Eq.~\eqref{eqt:TimeIndependent1} yields:
\bes
\begin{align}
\sum_{\omega, \omega'} \left[\mathcal{B}_{\omega \omega'}A_\omega \rho A_{\omega'}^\dagger  + \mathcal{B}_{\omega \omega'}^{\ast}A_{\omega'} \rho A_{\omega}^\dagger  \right] 
=& \sum_{\omega, \omega'} \left[ b_{\omega \omega'}A_{\omega} \rho A_{\omega'}^\dagger  + b_{\omega' \omega} A_{\omega'} \rho A_{\omega}^\dagger  - \left( B_{\omega' \omega}^{\ast}A_{\omega} \rho A_{\omega'}^\dagger  + B_{\omega' \omega} A_{\omega'} \rho A_{\omega}^\dagger \right) \right]  \ , \\
=&  \sum_{\omega, \omega'} \left[ b_{\omega \omega'} A_{\omega} \rho A_{\omega'}^\dagger  + b_{\omega' \omega} A_{\omega'} \rho A_{\omega}^\dagger  - \left( B_{\omega \omega'}^{\ast}  A_{\omega'} \rho A_{\omega}^\dagger + B_{\omega \omega'} A_{\omega} \rho A_{\omega'}^\dagger \right) \right] \ ,
\end{align}
\ees
where in the second term on the RHS we have switched $\omega \leftrightarrow \omega'$, which is permissible since we are summing over all $\o$ and $\o'$.  Furthermore, this second term is now exactly in the form of the original term, so we have the result:
\begin{eqnarray}
\sum_{\omega, \omega'} \left[ \mathcal{B}_{\omega \omega'}A_\omega \rho A_{\omega'}^\dagger  + \mathcal{B}_{\omega \omega'}^{\ast}A_{\omega'} \rho A_{\omega}^\dagger  \right] 
&=&\frac{1}{2} \sum_{\omega, \omega'} \left[ b_{\omega \omega'} A_{\omega} \rho A_{\omega'}^\dagger  + b_{\omega' \omega}  A_{\omega'} \rho A_{\omega}^\dagger  \right]  
= \sum_{\omega, \omega'} b_{\omega \omega'} A_{\omega} \rho A_{\omega'}^\dagger  \ .
\end{eqnarray}

The second summand $+\mathrm{h.c.}$ in Eq.~\eqref{eqt:TimeIndependent1} is of the form $-A^\dagger_{\omega'} A_\omega \rho(0)$, which reminds us of the anti-commutator term in the Lindblad equation, except that it doesn't have the factor of $1/2$. However, note that since $b^{\ast}_{\omega' \omega} = b_{\omega \omega'} = \mathcal{B}_{\omega \omega'}^{\ast} + \mathcal{B}_{\omega' \omega}$, where we used Eq.~\eqref{eq:428}. Therefore by writing $b_{\omega \omega'} = \frac{1}{2} (b_{\omega \omega'} + b^{\ast}_{\omega' \omega}) $, and again using Eq.~\eqref{eq:428}, we have:
\begin{eqnarray}
\mathcal{B}_{\omega \omega'}&=& \frac{1}{2} b_{\omega \omega'} + \frac{1}{2} \left(  \mathcal{B}_{\omega \omega'}- \mathcal{B}_{\omega' \omega}^{\ast} \right) \ .
\end{eqnarray}
This allows us to write the second summand $+\mathrm{h.c.}$ in Eq.~\eqref{eqt:TimeIndependent1} as:
\bes
\begin{align}
&-\sum_{\omega, \omega'} \left[ \mathcal{B}_{\omega \omega'}  A_{\omega'}^\dagger  A_\omega \rho +   \mathcal{B}_{\omega \omega'}^{\ast} \rho A_{\omega}^\dagger A_{\omega'}  \right] \\
&\quad  = -\frac{1}{2} \sum_{\omega, \omega'} \left( b_{\omega \omega'} A_{\omega'}^\dagger  A_\omega \rho  + b_{\omega' \omega} \rho A_{\omega}^\dagger A_{\omega'} \right)  - \frac{1}{2} \sum_{\omega, \omega'} \left[ \left( \mathcal{B}_{\omega \omega'} - \mathcal{B}_{\omega' \omega}^{\ast} \right) A_{\omega'}^\dagger  A_\omega \rho +  \left( \mathcal{B}_{\omega' \omega} - \mathcal{B}_{\omega' \omega}^{\ast} \right)  \rho A_{\omega}^\dagger  A_{\omega'}  \right]  \\
&\quad =  -\frac{1}{2} \sum_{\omega, \omega'}  b_{\omega \omega'} \left( A_{\omega'}^\dagger  A_\omega \rho  + \rho A_{\omega'}^\dagger A_{\omega}  \right) 
-\frac{1}{2} \sum_{\omega, \omega'} \left( \mathcal{B}_{\omega \omega'} - \mathcal{B}_{\omega' \omega}^{\ast} \right) \left[ A_{\omega'}^\dagger  A_\omega \rho  - \rho A_{\omega'}^\dagger A_{\omega}  \right] \\
&\quad = -\frac{1}{2} \sum_{\omega, \omega'}  b_{\omega \omega'} \left\{ A_{\omega'}^\dagger  A_\omega , \rho  \right\}
-\frac{1}{2} \sum_{\omega, \omega'}  \left(\mathcal{B}_{\omega \omega'} - \mathcal{B}_{\omega' \omega}^{\ast} \right) \left[ A_{\omega'}^\dagger  A_\omega , \rho   \right] \ .
\end{align}
\ees
We can now write the RHS of Eq.~\eqref{eqt:TimeIndependent1} as:
\bes
\begin{align}
& \sum_{\omega, \omega'} \mathcal{B}_{\omega \omega'}(t) \left( A_\omega \rho(0)  A^\dagger_{\omega'}  -A^\dagger_{\omega'} A_\omega \rho(0) \right)  + \mathrm{h.c.} \\
 \quad &= 
\sum_{\omega, \omega'} b_{\omega \omega'} A_{\omega} \rho(0) A_{\omega'}^\dagger 
-\frac{1}{2} \sum_{\omega, \omega'}  b_{\omega \omega'} \left\{ A_{\omega'}^\dagger  A_\omega , \rho(0)  \right\}
-\frac{1}{2} \sum_{\omega, \omega'}  \left(\mathcal{B}_{\omega \omega'} - \mathcal{B}_{\omega' \omega}^{\ast} \right) \left[ A_{\omega'}^\dagger  A_\omega , \rho(0)   \right] \\
\quad &=  -i \Big[ \sum_{\omega, \omega'} \frac{-i}{2} \left( B_{\o\o'}-B_{\o'\o}^*\right) [A_{\omega'}^{\dagger} A_\omega, {\rho}(0) ] \Big] + \sum_{\omega, \omega'} b_{\omega \omega'}(t) \Big[ A_{\omega} {\rho}(0) A_{\omega'}^\dagger - \frac{1}{2} \left\{A_{\omega'}^{\dagger} A_\omega , {\rho}(0) \right\} \Big]\ ,
\end{align}
\ees
which is Eq.~\eqref{eq:K2-gen}, together with the identification of the term in the commutator as the Lamb shift  $\mc{Q}(t)$ as defined in Eq.~\eqref{eq:Q-LS}.
 
\subsection{Complete positivity}
\label{sec:CP-bB}

Clearly, the dissipative (second) term on the RHS of Eq.~\eqref{eq:K2-gen} appears to be in Lindblad form, but we must still prove the positivity of the matrix $b(t)$. To this end we again expand the bath density matrix in its eigenbasis, and use this to write the correlation function $\mathcal{B}(s,s') = \< B^\dagger(s) B(s')\>_B$ explicitly. Let $\vec{v}$ be some arbitrary vector; then positivity amounts to showing that $\vec{v}b(t)\vec{v}^{\dagger} \geq 0$ for all $\vec{v}$. Indeed:
\bes
\label{eq:bpos}
\begin{align}
\vec{v}b(t)\vec{v}^{\dagger} &= \sum_{\omega\omega'}v_\omega b_{\omega \omega'}(t) v^*_{\omega'} = \int_0^t d s \int_0^t ds'  \sum_{\omega}(v_\omega e^{-i \omega s' })  \sum_{\omega'}(v_{\omega'}e^{-i\omega' s})^* {\rm Tr}[\sum_\mu \lambda_\mu \ket{\mu}\bra{\mu} B^\dagger(s) B(s')] \\
& = \sum_\mu \lambda_\mu \bra{\mu}  F^\dagger(t) F(t)\ket{\mu} =  \sum_\mu \lambda_\mu \| F(t)\ket{\mu}\|^2 \geq 0 ,
\end{align}
\ees
where $F(t) \equiv \int_0^t ds B(s) \sum_\omega v^*_\omega e^{-i\omega s}$. Note how it was crucial in this proof that the upper limits of the integrals are the same, since otherwise the factorization would have failed.

Therefore, our quantum map is given by:
\beq 
\label{eqt:CPmap}
\tilde{\rho}(t) = e^{\lambda^2 K^{(2)}(t) } \rho(0)\ .
\eeq
The only approximation we have introduced so far is the truncation at order $\lambda^2$, i.e., the Born approximation. The CP map \eqref{eqt:CPmap} is in principle already sufficient, and one can use it to compute Kraus operators. However, in order to find the time-dependent system state $\tilde{\rho}(t)$ one has to compute $e^{\lambda^2 K^{(2)}(t) }$ for each $t$, which is laborious. In order to arrive at a master equation, with the associated advantages (e.g., a quantum trajectories unravelling) we need to introduce an additional, Markovian approximation.

\subsection{LE from the cumulant expansion and coarse-graining}
%
Let us show how to obtain the LE from the results above. Expanding the exponential in Eq.~\eqref{eqt:CPmap} to second order in $\lambda$,
we have:
\beq
\tilde{\rho}(t)-\tilde{\rho}(0) = -i \left[\lambda^2 {\mathcal{Q}}(t), {\rho}(0) \right] + \sum_{\omega, \omega'} \lambda^2 {b}_{\omega\omega'}(t) \left[ A_{\omega} {\rho}(0) A_{\omega'}^{\dagger} - \frac{1}{2} \left\{ A_{\omega'}^{\dagger} A_{\omega}, {\rho}(0) \right\} \right] \ . 
\label{eq:482}
\eeq
It is straightforward to check that $\mathcal{Q}(0) = {b}_{\omega\omega'}(0) =0$ (due to the upper integration limit being $0$). Therefore, dividing both sides of Eq.~\eqref{eq:482} by $\tau$, and setting $t=\tau$, we have:
\beq
\ave{\dot{\tilde{\rho}}}_0 = -i \left[\lambda^2 \ave{\dot{\mathcal{Q}}}_0, {\rho}(0) \right] + \sum_{\omega, \omega'} \lambda^2 \ave{\dot{b}_{\omega\omega'}}_0 \left[ A_{\omega} {\rho}(0) A_{\omega'}^{\dagger} - \frac{1}{2} \left\{ A_{\omega'}^{\dagger} A_{\omega}, {\rho}(0) \right\} \right] \ ,
\label{eq:483}
\eeq
where we used the coarse-graining definition, Eq.~\eqref{eq:aveX}.

%
%
Similarly to Sec.~\ref{sec:LE-CG}, the path to the Lindblad equation is to now introduce a Markovian assumption in terms of the coarse-graining timescale $\tau$. The Markovian assumption amounts to assuming that both $\langle \dot{\mathcal{Q}} \rangle_0$ and $\ave{\dot{b}_{\omega\omega'}}_0$ are constant for all $t$, i.e., that $\langle \dot{\mathcal{Q}} \rangle_j = \langle \dot{\mathcal{Q}} \rangle_0$ and $\langle \dot{b}_{\omega\omega'} \rangle_j = \langle\dot{b}_{\omega\omega'} \rangle_0$ for all $j$. This can be rigorously justified by first assuming that the bath correlation function is translationally invariant, i.e., $\mathcal{B}(s,s') = \mathcal{B}(s-s')$. This is true for stationary baths. A bath is stationary if 
\beq
[H_B,\r_B(0)]=0\ ,
\label{eq:b-stat} 
\eeq
which implies that $\r_B(t) = U_B(t)\r_B(0)U_B^\dgr(t) = \r_B(0)$. This is the case, e.g., if $\r_B(0)=e^{-\b H_B}/Z$, i.e., is a Gibbs state. In addition we assume that the bath correlation function decays over a timescale $\tau_B$, i.e., 
\beq
\mathcal{B}(t) \sim e^{-t/\tau_B}\ , 
\eeq
while the coarse graining is done over a much longer timescale, so that the integrand in Eq.~\eqref{eq:bomompr} has already decayed. The RHS of Eq.~\eqref{eq:483} is then valid for all times, allowing us to also shift the time argument of $\r$ to arbitrary $j\tau$. Let us now define the Lamb-shift and the Lindblad rates as:
\bes
\begin{align}
H_{\mathrm{LS}} &\equiv \lambda^2 \langle \dot{\mathcal{Q}} \rangle_0 \ ,  \\
\gamma_{\omega \omega'} &\equiv \lambda^2 \langle \dot{b}_{\omega\omega'} \rangle_0   \ .
\end{align}
\ees
Moreover, we assume that $\tau$ is very small on the timescale $\tau_S$ over which $\r(t)$ changes, so that $\ave{\dot{\tilde{\rho}}}_j = [\tilde{\rho}((j+1)\tau)-\tilde{\rho}(j\tau)]/\tau$ can be replaced by $\dot{\r}(t)$. These assumptions can be summarized as
\beq
\tau_B \ll \tau \ll \tau_S \ .
\label{eq:tau-cond}
\eeq
We can thus write the interaction picture Lindblad equation in the final form:
\beq
\dot{\tilde{\rho}}(t) = -i \left[H_{\mathrm{LS}}, {\rho}(t) \right] + \sum_{\omega, \omega'} \g_{\omega \omega'} \left[ A_{\omega} {\rho}(t) A_{\omega'}^{\dagger} - \frac{1}{2} \left\{ A_{\omega'}^{\dagger} A_{\omega}, {\rho}(t) \right\} \right] \ . 
\label{eq:LE-CG-final}
\eeq
The RHS contains the free parameter $\tau$, which can be determined using Eq.~\eqref{eq:tau-cond}.  Everything else is determined in terms of the given specification of the Hamiltonian $H = H_S + H_B + H_{SB}$ and the initial state of the bath $\r_B(0)$. In particular, 
\begin{itemize}
\item The Bohr frequencies $\o$ are determined by $H_S$;
\item The Lindblad operators are determined by the system operator $A$ in $H_{SB}$ and the Bohr frequencies (i.e., $H_S$);
\item The bath correlation function $\mc{B}(s,s')$ is determined by the bath operator $B$ in $H_{SB}$, the bath Hamiltonian $H_B$ (which determines the time-dependence of $B(t)$), and the initial bath state $\r_B$;
\item The Lamb shift is determined by the bath correlation function and the Bohr frequencies.
\end{itemize}


\subsection{Illustration using the spin-boson model for phase damping} 
\label{app:2level}

Consider once more the spin-boson model defined in Sec.~\ref{sec:spin-boson-1q}. Let us denote the eigenvalues of $H_S = -(g/2)Z$ by $\varepsilon_{\pm} = \pm g / 2$ and their respective eigenvectors by $\ket{\varepsilon_-}=\ket{0}$ (ground state) and $\ket{\varepsilon_+}=\ket{1}$ (excited state). Using Eq.~\eqref{eq:A_om}, the Lindblad operators are then given by:
\begin{eqnarray}
A_{-g} &=& | \varepsilon_+ \rangle \langle \varepsilon_+ | Z | \varepsilon_- \rangle \langle \varepsilon_- | = 0 \\
A_0 &=&  | \varepsilon_+ \rangle \langle \varepsilon_+ | Z | \varepsilon_+ \rangle \langle \varepsilon_+ |  +  | \varepsilon_- \rangle \langle \varepsilon_- | Z | \varepsilon_- \rangle \langle \varepsilon_- | = Z \\
A_{g} &=& | \varepsilon_- \rangle \langle \varepsilon_- | Z | \varepsilon_+ \rangle \langle \varepsilon_+ | = 0 \ .
\end{eqnarray}
Thus, only the (elastic, or on-shell) $\omega= 0$ term contributes to the sums over $\omega$.  This means that the Lamb shift is given by:
\beq
H_{\mathrm{LS}}  = \frac{\lambda^2}{\tau} \mathcal{Q}(\tau) = \frac{\lambda^2}{\tau} Q_{00}(t) A_{0}^{\dagger} A_{0}  \propto I\ ,
\eeq 
so that $\left[ H_{\mathrm{LS}},\tilde{\rho}(t) \right] = 0$.  The dissipative part of the LE [Eq.~\eqref{eq:LE-CG-final}] is given by:
\beq
\sum_{\omega, \omega'} \g_{\omega, \omega'} \left[ A_{\omega} {\rho} A_{\omega'}^{\dagger} - \frac{1}{2} \left\{ A_{\omega'}^{\dagger} A_{\omega}, {\rho}\right\} \right]  = \g_{00} \left(Z\tilde{\r}Z-\frac{1}{2}\left\{ I, {\tilde{\rho}} \right\}\right) = \g\left(Z\tilde{\r}Z-\tilde{\r}\right)\ ,
\eeq
where 
\beq
\g \equiv \g_{00} = \frac{\lambda^2}{\tau} b_{00}(\tau) = \frac{\lambda^2}{\tau} \int_0^\tau d s \int_0^\tau ds'   \mathcal{B}(s,s') \ ,
\eeq
and where we used Eq.~\eqref{eq:bomompr}. We already computed this decay rate when we solved the spin-boson model analytically, and found it in Eq.~\eqref{eq:401}. The result after coarse graining is given in Eq.~\eqref{eq:g-CGLE}.

While we already saw the solution of the corresponding LE in Sec.~\ref{sec:PD-Lind}, let us solve it again using a nice and useful ``vectorization" trick. Let us define:
\begin{equation}
 \mathrm{vec}(\rho) \equiv \left(
\begin{array}{c}
\text{col}_{1}(\rho) \\
\vdots \\
\text{col}_j(\rho)\\
\vdots
\end{array}
\right)
\end{equation}
i.e., $\mathrm{vec}(\rho)$ corresponds to stacking the columns of $\rho$ (in some basis). We now use the identity \cite{ZAMM:ZAMM19920721231}:
\begin{equation}
\mathrm{vec}\left( A B C \right) = \left( C^{T} \otimes A \right) \mathrm{vec}\left(B \right)
\end{equation}
where $(A,B,C)$ are arbitrary matrices of appropriate dimensions allowing their multiplication. Using this, we can write the LE $\dot{\tilde{\rho}} = \gamma(Z\tilde{\rho} Z - I\tilde{\rho}I)$ as
\begin{eqnarray}
\mathrm{vec}\left( \dot{\tilde{\rho}}  \right)&=& \gamma \left( Z\ox Z - I\ox I\right) \mathrm{vec}({\tilde{\r}})  \equiv \mathcal{L}  \mathrm{vec}({\tilde{\rho}})\ .
\end{eqnarray}
Conveniently, $\mathcal{L}$ is diagonal with entries $(0, -2 \gamma, -2 \gamma, 0)$, so we can immediately write:
\begin{equation}
\mathrm{vec}({\tilde{\rho}}(t)) = \exp(\mathcal{L})\mathrm{vec}({\tilde{\rho}}(0)) = \left( \begin{array}{cccc}
1 &&& \\
& \exp(-2 \gamma t) && \\
&& \exp(-2 \gamma t )& \\
&&& 1
\end{array} \right)
\left(
\begin{array}{c}
{\r}_{00}(0) \\
{\r}_{10}(0) \\
{\r}_{01}(0)\\
{\r}_{11}(0)
\end{array}
\right)
\end{equation}
Therefore, we find as before:
\begin{equation}
\tilde{\rho}(t) = \left( \begin{array} {cc}
{\rho}_{00}(0) & \exp(-2 \gamma t) {\rho}_{01}(0) \\
\exp(-2 \gamma t) {\rho}_{10}(0) & {\rho}_{11}(0) 
\end{array} \right)
\end{equation}
Transforming back to the Schr\"odinger picture, the result is adjusted to
\begin{equation}
\tilde{\rho}(t) = \left( \begin{array} {cc}
{\rho}_{00}(0) & \exp(-2 \gamma t- i g t) {\rho}_{01}(0) \\
\exp(-2 \gamma t + i g t) {\rho}_{10}(0) & {\rho}_{11}(0) 
\end{array} \right)\ .
\end{equation}
%



\section{First-principles derivation of the Lindblad equation from the Born, Markov, and rotating wave approximations}
\label{sec:MicroLindblad}

We now present our last derivation of the Lindblad equation. This is the standard approach found in textbooks such as \cite{Breuer:book}, but we will add some clarifications concerning the limitations of the validity of this approach. We will also discuss the differences between this and the cumulant-based approach.

\subsection{Setting up}
Our starting point is identical to the one we used in the cumulant expansion approach (Sec.~\ref{sec:cumulant-exp}). The only difference is that we now consider the more general system-bath interaction
\beq
H_{SB}=g \sum_{\alpha}A_{\alpha}\otimes B_{\alpha}\ ,
\label{eq:gHSB}
\eeq
where $g$ has units of energy. Thus, in the interaction picture:
\beq
\tilde{H}(t)=g \sum_{\alpha}A_{\alpha}(t) \ox B_{\alpha}(t)\ ,
\label{Hint}
\eeq
and
\begin{subequations}
\label{eq:468A}
\begin{align}
A_{\alpha}(t)&=U_S^\dgr(t) A_\alpha U_S(t) \ , \qquad U_S(t) = e^{-iH_St} \label{At}\\
B_{\alpha}(t)&=U_B^\dgr(t) B_\alpha U_B(t) \ , \qquad U_B(t) = e^{-iH_Bt} \label{Bt}\ .
\end{align}
\end{subequations}
Formally integrating the Liouville-von Neumann equation 
\begin{equation}
\frac{d}{dt} \tilde{\rho}_{SB}(t) =  - i \left[ \tilde{H}(t), \tilde{\rho}_{SB}(t) \right] \ ,
\label{eq:LvN}
\end{equation}
we have:
\beq
\tilde{\rho}_{SB}(t) = \rho_{SB}(0) - i \int_0^t ds \left[ \tilde{H}(s), \tilde{\rho}_{SB}(s) \right] \ .
\eeq
Let us now substitute this solution back into Eq.~\eqref{eq:LvN} and take the partial trace:
\beq
\frac{d}{dt} \tilde{\rho}(t) = \Tr_B\left\{\frac{d}{dt} \tilde{\rho}_{SB}(t)\right\} =  - i  \Tr_B\left\{\left[ \tilde{H}(t), \rho_{SB}(0) \right]\right\} +(-i )^2 \Tr_B\left\{\left[\tilde{H}(t), \int_0^t d s   \left[ \tilde{H}(s) , \tilde{\rho}_{SB}(s) \right] \right]\right\} \ .
\eeq
Just as we argued in Sec.~\ref{sec:cumulant1}, the first order term can again be made to vanish provided we shift the bath operators. We are thus left with
\beq
\frac{d}{dt} \tilde{\rho}(t) = - \Tr_B\left\{\left[\tilde{H}(t), \int_0^t d s   \left[ \tilde{H}(s) , \tilde{\rho}_{SB}(s) \right] \right]\right\} \ .
\eeq
Let us change variables to $\tau = t-s$, so that $\int_0^t d s = -\int^0_t (-d \tau) = \int_0^t d \tau$, and:
\beq
\frac{d}{dt} \tilde{\rho}(t) = - \Tr_B\left\{\left[\tilde{H}(t), \int^t_0 d \tau   \left[ \tilde{H}(t-\tau) , \tilde{\rho}_{SB}(t-\tau) \right] \right]\right\} \ .
\eeq

\subsection{Born approximation}
\label{sec:Born-appr}

To proceed we now make our first approximation. For a sufficiently large bath that is in particular much larger than the system, it is reasonable to assume that while the system undergoes non-trivial evolution, the bath remains unaffected, and hence that the state of the composite system at time $t$ is
\begin{equation}
\label{Born}
\tilde{\rho}_{SB}(t)= \tilde{\rho}(t)\otimes \rho_{B} (0) + \chi(t) \approx \tilde{\rho}(t)\otimes \rho_{B}\ ,
\end{equation}
where $\r_B$ is the time-independent, stationary bath state, and the correlations $\chi(t)$ can be neglected. This is (again) called the \emph{Born approximation}. 

Using this and Eq.~\eqref{Hint}, we have:
\begin{equation}
\label{integEq1}
\frac{d\tilde{\rho}}{dt} =-g^2\sum_{\alpha,\beta} \Tr_B\left\{
\left[A_\alpha(t) \ox B_\alpha(t), \int_0^t d\tau [A_\beta(t-\tau) \ox B_\beta(t-\tau), \tilde{\rho}(t-\tau)\otimes \rho_{B}]\right]\right\}\ .
\end{equation}
Let's expand the double commutator:
\bes
\begin{align}
&\Tr_B \left[A_\alpha(t) \ox B_\alpha(t),  [A_\beta(t-\tau) \ox B_\beta(t-\tau), \tilde{\rho}(t-\tau)\otimes \rho_{B}]\right]  \\
\label{eq:476b}
&\quad = A_\alpha(t)A_\beta(t-\tau)\tilde{\rho}(t-\tau) \Tr[B_\alpha(t)B_\beta(t-\tau)\r_B] \\
\label{eq:476c}
&\quad - A_\beta(t-\tau)\tilde{\rho}(t-\tau)A_\alpha(t) \Tr[B_\beta(t-\tau)\r_B B_\alpha(t)] \\
\label{eq:476d}
&\quad - A_\alpha(t)\tilde{\rho}(t-\tau)A_\beta(t-\tau) \Tr[\r_B B_\beta(t-\tau)B_\alpha(t)] \\
\label{eq:476e}
&\quad + \tilde{\rho}(t-\tau)A_\beta(t-\tau)A_\alpha(t) \Tr[B_\alpha(t)\r_B B_\beta(t-\tau)] \ .
\end{align}
\ees
We now assume again that the bath is stationary (i.e., $[\r_B,H_B]=0$). 
As in Eq.~\eqref{eq:404}, let $\langle X \rangle_B \equiv {\rm Tr}[\rho_B X]$. 
Similarly to Eq.~\eqref{eq:421}, we define the bath two-point correlation function:
\bes
\begin{align}
\mc{B}_{\a\b}(t,t-\tau) &\equiv 
\langle B_{\alpha}(t)B_{\beta}(t-\tau)\rangle_{B}=\Tr\bigl(e^{iH_{B}t}B_{\alpha}e^{-iH_{B}t}e^{iH_{B}(t-\tau)}B_{\beta}e^{-iH_{B}(t-\tau)}\rho_{B}\bigr)\\
  &=\Tr\bigl(e^{-iH_{B}(t-\tau)}e^{iH_{B}t}B_{\alpha}e^{-iH_{B}t}e^{iH_{B}(t-\tau)}B_{\beta}\rho_{B}\bigr)=
  \Tr\bigl(e^{iH_{B}\tau}B_{\alpha}e^{-iH_{B}\tau}B_{\beta}\rho_{B}\bigr)=\langle B_{\alpha}(\tau) B_{\beta}\,\rangle_{B} \\ 
  & = \mc{B}_{\a\b}(\tau,0) \equiv \mc{B}_{\a\b}(\tau)\ ,
  \label{eq:477c}
\end{align}
\ees
where we used the bath stationarity assumption to go the second line, and in the third line we denoted $\mc{B}_{\a\b}(\tau,0)$ by $\mc{B}_{\a\b}(\tau)$ for simplicity, since only the time shift $\tau$ matters, so we can measure everything from $t=0$. Thus, $\mc{B}_{\a\b}(\tau)$ measures the autocorrelation of the bath after time $\tau$. Note that in the $\mc{B}_{\a\b}(\tau)$ notation we implicitly associate $t=\tau$ with the first index (in this case $\a$), whereas the second index is associated with $t=0$. Also,
\bes
\label{eq:511}
\begin{align}
\mc{B}_{\b\a}^*(\tau) &= \Tr[(\r_B B_\b(\tau)B_\a)^\dag] = \Tr[B_\a (U_B^\dag(\tau) B_\b U_B(\tau))^\dag\r_B] = \Tr[\r_B B_\a U_B^\dag(\tau) B_\b U_B(\tau) ]= \mc{B}_{\a\b}(0,\tau) \\
&=  \Tr[ \r_B U_B(\tau) B_\a U_B^\dag(\tau) B_\b ] = \mc{B}_{\a\b}(-\tau)\ .
\label{eq:511b}
\end{align}
\ees

Then, noting that the terms in lines \eqref{eq:476b} and \eqref{eq:476e} are Hermitian conjugates, as are the terms in lines \eqref{eq:476c} and \eqref{eq:476d}, we have:
\begin{equation}
\label{integEq2}
\frac{d\tilde{\rho}}{dt}=-g^2\sum_{\alpha\beta}\int_0^t d\tau\bigl\{\mc{B}_{\a\b}(\tau)\,[A_{\alpha}(t),A_{\beta}(t-\tau)\tilde{\rho}(t-\tau)]+\text{h.c.}\bigr\}\ .
\end{equation}
%

\subsection{Markov approximation and Redfield equation}

Note that the RHS of Eq.~\eqref{integEq2} depends on the entire history of the system state, since the argument of $\tilde{\rho}(t-\tau)$ ranges from $t$ to $0$ as $\tau$ increases from the lower to the upper limit of the integral. Thus, Eq.~\eqref{integEq2} is time-nonlocal. We would like to arrive at a time-local differential equation for the system state, which depends only on $t$, but not on the state's history.

To attain this, at this point we need to introduce our second approximation, the \emph{Markov approximation}. Informally, it states that the bath has a very short correlation time $\tau_B$, i.e., that the correlation function $\mc{B}_{\a\b}(\tau)$ decays rapidly with some characteristic timescale $\tau_B$, e.g., $|\mc{B}_{\a\b}(\tau)| \sim e^{-\tau/\tau_B}$. We also assume that 
\beq
g \ll 1/\tau_B \ , \qquad t \gg \tau_B\ .
\label{eq:479}
\eeq
The first of these is a weak-coupling limit ($g$ is small), and the second states that we do not expect our approximation to be accurate for times $t$ that are comparable to the bath correlation time (instead, we only consider times much larger than the latter).
Now, since the correlation function $\mc{B}_{\a\b}(\tau)$ is essentially zero for $\tau \gg \tau_B$, and since we assume that $t \gg \tau_B$, we can replace $\tilde{\rho}(t-\tau)$ by $\tilde{\rho}(t)$, since  the short ``memory" of the bath correlation function causes it to keep track of events only within the short period $[0,\tau_B]$. Under this approximations, Eq.~\eqref{integEq2} becomes:
\begin{equation}
\label{eq:Redfield}
\frac{d\tilde{\rho}}{dt}=-g^2\sum_{\alpha,\beta}\int_0^t d\tau\bigl\{\mc{B}_{\a\b}(\tau) [A_{\alpha}(t),A_{\beta}(t-\tau)\tilde{\rho}(t)]+\text{h.c.}\bigr\}  \ ,
\end{equation}
which is known as the \emph{Redfield equation}. It is notoriously non-CP, which means that the density matrix can be become non-positive (though various fixes have been proposed \cite{Gaspard:1999aa,Whitney:2008aa}).

Moreover, for the same reason (correlation function negligible for $\tau \gg \tau_B$) we can extend the upper limit of the integral to infinity without changing the value of the integral. 
\begin{equation}
\label{integEq3}
\frac{d\tilde{\rho}}{dt}=-g^2\sum_{\alpha,\beta}\int_0^\infty d\tau\bigl\{\mc{B}_{\a\b}(\tau) [A_{\alpha}(t),A_{\beta}(t-\tau)\tilde{\rho}(t)]+\text{h.c.}\bigr\} + O(g^4 \tau_B^3) \ ,
\end{equation}

That Eq.~\eqref{integEq2} can be replaced by Eq.~\eqref{integEq3} can be proven rigorously under the following sufficient condition \cite{ABLZ:12-SI}, as we will show in Sec.~\ref{sec:Markov-approx-err-bound}:
\beq
\int_0^\infty \tau^n |\mc{B}_{\a\b}(\tau)|  d\tau \sim \tau_B^{n+1}\ , \qquad n\in\{0,1,2\} \ .
\label{eq:481}
\eeq
This is satisfied, e.g., by an exponentially decaying correlation function. Indeed:
\beq
\int_0^\infty \tau^n e^{-\tau/\tau_B}  d\tau = \frac{d^n}{d(-1/\tau_B)^n} \int_0^\infty e^{-\tau/\tau_B} d\tau = \frac{d^n}{d(-1/\tau_B)^n} \left( \left. -\tau_B e^{-\tau/\tau_B} \right|_0^\infty \right) = \frac{d^n}{d(-1/\tau_B)^n} \tau_B = n! \tau_B^{n+1} \ .
\eeq
More generally, if $|\mc{B}_{\a\b}(\tau)| \sim e^{-(\tau/\tau_B)^k}$ where $k>0$, we have:
\beq
\int_0^\infty \tau^n |\mc{B}_{\a\b}(\tau)|  d\tau= \frac{1}{k}\Gamma\left(\frac{n+1}{k}\right) \tau_B^{n+1} \ ,
\label{eq:483}
\eeq
where $\Gamma(x)$ is the gamma function [recall that $\Gamma(n+1)=n!$ for $n\in\mathbb{N}$]. Thus, in fact even a subexponential ($k<1$) decay will suffice.

Note that thanks to Eq.~\eqref{eq:481}, the integral in Eq.~\eqref{integEq3} is of order $\tau_B$. Thus the ratio between the leading order correction and the integral is $(g^4 \tau_B^3)/(g^2\tau_B) = (g\tau_B)^2 \ll 1$, by our assumption that $g\tau_B \ll 1$.

\subsection{Going to the frequency domain}

After dropping the correction term, Eq.~\eqref{integEq3} is now a differential equation for $\tilde{\rho}(t)$, but is not yet in Lindblad form. To convert it into this form we once again convert the system operators $A(t)$ to the frequency domain. The procedure is essentially the same as in Sec.~\ref{sec:cumulant-2}, except that we need to keep track of the system operator index as well. Thus, after expanding $H_S$ in its eigenbasis as $H_S = \sum_a \varepsilon_a \ketb{\varepsilon_a}{\varepsilon_a}$, 
we have
\begin{equation}
A_\a(t) = U_S^\dagger(t) A_\a U_S(t) = \sum_{a, b} e^{-i \left( \varepsilon_b - \varepsilon_a \right) t } |\varepsilon_a \rangle \! \langle \varepsilon_a | A_\a | \varepsilon_b \rangle \! \langle \varepsilon_b| =  \sum_{\omega} A_\a({\omega}) e^{-i \omega t } \ ,
\label{eq:417Aa}
\end{equation}
where $\o \equiv \varepsilon_b - \varepsilon_a$ is a Bohr frequency, and  
\beq
A_\a({\omega}) \equiv \sum_{\varepsilon_b - \varepsilon_a=\o} \langle \varepsilon_a | A_\a | \varepsilon_b \rangle |\varepsilon_a \rangle \! \langle \varepsilon_b|\ = A^\dgr_\a(-{\omega}) \ ,
\label{eq:A_oma}
\eeq
where the last equality follows since Hermitian conjugation interchanges $\varepsilon_a$ and $\varepsilon_b$. Also, note that since $A_\a(t)$ is Hermitian,
\beq
\sum_{\omega} A_\a({\omega}) e^{-i \omega t } = \sum_{\omega} A^\dgr_\a({\omega}) e^{i \omega t }\ .
\eeq
Returning to Eq.~\eqref{integEq3}, consider the two terms in the commutator $[A_{\alpha}(t),A_{\beta}(t-\tau)\tilde{\rho}(t)]$:
\bes
\begin{align}
A_{\alpha}(t)A_{\beta}(t-\tau)\tilde{\rho}(t) &= \sum_{\o\o'} e^{i\o't}e^{-i\o(t-\tau)}A^\dgr_\a({\omega'})A_\b(\o)\tilde{\rho}(t) = \sum_{\o\o'} e^{i\o\tau}e^{i(\o'-\o)t}A^\dgr_\a({\omega'})A_\b(\o)\tilde{\rho}(t) \\
A_{\beta}(t-\tau)\tilde{\rho}(t) A_{\alpha}(t) &= \sum_{\o\o'} e^{-i\o(t-\tau)}e^{i\o't}A_\b(\o)\tilde{\rho}(t)A^\dgr_\a({\omega'}) = \sum_{\o\o'} e^{i\o\tau}e^{i(\o'-\o)t}A_\b(\o)\tilde{\rho}(t)A^\dgr_\a({\omega'}) \ .
\end{align}
\ees 
The entire $\tau$-dependence is thus in the factor $e^{i\o\tau}$, which motivates collecting everything that is $\tau$-dependent in Eq.~\eqref{integEq3} into one function:
\begin{equation}
\label{Gamma}
\Gamma_{\alpha\beta}(\omega)\equiv \int_0^\infty d\tau e^{i\omega\tau}\mc{B}_{\alpha\beta}(\tau)\ ,
\end{equation}
which is the one-sided Fourier transform of the bath correlation function. 
This allows us to rewrite Eq.~\eqref{integEq3} as
\begin{equation}
\label{BornMarkov2}
\frac{d\tilde{\rho}}{dt}=-g^2\sum_{\alpha,\beta}\sum_{\omega,\omega'}\bigl\{\Gamma_{\alpha\beta}(\omega)e^{i(\omega'-\omega)t}[A_\alpha^\dagger(\omega'),A_\beta(\omega)\tilde{\rho}(t)]\bigr\} + \text{h.c.}
\end{equation}
Note that $\Gamma$ as defined here has dimensions of \emph{time}, and $g^2\Gamma$ has units of \emph{frequency}.

\subsection{Rotating Wave Approximation}
\label{sec:RWA}

Alas, Eq.~\eqref{BornMarkov2} is still not in Lindblad form. The problem is the ``non-secular" (off-diagonal) terms with $\o\neq \o'$. While these did not present a problem in the cumulant derivation (recall that we proved complete positivity in Sec.~\ref{sec:CP-bB}), they do now. Therefore we next introduce the final approximation, known as the \emph{rotating wave approximation} (RWA), sometimes also called the secular approximation. This approximation is based on the idea that the terms with $\o\neq \o'$ in Eq.~\eqref{BornMarkov2} are rapidly oscillating if $t\gg |\omega-\omega'|^{-1}$, which thus (roughly) average to zero. Since we already assumed that $t\gg \tau_B$, the former assumption is consistent provided we also assume that the Bohr frequency differences satisfy
\beq
\min_{\o\neq\o'} |\omega-\omega'| > 1/\tau_B  \ .
\eeq
Note that this means that also the Bohr frequencies themselves (by setting $\o'=0$) must be large compared to the inverse of the bath correlation time, and this therefore excludes the treatment of systems with gaps that are small relative to $1/\tau_B$ (this has implications for the applicability to systems that are typically of interest in adiabatic quantum computing, for example). Also note that, combining this with the previous assumption [Eq.~\eqref{eq:479}], we get:
\beq
\boxed{
 g \ll 1/\tau_B <\min_{\o\neq\o'} |\omega-\omega'| 
 }
\eeq
This shows that the coupling also lower bounds the Bohr frequencies.

Let 
\beq
\gamma_{\alpha\beta}(\omega) = \int_{-\infty}^\infty e^{i\o \tau} \mc{B}_{\a\b}(\tau) d\tau \ ,
\label{eq:492}
\eeq
i.e., the full Fourier transform of the bath correlation function. 
Using Eq.~\eqref{eq:511}:
\beq
\gamma^*_{\alpha\beta}(\omega) = \int_{-\infty}^\infty e^{-i\o \tau} \mc{B}_{\b\a}(-\tau) d\tau = \int_{-\infty}^\infty e^{i\o \tau} \mc{B}_{\b\a}(\tau) d\tau = \gamma_{\b\a}(\omega) \ ,
\label{eq:493}
\eeq
i.e., $\g(\o)$ is a Hermitian matrix.
The inverse Fourier transform is 
\beq
\mc{B}_{\a\b}(\tau) = \frac{1}{2\pi} \int_{-\infty}^\infty e^{-i\o' \tau}  \gamma_{\alpha\beta}(\omega') d\o' \ .
\eeq
Then
\begin{align}
\Gamma_{\alpha\beta}(\omega) = \int_{0}^\infty e^{i\o \tau} \mc{B}_{\a\b}(\tau) d\tau 
= \int_{0}^\infty e^{i\o \tau} d\tau \frac{1}{2\pi} \int_{-\infty}^\infty e^{-i\o' \tau}  \gamma_{\alpha\beta}(\omega') d\o' = \frac{1}{2\pi} \int_{-\infty}^\infty d\o' \gamma_{\alpha\beta}(\omega') \int_{0}^\infty d\tau e^{i(\o-\o') \tau} \ .
\label{eq:494}
\end{align}
Now recall that the Dirac $\delta$ function can be represented as $\delta(x) = \frac{1}{2\pi} \int_{-\infty}^\infty d\tau  e^{ix\tau}$. When the integration lower limit is $0$ instead of $-\infty$, we have the identity
\beq
\int_{0}^\infty d\tau e^{ix\tau} = \pi \delta(x) + i \mc{P}\left(\frac{1}{x}\right)\ ,
\label{eq:Cauchy-P}
\eeq
where the Cauchy principal value is defined as
\beq
\mc{P}\left(\frac{1}{x}\right)[f] = \lim_{\epsilon\to 0} \int_{-\epsilon}^\epsilon \frac{f(x)}{x}dx \ ,
\eeq
for smooth functions $f$ with compact support on the real line $\mathbb{R}$. Substituting Eq.~\eqref{eq:Cauchy-P} into Eq.~\eqref{eq:494}, we can thus write
\begin{equation}
\label{gammaS}
\Gamma_{\alpha\beta}(\omega)=\frac{1}{2}\gamma_{\alpha\beta}(\omega)+iS_{\alpha\beta}(\omega) \ ,
\end{equation}
where
\beq
S_{\alpha\beta}(\omega) = \frac{1}{2\pi} \int_{-\infty}^\infty \gamma_{\alpha\beta}(\omega') \mc{P}\left(\frac{1}{\o-\o'}\right)d \omega' = S^*_{\beta\alpha}(\omega) \ ,
\label{eq:534}
\eeq
and we used the fact that $\g$ is Hermitian in the last equality. Therefore:
\beq
\gamma_{\alpha\beta}(\omega) = \Gamma_{\alpha\beta}(\omega) + \Gamma^{\ast}_{\beta\alpha}(\omega)\ , \quad S_{\alpha\beta}(\omega) = \frac{1}{2i}\left(\Gamma_{\alpha\beta}(\omega) - \Gamma^\ast_{\beta\alpha}(\omega)\right) \ .
\label{eq:499}
\eeq

Finally, we will show in Sec.~\ref{sec:BornMarkov2-Lindbladw} that by introducing Eq.~\eqref{gammaS} and the RWA into Eq.~\eqref{BornMarkov2}, we arrive at the interaction picture Lindblad equation:
\begin{equation}
\label{Lindbladw}
\frac{d\tilde{\rho}}{dt}=-i[H_{\mathrm{LS}},\tilde{\rho}(t)] + g^2\sum_\omega\sum_{\alpha\beta}\gamma_{\alpha\beta}(\omega)\Bigl(A_\beta(\omega)\tilde{\rho}(t)A_\alpha^\dagger(\omega)-\frac{1}{2}\{A_\alpha^\dagger(\omega)A_\beta(\omega),\tilde{\rho}(t)\}\Bigr)\ ,
\end{equation}
where the Lamb shift Hamiltonian is given by
\begin{equation}
\label{HLS}
H_{\mathrm{LS}}\equiv g^2\sum_\omega\sum_{\alpha\beta}S_{\alpha\beta}(\omega)A_\alpha^\dagger(\omega)A_\beta(\omega)\ .
\end{equation}
To justify calling $H_{\mathrm{LS}}$ a Hamiltonian we should show that it is Hermitian:
\begin{align}
H_\tx{LS}^\dagger &= g^2\sum_{\alpha\beta\omega}S_{\alpha\beta}^*(\omega)A_\beta^\dagger(\omega)A_\alpha(\omega) = g^2\sum_{\alpha\beta\omega}S_{\beta\alpha}(\omega)A_\beta^\dagger(\omega)A_\alpha(\omega) = g^2\sum_{\alpha\beta\omega}S_{\alpha\beta}(\omega)A_\alpha^\dagger(\omega)A_\beta(\omega) = H_\tx{LS}\ .
\end{align}
We will show in Sec.~\ref{proofofHLSHS=0} that 
\beq
[H_{\mathrm{LS}},H_S]=0 \ .
\label{eq:HLSHS=0}
\eeq
Note that $\gamma_{\a\b}(\o)$ and $S_{\a\b}(\o)$ as defined in Eqs.~\eqref{eq:492} and~\eqref{eq:534} have dimensions of \emph{time}, while $g^2\gamma_{\a\b}(\o)$ and $g^2 S_{\a\b}(\o)$ have units of \emph{frequency}. The factor $g^2$ can always be reabsorbed into the definition of $\gamma_{\alpha\beta}(\omega)$ and $S_{\a\b}(\o)$.%
\footnote{  
Also note that in our derivation of the LE using coarse graining (Sec.~\ref{sec:LE-CCG}) we did not include the coupling strength $g$. Instead we used a dimensionless parameter $\lambda$ when we wrote down the system-bath interaction as $\lambda H_{SB}$, where $H_{SB}$ has dimensions of energy. As a result, $\gamma_{\omega \omega'}$ in the CG-LE has units of frequency, while as noted above, in the RWA-LE $\g(\o)$ has units of time, and $g^2\g(\o)$ has units of frequency (or energy, since we're using units where $\hbar=1$).}

We will show in Sec.~\ref{eq:transtoSP} that we can transform back to Schr\"odinger picture via $\rho(t)=U_S(t)\tilde{\rho}(t)U_S^\dgr(t)$ and thus finally obtain the RWA-LE:
\begin{equation}
\label{LindbladSch}
\boxed{
\frac{d\rho}{dt}
=-i[H_S+H_{\mathrm{LS}},{\rho}]+g^2\sum_\omega\sum_{\alpha\beta}\gamma_{\alpha\beta}(\omega)\Bigl(A_\beta(\omega){\rho}A_\alpha^\dagger(\omega)-\frac{1}{2}\{A_\alpha^\dagger(\omega)A_\beta(\omega),{\rho}\}\Bigr)\ .
}
\end{equation}

We will show in Sec.~\ref{sec:gposproof} that:
\beq
\g(\o) > 0 \ ,
\eeq
as required for complete positivity.

Let us now provide all the missing steps indicated above.

\subsection{The missing steps}

\subsubsection{From Born-Markov [Eq.~\eqref{BornMarkov2}] to the RWA-LE [Eq.~\eqref{Lindbladw}]}
\label{sec:BornMarkov2-Lindbladw}

Let us start by expanding the commutator and the Hermitian conjugate term in Eq.~\eqref{BornMarkov2}, relabelling indices, and combining terms. This gives us for the summands:
\beq
e^{i(\omega'-\omega)t}\pen{\Gamma_{\alpha\beta}(\omega)A_\alpha^\dagger(\omega')A_\beta(\omega)\tilde{\r}(t)+\Gamma_{\beta\alpha}^*(\omega')\tilde{\r}(t)A_\alpha^\dagger(\omega')A_\beta(\omega)}  -e^{i(\omega'-\omega)t}\pen{\Gamma_{\alpha\beta}(\omega)+\Gamma_{\beta\alpha}^*(\omega')}A_\beta(\omega)\tilde{\r}(t)A_\alpha^\dagger(\omega')\ .
\eeq
	Applying the RWA (i.e., setting $\o=\o'$) and substituting $ \Gamma_{\alpha\beta}(\omega) = \half\gamma_{\alpha\beta}(\omega)+iS_{\alpha\beta}(\omega) $, this becomes:
\bes
\begin{align}
&\half\gamma_{\alpha\beta}(\omega)A_\alpha^\dagger(\omega)A_\beta(\omega)\tilde{\r}(t)+\half\gamma_{\beta\alpha}^*(\omega)\tilde{\r}(t)A_\alpha^\dagger(\omega)A_\beta(\omega) \\
& \qquad +iS_{\alpha\beta}(\omega)A_\alpha^\dagger(\omega)A_\beta(\omega)\tilde{\r}(t)-iS_{\beta\alpha}^*(\omega)\tilde{\r}(t)A_\alpha^\dagger(\omega)A_\beta(\omega)-\gamma_{\alpha\beta}(\omega)A_\beta(\omega)\tilde{\r}(t)A_\alpha^\dagger(\omega) \ .
\end{align}
\ees
Since $\gamma(\omega) $ and $ S(\omega) $ are Hermitian this becomes:
\bes
\begin{align}
&\half\gamma_{\alpha\beta}(\omega)\pen{A_\alpha^\dagger(\omega)A_\beta(\omega)\tilde{\r}(t)+\tilde{\r}(t)A_\alpha^\dagger(\omega)A_\beta(\omega)} \\
&\qquad +iS_{\alpha\beta}(\omega)\pen{A_\alpha^\dagger(\omega)A_\beta(\omega)\tilde{\r}(t)-\tilde{\r}(t)A_\alpha^\dagger(\omega)A_\beta(\omega)}-\gamma_{\alpha\beta}(\omega)A_\beta(\omega)\tilde{\r}(t)A_\alpha^\dagger(\omega)\\
&= \half\gamma_{\alpha\beta}(\omega)\cen{A_\alpha^\dagger(\omega)A_\beta(\omega),\tilde{\r}(t)}+iS_{\alpha\beta}(\omega)\ben{A_\alpha^\dagger(\omega)A_\beta(\omega),\tilde{\r}(t)}-\gamma_{\alpha\beta}(\omega)A_\beta(\omega)\tilde{\r}(t)A_\alpha^\dagger(\omega) \ .
\end{align}
\ees
Putting this back into the original sum in Eq.~\eqref{BornMarkov2} then gives us our desired result:
\bes
\begin{align}
\derv{\tilde{\r}}{t}{} &= -i{g^2\sum_{\alpha,\beta,\omega}S_{\alpha\beta}(\omega)\ben{A_\alpha^\dagger(\omega)A_\beta(\omega),\tilde{\r}(t)}} +g^2\sum_{\alpha,\beta,\omega}\gamma_{\alpha\beta}(\omega)\pen{A_\beta(\omega)\tilde{\r}(t)A_\alpha^\dagger(\omega)-\half\cen{A_\alpha^\dagger(\omega)A_\beta(\omega),\tilde{\r}(t)}} \\
& = -i\ben{H_\tx{LS},\tilde{\r}(t)}+g^2\sum_{\alpha,\beta,\omega}\gamma_{\alpha\beta}(\omega)\pen{A_\beta(\omega)\tilde{\r}(t)A_\alpha^\dagger(\omega)-\half\cen{A_\alpha^\dagger(\omega)A_\beta(\omega),\tilde{\r}(t)}} \ .
\end{align}
\ees

\subsubsection{Proof of Eq.~\eqref{eq:HLSHS=0}}
\label{proofofHLSHS=0}

Let us write the system operators [Eq.~\eqref{eq:A_oma}] as 
\begin{align}
{{A}_{\a }}(\omega )=\sum_{\varepsilon_b-\varepsilon_a=\omega }\Pi(\varepsilon_a)A_{\a }\Pi(\varepsilon_b)={{A}^\dgr_{\alpha }}(-\omega )\ ,
\label{eq:552a}
\end{align}
where the projectors $\Pi(\varepsilon_a) = \ketb{\varepsilon_a}{\varepsilon_a}$ are the eigenprojectors of $H_S$, i.e., 
\beq
H_S = \sum_a \varepsilon_a \Pi(\varepsilon_a) \ ,
\label{eq:HSedecomp}
\eeq
and hence $H_S\Pi(\varepsilon_a)=\Pi(\varepsilon_a) H_S=\varepsilon_a\Pi(\varepsilon_a)$. 
Then:
\bes
\begin{align}
H_SA_\alpha^\dagger(\omega)A_\beta(\omega) &= \sum_a \varepsilon_a \Pi(\varepsilon_a)\sum_{\varepsilon_i-\varepsilon_j=\omega}\Pi(\varepsilon_i)A_\alpha^\dagger\Pi(\varepsilon_j)\sum_{\varepsilon_k-\varepsilon_l=\omega}\Pi(\varepsilon_l)A_\beta\Pi(\varepsilon_k) \\
&= \sum_{\varepsilon_i-\varepsilon_j=\omega}\varepsilon_i\Pi(\varepsilon_i)A_\alpha^\dagger\Pi(\varepsilon_j)\sum_{\varepsilon_k-\varepsilon_l=\omega}\Pi(\varepsilon_l)A_\beta\Pi(\varepsilon_k)\ ,
\end{align}
\ees
and similarly:
\begin{align}
A_\alpha^\dagger(\omega)A_\beta(\omega)H_S = \sum_{\varepsilon_i-\varepsilon_j=\omega}\Pi(\varepsilon_i)A_\alpha^\dagger\Pi(\varepsilon_j)\sum_{\varepsilon_k-\varepsilon_l=\omega}\varepsilon_k\Pi(\varepsilon_l)A_\beta\Pi(\varepsilon_k) \ .
\end{align}
It follows that 
\bes
\begin{align}
\ben{H_S,A_\alpha^\dagger(\omega)A_\beta(\omega)} &= \sum_{\substack{\varepsilon_i-\varepsilon_j=\omega\\\varepsilon_k-\varepsilon_l=\omega}}\pen{\varepsilon_i-\varepsilon_k}\Pi(\varepsilon_i)A_\alpha^\dagger\Pi(\varepsilon_j)\Pi(\varepsilon_l)A_\beta\Pi(\varepsilon_k)\\
&= \sum_{\substack{\varepsilon_i-\varepsilon_j=\omega\\\varepsilon_k-\varepsilon_j=\omega}}\pen{\varepsilon_i-\varepsilon_k}\Pi(\varepsilon_i)A_\alpha^\dagger\Pi(\varepsilon_j)A_\beta\Pi(\varepsilon_k)\\
&=0\ ,
\end{align}
\ees
where the second line follows from the product of the two inner projection operators, and the third line from the summation conditions, which set $ \varepsilon_i=\varepsilon_k $. Consequently:
\beq
\ben{H_S,H_\tx{LS}} = g^2\sum_{\alpha\beta\omega}S_{\alpha\beta}(\omega)\ben{H_S,A_\alpha^\dagger(\omega)A_\beta(\omega)} = 0\ .
\eeq

\subsubsection{Transformation back to the Schr\"{o}dinger picture}
\label{eq:transtoSP}

Recall that $ \tilde{\r}(t) = e^{iH_St}\rho(t)e^{-iH_St} $, so $\derv{\tilde{\r}}{t}{} = i\ben{H_S,\rho}+e^{iH_St}\derv{\rho}{t}{}e^{-iH_St}$, and hence:
\beq
 \derv{\rho}{t}{} = -i\ben{H_S,\rho}+e^{-iH_St}\derv{\tilde{\r}}{t}{}e^{iH_St}\ .
 \label{eq:513}
\eeq
Also, using Eq.~\eqref{eq:HSedecomp} again:
\beq
A_\alpha(\omega)e^{iH_St} = \sum_{\varepsilon'-\varepsilon=\omega}\Pi(\varepsilon)A_\alpha\Pi(\varepsilon')e^{i\varepsilon't}\ .
\eeq
Thus:
\bes
\begin{align}
e^{-iH_St} A_\beta(\omega)\tilde{\r}A_\alpha^\dagger(\omega) e^{iH_St}&= \sum_{\substack{\varepsilon_i-\varepsilon_j=\omega\\\varepsilon_k-\varepsilon_l=\omega}}e^{-iH_St}\Pi(\varepsilon_j)A_\beta\Pi(\varepsilon_i)e^{iH_St}\rho(t)e^{-iH_St}\Pi(\varepsilon_k)A_\alpha^\dagger\Pi(\varepsilon_l)e^{iH_St} \\
& = \sum_{\substack{\varepsilon_i-\varepsilon_j=\omega\\\varepsilon_k-\varepsilon_l=\omega}}e^{i\pen{-\varepsilon_j+\varepsilon_i-\varepsilon_k+\varepsilon_l}t}\Pi(\varepsilon_j)A_\beta\Pi(\varepsilon_i)\rho(t)\Pi(\varepsilon_k)A_\alpha^\dagger\Pi(\varepsilon_l) \\
& = \sum_{\substack{\varepsilon_i-\varepsilon_j=\omega\\\varepsilon_k-\varepsilon_l=\omega}}\Pi(\varepsilon_j)A_\beta\Pi(\varepsilon_i)\rho(t)\Pi(\varepsilon_k)A_\alpha^\dagger\Pi(\varepsilon_l) \\
&= A_\beta(\omega)\rho A_\alpha^\dagger(\omega) \ ,
\end{align}
\ees
	and
\bes
\begin{align}
e^{-iH_St} A_\alpha^\dagger(\omega)A_\beta(\omega)\tilde{\r}(t)e^{iH_St}  &= \sum_{\substack{\varepsilon_i-\varepsilon_j=\omega\\\varepsilon_k-\varepsilon_l=\omega}}e^{-iH_St} \Pi(\varepsilon_i)A_\alpha^\dagger\Pi(\varepsilon_j) \Pi(\varepsilon_l)A_\beta\Pi(\varepsilon_k) e^{iH_St}\rho(t)e^{-iH_St} e^{iH_St}\\
&=\sum_{\substack{\varepsilon_i-\varepsilon_j=\omega\\\varepsilon_k-\varepsilon_l=\omega}}e^{i\pen{-\varepsilon_i+\varepsilon_k}t}\Pi(\varepsilon_i)A_\alpha^\dagger\Pi(\varepsilon_j)\Pi(\varepsilon_l)A_\beta\Pi(\varepsilon_k)\rho(t) \\
& = \sum_{\substack{\varepsilon_i-\varepsilon_j=\omega\\\varepsilon_k-\varepsilon_l=\omega}}e^{i\pen{-\varepsilon_j+\varepsilon_l}t}\Pi(\varepsilon_i)A_\alpha^\dagger\Pi(\varepsilon_j)\Pi(\varepsilon_l)A_\beta\Pi(\varepsilon_k)\rho(t) = A_\alpha^\dagger(\omega)A_\beta(\omega)\rho(t) \ ,
\end{align}
\ees
	and similarly for the second term in the anti-commutator. This shows that
\beq
e^{-iH_St} \Bigl(A_\beta(\omega)\tilde{\rho}(t)A_\alpha^\dagger(\omega)-\frac{1}{2}\{A_\alpha^\dagger(\omega),A_\beta(\omega),\tilde{\rho}(t)\}\Bigr) e^{iH_St} = \Bigl(A_\beta(\omega){\rho}(t)A_\alpha^\dagger(\omega)-\frac{1}{2}\{A_\alpha^\dagger(\omega),A_\beta(\omega),{\rho}(t)\}\Bigr)\ .
\eeq
Now, since we showed that $ H_S $ and $ H_\tx{LS} $ commute:
\bes
\begin{align}
e^{-iH_St} [H_{\mathrm{LS}},\tilde{\rho}(t)]e^{iH_St} &= e^{-iH_St} H_{\mathrm{LS}} e^{iH_St}e^{-iH_St}\tilde{\rho}(t)e^{iH_St} - e^{-iH_St} \tilde{\rho}(t) e^{iH_St}e^{-iH_St}H_{\mathrm{LS}}e^{iH_St} \\
& = [H_{\mathrm{LS}},{\rho}(t)]\ .
\end{align}
\ees
Hence, using Eqs.~\eqref{Lindbladw} and \eqref{eq:513} we obtain Eq.~\eqref{LindbladSch} as required.

\subsubsection{Proof that $\g(\o) > 0$}
\label{sec:gposproof}
We'll give two different proofs. 

\paragraph{First proof}
The idea is to establish the following identity:
\begin{mylemma}
\beq
\gamma_{\alpha\beta}(\omega)=\int_{-\infty}^{+\infty}e^{i\omega u}\mathcal{B}_{\alpha\beta}(u)du=\lim_{T\to\infty}\frac{1}{T}\int_{0}^{T}dt\int_{0}^{T}e^{i\omega (t-s)}\mathcal{B}_{\alpha\beta}(t-s)ds\ .
\eeq
\end{mylemma}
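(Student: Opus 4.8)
The first equality is simply the definition of $\gamma_{\alpha\beta}(\omega)$ in Eq.~\eqref{eq:492}, so all the content lies in the second equality. The plan is to rewrite the double integral over the square $[0,T]^2$ as a single integral over the difference variable $u=t-s$. Since the integrand depends on $t$ and $s$ only through $u$, the change of variables produces a triangular (Fej\'er-type) weight $T-|u|$; after dividing by $T$ this becomes $1-|u|/T$, which tends to $1$ pointwise as $T\to\infty$, and the claim then follows by dominated convergence once I exhibit an integrable majorant.

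Concretely, I would set $g(u)\equiv e^{i\omega u}\mathcal{B}_{\alpha\beta}(u)$ and substitute $t=s+u$ at fixed $s$, so that $\int_0^T dt$ becomes $\int_{-s}^{T-s}du$. Exchanging the order of integration (Fubini applies, since $g$ is bounded and the region is bounded) gives
\beq
\int_0^T\!\!dt\int_0^T\!\!ds\,g(t-s) = \int_0^T\!\!ds\int_{-s}^{T-s}\!\!du\,g(u) = \int_{-T}^{T}(T-|u|)\,g(u)\,du\ ,
\eeq
where the last step follows from computing, for each fixed $u\in[-T,T]$, the admissible range of $s$: the constraints $0\le s\le T$, $s\ge -u$, and $s\le T-u$ force $s\in[0,T-u]$ when $u\ge 0$ and $s\in[-u,T]$ when $u<0$, in both cases an interval of length $T-|u|$.

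Dividing by $T$ then yields
\beq
\frac{1}{T}\int_0^T\!\!dt\int_0^T\!\!ds\,e^{i\omega(t-s)}\mathcal{B}_{\alpha\beta}(t-s) = \int_{-T}^{T}\Bigl(1-\tfrac{|u|}{T}\Bigr)e^{i\omega u}\mathcal{B}_{\alpha\beta}(u)\,du\ .
\eeq
Denoting by $f_T$ the right-hand integrand extended by zero outside $[-T,T]$, I would note that $f_T(u)\to e^{i\omega u}\mathcal{B}_{\alpha\beta}(u)$ pointwise and $|f_T(u)|\le|\mathcal{B}_{\alpha\beta}(u)|$ for every $T$. This majorant is integrable over all of $\mathbb{R}$: on $[0,\infty)$ this is the $n=0$ case of the decay assumption Eq.~\eqref{eq:481}, while on $(-\infty,0]$ it follows from $\mathcal{B}_{\alpha\beta}(-\tau)=\mathcal{B}_{\beta\alpha}^*(\tau)$ [Eq.~\eqref{eq:511b}], which transfers the same integrability to the negative half-line. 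The dominated convergence theorem then permits passing the limit under the integral, producing $\int_{-\infty}^{\infty}e^{i\omega u}\mathcal{B}_{\alpha\beta}(u)\,du=\gamma_{\alpha\beta}(\omega)$.

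The only genuine subtlety is this interchange of the $T\to\infty$ limit with the integration, i.e.\ showing that the Ces\`aro weight $1-|u|/T$ may be discarded in the limit; this is exactly where the integrability of the bath correlation function, guaranteed by the Markovian decay assumption, is indispensable, and the identity would fail for a long-memory or non-stationary bath. I expect this to be the main point requiring care, whereas the change of variables itself is routine. The reason for casting $\gamma_{\alpha\beta}(\omega)$ in this manifestly time-averaged double-integral form is that it exhibits $\gamma(\omega)$ as a limit of Gram-type matrices, from which the positivity $\gamma(\omega)\ge 0$ will follow by the same factorization argument used in Sec.~\ref{sec:CP-bB}.
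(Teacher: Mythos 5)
Your proof is correct and follows essentially the same route as the paper's: the identical change of variables $u=t-s$ yielding the triangular weight $T-|u|$ over the diamond-shaped region, followed by discarding the $|u|/T$ correction in the limit. The only (minor) difference is the final step, where you invoke dominated convergence using just the $n=0$ integrability of $|\mathcal{B}_{\alpha\beta}|$, whereas the paper explicitly bounds the correction term $\frac{1}{T}\int_{-T}^{T}e^{i\omega u}|u|\,\mathcal{B}_{\alpha\beta}(u)\,du$ by $\sim 2\tau_B^2/T$ using the $n=1$ moment assumption of Eq.~\eqref{eq:481}; both are valid.
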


\begin{proof}
Consider the following integral:
\beq
I(\o,T)\equiv \frac{1}{T}\int_{0}^{T}dt \int_{0}^{T}e^{i\omega (t-s)}\mathcal{B}_{\alpha\beta}(t-s)ds\ .
\eeq

\begin{figure}[t]
	\includegraphics[width=.7\linewidth]{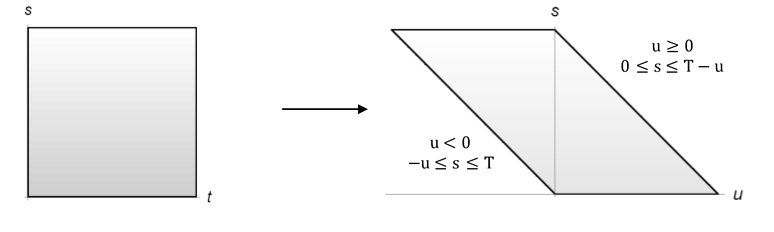}
	\caption{Left: original integration region. Right: new integration region.}
	\label{fig:intregion}
\end{figure}

First, we change the variables from $(t,s)$ to $(u,s)$ with $u=t-s$. For every value of $s$, sweeping $t$ from $0$ to $T$ will yield a horizontal line of length $T$ in the $(u,s)$ plane. The new integration region is therefore a parallelogram in the variables $(u,s)$, as illustrated in Fig.~\ref{fig:intregion}. We can split this region into $u\in[-T,0]$ and $u\in[0,T]$, and perform the integration over $s$ first. As is clear from the figure, $s$ varies from $-u$ to $T$ in the $u\in[-T,0]$ region, and from $0$ to $T-u$ in the $u\in[0,T]$ region. The area is preserved so the Jacobian yields $1$. 
Consequently, 
\beq
\int_{0}^{T}ds\int_{0}^{T}dt=\int_{-T}^{0}du\int_{-u}^{T}ds+\int_{0}^{T}du\int_{0}^{T-u}ds\ .
\eeq
When we integrate over some function independent of $s$,
\beq
\int_{-T}^{0}du\int_{-u}^{T}dsf(u)+\int_{0}^{T}du\int_{0}^{T-u}dsf(u)
=\int_{-T}^{0}duf(u)(T+u)+\int_{0}^{T}duf(u)(T-u)=\int_{-T}^{T}f(u)(T-|u|)\ .
\eeq
Therefore, after the change of variables we get
\bes
\begin{align}
I(\o,T)&=\frac{1}{T}\int_{-T}^{T}e^{i\omega u}\mathcal{B}_{\alpha\beta}(u)(T-|u|)du\\
&=\int_{-T}^{T}e^{i\omega u}\mathcal{B}_{\alpha\beta}(u)du-\frac{1}{T}\int_{-T}^{T}e^{i\omega u}\mathcal{B}_{\alpha\beta}(u)|u|du\ .
\end{align}
\ees
Now recall that in the Markov approximation we assumed [Eq.~\eqref{eq:481}] that $\int_{0}^{\infty}u^n|\mathcal{B}_{\alpha\beta}(u)|du\sim\tau_B^{n+1}$, where $\tau_B<\infty$ is the bath correlation time.
Therefore, using Eq.~\eqref{eq:477c}:
\bes
\begin{align}
\int_{-T}^{T}e^{i\omega u}|u|\mathcal{B}_{\alpha\beta}(u)du 
&=\int_{0}^{T}e^{i\omega u}u\mathcal{B}_{\alpha\beta}(u)du-\int_{-T}^{0}e^{i\omega u}u\mathcal{B}_{\alpha\beta}(u)du\\
&=\int_{0}^{T}e^{i\omega u}u\mathcal{B}_{\alpha\beta}(u)du+\int_{0}^{T}e^{-i\omega u}u\mathcal{B}^*_{\beta\alpha}(u)du \\
&\leq \int_{0}^{\infty}u|\mathcal{B}_{\alpha\beta}(u)|du+\int_{0}^{\infty}u|\mathcal{B}^*_{\beta\alpha}(u)|du\sim2\tau^2_B\ .
\end{align}
\ees
Consequently $\lim_{T\to\infty}\frac{1}{T}\int_{-T}^{T}e^{i\omega u}|u|\mathcal{B}_{\alpha\beta}(u)du=0$, and
\beq
\lim_{T\to\infty} I(\o,T) = \g_{\alpha\beta}(\o)
\eeq
as claimed.
\end{proof}

Now, for any vector 
$v=\left(v_1,v_2,\dots \right)^t$ we have 
\bes
\begin{align}
v^\dgr\gamma(\o) v &= \sum_{\alpha\beta}v^*_\alpha\gamma_{\alpha\beta}(\o) v_\beta=\sum_{\alpha\beta}v^*_\alpha v_\beta\int_{-\infty}^{+\infty}e^{i\omega u}\mathcal{B}_{\alpha\beta}(u)du \\
&= \lim_{T\to\infty}\frac{1}{T}\sum_{\alpha\beta}v_\alpha^*v_\beta\int_{0}^{T}dt\int_{0}^{T}e^{i\omega (t-s)}\mathcal{B}_{\alpha\beta}(t-s)ds\\
&=\lim_{T\to\infty}\frac{1}{T}\sum_{\alpha\beta,\mu}\lambda_\mu\langle\mu|\int_{0}^{T}v_\alpha^*e^{i\omega t}B_\alpha(t)dt\int_{0}^{T}v_\beta e^{-i\omega s}B_\beta(s)ds|\mu\rangle\\
& = \lim_{T\to\infty}\frac{1}{T}\sum_{\mu}\lambda_\mu\left|\sum_\a\int_{0}^{T}v_\a B_\a(s)e^{-i\omega s}ds|\mu\rangle\right|^2\ge 0\ .
\end{align}
\ees
Therefore $\gamma(\o) \ge 0$.

\paragraph{Second proof}
The following proof 
uses Bochner's theorem as suggested, e.g., in the textbook~\cite{Breuer:book}.

Since $\g(\o)$ is Hermitian [Eq.~\eqref{eq:493}] we can diagonalize it using a unitary transformation:
\beq
 D \equiv U\gamma U^\dagger \Rightarrow D_{\alpha\beta} = \sum_{i,j}U_{\alpha i}\gamma_{ij}U_{\beta j}^* \ .
 \eeq
$ D $ is diagonal so we need only consider the diagonal elements (i.e., the eigenvalues of $ \gamma $). Plugging in $ \gamma_{ij} = \int_{-\infty}^\infty e^{i\omega s}\mathcal{B}_{ij}(s)ds$ gives
\beq
D_\alpha = \int_{-\infty}^\infty e^{i\omega s}\pen{\sum_{i,j}U_{\alpha i}\mathcal{B}_{ij}(s)U_{\alpha j}^*}ds\ .
\eeq
	We wish to show that $ D_\alpha $ is non-negative for each $ \alpha $. To do this we must consider the function in parenthesis. $ D_\alpha $ is the Fourier transform of this function so if we can show that it is of positive type then $ D_\alpha $ must be positive by Bochner's theorem \cite{Reed:1975aa}. Define the following function with $ \cen{t_i} $ an arbitrary time partition:
\beq
f_{mn}^\alpha \equiv \sum_{i,j}U_{\alpha i}\mathcal{B}_{ij}(t_m-t_n)U_{\alpha j}^* \ .
\eeq
	Now use the property $ \expv{B_\alpha(s)B_\beta(0)} = \expv{B_\alpha(t)B_\beta(t-s)} $ [Eq.~\eqref{eq:477c}] to write $ f_{mn}^\alpha $ as
\beq
 f_{mn}^\alpha = \sum_{i,j}U_{\alpha i}\Tr\ben{\rho_BB_i(t_m)B_j(t_n)}U_{\alpha j}^* = \Tr\pen{\rho_B\sum_iU_{\alpha i}B_i(t_m)\sum_jU_{\alpha j}^*B_j(t_n)} \ .
 \eeq
	We need to show that $ f^\alpha $ is a positive matrix. For arbitrary $ \ket{v} $ we have
\bes
\begin{align}
\bra{v}f^\alpha\ket{v} &= \sum_{m,n}v_m^*v_nf_{mn}^\alpha = \Tr\ben{\pen{\sum_{i,m}v_m^*U_{\alpha i}\sqrt{\rho_B}B_i(t_m)}\pen{\sum_{j,n}v_nU_{\alpha j}^*B_j(t_n)\sqrt{\rho_B}}} \\
& = \Tr\ben{\pen{\sum_{i,m}v_m^*U_{\alpha i}\sqrt{\rho_B}B_i(t_m)}\pen{\sum_{i,m}v_mU_{\alpha i}^*B_i(t_m)\sqrt{\rho_B}}} \\
&\equiv \Tr\pen{M_\alpha^\dagger M_\alpha} \geq 0 \ ,
\end{align}
\ees
	where the final inequality follows from the fact that $ M_\alpha^\dagger M_\alpha $ is non-negative which follows immediately from right polar decomposing $ M_\alpha $ (then $ M_\alpha^\dagger M_\alpha = RU^\dagger UR = R^2 \geq 0 $).
	
We have established that $ \bra{v}f^\alpha\ket{v} \geq 0 $ for any time partition $ \cen{t_i} $. Therefore $ D_\alpha $ is positive by Bochner's theorem. Consequently, $ \gamma $ is a positive matrix since all its eigenvalues are non-negative.



\section{The Kubo-Martin-Schwinger (KMS) condition and the Gibbs state as a stationary state Lindblad equation}

In this section we formalize the folklore notion that ``systems like to relax into lower energy states", and that systems ``tend to equilibrate".

\subsection{The KMS condition}

Consider a general system-bath Hamiltonian of the form $H_{SB} = \sum_{a} A_a\ox B_a$ (we're using $a$ and $b$ since we'll reserve $\b$ for the inverse temperature in this subsection).
Let us assume again that the bath state is {stationary} [Eq.~\eqref{eq:b-stat}], which as we saw implies that $\r_B(t) = U_B(t)\r_B(0)U_B^\dgr(t) = \r_B(0) \equiv \r_B$. We also saw that this means that the bath correlation function is time-translation-invariant:
\beq
\ave{B_a(t+\tau)B_b(t)} = 
\ave{B_a(\tau) B_b(0)}\ ,
\eeq
where for notational simplicity we dropped the $B$ subscript we used before in $\ave{X}_B = \Tr[\r_B X]$.

%

If we assume not only that the bath state is stationary, but that it is also in thermal equilibrium at inverse temperature $\beta$, i.e., $\rho_B=e^{-\beta H_B}/\mathcal{Z}$, then it follows that the correlation function satisfies the \emph{Kubo-Martin-Schwinger (KMS) condition} \cite{Breuer:book}:
\beq
\label{eq:KMSt}
\langle B_a(\tau)B_b(0)\rangle = \langle B_b(0)B_a(\tau+i\beta)\rangle  \ .
\eeq
The proof is the following calculation: 
\bes
\bea
\langle B_a(\tau)B_b \rangle &=& \textrm{Tr}[\rho_B U^\dag_B(\tau) B_a U_B(\tau) B_b] = \frac{1}{\mathcal{Z}}\textrm{Tr}[B_b e^{-(\beta-i\tau)H_B} B_a e^{-i\tau H_B}] \\
&=& \frac{1}{\mathcal{Z}}\textrm{Tr}[B_b e^{i(\tau+i\beta)H_B} B_a e^{-i(\tau+i\beta) H_B}e^{-\beta H_B}]
= \textrm{Tr}[\rho_B B_b U^\dag_B(\tau+i\beta) B_a U_B(\tau+i\beta) ] \\
&=& \langle B_b B_a(\tau+i\beta)\rangle\ .
\eea 
\ees
Note that using the same technique it also follows that 
\beq
\langle B_a(\tau)B_b\rangle = \langle B_{b}(-\tau - i \beta) B_{a} \rangle\ .
\label{eq:KMS-t}
\eeq 

\begin{figure}[b]
   \includegraphics[width=2in]{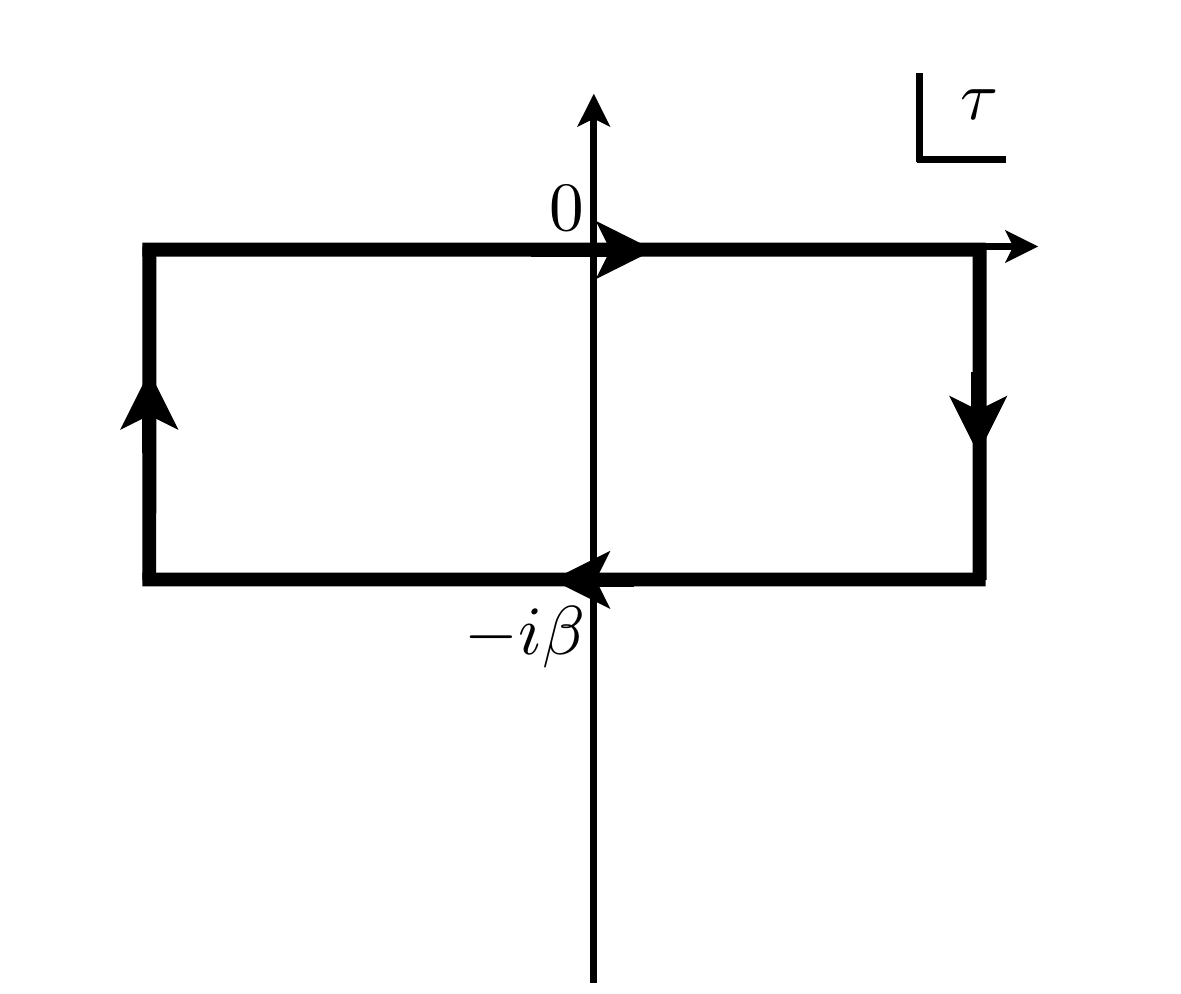} 
   \caption{Contour used in proof of the KMS condition.}    
   \label{fig:Contour}
\end{figure}

If in addition the correlation function is analytic in the strip between $\tau=-i\b$ and $\tau=0$, then it follows that the Fourier transform of the bath correlation function satisfies the \emph{frequency domain KMS condition}:
\beq
\label{eq:KMS}
\gamma_{ab}(-\omega) = e^{-\beta\omega}\gamma_{ba}(\omega)\ .
\eeq
This is an extremely important condition, which is used in proving ``detailed balance", as we shall see when we discuss the Pauli master equation, in Sec.~\ref{sec:PauliME}.

To prove this let us use the time-domain KMS condition, Eq.~\eqref{eq:KMS-t}:
\bea
\gamma_{ab} (\omega) = \int_{-\infty}^{\infty} d \tau e^{i \omega \tau} \langle B_a(\tau)B_b(0)\rangle 
= \int_{-\infty}^{\infty} d \tau e^{i \omega \tau} \langle B_{b}(-\tau - i \beta) B_{a}(0) \rangle
\label{eqt:integrand1}
\eea
To perform this integral we replace it with a contour integral in the complex $\tau$ plane, 
$\oint_C d \tau e^{i \omega \tau} \langle B_{b}(-\tau - i \beta) B_{a}(0) \rangle $,
with the contour $C$ as shown in Fig.~\ref{fig:Contour}. This contour integral vanishes by the Cauchy-Goursat theorem \cite{complex:book} since the closed contour encloses no poles (by assumption, the correlation function $\langle B_{b}(\tau) B_{a}(0)\rangle$ is analytic in the open strip $(0, -i \beta)$ and is continuous at the boundary of the strip \cite{KMS}), so that
\begin{equation}
\oint_{C} \left( \dots \right) = 0 = \int_{\uparrow} \left( \dots \right) + \int_{\mathrm{\downarrow}}  \left( \dots \right)  + \int_{\rightarrow}  \left( \dots \right)  +  \int_{\leftarrow}  \left( \dots \right) 
\end{equation}
where $\left( \dots \right)$ is the integrand of Eq.~\eqref{eqt:integrand1}, and the integral $ \int_{\rightarrow}$ is the same as in Eq.~\eqref{eqt:integrand1}.  After making the variable transformation $\tau = -x - i \beta$, where $x$ is real, we have
\begin{equation}
\int_{\leftarrow}  \left( \dots \right)  = -  e^{\beta \omega}  \int_{-\infty}^{\infty} dx\ e^{-i \omega x}  \langle B_{b}(x) B_{a} \rangle = -  e^{\beta \omega}\gamma_{ba}(-\omega) \ .
\end{equation}
Assuming that $\langle B_a(\pm\infty-i\b)B_b(0)\rangle = 0$ (i.e., the correlation function vanishes at infinite time), we further have $\int_{\uparrow} \left( \dots \right) = \int_{\mathrm{\downarrow}}  \left( \dots \right) =0$, and hence we find the result:
\begin{equation}
0 = \gamma_{ab}(\omega) + 0 + 0 -  e^{\beta \omega}\gamma_{ba}(-\omega)
\end{equation}
which proves Eq.~\eqref{eq:KMS}. 

The KMS condition \eqref{eq:KMS} is important, since it tells us that transitions involving negative Bohr frequencies are exponentially suppressed, as $e^{-\b\o}$, compared to the opposite transitions involving positive Bohr frequencies. I.e., when a system is coupled to a thermal bath, an excitation in the system is exponentially suppressed relative to a relaxation event at the same frequency.%
\footnote{Recall Eq.~\eqref{eq:A_om}: $\o=\varepsilon_b-\varepsilon_a<0$ corresponds to a transition from $\ket{\varepsilon_b}$ to $\ket{\varepsilon_a}$, i.e., from energy $\varepsilon_b$ to a \emph{higher} energy $\varepsilon_a$.}


\subsection{The Gibbs state is a stationary state of the RWA-LE}
\label{sec:Gibbs-stat}

Consider a bath at inverse temperature $\b$.
We would like to show that the system Gibbs state
\beq 
\r_G=\frac{1}{Z}{{e}^{-\beta {{H}_{S}}}} 
=\frac{1}{Z}\left(
               \begin{array}{ccc}
                 e^{-\beta\varepsilon_0} &  &  \\
                  & e^{-\beta\varepsilon_1} &  \\
                  &  & \ddots\\
               \end{array}
             \right)
\ , \quad Z=\Tr[{{e}^{-\beta {{H}_{S}}}}]
\label{eq:Gibbs_state}
\eeq
is always a stationary state, in the sense that $\dot{\r}_G = 0$. Here the energies are listed in increasing order, starting from ground state energy $\epsilon_0$. We will show this here directly from the RWA-LE, and given an alternative derivation from the Pauli master equation in Sec.~\ref{sec:PauliME}.

In the Schr\"{o}dinger picture the RWA-LE has the form:
\beq
\dot{\rho }=-i\left[ H_S+H_{\mathrm{LS}},\rho  \right]+D(\rho )\ ,
\eeq
where the dissipator is
\beq
D(\rho )=g^2 \sum_{\alpha \beta }{\sum_{\omega }{{{\gamma }_{\alpha \beta }}(\omega )\left( {{A}_{\beta }}(\omega )\rho A_{\alpha }^{\dagger }(\omega )-\frac{1}{2}\left\{ A_{\alpha }^{\dagger }(\omega ){{A}_{\beta }}(\omega ),\rho  \right\} \right)}}\ .
\label{eq:dissi}
\eeq

To show that $\dot{\r}_G = 0$, consider first the Hamiltonian part. That $[H_S,\r_G]=0$ follow immediately from Eq.~\eqref{eq:Gibbs_state}. Now recall that $[H_S,H_{\mathrm{LS}}]=0$ [Eq.~\eqref{eq:HLSHS=0}]. Thus $H_S$ and $H_{\mathrm{LS}}$ are diagonalizable in the same basis, i.e., there exists a unitary $V$ such that $VH_S V^\dgr=D_1$ and $VH_{\mathrm{LS}} V^\dgr=D_2$, where $D_1$ and $D_2$ are both diagonal (and of course commute). Therefore 
\beq
V[H_{\mathrm{LS}},\r_G]V^\dgr = VH_{\mathrm{LS}}V^\dgr V\r_GV^\dgr- V\r_GV^\dgr VH_{\mathrm{LS}}V^\dgr = \frac{1}{Z}\left[D_2, e^{-\b D_1}\right] = 0 \ ,
\eeq
which means that $[H_{\mathrm{LS}},\r_G]=0$.

Next let us consider the dissipative part. This requires us to calculate $\r_G A^\dgr_\a(\o)$ and $A_\b(\o)\r_G$. Now, for any pair of operators $A$ and $B$ it is easy to prove (e.g., by Taylor expansion) that:
\beq 
\label{eq:math_help}
{{e}^{-\alpha A}}B{{e}^{\alpha A}}=\sum_{n=0}^{\infty }{\frac{{{(-\alpha) }^{n}}}{n!}\left[ A,B \right]_n}\ ,
\eeq
where the nested commutator is defined recursively via
\beq
\left[A,B \right]_n=\left[ A,\left[ A,B \right]_{n-1} \right]\ , \left[ A,B \right]_0=B \ .
\eeq
Simplifying our notation via $\ket{a}\equiv\ket{\varepsilon_a}$, let us write the system operators [Eq.~\eqref{eq:A_oma}] as 
\begin{align}
{{A}_{\a }}(\omega )=\sum_{b-a=\omega }{{{\Pi }_{a}}{{A}_{\a }}{{\Pi }_{b}}}={{A}^\dgr_{\alpha }}(-\omega )\ ,
\label{eq:552}
\end{align}
where the projectors $\Pi_a = \ketb{a}{a}$ are in the energy basis, i.e., $H_S\Pi_a=\Pi_a H_S=a\Pi_a$, where $H_S = \sum_a a \Pi_a$.
Using the property $\Pi_a \Pi_b = \d_{ab} \Pi_a$, note that:
\bes 
\label{eq:H_s-A_a-commutator}
\begin{align}
\left[ {{H}_{S}},{{A}_{\alpha }}(\omega ) \right] &= \sum_a a \Pi_a \sum_{b-a'=\o}\Pi_{a'}A_\a\Pi_b - \sum_{b-a=\o}\Pi_a A_\a \Pi_b \sum_{a'} a' \Pi_{a'} \\
&= \sum_{b-a=\o} a \Pi_a A_\a \Pi_b-\sum_{b-a=\o} b\Pi_a A_\a \Pi_b=\sum_{b-a=\o} (a-b) \Pi_a A_\a \Pi_b\\
\label{eq:553c}
&= -\omega {{A}_{\alpha }}(\omega )\\
\label{eq:553d}
\left[ {{H}_{S}},A_{\alpha }^{\dagger }(\omega ) \right]&=\omega A_{\alpha }^{\dagger }(\omega )\ ,
\end{align}
\ees
where Eq.~\eqref{eq:553d} follows by taking the Hermitian conjugate of Eq.~\eqref{eq:553c}.

Therefore:
\bes
\begin{align}
[H_S,A_\a(\o)]_n &= (-\o)^n A_\a(\o) \\
[H_S,A^\dgr_\a(\o)]_n &= \o^n A_\a(\o)\ .
\end{align}
\ees
Hence, using Eq.~\eqref{eq:math_help}:
\beq
{{e}^{-\beta {{H}_{S}}}}{{A}_{\alpha }}(\omega ){{e}^{\beta {{H}_{S}}}}=
\sum_{n=0}^\infty \frac{(-\b)^n(-\o)^n}{n!}A_\a(\o)=
{{e}^{\beta \omega }}{{A}_{\alpha }}(\omega )\ ,
\eeq
which tells us that 
\beq 
\label{eq:A_ro}
{{A}_{\alpha }}(\omega ){{\rho }_{G}}={{e}^{-\beta \omega }}{{\rho }_{G}}{{A}_{\alpha }}(\omega ) \ .
\eeq
It follows by Hermitian conjugation that:
\beq 
\label{eq:A-dagger_ro}
{{\rho }_{G}}A_{\alpha }^{\dagger }(\omega )={{e}^{-\beta \omega }}A_{\alpha }^{\dagger }(\omega ){{\rho }_{G}}\ .
\eeq
We are now ready to consider the terms in the dissipator, Eq.~\eqref{eq:dissi}. Commuting $\r_G$ to the right we find:
\bes
\begin{align}
{{A}_{\beta }}(\omega ){{\rho }_{G}}A_{\alpha }^{\dagger }(\omega )&={{e}^{-\beta \omega }}{{A}_{\beta }}(\omega )A_{\alpha }^{\dagger }(\omega ){{\rho }_{G}}\\
{{\rho }_{G}}A_{\alpha }^{\dagger }(\omega ){{A}_{\beta }}(\omega )&=e^{-\b\o}A_\a^\dgr(\o)\r_GA_\a(\o)=A_{\alpha }^{\dagger }(\omega ){{A}_{\beta }}(\omega ){{\rho }_{G}}\ ,
\end{align}
\ees
The action of the dissipator thus becomes:
\beq
D({{\rho }_{G}})=g^2\sum_{\alpha \beta }{\sum_{\omega }{{{\gamma }_{\alpha \beta }}(\omega )\left( {{e}^{-\beta \omega }}{{A}_{\beta }}(\omega )A_{\alpha }^{\dagger }(\omega )-A_{\alpha }^{\dagger }(\omega ){{A}_{\beta }}(\omega ) \right){{\rho }_{G}}}}\ .
\eeq
Let us now separate the sum over $\o$ as $\sum_{\o<0} + (\o=0) + \sum_{\o>0}$. Recall that KMS result [Eq.~\eqref{eq:KMS}]: $\gamma_{\a\b}(-\omega) = e^{-\beta\omega}\gamma_{\b\a}(\omega)$.  
We know from Eq.~\eqref{eq:552} that $A_\a(0) = A_\a^\dgr(0)$, so that the $\o=0$ cancels since the remaining sum is over all $\a$ and $\b$, and by KMS, $\gamma_{\a\b}(0)=\g_{\b\a}(0)$. As for the sum over negative frequencies, using KMS and Eq.~\eqref{eq:552} again we have:
\bes
\begin{align}
\sum_{\o<0} &= \sum_{\o'=-\o>0} \g_{\a\b}(-\o')\left(e^{\b\o'}A_\b(-\o')A_\a^\dgr(-\o')-A_\a^\dgr(-\o')A_\b(-\o')\right)\r_G\\
&=\sum_{\o'>0} \g_{\b\a}(\o')e^{-\b\o'}\left(e^{\b\o'}A_\b^\dgr(\o')A_\a(\o')-A_\a(\o')A_\b^\dgr(\o')\right)\r_G\ ,
\end{align}
\ees
so that
\bes
\begin{align}
\sum_{\alpha \beta }\sum_{\o<0} &= \g_{\b\a}(\o)\left(A_\b^\dgr(\o)A_\a(\o)-e^{-\b\o}A_\a(\o)A_\b^\dgr(\o)\right)\r_G\\
&=-\sum_{\alpha \beta }\sum_{\o>0}\ ,
\end{align}
\ees
and hence $\sum_{\o<0}  + \sum_{\o>0}=0$.

So, the dissipator is also zero, and the Gibbs state is indeed stationary:
\beq
{{\dot{\rho }}_{G}}=0\ .
\eeq




\subsection{Return to equilibrium, quantum detailed balance, and ergodicity under the RWA-LE}
\label{sec:return}

A natural next question is under which conditions the Gibbs state is actually reached. To answer this we need to define the concept of ergodicity. A system is ergodic if it holds that for any arbitrary system operator $X$
\beq
\left[ X,{{A}_{\alpha }}(\omega ) \right]=\left[ X,A_{\alpha }^{\dagger }(\omega ) \right]=0,\quad \forall \alpha ,\omega
\label{eq:ergodi}
\eeq
if and only if $X$ is proportional to the identity operator.
 
It is possible to prove that if a system is ergodic and in addition $\mc{L} = -i[H,\cdot ] + \mc{L}_D$ satisfied the \emph{quantum detailed balance} condition with respect to the stationary state $\tilde{\r}$ (the state for which $\mc{L}\tilde{\r}=0$)
\begin{subequations}
\begin{align}
& [H,\tilde{\r}] = 0 \\
& (\mc{L}_D^\dagger A,B) = (A, \mc{L}_D^\dag B)
\end{align}
\end{subequations}
for $(A,B) \equiv \Tr[\tilde{\r}A^\dag B]$ and all $A,B\in \mathrm{domain}(\mc{L}^\dag)$,
then for \emph{any} initial state $\r(0)$ the stationary state is the Gibbs state. I.e., the Gibbs state is an attractor for the dynamics: $\rho (t) = {{e}^{\mathcal{L}t}}\rho (0)\xrightarrow{t\to \infty }{}{{\rho }_{G}}$. This is a fundamental result, as it tells us the conditions under which a system is guaranteed to become thermally equilibrated. The proof is given in Sec.~1.3.4 of Ref.~\cite{alicki_quantum_2007} (see also Ref.~\cite{Majewski:1998aa}).

However, not all systems are ergodic \cite{Alicki:88}. For example, consider a system of $N$ qubits coupled to a bath such that 
\beq
{{A}_{\a}}=\sum_{j=1}^{N}{\sigma_{j}^{\a}} , \quad \a\in\{x,y,z\}\ .
\label{eq:collect-dec}
\eeq
Clearly, all $A_\a$ are invariant under permutations, so that they commute with the elements of the permutation group. This means that Eq.~\eqref{eq:ergodi} is satisfied for operators $X$ that are not proportional to the identity (e.g., the SWAP operator between any pair of qubits), and hence such a system is not ergodic. Indeed, Eq.~\eqref{eq:collect-dec} describes ``collective decoherence", under which there exist subspaces that are invariant under the action of the $A_\a$ operators, and undergo unitary dynamics \cite{Zanardi:97c,Lidar:1998fk}. Initial states in such subspaces do not converge to the Gibbs state, and do not equilibrate.

More generally, if the system-bath interaction possesses some symmetry (e.g., a permutational symmetry as above), then ergodicity does not hold and the system need not equilibrate \cite{Lidar:2003fk}.

\section{Pauli Master Equation}
\label{sec:PauliME}

Sometimes we are particularly interested in finding out the evolution of just the populations (diagonal elements) in the energy eigenbasis. For example, this is the case in adiabatic quantum computing and quantum annealing, where the answer to a computation is encoded in the ground state~\cite{Albash-Lidar:RMP}. In other applications we are interested in finding out the Gibbs distribution $\r_G$ [Eq.~\eqref{eq:Gibbs_state}] 
in order to compute various thermodynamic averages $\ave{X} = \Tr(X \r_G)$, where $X$ could be any observable of interest; the Gibbs state is an example of a state that is diagonal in the energy eigenbasis, i.e., the eigenbasis $\{\ket{\e_a}\}$ of $H_S=\sum_a\epsilon_a|\epsilon_a\>\<\epsilon_a| = \sum_a \epsilon_\a \Pi_a$. 

Recall that the RWA-LE in the Schr\"{o}dinger picture is
\[
\frac{d\rho}{dt}=-i\left[H_S+H_{\mathrm{LS}},\rho\right]+g^2\sum_{\alpha\beta}\sum_{\omega}
\gamma_{\alpha\beta}(\omega)\left[A_\beta(\omega)\rho A_\alpha^\dag(\omega)-\frac{1}{2}\left\{A_\alpha^\dag(\omega) A_\beta(\omega),\rho\right\}\right]\ .
\]
The population in the $a$th energy eigenbasis state is:
\beq
p_a(t) = \bra{\e_a}\r(t)\ket{\e_a} = \r_{aa}(t) = \Tr[\Pi_a \r]\ .
\eeq
Our goal is to derive a master equation for the evolution of these populations, known as the \emph{Pauli master equation}. We will see that the populations in the energy eigenbasis are decoupled from the coherences (off diagonal elements) in the same basis. Consider then, the time-derivative of the populations, while using the fact that $H_S$ is time-independent (and hence so are its eigenvalues and eigenvectors):
\bes
\label{eq:pauli_master_1}
\begin{align}
\label{eq:pauli_master_1a}
\dot{p}_a &=\<\epsilon_a|\dot{\rho}|\epsilon_a\>\ = \Tr[\Pi_a \dot{\r}] \\
\label{eq:pauli_master_1b}
&=-i\<\e_a|[H_S,\rho]|\e_a\>-i\<\e_a|[H_{\mathrm{LS}},\rho]|\e_a\>\\
\label{eq:pauli_master_1c}
&+g^2\sum_{\alpha\beta}\sum_{\omega}
\gamma_{\alpha\beta}(\omega)\<\e_a|\left[A_\beta(\omega)\rho A_\alpha^\dag(\omega)-\frac{1}{2}\left\{A_\alpha^\dag(\omega) A_\beta(\omega),\rho\right\}\right]|\e_a\>\ .
\end{align}
\ees

The first term in Eq.~\eqref{eq:pauli_master_1b} is:
\beq
\<\e_a|[H_S,\rho]|\e_a\>=\<\e_a|H_S\rho|\e_a\>-\<\e_a|\rho H_S|\e_a\>=\e_a\<\e_a|\rho|\e_a\>-\e_a\<\e_a|\rho|\e_a\>=0\ .
\label{eq:619}
\eeq
As for $\<\e_a|[H_\text{LS},\rho]|\e_a\>$, recall that $[H_S,H_{\mathrm{LS}}]=0$, which means that  $H_S$ and $H_{\mathrm{LS}}$ share a common eigenbasis, i.e., the energy eigenbasis $\{\ket{\epsilon_a}\}$; hence $H_{\mathrm{LS}}$ is diagonal in the same basis and the same calculation as in Eq.~\eqref{eq:619} also implies that $\<\e_a|[H_{\mathrm{LS}},\rho]|\e_a\>=0$. Therefore there is no contribution from the unitary part to the evolution of the populations in the energy eigenbasis.

Now consider the dissipative part, i.e., line~\eqref{eq:pauli_master_1c}. Recall that
\beq
A_\beta(\omega)=\sum_{\epsilon_b-\epsilon_a=\omega}\ketb{\epsilon_a}{\e_a}{A_{\beta}}\ketb{\epsilon_b}{\e_b}=\sum_{b-a=\omega}|a\>A_{a b,\beta}\<b|\ , \qquad A_\alpha^\dag(\omega)=\sum_{b-a=\omega}|b\>
A_{ba,\alpha}\<a|
\eeq
where we again used the simplified notation $\epsilon_a \mapsto a$.
We have for the first term in line~\eqref{eq:pauli_master_1c}:
\bes
\label{eq:621}
\begin{align}
\bra{\e_a}A_\b(\o)\r A_\a^\dgr(\o)\ket{\e_a} &= \bra{a}\sum_{\o={b'}-{a'}}A_{a'b',\b}\ketb{a'}{b'}\r  \sum_{\o={b''}-{a''}}A_{b''a'',\a}\ketb{b''}{a''}a\> \\
&= \sum_{\substack{\o=b'-a\\ \o=b''-a}}A_{ab',\b}\r_{b'b''}A_{b''a,\a}\\
&=\sum_{\o=b'-a}A_{ab',\b}p_{b'}A_{b'a,\a} = \sum_{\o=a'-a}A_{aa',\b}p_{a'}A_{a'a,\a} \ ,
\end{align}
\ees
where to go the second line we used $\bk{a}{a'}=\d_{aa'}$ and $\bk{a''}{a}=\d_{a''a}$, and to go to the third line we used the 
fact that $b'$ must equal $b''$ due to the summation constraints.

Similarly,
\bes
\label{eq:622}
\begin{align}
\bra{a}A_\a^\dgr(\o)A_\b(\o)\r\ket{a} &= \sum_{\o=b'-a'}\bra{a}A_{b'a',\a }\ketb{b'}{a'}\sum_{\o=b''-a''}A_{a''b'',\b}\ketb{a''}{b''}\r\ket{a}\\
&= \sum_{\substack{\o=a-a'\\ \o=b''-a'}}A_{aa',\a}A_{a'b'',\b}\r_{b''a}\\
&=\sum_{\o=a-a'}A_{aa',\a}A_{a'a,\b}p_a\ ,
\label{eq:622c}
\end{align}
\ees
and
\bes
\label{eq:623}
\begin{align}
\bra{a}\r A_\a^\dgr(\o)A_\b(\o)\ket{a} &= \sum_{\o=b'-a'}\bra{a}\r A_{b'a',\a}\ketb{b'}{a'}\sum_{\o=b''-a''}A_{a''b'',\b}\ketb{a''}{b''}a\rangle\\
&= \sum_{\substack{\o=b'-a'\\ \o=a-a'}}\r_{ab'}A_{b'a',\a}A_{a'a,\b}\\
&=\sum_{\o=a-a'}p_aA_{aa',\a}A_{a'a,\b}\ ,
\end{align}
\ees
which is the same result as in Eq.~\eqref{eq:622}.

Combining Eqs.~\eqref{eq:621}-\eqref{eq:623}, we have:
\beq
\dot{p}_a = \sum_{\a\b} \left(\sum_{\o=a'-a} A_{a'a,\a}A_{aa',\b}p_{a'} -  \sum_{\o=a-a'} A_{aa',\a}A_{a'a,\b}p_a\right)\g_{\a\b}(\o) \ .
\eeq
Since the index $a$ is fixed, the sum over $\o$ really only involves varying $a'$. Thus:
\beq
\label{eq:pauli_master_2}
\dot{p}_a=\sum_{\alpha\beta}\sum_{a^\prime}\gamma_{\alpha\beta}(a^\prime-a)A_{a^\prime a,\a}A_{a a^\prime ,\beta}p_{a^\prime}-\gamma_{\alpha\beta}(a-a^\prime)A_{aa^\prime,\alpha}A_{a^\prime a,\beta}p_a\ .
\eeq
Now define a \emph{transition matrix} $W$ via
\beq
W(a|a^\prime)\equiv\sum_{\alpha\beta}\gamma_{\alpha\beta}(a^\prime-a)A_{a^\prime a,\alpha}A_{aa^\prime,\beta}\ .
\eeq
Note that $W(a|a^\prime)\geq 0$.
To prove this, let $u$ be the unitary matrix that diagonalizes $\gamma$: $\g_{\a\b} = \sum_{\a'}u_{\a\a'}\g_{\a'} u^*_{\b\a'}$. Then:
\bes
\begin{align}
W(a|a^\prime) &= \sum_{\a'\a\b} u_{\a\a'}\g_{\a'}(a'-a) u^*_{\b\a'}A_{a'a,\a}A_{aa',\b} = \sum_{\a'} \g_{\a'}(a'-a) \left(\sum_{\a} u_{\a\a'}A_{a'a,\a}\right) \left(\sum_{\b} u^*_{\b\a'}A_{aa',\b}\right) \\
&= \sum_{\a'} \g_{\a'}(a'-a) \abs{\tilde{A}_{a'a,\a'}}^2 \geq 0 \ ,
\end{align}
\ees
where $\tilde{A}_{a'a,\a'} =  \sum_{\a} u_{\a\a'}A_{a'a,\a}$, and we used the Hermiticity of $A_{\b}$ to write $A_{aa',\b} = A^*_{a'a,\b}$.
Eq.~\eqref{eq:pauli_master_2} can thus be simplified as:
\beq
\dot{p}_a=\sum_{a^\prime}W(a|a^\prime)p_{a^\prime}-W(a^\prime|a)p_a \ .
\label{eq:628}
\eeq
This represents a closed set of rate equations for the populations $\{p_a\}$.

If we assume that the KMS condition $\gamma_{\alpha\beta}(-\omega)=e^{-\beta\omega}\gamma_{\beta\alpha}(\omega)$ (for $\o>0$) holds, then this allows us to write, for $a>a'$:
\beq
\gamma_{\alpha\beta}(a^\prime-a)=e^{-\beta(a-a')}\gamma_{\beta\alpha}(a-a^\prime).
\eeq
Then $W(a|a^\prime)$ can be rewritten as:
\beq
W(a|a^\prime)=\sum_{\alpha\beta}e^{-\beta(a-a')}\gamma_{\beta\alpha}(a-a^\prime)A_{a^\prime a,\alpha}A_{aa^\prime,\beta} = e^{-\beta(a-a')}{\sum_{\alpha\beta}\gamma_{\alpha\beta}(a-a^\prime)A_{a^\prime a,\beta}A_{aa^\prime,\alpha}} = e^{-\beta(a-a')}W(a^\prime|a) \ .
\eeq
This is the \emph{detailed balance condition}:
\beq
\frac{``\uparrow "}{``\downarrow "} = \frac{W(a|a^\prime)}{W(a^\prime|a)}= e^{-\beta(a-a^\prime)}\ .
\eeq
It says that the rate for an ``up" transition, from the low energy state $\ket{a'}$ to the high energy state $\ket{a}$, is exponentially less likely than the reverse, ``down" transition, with the exponent given by the energy difference in units of the bath temperature. This is an extremely important result, since it establishes rigorously the intuition that at very low temperatures (relative to the smallest energy gap) systems tend to relax towards their ground states. This is a special case of the quantum detailed balance condition we mentioned in Sec.~\ref{sec:return}.

Finally, we can also reestablish that the Gibbs state is the stationary state (recall that we showed this in Sec.~\ref{sec:Gibbs-stat}). For a stationary state $\dot{p}_a=0$. It follows from Eq.~\eqref{eq:628} that in this case:
\bes
\begin{align}
\frac{W(a|a^\prime)}{W(a^\prime|a)}&=\frac{p_a}{p_a^\prime}=e^{-\beta({a}-a')}=\frac{e^{-\beta a}}{e^{-\beta {a^\prime}}}\\
&\Rightarrow p_a=\frac{e^{-\beta a}}{Z};\quad Z=\sum_a e^{-\beta a}\ ,
\end{align}
\ees
which is the Gibbs distribution.

\section{Lindblad Equation in the Singular Coupling Limit (SCL)}

All our derivations of the LE so far have assumed the weak coupling limit of system-bath coupling. Somewhat surprisingly, the opposite limit of strong coupling also allows us to derive the Lindblad equation, while avoiding the use of the RWA. 

\subsection{Derivation}

Assume that the Hamiltonian takes the form
\begin{equation}
 H=H_{S}+{\frac{1}{\epsilon}}H_{SB}+{\frac{1}{\epsilon^2}}H_{B}\ ,
\end{equation}
where $H_{SB} = g \sum_\a A_\a\ox B_\a$ as in Eq.~\eqref{eq:gHSB},
so that the $A_\a, B_\a$ operators are dimensionless.
Since we are interested in the limit of small $ \epsilon $, this is called the \emph{singular coupling limit} (SCL). In this limit the bath Hamiltonian dominates over the system and system-bath Hamiltonians. 

Note that in order for the Gibbs state of the bath to remain invariant ($\r_B = e^{-\b H_B}/Z$), the bath must be in
thermal equilibrium with respect to $H_B/\e^2$ at the temperature $T/\e^2\to\infty$. Thus, we can also interpret the SCL as a high temperature limit. For a more detailed discussion see
Ref.~\cite{PhysRevA.73.052311}.

Our starting point is the interaction picture Born approximation [Eq.~\eqref{integEq2}], which we write here with $\epsilon$ included:
\begin{equation}
\frac{d\tilde{\rho}}{dt}=-g^2\frac{1}{\epsilon^2} \sum_{\alpha\beta}\int_0^t d\tau\bigl\{\mc{B}_{\a\b}(\tau)\,[A_{\alpha}(t),A_{\beta}(t-\tau)\tilde{\rho}(t-\tau)]+\text{h.c.}\bigr\}\ .
\end{equation}

Let us transform this to the Schr\"{o}dinger picture via Eq.~\eqref{eq:513}:
\begin{align}
\label{eq:599}
&\frac{d\r}{dt}=-i[H_S,\r(t)]+g^2\sum_{\alpha\beta}\frac{1}{\epsilon^2} {\int_0^t}{d\tau}U_{S}(t)\left(\left[A_{\beta}(t-\tau)U_S^\dgr(t-\tau)\r(t-\tau)U_S(t-\tau)A_\a(t)-\right.\right.\notag \\
&\quad \left.\left. A_\a(t)A_{\beta}(t-\tau)U_S^\dgr(t-\tau)\r(t-\tau)U_S(t-\tau)\right]\mc{B}_{\a\b}(\tau)+\text{h.c.}\right)U_S^\dgr(t)\ .
\end{align}
We can perform a change of variables to $ \tau={\epsilon^{2}}{{\tau}^\prime} $, and take the limit $ \epsilon \rightarrow 0$, so that $\tau\to 0$. Then, recalling Eq.~\eqref{eq:468A}, the various terms in Eq.~\eqref{eq:599} transform as follows:
\bes
\label{eq:600}
\begin{align}
& U_S(t)A_{\beta}(t-\tau)U_S^\dgr(t-\tau)\r(t-\tau)U_S(t-\tau)A_\a(t)U_S^\dgr(t) = U^\dgr_S(-\tau)A_{\beta}\r(t-\tau)U^\dgr_S(\tau)A_\a  \notag \\
&\qquad \to A_{\beta}\r(t)A_\a  = A_\b \r(t) A_\a^\dgr \\
& U_{S}(t)A_\a(t)A_{\beta}(t-\tau)U_S^\dgr(t-\tau)\r(t-\tau)U_S(t-\tau)U_S^\dgr(t) = A_\a U_S^\dgr(-\tau)A_{\beta}\r(t-\tau)U_S^\dgr(\tau) \notag \\
&\qquad \to A_\a A_{\beta}\r(t) = A_\a^\dgr A_\b \r(t) \\
& \quad \mc{B}_{\a\b}(\tau) = \Tr\left( e^{i\e^2\tau'H_B/\e^2}B_\a e^{-i\e^2\tau'H_B/\e^2}B_\b \r_B \right) = \Tr\left(U_B^\dgr(\tau') B_\a U_B(\tau') B_\b\r_B\right) = \mc{B}_{\a\b}(\tau')\\
&\quad \frac{1}{\epsilon^2}\int_0^t d\tau= \int_0^{t\epsilon^{-2}}d\tau' \to \int_0^\infty d\tau' \ .
\end{align}
\ees
Thus the $\e\to\infty$ strong coupling and bath limit,  is essentially a Markovian limit, as it allows us to extend the integration limit to $\infty$ and make $\r $ time-local. It also removes the time dependence from the $A_\a$ system operators.

Applying the transformations in Eq.~\eqref{eq:600} to Eq.~\eqref{eq:599} gives:
\begin{align}
\label{eq:601}
\frac{d\r}{dt}=-i[H_S,\r(t)]+g^2\sum_{\a\b}(A_\b\r(t)A_\a^\dgr-A_\a^\dgr A_\b\r(t))\int_0^\infty d\tau \mc{B}_{\a\b}(\tau )+\text{h.c.}\ .
\end{align}
Now recall Eqs.~\eqref{Gamma} and \eqref{gammaS}, which tell us that 
\beq
\int_0^\infty d\tau \mc{B}_{\alpha\beta}(\tau) = \Gamma_{\alpha\beta}(0)= \frac{1}{2}\gamma_{\alpha\beta}(0)+iS_{\alpha\beta}(0)\ .
\eeq
Thus
\bes
\label{eqt:SCL}
\begin{align}
\frac{d\r}{dt}&=-i[H_S+H_\text{LS}, \rho(t)]+g^2\sum_{\a\b}{\gamma_{\alpha\beta}}(0)\left(A_\b{\rho(t)}A_\a^\dgr-\frac{1}{2} \{A_\a^\dgr{A_\beta},{\rho}(t)\}\right)\\
H_{\text{LS}} &=\sum_{\a\b} S_{\a\b}(0)A_\a^\dgr A_\b
 \ , \qquad \g_{\a\b}(0) = \int_{-\infty}^\infty d\tau \mc{B}_{\alpha\beta}(\tau)\ .
\end{align}
\ees
Note that the SCL keeps only the $\o=0$ component out of all the Bohr frequencies, so it is clearly a more ``extreme" limit than the WCL. We can understand this as a consequence of the fact that the SCL is designed to accelerate the internal evolution of the bath by rescaling the bath Hamiltonian via $H_B \mapsto H_B/\e^2$; this means that all system frequencies are effectively zero relative to the very high effective bath evolution frequency, and only the static component $\o=0$ survives.



\subsection{Examples contrasting the WCL and SCL}

Let us consider a single qubit. 

\subsubsection{Phase damping when $[H_S,H_{SB}]= 0$}
We assume that
\begin{align} 
\label{eq:Hqubit}
H_S = -\frac{1}{2}\omega_z \sigma^z \ , \qquad
H_{SB} =  g\sigma^z \otimes B .
\end{align}
For the interaction Hamiltonian in Eq.~\eqref{eq:Hqubit}, there is only a single system operator $A_z  = \sigma^z = \ketb{0}{0}-\ketb{1}{1}$.  The eigenstates are $\ket{\eps_0}=\ket{0}$ and $\ket{\eps_1}=\ket{1}$. Considering the RWA-LE (the weak coupling limit case) Eq.~\eqref{Lindbladw} 
and $\bra{\eps_a} A_z \ket{\eps_b} \propto \delta_{ab}$, there is only a single Lindblad operator that is non-zero:
\beq 
\label{eqt:LindbladOp_Z}
A_z({0}) = \sigma^z\ ,
\eeq
as given by Eq.~\eqref{eq:A_oma}.  This follows since $[H_S,H_{SB}]=0$.  Therefore, the RWA-LE [Eq.~\eqref{LindbladSch}] takes the simple form
\begin{align}
\frac{d}{d t} \rho(t) &= -i \left[ H_S, \rho(t) \right] + g^2\gamma(0) \left( \sigma^z \rho(t) (\sigma^z)^\dgr - \frac{1}{2} \left\{ (\sigma^z)^{\dagger} \sigma^z , \rho(t) \right\} \right) \ ,
\label{eq:624}
\end{align}
where we have also used the fact that $H_{\textrm{LS}} \propto I$. This form is the same as what is predicted in the SCL, since only the $\o=0$ component appears. We have encountered this equation several times before [e.g., Eq.~\eqref{eq:267}]. After expanding $\rho(t) = \sum_{i,j\in\{0,1\}}\rho_{ij}\ket{i}\bra{j}$, and taking matrix elements in the computational basis (which here is equivalent to the energy eigenbasis) we obtain:
\bes
\begin{align}
\rho_{0 0}(t) & = \rho_{0 0}(0) = 1-  \rho_{1 1}(t)\ , \\
\rho_{0 1}(t) & = \exp(- t/T_2^{(c)} + i \omega_z t) \rho_{0 1}(0) = \rho_{1 0}^*(t) \ ,
\end{align}
\ees
%
%
%
where 
\beq 
\label{eqt:T2Z}
T_2^{(c)} = \frac{1}{2 g^2\gamma(0)} \ ,
\eeq
where the `$c$' superscript denotes the computational basis (we shall shortly see a second $T_2$ associated with the energy eigenbasis). This is the familiar phase damping channel, where only the off-diagonals elements (transverse magnetization) decay with a characteristic timescale $T_2^{(c)}$. The stronger the coupling to the bath $g$, the shorter the qubit coherence time.  Note that the qubit energy gap $\omega_z$ plays no role in the result for $T_2^{(c)}$, and $T_2^{(c)}$ here is entirely determined by the spectrum of the bath correlation function at zero frequency.  In this example there is no thermal relaxation (the $T_1$ time is infinite), since the population of the energy states remains fixed, as a consequence of $[H_S,H_{SB}]=0$.

\subsubsection{Phase damping when $[H_S,H_{SB}]\neq 0$}
Let us now replace the system Hamiltonian so that $[H_S,H_{SB}]\neq 0$. Specifically, consider 
\begin{align} 
\label{eq:Hqubit2}
H_S = -\frac{1}{2}\omega_x \sigma^x\ , \qquad 
H_{SB} =  g\sigma^z \otimes B\ .
\end{align}
We shall see that there is a sharp contrast between the WCL and SCL, with the WCL resulting in decoherence in the energy eigenbasis, while the SCL results in decoherence in the computational basis, just as in the previous subsection, when $H_S$ and $H_{SB}$ were commuting.

\paragraph{WCL}
\label{sec:WCL1}
The energy eigenstates of $H_S$ are $\ket{\eps_0} = \ket{+}$ with eigenvalue $-\frac{1}{2} \omega_x$(ground state) and $\ket{\eps_1}=\ket{-}$ with eigenvalues $\frac{1}{2} \omega_x$ (excited state), where $\ket{\pm} = \frac{1}{\sqrt{2}} \left( \ket{0} \pm \ket{1} \right)$. Therefore the possible Bohr frequencies are $\o\in\{0,\pm\o_x\}$. Since $\sigma^z \ket{\pm} =  \ket{\mp}$, we find $A_{z}(0)=0$, and the non-zero Lindblad operators are:
\beq \label{eqt:LindbladOp_X}
A_{z}(\omega_x) = \ketb{+}{+}\s^z\ketb{-}{-} = \ketb{+}{-} \ , \quad A_{z}(-\omega_x) = \ketb{-}{-}\s^z\ketb{+}{+} = \ketb{-}{+} \ .
\eeq
Note that we now have a non-trivial Lamb shift term:
\beq
H_{\textrm{LS}} = S(\omega_x) \ketb{-}{-} + S(-\omega_x) \ketb{+}{+} \ .
\eeq
Now we need to compute the terms in the RWA-LE [Eq.~\eqref{LindbladSch}]. It is most convenient to do so in the \emph{energy eigenbasis}, i.e., the basis that diagonalizes $H_S$, namely the $\{ \ket{\pm}\}$ basis we used above.
Note that:
\bes
\label{eq:630}
\begin{align}
\label{eq:630a}
H_S+H_{\textrm{LS}} &= \Omega_+\ketb{+}{+} + \Omega_-\ketb{-}{-} \ , \qquad \Omega_{\pm} = \frac{1}{2}\o_x+S(\pm \o_x)\\
\label{eq:630b}
g^2\sum_\omega\sum_{\alpha\beta}\gamma_{\alpha\beta}(\omega) \cdots  &=g^2\left[\g(\o_x) \left(  \ketb{+}{-}\r\ketb{-}{+}-\frac{1}{2}(\ketb{-}{-}\r+\r\ketb{-}{-}) \right)
 + \g(-\o_x) \left( \ketb{-}{+}\r\ketb{+}{-}-\frac{1}{2}(\ketb{+}{+}\r+\r\ketb{+}{+})\right)\right]\ .
\end{align}
\ees
Writing $\rho(t) = \sum_{i,j\in\{+,-\}}\rho_{ij}\ketb{i}{j}$, and taking matrix elements of 
Eq.~\eqref{eq:630}, we find:
\beq
\bra{-}\dot{\r}\ket{-} = \dot{\r}_{--} = -i\bra{-}(H_S+H_{\textrm{LS}})\r-\r(H_S+H_{\textrm{LS}})\ket{-}
-g^2\g(\o_x)\r_{--}+g^2\g(-\o_x)\r_{++}\ ,
\eeq
and the first (Hamiltonian) term is easily seen to vanish. Also, note that $ \Tr[\r(t)]=\rho_{++}(t)+ \rho_{--}=1$ implies that $\dot{\r}_{--} = -\dot{\r}_{++}$. After a similar calculation for the off-diagonal components, we find that the Lindblad equation for the density matrix components is:
\bes 
\begin{align}
\label{eq:632a}
-\frac{d}{dt} \rho_{++}&= \frac{d}{dt} \rho_{--}  = - g^2\gamma(\omega_x) \rho_{--}(t) + g^2\gamma(-\omega_x) \rho_{++}(t)   \\
\label{eq:632b}
\frac{d}{dt} \rho_{+-}^*(t)  = \frac{d}{dt} \rho_{-+}(t)  & = \Omega \rho_{-+}(t)\ , \qquad \Omega\equiv   - i \left[ \O(-\omega_x) - \O(\o_x) \right]  
- \frac{1}{2} g^2\left[\gamma(\omega_x) +\g(-\o_x) \right]  \ .
\end{align}
\ees
The solution for the off-diagonal elements [Eq.~\eqref{eq:632b}] is immediate: $ \rho_{-+}(t)  =  \rho_{ - +}(0) e^{-i\Omega t}$, i.e.:
\beq
\label{eq:633}
\rho^*_{+-}(t) = \rho_{-+}(t)  =  \rho_{ - +}(0) e^{-i \omega'_x t} e^{-t / T_2^{(e)}} \ ,
\eeq
where
\beq 
\label{eqt:T2X}
T_2^{(e)} = \frac{2}{g^2\gamma(\omega_x) \left( 1 + e^{-\b \omega_x} \right)}\ , \quad \o_x'=\o_x+ S(\omega_x) - S(-\omega_x)\ ,
\eeq
where the `$e$' superscript denotes the energy eigenbasis (as opposed to the computational basis) , and where we used the KMS condition [Eq.~\eqref{eq:KMS}] to write $\gamma(\omega_x) +\g(-\o_x) = \gamma(\omega_x)(1+e^{-\b\o_x})$.  Contrast this result with Eq.~\eqref{eqt:T2Z}, where the dephasing rate depended only on $\g(0)$ and did not exhibit a temperature dependence.

To solve for the populations, let us substitute $\rho_{++}=1- \rho_{--}$ into Eq.~\eqref{eq:632a}, so that we can write $\dot{\r}_{--} = a-b\r_{--}$, where $a=g^2\g(-\o_x)$ and $b=g^2[\g(-\o_x)+\g(\o_x)]$. As a solution let us try the ansatz $\r_{--}(t) = c e^{-t/T_1^{(e)}}+d$, so that the initial condition yields $c=\r_{--}(0)-d$. Then
\beq
\dot{\r}_{--} = -\frac{c}{T_1^{(e)}} e^{-t/T_1^{(e)}}= a-b\left(c e^{-t/T_1^{(e)}}+d\right) = a-bc e^{-t/T_1^{(e)}}-bd\ ,
\eeq
which tells us that $d=a/b = \g(-\o_x)/[\g(-\o_x)+\g(\o_x)]$ and $T_1^{(e)}=1/b$, i.e.:
\beq
T_1^{(e)} = \frac{1}{2}T_2^{(e)} \ .
\eeq
Moreover, recall that the Gibbs state is 
\beq
\r_G = \frac{1}{Z}e^{-\b H_S} = \frac{1}{Z}e^{\frac{1}{2}\b \o_x \s^x} = p_G(-) \ketb{+}{+} + p_G(+) \ketb{-}{-}\ ,
\eeq
where
\beq
p_G(\pm) = \frac{e^{\pm\b \omega_x/ 2}}{Z}\ , \qquad Z = \Tr(\r_G) = p_G(-)+p_G(+) = 2\cosh(\beta \omega_x/2)\ .
\eeq
Using this and the KMS condition, we have
\beq
d = \frac{\g(-\o_x)}{\g(-\o_x)(1+e^{\b\o_x})} = P_G(-)\ .
\eeq
Using our ansatz we thus find for the populations, finally: 
\beq
\label{eqt:decoherence2}
1-\rho_{++}(t) = \rho_{--}(t)  = p_G(-) + \left[ \rho_{--}(0) -  p_G(-)  \right] e^{- t / T_1^{(e)}} \ .
\eeq

We note several important facts about these results:
\begin{itemize}
\item The decoherence occurs in the energy eigenbasis, i.e., the off-diagonal components \emph{in the energy eigenbasis} (not in the computational basis) decay exponentially to zero with a timescale determined by $T_2^{(e)}$.
\item The entire contribution of the Lamb shift is in shifting the rotation rate of the off-diagonal elements from $\o_x$ to $\omega_x+ S(\omega_x) - S(-\omega_x)$ [Eq.~\eqref{eqt:T2X}].
\item The populations ($\rho_{++},\rho_{--}$) approach the Gibbs state associated with the Hamiltonian $H_S$ within a timescale determined by $T_1^{(e)}$ [Eq.~\eqref{eqt:decoherence2}]. In particular, for the ground state population: $\rho_{++} \to p_G(+) = \frac{e^{\b \omega_x/ 2}}{Z}$.
\item The two timescales ($T_1^{(e)},T_2^{(e)}$) are strictly related (relaxation is twice as fast as dephasing) and have a non-trivial dependence on the energy gap $\omega_x$.
\item Even in the zero temperature limit ($\b \to \infty$), the dephasing and relaxation times can be non-vanishing: $T_1^{(e)} =T_2^{(e)}/2= \frac{1}{g^2\gamma(\omega_x)} >0$.
\end{itemize}

\paragraph{SCL}
\label{sec:SCL1}
Let us contrast this with what happens in the SCL case, Eq.~\eqref{eqt:SCL}. This simply becomes Eq.~\eqref{eq:624}, with $H_S = -\frac{1}{2}\o_x \s^x$, i.e.: 
\begin{align}
\dot{\rho} &= i\frac{\o_x}{2}\left[ \s^x, \r \right] + g^2\gamma(0) \left( \sigma^z \rho \sigma^z - \rho  \right) \ ,
\label{eq:641}
\end{align}
In this case the evolution of the density matrix elements is most conveniently solved for in the computational basis. Taking matrix elements  in this basis yields:
\bes
\begin{align}
\frac{d}{dt} \rho_{0 0} & = -i \frac{1}{2} \omega_x \left(  \rho_{1 0} - \rho_{0 1} \right) \ , \\
\frac{d}{dt} \rho_{1 1} & = -i \frac{1}{2} \omega_x \left(  \rho_{0 1} -  \rho_{1 0} \right) \ , \\
\frac{d}{dt} \rho_{0 1} &=  i \frac{1}{2} \omega_x \left( \rho_{1 1} - \rho_{0 0} \right) - 2 g^2 \gamma(0) \rho_{0 1} \ , \\
\frac{d}{dt} \rho_{1 0} &=  i \frac{1}{2} \omega_x \left(\rho_{0 0} -  \rho_{1 1}  \right) - 2 g^2 \gamma(0) \rho_{1 0 } \ .
\end{align}
\ees
This set of equations can be solved analytically for arbitrary initial conditions, but for brevity, let us consider the case where the density matrix is initially in a uniform computational basis superposition (the ground state of the previous WCL case), i.e., $\rho(0) = \ketb{+}{+}$.  The solution is then given by:
\beq \label{eqt:SCL_sol}
\rho_{0 0} = \rho_{1 1}= \frac{1}{2} \ , \quad \rho_{0 1} = \rho_{1 0} =  \frac{1}{2} e^{- t/T_2^{(c)} } \ .
\eeq
In this case, the off-diagonal elements \emph{in the computational basis} decay exponentially with a timescale determined by $T_2^{(c)}$ [Eq.~\eqref{eqt:T2Z}], so we have decoherence in the computational basis regardless of the fact that the system Hamiltonian does not commute with $H_{SB}$. The predictions made under the WCL and SCL assumptions are thus starkly different.

\begin{figure}[t]
	\includegraphics[width=0.4\linewidth]{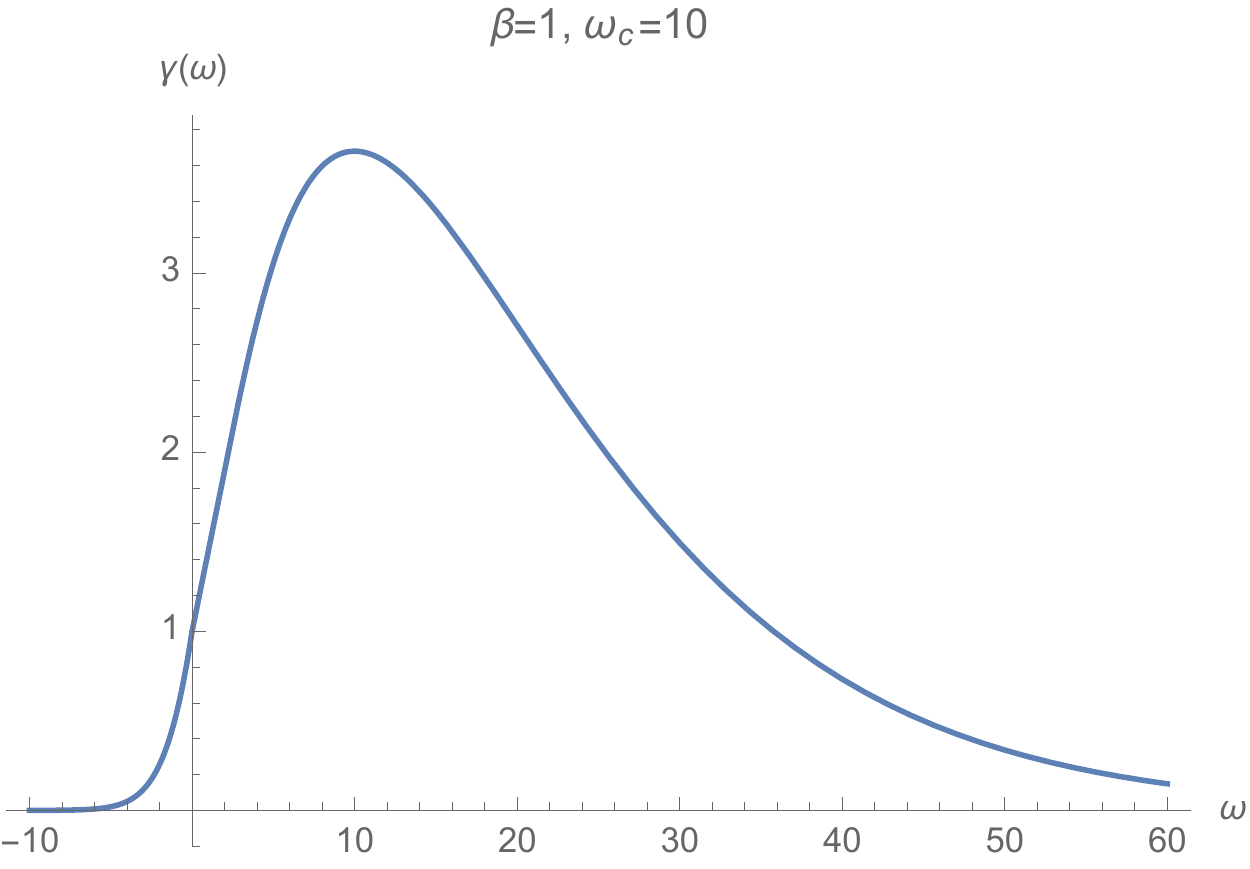}
	\includegraphics[width=0.4\linewidth]{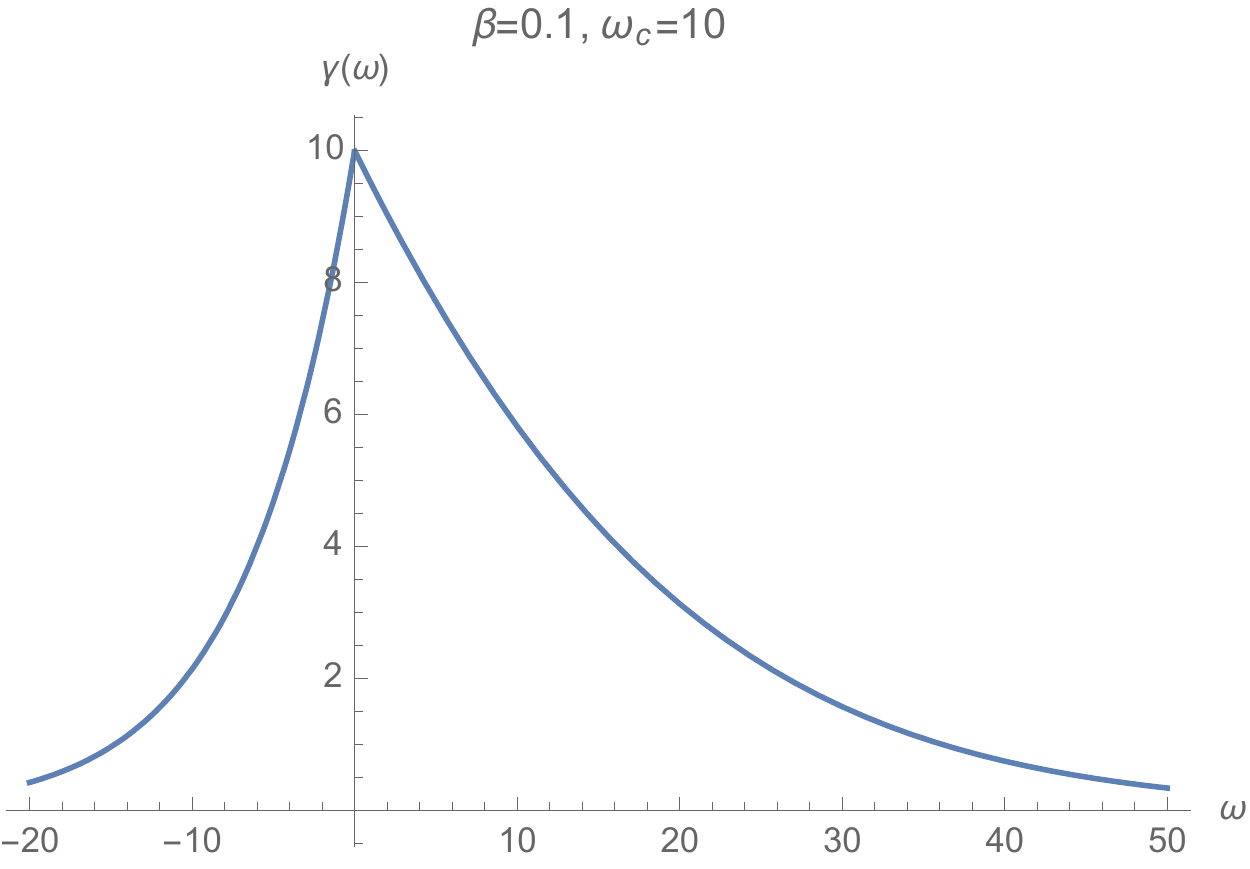}
	\caption{The Lindblad rate $\g(\o)$ for an Ohmic spectral density [Eq.~\eqref{eq:618c}], for $\eta=1/(2\pi),\o_c=20$, and low temperature $\b=10$ (left) or high temperature $\b=0.1$ (right). It can be checked numerically that the peak is always at $\o\approx\o_c$ for sufficiently large $\b$, or at $\o=0$ for sufficiently small $\b$. Note that $\g(0)=1/\b$. }
	\label{fig:Ohmic}
\end{figure}

\paragraph{Results for a bosonic bath}
So far we didn't specify the bath, and hence $\g(\o)$ was left unspecified as well. Let us now assume 
that the bath is bosonic:
\beq
H_B = \sum_k\o_{k} b^\dgr_{k} b_{k}\ ,
\eeq 
where $b_{k}$ is the annihilation operator associated with bosonic mode $k$, and the system-bath interaction is
\beq
H_{SB} = g A\ox B\ , \qquad A = \s^z \ , \quad B = \sum_{k} (g_{k}/g) (b_{k}+b_{k}^\dgr)\ .
\eeq

There is only a single bath correlation function, because there is only a single bath operator $B$. For a bath in a Gibbs state at inverse temperature $\b$ it can be shown that the bath correlation function in this case is~\cite[Appendix H]{ABLZ:12-SI}:
\beq
\langle B(t)B \rangle_B = \sum_{k} \frac{(g_k/g)^2}{1-e^{-\beta\omega_k}} \left(e^{-i \omega_k t} + e^{i\omega_k t-\beta \omega_k}\right) \ .
\label{eq:646}
\eeq
Let us introduce a spectral density $J(\omega) = \sum_k (g_k/g)^2 \d(\o-\o_k)$ via
\beq
\sum_{k} (g_k/g)^2 \mapsto \int_0^\infty d\omega J(\omega)\ ,
\eeq
and let us further assume that it is Ohmic:
\beq
J(\omega) = \eta \omega e^{-\omega/\omega_c}\ ,
\eeq
where $\omega_c$ is a cut-off frequency and $\eta$ is a dimensionless parameter.\footnote{If $J(\omega) = \eta \omega^{\zeta} e^{-\omega/\omega_c}$ then the $\zeta>1$ case is called {super-Ohmic}, and the $0<\zeta<1$ case is called {sub-Ohmic}.}

With this model of the bath spectral density function, we can compute the rate $\gamma(\omega)$ as the Fourier transform of the bath correlation function,
\bes
\begin{align}
\gamma(\omega) &= 
\int_{-\infty}^\infty dt e^{i\omega t}\langle B(t)B(0) \rangle = \int_{-\infty}^\infty dt e^{i\omega t} \int_0^\infty d\omega^\prime \frac{J(\omega^\prime)}{1-e^{-\beta \omega^\prime}} \left(e^{-i\omega^\prime t}+e^{i\omega^\prime t-\beta \omega^\prime}\right) \\
&= \frac{2\pi\eta \abs{\omega} e^{-\abs{\omega}/\omega_c}}{1-e^{\beta\abs{\omega}}}\left(\Theta(\omega)+e^{-\beta\abs{\omega}}\Theta(-\omega)\right) \\
\label{eq:618c}
&= 
\frac{2\pi\eta \omega e^{-\abs{\omega}/\omega_c}}{1-e^{-\beta\omega}}\ .
\end{align}
\ees
where $\Theta(x)$ is the Heaviside step function ($0$ if $x<0$ or $1$ if $x>0$).
Note that the KMS condition is satisfied. The result is shown in Fig.~\ref{fig:Ohmic}.
\beq
\gamma(-\omega) =  \frac{2\pi\eta (-\omega) e^{-\abs{\omega}/\omega_c}}{e^{\beta \omega} (e^{-\beta \omega}-1)} = e^{-\beta\omega}\gamma(\omega)\ .
\eeq
In the limit of large $\b\o$ we can neglect $e^{-\beta\omega}$ in the denominator; differentiating we then get $2\pi\eta e^{-\abs{\omega}/\omega_c}(1-\o/\o_c)$, so that the maximum is at $\o=\o_c$.
Also note that
\beq
\lim_{\omega\to 0} \gamma(\omega) = \frac{2\pi\eta}{\beta} = 2\pi\eta k_B T\ ,
\eeq
which tells that the transition rate in the limit of small gaps is linear in the temperature. This means that the SCL result for the dephasing rate becomes 
\beq
1/T_2^{(c)} = 2 g^2\gamma(0) = 4\pi g^2\eta k_B T\ ,
\eeq
meaning that the dephasing rate increasing in proportion to the temperature and the square of the coupling strength.

For the WCL, recall that we found that the dephasing and relaxation rates in the energy eigenbasis are $1/T_2^{(e)} = 1/(2T_1^{(e)}) = \frac{1}{2}[\gamma(\omega_x) + \gamma(-\omega_x)]$ [Eq.~\eqref{eqt:T2X}]. Considering Fig.~\ref{fig:Ohmic}, we see that $\gamma(-\omega_x)\ll\gamma(\omega_x)$, and that both rates are highly suppressed when $\o_x \gg \o_c$. For large $\b$ they are maximized when $\o_x\approx\o_c$ and become small for $\o_x < \o_c$, but are lower-bounded by $\g(0)=1/\b$.

\ignore{

\subsection{Adiabatic Perturbation Theory}
A more formal derivation of adiabatic evolution is given by adiabatic Perturbation Theory where we can expand time evolution
operator in terms of adiabatic evolution operator and some higher order corrections,
\begin{equation}
U_S(t)=U_{ad}(t)+Q(t)+........
\end{equation}
where
\begin{equation}
U_{ad}(t,0)\ket{\epsilon_a(0)}=e^{-i(\int{(\epsilon_a(\tau)+\phi_a(\tau))d\tau}}\ket{\epsilon_a(t)}
\end{equation}
The condition of adiabatic evolution is given by,
\begin{equation}
Q(t) \sim {\frac{\braket{{\epsilon_a}|\partial_S{H}|{\epsilon_b}}}{t_f{(\epsilon_b-\epsilon_a)^2}}} \ll 1
\end{equation}
The problem is now how to write Lindblad equation for adiabatic evolution for time dependent hamiltonian. It turns out that it is
very similar to what we have done for time independent case. We change the form of $ U_S(t)  $ as follows,
\begin{equation}
U_S(t)=exp(-iH_St) \rightarrow \tau_{+} exp(-i{\int_0^t}d{\tau}{H_S}(\tau))
\end{equation}
Then we make assumption that we can replace $ U_S(t) $ with $ U_{ad}(t) $
\begin{equation}
U_S(t) \rightarrow U_{ad}(t)
\end{equation}
In addition, we replace $U_S(t-\tau) $ with $ e^{iH_S(t)\tau}U_{ad}(t) $
\begin{equation}
U_S(t-\tau) \rightarrow e^{iH_S(t)\tau}U_{ad}(t)
\end{equation}
We can construct Lindblad equation which is exactly same as time independent except we replace $ H_S+H_LS $ for that
particular time. Since the Lindblad operators depend on basis, we should also use the basis at that particular time. So the
Lindblad equation for time dependent adiabatic evolution is,
\begin{equation}
\frac{d}{dt}{\rho_S}=-i[H_S+H_{\mathrm{LS}},\rho(t)]+\sum_{\alpha,\beta}\sum_{\omega}\gamma(\omega)
(L_{\beta}\rho(t){L_{\alpha}}^\dagger-\frac{1}{2}\{{L_{\alpha}}^\dagger L_{\beta},\rho_S(t)\})
\e
}

\subsection{Example: collective \textit{vs} independent phase damping}
\label{sec:collec-indep-PD}

To close our discussion of the RWA-LE, let us revisit the phase damping model we considered in Sec.~\ref{sec:spin-boson-1q}, but this time for $n$ qubits. Thus the system Hamiltonian is
\begin{equation}
H_S = \sum_{\a=1}^n \varepsilon_\a Z_\a\ .
\label{eq:612}
\end{equation}
The eigenstates $\{\ket{\e_a}\}_{a=0}^{2^n-1}$ are just the computational basis states, i.e., all length-$n$ bit strings.

We will consider two cases: collective and independent phase damping.

\subsubsection{The collective case}

In the collective phase damping case there is a qubit permutation symmetry and the qubits are all coupled to the same bosonic modes. 
Thus the system-bath interaction is
\beq
H_{SB} = \sum_{k,\a} g_{k} Z_\a \ox(b_{k}+b_{k}^\dgr) = gA\ox B\ , \qquad A = \sum_{\a=1}^n Z_\a \ , \quad B = \sum_{k} (g_{k}/g) (b_{k}+b_{k}^\dgr)\ .
\eeq
Since there is only one bath operator, the analysis starting from Eq.~\eqref{eq:646} holds without any change.

\subsubsection{The independent case}

Here each qubit is coupled to a separate bosonic bath. Thus the bath Hamiltonian is
\beq
H_B = \sum_{\a=1}^n H_{B,\a}\ , \qquad H_{B,\a} = \sum_k\o_{k,\a} b^\dgr_{k,\a} b_{k,\a}\ ,
\eeq 
where $b_{k,\a}$ is the annihilation operator associated with bosonic mode $k$ and qubit $\a$, and the system-bath interaction is
\beq
H_{SB} = \sum_{k,\a} g_{k,\a} Z_\a \ox(b_{k,\a}+b_{k,\a}^\dgr) = g\sum_{\a=1}^n A_\a \ox B_\a\ , \qquad A_\a = Z_\a \ , \quad B_\a = \sum_k (g_{k,\a}/g) (b_{k,\a}+b_{k,\a}^\dgr)\ .
\eeq
The bath Gibbs state factors since operators belonging to different qubit indices commute:
\beq
\r_B = \frac{1}{Z}e^{-\b H_B} = \bigotimes_{\a} \r_{B,\a} \ , \qquad \r_{B,\a} = \frac{1}{Z_\a}e^{-\b H_{B,\a}} \ ,
\eeq
where $Z_\a = \Tr e^{-\b H_{B,\a}}$.
In light of this case, consider the bath correlation functions, and recall that $\Tr(A\ox B)=\Tr A \times \Tr B$ for any pair of operators $A$ and $B$:
\bes
\begin{align}
\langle B_\alpha(t) B_\beta \rangle_B &= \Tr \left(\rho_{B}e^{iH_{B,\alpha}t}B_\alpha e^{-iH_{B,\alpha}t} B_\beta \right) \\
&\stackrel{\alpha\neq\beta}{=}\Tr\left(\rho_{B,\alpha}e^{iH_{B,\alpha}t}B_\alpha e^{-iH_{B,\alpha}t}\right)\Tr\left(\rho_{B,\beta}B_\beta\right) \\
&=0 \ ,
\end{align}
\ees
where the last equality follows since (as in Sec.~\ref{sec:cumulant1}) we can always ensure that $\Tr(\rho_B B)=0$. If $\alpha=\beta$, we recover the expression we obtained in the collective case but with the bath parameters corresponding to the $\alpha$-th bath. Thus,
\beq
\langle B_\alpha(t) B_\beta \rangle_B = \delta_{\alpha\beta} \langle B_\alpha(t) B_\alpha \rangle_B\ .
\eeq
This, in turn, implies that 
\beq
\gamma_{\alpha\beta}(\omega) =  \int_{-\infty}^\infty dt e^{i\omega t}\langle B_\a(t)B_\b \rangle_B = \delta_{\alpha\beta} \gamma_{\alpha\alpha}(\omega)\ .
\eeq
If we again assume an Ohmic spectral density, now of the form 
\beq
J_\a(\omega) = \eta_\a \omega e^{-\omega/\omega_{c,\a}}\ ,
\eeq
then the same calculation as in the collective case yields
\beq
\g_{\a\a}(\o) = \frac{2\pi\eta_\a \omega e^{-\abs{\omega}/\omega_{c,\a}}}{1-e^{-\beta\omega}}\ ,
\eeq
where we have assumed that all baths are thermally equilibrated at the same inverse temperature $\b$.

\subsubsection{Contrasting the dephasing rates in the collective and independent cases}

We can now compare the predictions of the collective and independent dephasing models. Consider the time evolution of the density matrix elements in the energy eigenbasis, i.e., $\dot{\rho}_{ab}$. Using the RWA-LE we have:
\beq
\dot{\rho}_{ab} = \braket{\epsilon_a|\dot{\rho}|\epsilon_b} = \bra{\e_a}\sum_{\a\b,\o}\g_{\a\b}(\o)\left(A_\b(\o)\r A_\a^\dgr(\o)-\frac{1}{2}\left\{A_\a^\dgr(\o)A_\b(\o),\r\right\}\right)\ket{\e_b} \ .
\eeq
Evaluating this yields, after some algebra:
\bes
\begin{align}
\label{eq:independentCoherenceME}
\text{independent:} \qquad \dot{\rho}_{ab}&= -  \rho_{ab}/\tau_{ab}^{\text{ind}}\ , \quad 1/\tau_{ab}^{\text{ind}}=\frac{1}{2}g^2\sum_{\alpha=1}^n \gamma_{\alpha\alpha}(0) (A_{aa,\alpha} - A_{bb,\alpha})^2\\
\label{eq:collectiveCoherenceME}
\text{collective:} \qquad \dot{\rho}_{ab} &= - \rho_{ab}/\tau_{ab}^{\text{col}}\ , \quad 1/\tau_{ab}^{\text{col}} = \frac{1}{2} g^2\gamma(0) (A_{aa} - A_{bb})^2 \ ,
\end{align}
\ees
where we used the explicit form of the eigenstates of the system Hamiltonian in Eq.~\eqref{eq:612}.
We see that, as expected from single-qubit dephasing case (recall, e.g., Sec.~\ref{app:2level}) that there is no change in the populations, i.e., $\dot{\rho}_{aa}=0$. The solution to these decoupled equations for the off-diagonal elements is of the form $\rho_{ab}(t)=\rho_{ab}(0)e^{-t/\tau_{ab}}$, where $\tau_{ab}$ is the dephasing time. 

Let us compare the scaling of this time with the number of qubits $n$ in the independent and collective dephasing settings.

\begin{itemize}

\item Independent-dephasing:
\beq
A_{aa,\alpha} = \braket{\epsilon_a|Z_\alpha |\epsilon_a} =\pm 1\ .
\eeq
Thus, $(A_{aa,\alpha} - A_{bb,\alpha})^2 = 4$ for $a\neq b$.

\item Collective dephasing:
\beq
A_{aa} = \braket{\epsilon_a|\sum_{\alpha=1}^n Z_\alpha |\epsilon_a} \in \{-n,-n+2,\dots ,n-2,n\}
\eeq
Thus $\max (A_{aa} - A_{bb})^2 = 4n^2$  and $\min(A_{aa} - A_{bb})^2=0$ for even $n$, or $\min(A_{aa} - A_{bb})^2=4$ for odd $n$. 

\end{itemize}
There is thus a substantial difference between the two models. In the independent case, using Eq.~\eqref{eq:independentCoherenceME}, we find $1/\tau_{ab}^\text{ind} = O(n)$, or simply $1/\tau_{ab}^\text{ind} =2n\g(0)$ if all rates $\g_{\a\a}(0)$ are equal [to $\g(0)$]. In the collective case, using Eq.~\eqref{eq:collectiveCoherenceME}, we have a range of dephasing rates, varying from ``superdecoherent" $1/\tau_{ab}^{\text{col}} = 2n^2\g(0)$, to ``decoherence-free" $1/\tau_{ab}^{\text{col}} = 0$ for even $n$ or to ``subdecoherent" $1/\tau_{ab}^{\text{col}} = 2\g(0)$ for odd $n$. The decoherence-free case is of particular interest in quantum computing, and arises for the zero-eigenvalue system eigenstates of the collective dephasing operator $\sum_{\a=1}^n$, i.e., states $\ket{\e_a}$ that have an equal number of $0$'s and $1$'s in the computational basis. Such states form a conserved subspace under the action of the RWA-LE, and hence are called a decoherence-free subspace \cite{Zanardi:97c,Lidar:1998fk,Lidar:2003fk} (recall also our discussion of non-equilibration in Sec.~\ref{sec:return}). At the other extreme, the states in the superdecoherent subspace dephase quadratically faster than in the independent dephasing case.



\subsection{Bounding the Markov approximation error}
\label{sec:Markov-approx-err-bound}

Earlier we asserted that it is permissible to go from Eq.~\eqref{integEq2} to Eq.~\eqref{integEq3}. Our goal is now to prove this, and in particular to derive the associated error estimate, $O(g^4 \tau_B^3)$.

Consider just one of the four (two due to the commutator, times two due to the h.c.) terms in Eqs.~\eqref{integEq2}, and its Markov approximation [as in Eq.~\eqref{integEq3}]:
\bes
\label{integEq2-a}
\begin{align}
\text{true} &\equiv g^2 \sum_{\a\b} \int_0^t d\tau \mc{B}_{\a\b}(\tau) A_{\alpha}(t)A_{\beta}(t-\tau)\tilde{\rho}(t-\tau)  \\ 
&\approx g^2 \sum_{\a\b} \int_0^\infty d\tau \mc{B}_{\a\b}(\tau) A_{\alpha}(t)A_{\beta}(t-\tau)\tilde{\rho}(t) \equiv \text{approx} \\
&=g^2 \sum_{\a\b} \underbrace{\int_0^\infty d\tau \mc{B}_{\a\b}(\tau) A_{\alpha}(t)A_{\beta}(t-\tau)\left(\tilde{\rho}(t)-\tilde{\rho}(t-\tau)\right)}_{\Delta_1} + g^2 \sum_{\a\b} \int_0^\infty d\tau \mc{B}_{\a\b}(\tau) A_{\alpha}(t)A_{\beta}(t-\tau)\tilde{\rho}(t-\tau) \\
&= g^2 \sum_{\a\b} \Delta_1 + \text{true}  + g^2 \sum_{\a\b} \underbrace{\int_t^\infty d\tau \mc{B}_{\a\b}(\tau) A_{\alpha}(t)A_{\beta}(t-\tau)\tilde{\rho}(t-\tau)}_{\Delta_2} \ .
\end{align}
\ees
Thus, $\text{approx} = \text{true} + \Delta_1 + \Delta_2$, or
\beq
\text{error} = \| \text{true} - \text{approx}\| = \|\Delta_1 + \Delta_2\| \leq \|\Delta_1\| + \|\Delta_2\| \ .
\label{eq:515}
\eeq
This shows that in order to bound the error it suffices to bound $\|\Delta_1\|$ and $\|\Delta_2\|$ in a convenient norm, which we will take to be the operator norm (see Appendix~\ref{app:norms} for a discussion of the various norms we use here and their properties). The other three terms in Eqs.~\eqref{integEq2} will obey exactly the same bound, since they are different from Eq.~\eqref{integEq2-a} only in the operator order, which is removed once we take the norm. Thus, it suffices to concern ourselves with the term in Eq.~\eqref{integEq2-a}.

\subsubsection{Bound on $\|\Delta_1\|$}

Using the triangle inequality and submultiplicativity of the operator norm $\|\cdot\|_\infty$:
\bes
\begin{align}
\|\Delta_1\|_\infty &\leq \int_0^\infty d\tau |\mc{B}_{\a\b}(\tau)| \on{A_{\alpha}(t)} \on{A_{\beta}(t-\tau)}\on{\tilde{\rho}(t)-\tilde{\rho}(t-\tau)} \\
&= \int_0^\infty d\tau |\mc{B}_{\a\b}(\tau)| \on{A_{\alpha}} \on{A_{\beta}}\on{\tilde{\rho}(t)-\tilde{\rho}(t-\tau)} \\
&\leq \eta^2 \int_0^\infty d\tau |\mc{B}_{\a\b}(\tau)| \on{\tilde{\rho}(t)-\tilde{\rho}(t-\tau)}  \ ,
\end{align}
\ees
where in the second line we used unitary invariance, and where
\beq
\eta \equiv \max_\a \on{A_{\alpha}} \ .
\eeq 
Now, by the mean value theorem of elementary calculus, there exists a point $t'\in [t-\tau,t]$ such that
\beq
\frac{ \tilde{\rho}(t)-\tilde{\rho}(t-\tau)}{\tau} = \dot{\tilde{\rho}}(t')\ .
\eeq
Therefore
\beq
\on{\tilde{\rho}(t)-\tilde{\rho}(t-\tau)} \leq \tau \sup_{t'\in [t-\tau,t]} \on{\dot{\tilde{\rho}}(t')} \ ,
\eeq
and
\bes
\begin{align}
\|\Delta_1\|_\infty \leq \eta^2 \int_0^\infty d\tau \ \tau |\mc{B}_{\a\b}(\tau)| \sup_{t'\in [t-\tau,t]} \on{\dot{\tilde{\rho}}(t')}  \ .
\end{align}
\ees
To bound $\on{\dot{\tilde{\rho}}(t')}$ we can return to Eq.~\eqref{integEq2}:
\bes
\begin{align}
\on{\dot{\tilde{\rho}}(t')} &\leq g^2\sum_{\alpha,\beta}\int_0^{t'} d\tau |\mc{B}_{\a\b}(\tau)|\on{[A_{\alpha}(t),A_{\beta}(t-\tau)\tilde{\rho}(t-\tau)]+\text{h.c.}} \\
& \leq 4 g^2\sum_{\alpha,\beta}\int_0^{t'} d\tau |\mc{B}_{\a\b}(\tau)|\on{A_{\alpha}(t)A_{\beta}(t-\tau)\tilde{\rho}(t-\tau)} \\
&\leq 4g^2\sum_{\alpha,\beta}\int_0^{t'} d\tau |\mc{B}_{\a\b}(\tau)|\on{A_{\alpha}} \on{A_{\beta}}\|\tilde{\rho}(t-\tau)\|_1 \\
& \leq 4(\eta g)^2 M \int_0^{t'} d\tau |\mc{B}_{\a\b}(\tau)|  \ ,
\end{align}
\ees
where in the second line we used the fact that all four terms in the first line (again, after the commutator and h.c.) have the same operator norm, and where $M \equiv \sum_{\a\b}1$ is the square of the number of summands in $H_{SB} = \sum_\a A_\a \ox B_\a$. Now, since 
\beq
 \sup_{t'\in [t-\tau,t]}  \int_0^{t'} d\tau |\mc{B}_{\a\b}(\tau)| \leq  \int_0^{\infty} d\tau |\mc{B}_{\a\b}(\tau)| \ ,
\eeq
we have
\beq
\|\Delta_1\|_\infty \leq 4M \eta^4 g^2 \int_0^\infty d\tau \ \tau |\mc{B}_{\a\b}(\tau)| \int_0^{\infty} d\tau |\mc{B}_{\a\b}(\tau)| \sim 4M \eta^4 g^2 \tau_B^3\ ,
\eeq
where we used Eq.~\eqref{eq:481} once with $n=1$, and once with $n=0$.

\subsubsection{Bound on $\|\Delta_2\|$}
Similarly, 
\bes
\begin{align}
\on{\Delta_2} &\leq \int_t^\infty d\tau |\mc{B}_{\a\b}(\tau)| \on{A_{\alpha}(t)}\on{A_{\beta}(t-\tau)}\|\tilde{\rho}(t-\tau)\|_1 \\
&\leq \eta^2 \int_t^\infty d\tau |\mc{B}_{\a\b}(\tau)| \ .
\end{align}
\ees
Intuitively, we know that $\int_t^\infty d\tau |\mc{B}_{\a\b}(\tau)|$ should be arbitrarily small as long as $t \gg \tau_B$, as we assumed in Eq.~\eqref{eq:479}, since the correlation function decays over a timescale of $\tau_B$. To formalize this, note that convergence of $\int_t^\infty d\tau |\mc{B}_{\a\b}(\tau)|$ is guaranteed if 
\beq
|\mc{B}_{\a\b}(\tau)| \sim (\tau_B/\tau)^x\ , \quad x>1\ .
\eeq
Thus, we will assume that the correlation function decays no more slowly than this power-law dependence [this is even slower than the subexponential decay we assumed to get Eq.~\eqref{eq:483}]. Under this assumption, we have 
\beq
\int_t^\infty d\tau |\mc{B}_{\a\b}(\tau)| \sim \int_t^\infty d\tau \left(\frac{\tau_B}{\tau}\right)^x = \left.  \frac{\tau_B^x}{(1-x) \tau^{x-1}}\right|_t^{\infty} = \frac{1}{x-1}\frac{\tau_B^x}{t^{x-1}}\ .
\eeq
Now, to use the assumption that $t \gg \tau_B$, let us write $t = c\tau_B$, where $c\gg 1$. Then:
\beq
\int_t^\infty d\tau |\mc{B}_{\a\b}(\tau)| \sim \frac{\tau_B}{(x-1)c^{x-1}} \ .
\eeq
Therefore, even with a power-law decaying correlation function, we have
\begin{align}
\on{\Delta_2} \lesssim \eta^2 \frac{\tau_B}{(x-1)c^{x-1}} \ ,
\end{align}
which can be made arbitrarily small by making $c = t/\tau_B$ large enough. 

\subsubsection{Putting the bounds together}
We have seen that $\|\Delta_1\|_\infty \lesssim 4M \eta^4 g^2 \tau_B^3 = O(g^2 \tau_B^3)$ and $\on{\Delta_2}$ can be made arbitrarily small. Thus the dominant contribution to the error comes from $\|\Delta_1\|_\infty$, which is the error due to replacing all the intermediate-time states (at $t-\tau$) by the state at the single time $t$. Moreover, we need $t\gg \tau_B$ in order to ensure that $\on{\Delta_2}$ can be neglected.

When accounting for the additional $g^2$ prefactor in Eq.~\eqref{integEq2-a} (as well as $\sum_{\a\b}$, which just gives rise to another factor of $M$), we finally have from Eq.~\eqref{eq:515}:
\beq
\text{error} = O(g^4 \tau_B^3) \ ,
\end{equation}
as claimed.


\subsection{The RWA-LE is the infinite coarse-graining time limit of the cumulant-LE}

The RWA we used in Sec.~\ref{sec:RWA} in order to derive the Lindblad equation leaves something to be desired. We simply dropped terms with different Bohr frequencies, without a rigorous mathematical justification. We will now show that \emph{the RWA-LE can be rigorously derived from the cumulant-LE, in the limit of an infinite coarse-graining timescale}. This shows that the cumulant-LE is truly more general than the (standard) RWA-LE.

\subsubsection{Quick summary}
For convenience, let us collect the main results of each of the two approaches. For simplicity we'll set $\lambda=g=1$ and also assume that $H_{SB} = A\ox B$ (not a sum), so that we can drop the $\alpha$ index from Eq.~\eqref{Lindbladw}. The RWA-LE is then:
\beq
\dot{\tilde{\rho}}(t) = -i \left[ H_{\mathrm{LS}}, \tilde{\rho}(t) \right] + \sum_{\omega} \gamma({\omega}) \left( A_{\omega} \tilde{\rho}(t) A_{\omega}^\dagger - \frac{1}{2} \left\{A_{\omega}^{\dagger} A_{\omega}, \tilde{\rho}(t) \right\} \right) 
\label{eqt:SME_RWA}
\eeq
with
\begin{align} 
\label{eq:g531}
\gamma(\omega) = \int_{-\infty}^{\infty} d s e^{i \omega s} \mathcal{B}(s,0)\ .
\end{align}

The cumulant-LE is
\beq \label{eqt:SME_Average}
\dot{\tilde{\rho}}(t) = -i \left[ H'_{\mathrm{LS}}, \tilde{\rho}(t) \right] + \sum_{\omega, \omega'} \gamma_{\omega \omega'}(\tau) \left( A_{\omega} \tilde{\rho}(t) A_{\omega'}^\dagger - \frac{1}{2} \left\{A_{\omega'}^{\dagger} A_{\omega}, \tilde{\rho}(t) \right\} \right) \ ,
\eeq
where the rates $\gamma$ keep a dependence on two different Bohr frequencies $\omega$ and $\o'$:
\begin{align}
\label{eq:gammaww'}
\gamma_{\omega \omega'} (\tau) & =  \frac{1}{\tau} b_{\omega \omega'} (\tau) \ , \quad 
b_{\omega \omega'} (\tau) = \int_0^\tau d s \int_0^\tau ds'  e^{i ( \omega' s - \omega s' )}  \mathcal{B}(s,s')  \ .
 \end{align}

Our goal is to show that in an appropriate sense the cumulant-LE tends to the RWA-LE in the limit as $\tau\to\infty$, where $\tau$ is the coarse-graining timescale. More specifically, we will show that $\lim_{\tau\to\infty} \gamma_{\omega \omega'} (\tau) = \gamma(\omega)\d_{\o\o'}$~\cite{Majenz:2013qw}. 
We will assume stationarity, i.e., $\mathcal{B}(s,s') = \mathcal{B}(s-s',0)$. 

\subsubsection{A useful lemma}
\begin{mylemma}
The following equivalent form holds for $\g_{\omega \omega'} (\tau)$:
\beq
\g_{\omega \omega'} (\tau) = \frac{1}{\tau}e^{i\frac{\omega'-\omega}{2}\tau}\intop_0^{\tau}dv\ \cos\left(\frac{\omega'-\omega}{2}(v-\tau)\right)\intop_{-v}^{v}du\ e^{i\frac{\omega+\omega'}{2}u}\mathcal{B}(u,0) \ .
\label{eq:g-simplified}
\eeq
\end{mylemma}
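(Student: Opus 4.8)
The plan is to start from the definition $b_{\omega\omega'}(\tau) = \int_0^\tau ds \int_0^\tau ds'\, e^{i(\omega' s - \omega s')}\mathcal{B}(s,s')$, invoke the stationarity assumption $\mathcal{B}(s,s') = \mathcal{B}(s-s',0)$, and then perform the change of variables $u = s - s'$, $w = s + s'$. Writing $a \equiv (\omega'-\omega)/2$ and $c \equiv (\omega+\omega')/2$, the phase factorizes cleanly, since $\omega' s - \omega s' = a w + c u$, so the integrand becomes $e^{iaw}\,e^{icu}\,\mathcal{B}(u,0)$, while the Jacobian of the transformation contributes $ds\,ds' = \tfrac12\,du\,dw$. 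Thus stationarity is used only to collapse $\mathcal{B}(s,s')$ to a function of $u$ alone, which is precisely what makes the subsequent $u$-integral at fixed $w$ meaningful.

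First I would identify the image of the square $[0,\tau]^2$ under this map: it is the diamond with vertices $(u,w) = (0,0),(\tau,\tau),(0,2\tau),(-\tau,\tau)$, described by $|u| \le w \le 2\tau - |u|$. The crucial organizational choice is to integrate over $u$ at fixed $w$ as the inner integral, whose limits are $\pm h(w)$ with $h(w) = \min(w,\,2\tau - w)$. This is exactly what manufactures the inner object $g(v) \equiv \int_{-v}^{v} du\, e^{icu}\mathcal{B}(u,0)$ that appears in the claimed form, giving $b_{\omega\omega'}(\tau) = \tfrac12 \int_0^{2\tau} dw\, e^{iaw}\, g\big(h(w)\big)$.

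Next I would split the $w$-integral at $w = \tau$, using $h(w) = w$ on $[0,\tau]$ and $h(w) = 2\tau - w$ on $[\tau,2\tau]$, and in the upper piece substitute $w = 2\tau - v$. This recasts the two contributions as $\tfrac12\int_0^\tau g(v)\,\big(e^{iav} + e^{2ia\tau}e^{-iav}\big)\,dv$. The final step is the elementary identity $e^{iav} + e^{2ia\tau - iav} = 2\,e^{ia\tau}\cos\!\big(a(v-\tau)\big)$, which immediately yields $b_{\omega\omega'}(\tau) = e^{ia\tau}\int_0^\tau \cos\!\big(a(v-\tau)\big)\, g(v)\,dv$; dividing by $\tau$ and restoring $a = (\omega'-\omega)/2$, $c=(\omega+\omega')/2$ reproduces the stated expression for $\gamma_{\omega\omega'}(\tau)$.

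I expect the main obstacle to be bookkeeping rather than anything conceptually deep: correctly determining the diamond region and its piecewise half-width $h(w)$, and keeping the Jacobian factor of $\tfrac12$ and the orientation of the substitution $w = 2\tau - v$ straight (including the sign flip of $dw$ and the swap of integration limits). Once the region and limits are handled correctly, the remaining ingredients — the phase factorization $\omega's-\omega s' = aw + cu$ and the collapse of the two exponentials into a single cosine — are purely mechanical.
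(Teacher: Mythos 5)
Your proposal is correct and follows essentially the same route as the paper's proof: the rotation to $u=s-s'$, $v=s+s'$ with Jacobian $\tfrac12$, the identification of the diamond-shaped image of $[0,\tau]^2$, the split of the $v$-integral at $v=\tau$ with the substitution $v\mapsto 2\tau-v$ in the upper half, and the recombination of the two phases into $2e^{i\frac{\omega'-\omega}{2}\tau}\cos\bigl(\tfrac{\omega'-\omega}{2}(v-\tau)\bigr)$. All the details you flag as potential bookkeeping hazards (the region, the half-width $h(w)=\min(w,2\tau-w)$, the factor of $\tfrac12$) are handled correctly.
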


\begin{proof}
In the RWA we dropped terms with $\o\neq\o'$, so it makes sense to rewrite $\omega' s - \omega s'$ in terms of a sum and difference of Bohr frequencies: 
\beq
\omega' s - \omega s' = \frac{1}{2}(\o'-\o)v+\frac{1}{2}(\o'+\o)u\ ,
\eeq
where $u=s-s'$ and $v=s+s'$. After this change of variables $\mathcal{B}(s-s',0) = \mathcal{B}(u,0)$, and since $s=(v+u)/2$ and $s'=(v-u)/2$, the Jacobian of the transformation is $\left| 
\left(
\begin{array}{cc}
1/2  &  1/2    \\
1/2  &  -1/2      
\end{array}
\right)
\right| = 1/2$. In terms of the new variables the integration region is diamond shaped (a square rotated by $\pi/4$), bounded between the lines $u=v$ and $u=-v$ for $v\in [0,\tau]$ and the lines $u=2\tau-v$ and $v-2\tau$ for $v\in [\tau,2\tau]$.
Thus:
\begin{align}
b_{\omega \omega'}(\tau)= \frac{1}{2}\intop_0^{\tau}dv\ e^{i\frac{\omega'-\omega}{2} v}\intop_{-v}^{v}du\ e^{i\frac{\omega+\omega'}{2}u}\mathcal{B}(u,0)+\frac{1}{2} \intop_{\tau}^{2 \tau}dv\ e^{i\frac{\omega'-\omega}{2} v}\intop_{-(2\tau-v)}^{2 \tau-v}du\ e^{i\frac{\omega+\omega'}{2}u}\mathcal{B}(u,0) \ .
\end{align}
To get the integration limits to be the same we make a change of variables from $v$ to $2\tau-v$ in the second double integral:
\bes
\begin{align}
 b_{\omega \omega'}(\tau)&=\frac{1}{2}\intop_0^{\tau}dv\ e^{i\frac{\omega'-\omega}{2} [(v-\tau)+\tau]}\intop_{-v}^{v}du\  e^{i\frac{\omega+\omega'}{2}u}\mathcal{B}(u,0)+ \frac{1}{2} \intop_{0}^{\tau}dv\ e^{-i\frac{\omega'-\omega}{2} [(v-\tau)-\tau]}\intop_{-v}^{v}du\ e^{i\frac{\omega+\omega'}{2}u}\mathcal{B}(u,0)\\
&=e^{i\frac{\omega'-\omega}{2}\tau}\intop_0^{\tau}dv\ \cos\left(\frac{\omega'-\omega}{2}(v-\tau)\right)\intop_{-v}^{v}du\ e^{i\frac{\omega+\omega'}{2}u}\mathcal{B}(u,0) \ .
\end{align}
\ees
The claim now follows from Eq.~\eqref{eq:gammaww'}.
\end{proof}

\subsubsection{The $\o=\o'$ case}

For $\omega=\omega'$ we now have:

\begin{equation}
 \g_{\omega \omega}(\tau)=\frac{1}{\tau}\intop_0^\tau dv\ \intop_{-v}^{v}du\ e^{i\omega u}\mathcal{B}(u,0) \ .
\end{equation}
Let $U = \int_{-v}^{v}du\ e^{i\omega u}\mathcal{B}(u,0)$. Recall the Leibnitz rule for differentiating a definite integral:
\beq
\partial_z \int_{a(z)}^{b(z)} f(x,z)dx = \int_{a(z)}^{b(z)}\partial_z f(x,z)dx + f(b(z),z)b' - f(a(z),z)a'\ .
\label{eq:Leibnitz}
\eeq
Therefore $dU = \left(e^{i\o v} \mc{B}(v,0)+e^{-i\o v} \mc{B}(-v,0)\right)dv$. 
Then, integrating by parts ($\int_0^\tau Udv = \left[Uv\right]_0^\tau - \int_0^\tau vdU$) gives:
\begin{align}
\g_{\omega\omega}(\tau)&=\frac{1}{\tau}\left[v \intop_{-v}^{v}du\ e^{i\omega u}\mathcal{B}(u,0)\right]_0^\tau-\frac{1}{\tau}\intop_0^{\tau}dv\ v\left( e^{i\omega v}\mathcal{B}(v,0)+e^{-i\omega v}\mathcal{B}(-v,0)\right)\ .
\label{eq:539}
\end{align}
Consider the second integral:
\bes
\begin{align}
\left| \frac{1}{\tau} \intop_0^{\tau}dv\ v e^{i\omega v}\mathcal{B}(v,0) \right| &\leq \frac{1}{\tau} \intop_0^{\tau}dv\ v \left| \mathcal{B}(v,0) \right| \leq \frac{1}{\tau} \intop_0^{\infty}dv\ v \left| \mathcal{B}(v,0) \right| \\
&\sim \frac{1}{\tau} \tau_B^2 \stackrel{\tau\to\infty}{\longrightarrow} 0\ ,
\end{align}
\ees
where in the last step we used the assumption~\eqref{eq:481} that the bath correlation function decays with a finite timescale $\tau_B$. Since $\mc{B}(v,0) =\mc{B}^*(-v,0)$ [recall Eq.~\eqref{eq:511b}], the third integral in Eq.~\eqref{eq:539} satisfies the same bound and limit. We are thus left with
\beq
\lim_{\tau \to \infty} \g_{\o\o}(\tau) =\intop_{-\infty}^{\infty}du\ e^{i\omega u}\mathcal{B}(u,0)=\gamma(\omega) \ ,
\eeq
where the last equality is due to Eq.~\eqref{eq:g531}.

\subsubsection{The $\o\neq\o'$ case}
For $\omega\neq\omega'$ we also perform integration by parts of Eq.~\eqref{eq:g-simplified}, but we shall see that this time the boundary terms vanish. 
We write 
$\g_{\omega \omega'} (\tau) = \frac{1}{\tau}e^{i\frac{\omega'-\omega}{2}\tau} \intop_0^{\tau}dV\  U(v)$,
where now $dV = \cos\left(\frac{\omega'-\omega}{2}(v-\tau)\right)dv$ and $U(v) = \intop_{-v}^{v}du\ e^{i\frac{\omega+\omega'}{2}u}\mathcal{B}(u,0)$. Then 
\bes
\begin{align}
V(v) &= \frac{2}{\omega'-\omega}\sin\left(\frac{\omega'-\omega}{2}(v-\tau)\right) \\
dU/dv &= e^{i\frac{\omega+\omega'}{2}v} \mc{B}(v,0)+e^{-i\frac{\omega+\omega'}{2}v} \mc{B}(-v,0)\\
\left[U(v) V(v) \right]_0^\tau &= U(\tau) V(\tau) - U(0)V(0) = 0 \ .
\label{eq:695c}
\end{align}
\ees
Therefore:
\begin{equation}
\g_{\omega \omega'}(\tau)=-\int_{0}^{\tau}V dU = -\frac{2 e^{i\frac{\omega'-\omega}{2}\tau}}{(\omega'-\omega)\tau}\intop_0^{\tau}dv \sin\left( \frac{(\omega'-\omega)}{2}(v-\tau)\right)\left[e^{i\frac{\omega+\omega'}{2}v}\mathcal{B}(v,0)+e^{-i\frac{\omega+\omega'}{2}v}\mathcal{B}(-v,0)\right] \ .
\end{equation}
Changing from $v$ to $-v$ in the second term we get
\bes
\begin{align}
\g_{\omega \omega'}(\tau)=&-\frac{2 e^{i\frac{\omega'-\omega}{2}\tau}}{(\omega'-\omega)\tau}\Bigg[\intop_0^{\tau}dv\sin\left(\frac{(\omega'-\omega)}{2}(v-\tau)\right)e^{i\frac{\omega+\omega'}{2}v}\mathcal{B}(v,0)+\intop_{-\tau}^{0}dv\sin\left(\frac{(\omega'-\omega)}{2} (-v-\tau)\right)e^{i\frac{\omega+\omega'}{2}v}\mathcal{B}(v,0)\Bigg]\\
=&\frac{e^{i\frac{\omega'-\omega}{2}\tau}}{(\omega'-\omega)\tau}\intop_{-\tau}^{\tau}dv\left[\sin\left(\frac{\omega'-\omega}{2}\tau\right)\left(e^{i \omega v}+e^{i \omega' v}\right)+\frac{\mathrm{sgn(v)}}{i}\cos\left(\frac{\omega'-\omega}{2}\tau\right)\left(e^{i \omega v}-e^{i \omega' v}\right)\right]\mathcal{B}(v,0) \ ,
\end{align}
\ees
where we used the angle sum identity for the sine in the last equality. Thus:
\beq
\lim_{\tau \to \infty} \g_{\omega \omega'}(\tau) = \lim_{\tau \to \infty} \frac{e^{i\frac{\omega'-\omega}{2}\tau}}{(\omega'-\omega)\tau}\left[\sin\left(\frac{\omega'-\omega}{2}\tau\right)(\gamma(\omega)+\gamma(\omega'))+2\cos\left(\frac{\omega'-\omega}{2}\tau\right)(S(\omega)-S(\omega'))\right] \ .
\eeq
where we have used that for $\Gamma(\omega) = \int_0^{\infty} ds\ e^{i \omega s} \mathcal{B}(s,0)$ [recall Eq.~\eqref{Gamma}], we have $\gamma(\omega) = \Gamma(\omega) + \Gamma^{\ast}(\omega)$ and $2 i S(\omega) = \Gamma(\omega) - \Gamma^\ast(\omega)$ [recall Eq.~\eqref{eq:499}].  Since nothing cancels with the overall $\tau^{-1}$, we find that the $\omega \neq \omega'$ term vanishes.  

A similar calculation could be done for the Lamb shift term~\eqref{eq:Lambshift}.  Therefore, the RWA results can be understood as the $\tau\to\infty$ limit of the coarse-graining timescale.


\section{The Nakajima-Zwanzig Equation}

The master equations we have developed so far are approximations to the true dynamics. In this section we take a step back and derive an exact master equation. Since it is exact, it will naturally be non-Markovian. 

Consider the total Hamiltonian 
\beq
H=H_0 + \alpha H_{SB}\ , \qquad H_0 = H_S+H_B\ ,
\eeq 
where $0<\a<1$ is a dimensionless parameter.
Let us work in the interaction picture, so that the total system-bath state $\tr$ satisfies
\beq
\partial_t \tr = -i \a [\tilde{H}(t),\tr(t)] \equiv \a \mc{L}\tr(t)\ ,
\label{eq:655}
\eeq
where as usual $\tilde{H}(t) = U_0^\dagger (t) H_{SB} U_0(t)$, with $U_0(t) = e^{-iH_0 t}$, and $H_{SB} = \sum S_{\alpha} \otimes B_{\alpha}$. We abbreviate $\partial_t \equiv \frac{\partial}{\partial t}$. For the rest of this section we drop the tilde decoration on states to simplify the notation, so that, e.g., $\r$ denotes the interaction-picture system-bath state.

\subsection{Feshbach $\mc{P}$-$\mc{Q}$ partitioning}
Consider a \emph{fixed} bath state $\r_B$. As usual, $\r_S = \Tr_B\r$ is the system state of interest. Consider the projection superoperator $\mc{P}$ defined via
\beq
\mc{P}\r = \Tr_B(\r)\ox \r_B \ .
\eeq
That $\mc{P}$ is a projection follows from applying it twice:
\beq
\mc{P}^2\r = \mc{P}[\Tr_B(\r)\ox \r_B] = \Tr_B[\Tr_B(\r)\ox \r_B]\ox \r_B = \Tr_B(\r)\ox \r_B = \mc{P}\r\ .
\eeq
Define the orthogonal projection $\mc{Q}$ via 
\beq
\mc{Q} = I-\mc{P} \ .
\eeq
We call $\mc{P}\r$ the ``relevant" part, and $\mc{Q}\r$ the ``irrelevant part". This procedure is sometimes called Feshbach $\mc{P}$-$\mc{Q}$ partitioning, after a method introduced in nuclear scattering theory \cite{Feshbach:1958aa}.

We are interested in deriving a master equation for $\partial_t (\mc{P}\r)$. Now, note that
\beq
\partial_t (\mc{P}\r) = \partial_t [\Tr_B(\r)\ox \r_B] = \Tr_B(\partial_t\r)\ox \r_B = \mc{P}(\partial_t\r)\ ,
\eeq
i.e., $[\mc{P},\partial_t]=0$. Therefore, using Eq.~\eqref{eq:655}:
\beq
\pt (\mP\r)=\a \mP \mL\r\ .
\eeq
Likewise:
\beq
\pt(\mQ\r) = \pt[(I-\mP)\r]=\a\mL\r-\a\mP\mL\r=\a(I-\mP)\mL\r=\a\mQ\mL\r \ .
\eeq
Let us now insert $I=\mP+\mQ$ into the last two equations:
\bes
\label{eq:662}
\begin{align}
\pt(\mP\r) &= \a\mP\mL(\mP+\mQ)\r = \a \mP \mL \mP\r + \a\mP\mL\mQ\r\\
\pt(\mQ\r) &= \a\mQ\mL(\mP+\mQ)\r = \a \mQ \mL \mP\r + \a\mQ\mL\mQ\r\ .
\end{align}
\ees
These are coupled differential equations for the relevant ($\mP\r$) and irrelevant ($\mQ\r$) parts. To solve them, let us eliminate the irrelevant part. 

Define 
\beq 
\hat{X} \equiv \mP X\ , \qquad  \bar{X} \equiv \mQ X
\eeq
for any operator $X$. 
then Eq.~\eqref{eq:662} can be rewritten more compactly as:
\bes
\label{eq:664}
\begin{align}
\label{eq:664a}
\pt\hat\r &= \a \hat \mL \hat\r + \a\hat\mL\bar\r\\
\label{eq:664b}
\pt\bar\r &=  \a \bar \mL \hat\r + \a\bar\mL\bar\r\ .
\end{align}
\ees

\subsection{Derivation}

We can formally solve the second of these equations and substitute the solution into the first.  Consider first 
$\pt\bar\r =  \a \bar \mL \bar\r$. This has the immediate solution $\bar\r(t) = T_+ \exp\left(\a\int_{t_0}^t\bar\mL(t')dt'\right)\bar\r(t_0)$, where $T_+$ denotes the usual forward Dyson time-ordering. We thus define
\beq
\mG(t,t_0) \equiv T_+ e^{\a\int_{t_0}^t\bar\mL(t')dt'}\ .
\label{eq:665}
\eeq
Eq.~\eqref{eq:664b} contains another term, and we can easily guess that the solution integrates over this term, but first applies $\mG$, i.e.:
\beq
\bar\r(t) = \mG(t,t_0)\bar\r(t_0) + \underbrace{\a\int_{t_0}^t\mG(t,t')\bar\mL(t')\hat\r(t')dt'}_{\circledast} \ .
\label{eq:666}
\eeq
To verify that this is the formal solution of Eq.~\eqref{eq:664b}, we apply the Leibnitz rule~\eqref{eq:Leibnitz} to get $\pt\int_{t_0}^tf(t,t')dt' = f(t,t) + \int_{t_0}^t \pt f(t,t')dt'$, and also note that $\mG(t,t_0)$ has the property $\mG(t,t)=I$, $\pt\mG(t,t')=\bar\mL(t)\mG(t,t')$. Using all of the above we have:
\beq
\pt \circledast = \a\mG(t,t)\bar\mL(t)\hat\r(t) + \a\int_{t_0}^t\pt\mG(t,t')\bar\mL(t')\hat\r(t')dt' =\a\bar\mL(t)\hat\r(t) +\a\bar\mL(t)\circledast = \a\bar\mL(t)\left(\hat\r(t)+\circledast\right)\ . 
\eeq
Therefore, if we differentiate Eq.~\eqref{eq:666} we find:
\beq
\pt\bar\r(t) = \a \bar\mL(t)\mG(t,t_0)\bar\r(t_0) + \a \bar{\mL}(t)(\hat\r(t)+\circledast) = \a\bar\mL(t)\hat\r(t) + \a \bar\mL(t)\underbrace{\left(\mG(t,t_0)\bar\r(t_0)+\circledast\right)}_{\bar\r(t)}\ ,
\eeq
which agrees with Eq.~\eqref{eq:664b} as required.

Substituting the solution for $\bar\r(t)$ into Eq.~\eqref{eq:664a}, we have:
\beq
\pt\hat\r(t) = \underbrace{\a \hat \mL(t) \hat\r(t)}_{(a)} + \underbrace{\a\hat\mL(t) \mG(t,t_0)\bar\r(t_0)}_{(b)} + \underbrace{\a^2{\int_{t_0}^t\hat\mL(t)\mG(t,t')\bar\mL(t')\hat\r(t')dt'}}_{(c)}\ .
\label{eq:669}
\eeq
\begin{itemize}
\item We can show that term (a) can always be made to vanish in a similar way to what we did in Sec.~\ref{sec:cumulant1}. To see this, note that
\bes
\label{eq:670}
\begin{align}
\hat \mL(t) \hat\r(t)&=\mP\mL(t)\mP\r(t) = \mP\mL(t)\Tr_B[\r(t)]\ox\r_B = -i\mP\left[\tilde{H}(t),\r_S(t)\ox\r_B\right] \\
&=-i\sum_\a \Tr_B\left({A}_\a(t)\r_S(t)\ox{B}_\a(t)\r_B\right)\ox\r_B - \Tr_B\left(\r_S(t){A}_\a(t)\ox\r_B{B}_\a(t)\right)\ox\r_B \\
&=-i\sum_\a \left[{A}(t),\r_S(t)\right]\ave{{B}_\a(t)}\ox\r_B = 0
\end{align}
\ees
since $\ave{{B}_\a(t)}$ can be made zero in the same way as in Eq.~\eqref{eq:429}, i.e., $[H_B,\r_B(0)]=0$.

\item Term (b) is an inhomogeneity that depends on the initial condition and measures how much correlation there is in the initial state: 
\beq
\bar\r(0) = (I-\mP)\r(0) =  \r(0) - \Tr_B[\r(0)]\ox\r_B .
\eeq
It vanishes for a factorized initial state, i.e., if $\r(0) = \r_S(0)\ox\r_B$ (the same fixed initial state we chose for the bath at the beginning of the derivation).
\end{itemize}
Thus, assuming a factorized initial state Eq.~\eqref{eq:669} becomes:
\bes
\label{eq:NZeq}
\begin{align}
\pt\hat\r(t) &= \int_{t_0}^t\mK(t,t')\hat\r(t')dt'\\
\label{eq:NZker}
\mK(t,t') &\equiv \a^2 \hat\mL(t)\mG(t,t')\bar\mL(t')\mP\ .
\end{align}
\ees
Equation~\eqref{eq:NZeq} is called the (homogeneous) \emph{Nakajima-Zwanzig master equation} (NZ-ME), and the superoperator $\mK$ is called the \emph{memory kernel} (note that we multiplied it from the right by $\mP$, which we can do since it acts on $\hat\r = \mP\r$). If we include the (b) term $\a\hat\mL(t) \mG(t,t_0)\bar\r(t_0)$ from Eq.~\eqref{eq:669} on the RHS we have the inhomogeneous NZ-ME.

The NZ-ME is exact, non-perturbative, and in the inhomogeneous case it can even describe non-factorized initial conditions. It is clearly non-local in time, in the sense that the RHS retains a memory of the entire history of the state evolution, weighted via the memory kernel. The Nakajima-Zwanzig equation is an integro-differential equation, and solving it is essentially as hard as solving the original Liouville-von Neumann equation~\eqref{eq:655}. Nevertheless, it provides an important and convenient starting point for perturbative expansions, as we shall see shortly.

\subsection{From the Nakajima-Zwanzig equation to the Born master equation}
Consider a perturbative expansion in $\a$. To lowest order we have from Eq.~\eqref{eq:665}:
\beq
\mG(t,t_0) = I + O(\a)\ ,
\eeq
so at the same order the memory kernel becomes
\beq
\mK(t,t') = \a^2\hat\mL(t)[I+O(\a)]\bar\mL(t')\mP = \a^2\mP\mL(t)\mQ\mL(t')\mP+O(\a^3)\ ,
\eeq
and hence:
\bes
\begin{align}
\pt[\mP\r(t)] &= \pt\r_S(t)\ox\r_B = \a^2\int_{t_0}^t\mP\mL(t)\mQ\mL(t')\mP\r(t')dt' \\
&= \a^2 \int_{t_0}^t\mP\mL(t)\mL(t')\mP\r(t')dt'\\
&=-\a^2\int_{t_0}^t\mP\left[\tilde{H}(t),\left[\tilde{H}(t'),\r_S(t')\ox\r_B\right]\right]dt'\\
&=-\a^2\int_{t_0}^t\Tr_B\left[\tilde{H}(t),\left[\tilde{H}(t'),\r_S(t')\ox\r_B\right]\right]\ox\r_B dt' \,
\end{align}
\ees
where in the second line we used $\mP\mL(t)\mP=0$ [Eq.~\eqref{eq:670}]. Applying one final $\Tr_B$ to both sides finally gives
\beq
\pt\r_S(t) = -\a^2\int_{t_0}^t\Tr_B\left[\tilde{H}(t),\left[\tilde{H}(t'),\r_S(t')\ox\r_B\right]\right] dt'\ ,
\label{eq:NZ2}
\eeq
which we recognize as the Born master equation [Eq.~\eqref{integEq2}] discussed in Sec.~\ref{sec:Born-appr}.

\subsection{The $O(\a^3)$ term of the Nakajima-Zwanzig master equation}
The O($\alpha^3$) term comes from the $\alpha^1$ term in the propagator 
\beq
\mG(t,t_0) = I + g_1(t,t_0) + O(\a^2)\ ,
\eeq
where
\beq
g_1(t,t_0)=\alpha\int_{t_0}^{t}\bar{\mathcal{L}}(s)ds\ .
\eeq
The O($\alpha^3$) term is 
\bes
\begin{align}
 \alpha^2\int_{t_0}^{t}dt'\mP\mL(t)g_1(t,t')\mQ\mL(t')\mP\r(t') &= \alpha^3\int_{t_0}^{t}dt'\mP\mL(t)\int_{t'}^{t}ds\mQ\mL(s) \mQ\mL(t')P\r(t')\\
&= \alpha^3\int_{t_0}^{t}\int_{t'}^{t}dt'ds\mP\mL(t)\mQ\mL(s)\mQ\mL(t')\mP\r(t')\ ,
\end{align}
\ees
where
\bes
\begin{align}
\mP\mL(t)\mQ\mL(s)\mQ\mL(t')\mP&=\mP\mL(t)(I-\mP)\mL(s)(I-\mP)\mL(t')\mP \\
&= \mP\mL(t)\mL(s)\mL(t')\mP-\mP\mL(t)\mL(s)\mP\mL(t')\mP-\mP\mL(t)\mP\mL(s)\mL(t')\mP+\mP\mL(t)\mP\mL(s)\mP\mL(t')\mP\ .
\end{align}
\ees
It turns out that we can always ensure that
\beq
\mP \mL(t_1)\cdots\mL(t_n) \mP = 0
\label{eq:726}
\eeq
for any odd $n$ and any ordering of the time argument, by appropriately shifting the bath operators. Therefore the order $O(\a^3)$ term vanishes, and the Nakajima-Zwanzig master equation is unchanged at this order, namely:
\beq
\pt\r_S(t) = -\a^2\int_{t_0}^t\Tr_B\left[\tilde{H}(t),\left[\tilde{H}(t'),\r_S(t')\ox\r_B\right]\right] dt'+O(\a^4)\ .
\eeq


\section{The Time Convolutionless (TCL) Master Equation}

The Nakajima-Zwanzig equation~\eqref{eq:NZeq} contains a convolution with a complicated memory kernel [Eq.~\eqref{eq:NZker}]: $\pt\hat\r(t) = \int_{t_0}^t\mK(t,t')\hat\r(t')dt'$. It seems that this is an unavoidable feature of an exact, non-Markovian master equation. In this section we will see that it is possible to remove the memory kernel by making a type of short-time approximation, and arrive at a fully time-local, convolutionless master equation. The main insight we'll need to achieve this, is that the memory kernel can be removed by formally back-propagating the system state.

\subsection{Derivation}

\subsubsection{Back-propagation}

Let us start again from the Liouville-von-Neumann equation $\partial_t \tr = -i \a [\tilde{H}(t),\tr(t)] \equiv \a \mc{L}\tr(t)$ [Eq.~\eqref{eq:655}]. Its formal solution is
\beq
\tr(t) = T_+ e^{\a \int_{t'}^t \mc{L}(s)ds} \tr(t') = 
\mU_+(t,t')\tr(t')\ ,
\label{eq:728}
\eeq
where $\mU_+(t,t')$ is a forward time-ordered superoperator. This can be inverted so that
\beq
 \tr(t') = 
 T_- e^{-\a \int_{t'}^t \mc{L}(s)ds} \tr(t)= 
 \mU_-(t,t')\tr(t)\ ,
 \label{eq:677}
 \eeq
which defines the backward time-ordered superoperator $\mU_-(t,t')$. To get an explicitly representation note first that by substituting $\tr(t')$ from Eq.~\eqref{eq:677} into Eq.~\eqref{eq:728} we get $\mU_+(t,t') \mU_-(t,t')=I$. Now, since 
\beq
\mU_+(t,t') = T_+ e^{\a \int_{t'}^t\mL(s)ds} = \lim_{\Delta t\to 0} e^{\a\Delta t \mL(t-\Delta t)} \cdots e^{\a\Delta t\mL(t'+\Delta t)}e^{\a\Delta t\mL(t')}\qquad (\Delta t=\lim_{N\to\infty}\frac{t-t'}{N})\ ,
\eeq
in order to have $\mU_+(t,t')$ and $\mU_-(t,t')$ be each other's inverse, it must be that $\mU_-(t,t')$ has the opposite order and $\a$ is replaced by $-\a$, so that when multiplied the two products cancel equal and opposite terms. I.e.,
\beq
\mU_-(t,t') = T_- e^{-\a \int_{t'}^t\mL(s)ds} = \lim_{\Delta t\to 0} e^{-\a\Delta t\mL(t')} e^{-\a\Delta t\mL(t'+\Delta t)} \cdots e^{-\a\Delta t \mL(t-\Delta t)}  \qquad (\Delta t=\lim_{N\to\infty}\frac{t-t'}{N})\ .
\label{eq:679}
\eeq

Applying $\mP$ to both sides of Eq.~\eqref{eq:677}, and again dropping the tilde decoration to simplify the notation (though we continue to work in the interaction picture) we have $\hat\r(t') = \hat\mU_-(t,t')\r(t)$, so that Eq.~\eqref{eq:666} becomes:
\bes
\begin{align}
\label{eq:678a}
\bar\r(t) &= \mG(t,t_0)\bar\r(t_0) + \Sigma(t) \r(t)\\
\label{eq:678b}
\Sigma(t) &\equiv\a\int_{t_0}^t\mG(t,t')\bar\mL(t')\hat\mU_-(t,t')dt'\ .
\end{align}
\ees
Note that the superoperator $\Sigma(t)$ is not chronologically ordered since it contains both forward [via 
$\mG(t,t')$; recall Eq.~\eqref{eq:665}] and backward time propagation. For this reason we do not write $\Sigma(t,t_0)$, despite the dependence of $\Sigma(t)$ on $t_0$, since that notation is reserved for propagation from $t_0$ to $t$.\footnote{We could write $\Sigma_{t_0}(t)$ without danger of confusion, but this more cumbersome notation won't turn out to be particularly helpful.} Equation~\eqref{eq:678a} has removed the memory kernel and replaced it by (the even more complicated object) $\Sigma(t)$. However, in terms of the time-dependence of $\r$, it is time-local, i.e., depends only on $t$ (apart from the initial condition $t_0$). Next we solve this equation.

\subsubsection{Solving for the relevant part}
\label{sec:relevantpart}
Let us insert $I=\mP+\mQ$ into Eq.~\eqref{eq:678a}:
\bes
\begin{align}
\bar\r(t) &= \mG(t,t_0)\bar\r(t_0) + \Sigma(t) (\mP+\mQ)\r(t)\\
&\implies \bar\r(t) = \mG(t,t_0)\bar\r(t_0) + \Sigma(t)\hat\r(t) +\Sigma(t)\bar\r(t)\\
&\implies \left(I-\Sigma(t)\right)\bar\r(t) = \mG(t,t_0)\bar\r(t_0) + \Sigma(t)\hat\r(t) \ .
\end{align}
\ees
We can solve this for $\bar\r(t)$ provided $I-\Sigma(t)$ is invertible, i.e., provided $\Sigma(t)$ is not too close from identity. Since $\Sigma(t_0)=0$, we can conclude that $I-\Sigma(t)$ is invertible for sufficiently short evolution times. In addition, $\Sigma(t) = O(\a)$, so invertibility should also hold provided the system-bath coupling is sufficiently weak. Thus, from now we shall assume that $I-\Sigma(t)$ is indeed invertible, which is the only assumption we shall make to arrive at the TCL master equation. Then:
\beq
\bar\r(t) = \left(I-\Sigma(t)\right)^{-1}\mG(t,t_0)\bar\r(t_0) + \left(I-\Sigma(t)\right)^{-1}\Sigma(t)\hat\r(t) \ ,
\eeq
and substituting this into Eq.~\eqref{eq:664a} we find:
\bes
\begin{align}
\pt\hat\r(t) &= \a \hat \mL(t) \hat\r(t) + \a\hat\mL\left(I-\Sigma(t)\right)^{-1}\mG(t,t_0)\bar\r(t_0) + \a\hat\mL\left(I-\Sigma(t)\right)^{-1}\Sigma(t)\hat\r(t)\\
&= \a\hat\mL\left(I-\Sigma(t)\right)^{-1}\mG(t,t_0)\mQ\bar\r(t_0)+\a \hat \mL(t)\left[I+\left(I-\Sigma(t)\right)^{-1}\Sigma(t)\right]\mP\hat\r(t) \ ,
\end{align}
\ees
where in the second line we used the freedom to insert a $\mP$ and $\mQ$ in front of $\hat\r$ and $\bar\r$, respectively. Note that 
\beq
I+\left(I-\Sigma\right)^{-1}\Sigma = (I-\Sigma)^{-1}(I-\Sigma)+(I-\Sigma)^{-1}\Sigma =  (I-\Sigma)^{-1}(I-\Sigma+\Sigma) = (I-\Sigma)^{-1}\ .
\eeq
We have thus arrived at the \emph{time-convolutionless master equation} (TCL-ME): 
\beq
\pt\hat\r(t) =\mJ(t)\bar\r(t_0)+\mK(t)\hat\r(t)\ ,
\label{eq:TCL-ME}
\eeq
where
\bes
\label{eq:684}
\begin{align}
\label{eq:684a}
\mJ(t) &\equiv \a\hat\mL\left(I-\Sigma(t)\right)^{-1}\mG(t,t_0)\mQ\qquad \text{inhomogeneity}\ ,\\
\label{eq:684b}
\mK(t) &\equiv \a \hat \mL(t)(I-\Sigma(t))^{-1}\mP \qquad \text{TCL generator} \ .
\end{align}
\ees
The most salient feature of the TCL-ME is that (when the inhomogeneity vanishes, e.g., for factorized initial conditions) it is purely time-local, in stark contrast to the NZ-ME [Eq.~\eqref{eq:NZeq}].

\subsection{Perturbation theory}
Despite the formal appearance of the result we have found so far, it is a convenient starting point for perturbation theory.

\subsubsection{Matching powers of $\a$}
Let us write $(I-\Sigma(t))^{-1} = \sum_{n=0}^\infty\Sigma^n(t)$, i.e., as a geometric series. It follows from Eq.~\eqref{eq:678b} that $\Sigma^n(t) = \a^n \left(\int_{t_0}^t \cdots\right)^n$, so that after substitution into $\mK(t)$ [Eq.~\eqref{eq:684b}] we have a series expansion in powers of $\a$:
\bes
\label{eq:685}
\begin{align}
\label{eq:685a}
\mK(t) &= \a\hat\mL(t)\left(\sum_{n=0}^\infty \Sigma^n(t)\right)\mP \\
\label{eq:685b}
&= \sum_{n=1}^\infty\a^n\mK_n(t) \ ,
\end{align}
\ees
where we need to determine the operators $\mK_n(t)$. To do so we need to first expand $\Sigma(t)$ in powers of $\a$. It also follows from Eq.~\eqref{eq:678b} that the expansion must start from $\a^1$, since $\mG(t,t_0) = T_+ e^{\a\int_{t_0}^t\bar\mL(t')dt'}=I+O(\a)$ [Eq.~\eqref{eq:665}]:
\beq
\Sigma(t) = \sum_{m=1}^\infty\a^m\Sigma_m(t)\ .
\eeq
Substituting this expansion into Eq.~\eqref{eq:685a} yields a cumulant expansion:
\beq
\mK(t) = \a\hat\mL(t)\left[\sum_{n=0}^\infty \left(\sum_{m=1}^\infty\a^m\Sigma_m(t)\right)^n\right]\mP 
= \hat\mL(t)\left[ \a I + \sum_{m=1}^\infty \a^{m+1} \Sigma_m(t) + \sum_{m,m'=1}^\infty \a^{m+m'+1} \Sigma_m(t) \Sigma_{m'}(t) + \cdots \right]\mP
\eeq
Matching terms of equal power of $\a$ with Eq.~\eqref{eq:685b} yields, for the lowest four orders:
\bes
\begin{align}
\a^1:&\quad \mK_1(t) = \a\hat\mL(t)\mP = 0 \\
\label{eq:690b}
\a^2:&\quad \mK_2(t) = \a\hat\mL(t)\Sigma_1(t)\mP \qquad \text{Redfield equation}\\
\a^3:&\quad \mK_3(t) = \a\hat\mL(t)\left(\Sigma^2_1(t)+\Sigma_2(t)\right)\mP =0\\
\a^4:&\quad \mK_4(t) = \a\hat\mL(t)\left(\Sigma^3_1(t)+\{\Sigma_1(t),\Sigma_2(t)\}+\Sigma_3(t)\right)\mP \qquad \text{lowest order non-Markovian}
\end{align}
\ees
The vanishing of $\mK_1(t)$ is for the same reason as in Eq.~\eqref{eq:670}; that of $\mK_3(t)$ is explained below. First we need to explicitly find the lowest order $\Sigma_m(t)$'s. The expansions of $\mG(t,s) = T_+ e^{\a\int_{s}^t\bar\mL(t')dt'}$ and $\mU_-(t,t')=T_- e^{-\a \int_{t'}^t\mL(s)ds}$ [Eq.~\eqref{eq:679}] yield:
\bes
\begin{align}
\mG(t,t') &= I + \a\int_{t'}^t\bar\mL(s)ds + \frac{\a^2}{2!}T_+ \left(\a\int_{t'}^t\bar\mL(s)ds\right)^2+\cdots\\ 
\hat\mU_-(t,t') &= \mP \left[ I-\a\int_{t'}^t\mL(s)ds + \frac{\a^2}{2!}T_- \left(\int_{t'}^t\mL(s)ds\right)^2+\cdots\right]\ .
\end{align}
\ees
We can now collect equal powers of $\a$ in $\Sigma(t) = \a\int_{t_0}^t\mG(t,t')\bar\mL(t')\hat\mU_-(t,t')dt' = \sum_{m=1}^\infty\a^m\Sigma_m(t)$:
\bes
\begin{align}
\a^1:\quad \Sigma_1(t) &= \int_{t_0}^t \bar\mL(t')\mP dt' \\
\a^2:\quad \Sigma_2(t) &= -\int_{t_0}^t dt'\bar\mL(t')\mP\int_{t'}^t\mL(s)ds + \int_{t_0}^t dt'\left(\int_{t'}^t\bar\mL(s)ds\right) \bar\mL(t')\mP\\
&= \int_{t_0}^t ds\int_{t_0}^{s}dt'\left[\bar\mL(s)\bar\mL(t')\mP - \bar\mL(t')\mP\mL(s)\right] \ ,
\end{align}
\ees
where in the last line we switched the order of integration via $\int_{t_0}^t dt'\int_{t'}^t ds = \int_{t_0}^tds\int_{t_0}^{s}dt'$.

Therefore, using Eq.~\eqref{eq:690b}: 
\beq
\mK_2(t) = \hat\mL(t)\int_{t_0}^t \bar\mL(t') dt' \mP = \hat\mL(t)\int_{t_0}^t (I-\mP)\mL(t')dt' \mP=\hat\mL(t)\int_{t_0}^t \mL(t') dt' \mP\ ,
\label{eq:mK_2}
\eeq
where we again used $\mP\mL(t)\mP=0$, which we also use repeatedly below.

To calculate $\mK_3(t)$, first note that $\Sigma_1^2(t) = \int_{t_0}^t \int_{t_0}^t dt' dt'' \mQ\mL(t')\mP \mQ \mL(t'')\mP = 0$, since $\mP \mQ=0$. Second, note that $\mK_3(t)$ contains the term $\hat\mL(t)\bar\mL(t')[\mP\mL(s)\mP] = 0$. The final term it contains is $\hat\mL(t)\bar\mL(s)\bar\mL(t')\mP = \mP\mL(t)(I-\mP)\mL(s)(I-\mP)\mL(t')\mP = \mP\mL(t)\mL(s)\mL(t')\mP=0$, by Eq.~\eqref{eq:726}.
Therefore $\mK_3(t)=0$.


%

\subsubsection{The TCL-ME at second order yields the Redfield equation}
\label{sec:TCL2-Redfield}

Let us consider the lowest non-vanishing order of the TCL-ME, Eq.~\eqref{eq:TCL-ME}. At this order:
\beq
\pt\hat\r(t) =\a^2\mK_2(t)\hat\r(t)\ ,
\label{eq:TCL2}
\eeq
where we have assumed a factorized initial condition, so that the inhomogeneity vanishes. We already found $\mK_2(t)$ in Eq.~\eqref{eq:mK_2}, so what remains is to make it explicit using the definition of the projection to the relevant part:
\bes
\begin{align}
\mK_2(t)\hat\r(t) &= \int_{t_0}^t \mP\mL(t)\mL(t') dt' \mP \r(t) \\
&= \int_{t_0}^t dt'\Tr_B\left[-i\tilde{H}(t),\left[-i\tilde{H}(t'),\left(\Tr_B\r(t)\right)\ox\r_B\right]\right]\ox\r_B \ .
\end{align}
\ees
Thus, after applying $\Tr_B$ to both sides:
\beq
\pt\r_S(t) =-\a^2\int_{t_0}^t dt'\Tr_B\left[\tilde{H}(t),\left[\tilde{H}(t'),\r_S(t)\ox\r_B\right]\right]\ .
\label{eq:TCL2-final}
\eeq
This is the Redfield equation, Eq.~\eqref{eq:Redfield}. It is identical to the Born-Markov approximation [Eq.~\eqref{integEq2}], except for the finite upper limit of the integral. It is also nearly identical to the second order NZ-ME [Eq.~\eqref{eq:NZ2}], the only difference being the fact that, by construction, Eq.~\eqref{eq:TCL2-final} is time-local, in the sense that the argument of $\r_S$ is $t$ rather than $t'$. This is an important difference: whereas when we derived the RWA-LE we had to just assume that we can replace $t'$ by $t$ [in going from Eq.~\eqref{integEq2} to Eq.~\eqref{eq:Redfield}], here this is a systematic result of our derivation.


\subsection{Example: spin-boson model of a qubit in a cavity}

As an application of the TCL-ME we now consider a qubit in a cavity. This is an analytically solvable model subject to a simplifying assumption about the initial condition. As such, it will allow us to compare the predictions of the TCL to an exact result.

Consider as usual the total Hamiltonian $H = H_0 + H_{SB}$, where $H_0 = H_S + H_B$, with
\bes
\begin{align}
H_S &= \o_0 \ketb{1}{1} = \omega_0\sigma_+\sigma_- \ , \quad H_B = \sum_k \omega_k b_k^\dagger b_k =  \sum_k \omega_k n_k\ ,
\label{eq:defH0}\\
H_{SB} &= \sigma_+ \otimes B + \sigma_- \otimes B^\dagger \ , \quad B = \sum_k g_k b_k\ .
\end{align}
\ees
Here $\sigma_+ = \ketb{1}{0}$ and $\sigma_- = \ketb{0}{1}$ are the qubit raising and lowering operators, while $b_k$ and $b_k^\dgr$ are the bosonic lowering and raising operators for mode $k$, satisfying the canonical bosonic commutation relations $[b_{k},b_{k'}^{\dagger}]=\d_{kk'}$. The $g_k$ are coupling constants with dimensions of energy, and $n_k$ is the number operator for mode $k$. This Hamiltonian describes a qubit (the system) with ground state $\ket{0}$ of energy $0$ and excited state $\ket{1}$ with energy $\o_0$ coupled to a QHO bath. The coupling either excites the qubit and removes excitations from the bath, or \textit{v.v.} It will be useful to think of the bath in this case as electromagnetic modes of cavity. 

As usual, let us transform to the interaction picture wrt $H_0$, so that 
\bes
\begin{align}
\tilde{H}(t) &= U_0^\dgr(t)H_{SB}U_0(t) = \s_+(t)\ox B(t)+\text{h.c.}\\
\s_+(t) &= e^{i\o t}\s_+ \ , \qquad B(t) = \sum_k e^{-i\o_k t} g_k b_k\ .
\end{align}
\ees
Then the joint system-bath state $\ket{\phi(t)}$ (assume it is pure) in the interaction picture is given by $\ket{\phi(t)} = U(t)\ket{\phi(0)}$, where $U(t) = T_+ \exp\left(-i \int_0^t \tilde{H}(t')dt'\right)$.

This model is not analytically solvable in general. However, we shall assume that the cavity  supports at most one photon. Under this assumption the model becomes analytically solvable, as we shall see.

\subsubsection{Analytical solution in the $1$-excitation subspace}

\paragraph{The $1$-excitation subspace is conserved}

Let $\ket{0}_B$ denote the vacuum state of
the bath and consider the following joint system-bath states:
\bes
\begin{align}
\ket{\psi_0} &= \ket{0}_S \otimes \ket{v}_B,\\
\ket{\psi_1} &= \ket{1}_S \otimes \ket{v}_B,\\
\ket{\psi_k} &= \ket{0}_S \otimes \ket{k}_B,
\end{align}
\ees
where $\ket{k}=b_k^\dagger\ket{v}_B = \ket{0_1,\dots,0_{k-1},1_k,0_{k+1},\dots}$ denotes the state with one photon in mode $k$ ($\ket{k}$ is not to be confused with the usual labels for the computational basis of a qubit). Assume that the initial joint system-bath state contains at most a single excitation, i.e.:
\beq
\ket{\phi(0)} = c_{0}\ket{\psi_{0}} + c_1 (0) \ket{\psi_{1}} + \sum_{k} c_{k} (0) \ket{\psi _{k}}\ .
\label{eq:752}
\eeq
We wish to show that under the Hamiltonian above this remains true for all times, i.e., for all $t$:
\beq
\ket{\phi(t)} = c_0(t)\ket{\psi_{0}} + c_1(t) \ket{\psi_{1}} + \sum_{k} c_{k}(t) \ket{\psi _{k}}\ .
\label{eq:753}
\eeq
This is intuitively clear, since the system-bath coupling either excites the qubit while removing a photon, or \textit{v.v.}, and $H_0$ creates no new excitations. Nevertheless, let us give a formal argument for completeness.

Define the \emph{excitation number operator} by
\begin{align}
N = \sigma_+ \sigma_- \otimes \mathbb{I} + \mathbb{I}\otimes\sum_k b_k^\dagger b_k.
\end{align}
The name is well deserved since:
\bes
\begin{align}
\label{eq:755a}
N \ket{\psi_0} &= (\s^+\s_-\ket{0}) \ox \ket{v} + \ket{0}\ox \sum_k b_k^\dagger b_k \ket{v} = 0 \cdot \ket{\psi_0} \\
\label{eq:755b}
N \ket{\psi_1} &= (\s^+\s_-\ket{1}) \ox \ket{v} + \ket{1}\ox \sum_k b_k^\dagger b_k \ket{v} = \ket{1}\ox\ket{v} = 1  \cdot \ket{\psi_1} \\
\label{eq:755c}
N \ket{\psi_k} &= (\s^+\s_-\ket{0}) \ox \ket{v} + \ket{0}\ox \sum_{k'}b_{k'}^\dagger b_{k'} \ket{k} = \ket{0}\ox \sum_{k'}\d_{kk'}\ket{k} =  1 \cdot \ket{\psi_k} \ ,
\end{align}
\ees
where we used $\s_-\ket{0} = b_k \ket{v} = 0$. I.e., $N$ counts the number of excitations.

Next, note that the excitation number operator commutes with the total Hamiltonian $H$. That $[N,H_0]=0$ is obvious. As for $H_{SB}$, note first that $[\sigma_+ \sigma_-,\sigma_{\pm}] = \pm \s_\pm$, and $[n_{k'},b_k] = -b_k\d_{kk'}$, $[n_{k'},b^\dgr_k] = b^\dgr_k\d_{kk'}$. Therefore:
\bes
\begin{align}
[N,H_{SB}] &= [\sigma_+ \sigma_-,\sigma_{+}]\otimes B + [\sigma_+ \sigma_-,\sigma_{-}]\otimes B^{\dagger} + \sigma_+ \otimes \left[ \sum_k b_k^\dagger b_k, B \right] + \sigma_{-}\otimes\left[\sum_{k} b_{k}^{\dagger}b_{k},B^{\dagger}\right]\\
&= \sigma_{+}\otimes B - \sigma_{-}\otimes B^{\dagger} + \sigma_+ \otimes \left[ \sum_k n_k, \sum_{k'} g_{k'} b_{k'} \right] + \sigma_{-}\otimes\left[\sum_{k} n_{k},\sum_{k'} g_{k'} b_{k'} ^{\dagger}\right]\\
&= \sigma_{+}\otimes B - \sigma_{-}\otimes B^{\dagger} - \sigma_{+} \otimes B + \sigma_{-}\otimes B^{\dagger}= 0\ .
\end{align}
\ees
This means that $N$ is a conserved quantity, i.e., its eigenvalues are conserved under the evolution generated by $H$, or by $\tilde{H}(t)$ in the interaction picture. It also means that $H$ and $N$ share a common set of eigenvectors, which can be indexed using the eigenvalues of both $H$ and $N$. Eigenvectors with different eigenvalues of $N$ don't mix under the dynamics generated by $H$ or $\tilde{H}(t)$. This explains why, assuming the initial state is Eq.~\eqref{eq:752}, the state subsequently must be as in Eq.~\eqref{eq:753}: the state $\ket{\psi_{0}}$ has eigenvalue $0$ under $N$ [Eq.~\eqref{eq:755a}] and evolves as a separate one-dimensional subspace, and the states $\ket{\psi_{1}}$ and $\ket{\psi_{k}}$ have eigenvalue $1$ under $N$ [Eqs.~\eqref{eq:755b}, \eqref{eq:755c}], and also evolve as a separate two-dimensional subspace. $U(t)$ evolves each subspace separately and does not couple different subspaces labeled by different eigenvalues of $N$. 

Note that $i\partial_t\ket{\psi_{0}}=\tilde{H}(t)\ket{\psi_{0}} = 0$, which means, since $\ket{\psi_{0}}$ evolves separately, that $\ket{\psi_{0}(t)}=\ket{\psi_{0}(0)}$. Therefore $c_0(t)=c_0(0)$.

Even though the subspace spanned by $\{\ket{\psi_{0}},\ket{\psi_{1}},\ket{\psi_{k}}\}$ contains both $0$ and $1$ excitations, we loosely refer to it as the $1$-excitation subspace.

\paragraph{Schr\"{o}dinger dynamics in the $1$-excitation subspace}
Substituting Eq.~\eqref{eq:753} into the Schr\"{o}dinger equation, we have:
\bes
\begin{align}
i \partial_t{\ket{\phi(t)}} &= \dot{c}_1(t) \ket{\psi_{1}} + \sum_{k} \dot{c}_{k}(t) \ket{\psi _{k}} \\
&= \tilde{H}(t) \ket{\phi(t)} =  \left( \sigma_{+}(t)\ox B(t) + \sigma_{-}(t)\ox B^{\dagger}(t) \right)\left( c_{0}(0)\ket{\psi_{0} }+ c_{1}(t)\ket{\psi_{1}}  + \sum_{k} c_{k}(t) \ket{\psi_{k}}\right)\\
&= [\s_+(t)\ox B(t)]\sum_k c_k(t)\ket{0}\ox\ket{k}+c_1(t) [\s_-(t)\ox B^\dgr(t)](\ket{1}\ox\ket{v})\\
&= \ket{1} \otimes \sum_{k} g_{k} \ket{v} c_{k}(t)\ee^{\ii\omega_{0}t - \ii\omega_{k}t}+ c_{1}(t)\ket{0} \otimes \sum_{k} g^*_{k} \ket{k} \ee^{-\ii\omega_{0}t + \ii\omega_{k}t}  \\
&= \sum_{k} g_{k} c_{k}(t)\ee^{\ii(\omega_{0} -\omega_{k})t} \ket{\psi_{1}} + \sum_{k} g_{k}^{*} c_{1}(t)\ee^{-i(\omega_{0}-\omega_{k})t} \ket{\psi_{k}}  \ .
\end{align}
\ees
Multiplying by $\bra{\psi_1}$ and $\bra{\psi_k}$ gives us two coupled differential equations for the amplitudes $c_{1}$ and $c_{k}$:
\bes
\begin{align}
\dot{c_1}(t) &= -\ii \sum_k g_k c_{k}(t)\ee^{\ii(\omega_{0} -\omega_{k})t} \label{eq:c1t}\\
\dot{c_k}(t) &= -\ii g_{k}^{*} c_{1}(t)\ee^{-\ii(\omega_{0} -\omega_{k})t} \label{eq:ckt}\ .
\end{align}
\ees
Integrating Eq.~\eqref{eq:ckt} gives:
\begin{align}
c_k (t) - c_k(0)= -\ii\int_0^t dt' g_{k}^{*} c_{1}(t')\ee^{-\ii(\omega_{0} -\omega_{k})t'}\ .
\label{eq:759}
\end{align}
For simplicity, let us assume that the cavity starts in the vacuum state, i.e., $c_k(0)=0$. 
Then, after substituting the above into Eq.~\eqref{eq:c1t} we obtain:
\begin{align}
\label{eq:diff_c1}
\dot{c_1}(t) = - \int_{0}^{t}dt' f(t-t') c_1(t')\ ,
\end{align}
where the ``memory function" $f$ is:
\begin{align}
f(t) &= \sum_k \left|  g_k \right|^2 \ee^{\ii(\omega_{0}-\omega_k)t}= \int_0^\infty d\omega J(\omega )\ee^{\ii(\omega_0-\omega)t}\ ,
\label{eq:761}
\end{align}
where $J(\omega)$ is the bath spectral density, formally given as usual by $J(\omega )= \sum_k |g_k|^2 \d(\o-\o_k)$.
 
Since Eq.~\eqref{eq:diff_c1} is a convolution, it can be solved by means of a Laplace transform,
\beq
\textrm{Lap}[f] \equiv \hat{f}(s) \equiv \int_0^{\infty} dt\ e^{-st} f(t)\ ,
\label{eq:Lap}
\eeq 
since the Laplace transform of a convolution of two functions is the product of their Laplace transforms:
\beq
\textrm{Lap} [\int_{0}^{t}dt' f(t-t')c_1(t') ] = \hat{f}(s) \hat{c}_1(s)\ .
\label{eq:Lap-conv}
\eeq
Also, the Laplace transform of a derivative of a function $g(t)$ is
\beq
L[\frac{\partial g}{\partial t}] = s\tilde g(s) -g(0)\ .
\label{eq:Lap-deriv}
\eeq
Therefore
\begin{align}
\hat{c}_1(s) &= \frac{c_1(0)}{s + \hat{f}(s)}\ .
\label{eq:765}
\end{align}
This completes the analytical solution, since given the spectral density we can compute the excited state amplitude $c_1(t)$ by inverse Laplace transform of $\hat{c}_1(s)$, and from there the $c_k(t)$ amplitudes via Eq.~\eqref{eq:759}. Finally, recall that $c_0(t) = c_0(0)$. Eq.~\eqref{eq:753} then gives us the joint system-bath state in the $1$-excitation subspace.

\paragraph{System-only state}
With the analytical solution in hand for the joint system-bath state $\ket{\phi(t)}$, we can find the system-only state:
\begin{align}
\rho(t) = \Tr_B(\ketb{\phi(t)}{\phi(t)}) = 
\begin{pmatrix}
		\rho_{00}(t) & \rho_{01}(t) \\
		\rho_{01}^*(t) & \rho_{11}(t)
	\end{pmatrix} =
	\begin{pmatrix}
		1-\abs{c_1}^2 & c_0 c_1^* (t) \\
		c_0^*c_1(t) & \abs{c_1}^2
	\end{pmatrix}\ .
	\label{eq:766}
\end{align}
Note that normalization implies that $\abs{c_0}^2 + \abs{c_1(t)}^2 + \sum_k \abs{c_k (t)}^2=1$, so that $1-\abs{c_1}^2 \neq |c_0|^2$ (indeed, $c_0$ is constant), which is why  $\rho_{00}(t)\neq |c_0|^2$. To verify Eq.~\eqref{eq:766}, let us explicitly calculate the partial trace, recalling that $\ket{\phi(t)} = [c_0\ket{0} + c_1(t) \ket{1}]\ket{v} + \ket{0} \sum_{k} c_{k}(t) \ket{k}$:
\bes
\begin{align}
 \Tr_B(\ketb{\phi(t)}{\phi(t)}) &= \ave{v\ketb{\phi(t)}{\phi(t)}v} + \sum_k \ave{k\ketb{\phi(t)}{\phi(t)}k} \\
 &=[c_0\ket{0} + c_1(t) \ket{1}][c^*_0\bra{0} + c^*_1(t) \bra{1}] + \ketb{0}{0} \sum_{k',k''} c_{k'}\d_{kk'}(t)c^*_{k''}(t)\d_{k''k}\\
 &= [|c_0|^2 + \sum_k |c_k(t)|^2]  \ketb{0}{0} + c_0 c_1^* (t)\ketb{0}{1} + c_0^*c_1(t)\ketb{1}{0} + |c_1(t)|^2\ketb{1}{1}\ .
\end{align}
\ees

\paragraph{Exact master equation}
To connect the analytical solution to the master equation framework, let us now find the exact master equation satisfied by $\r(t)$. To do so, we differentiate Eq.~\eqref{eq:766}, to find:
\begin{align}
\dot{\rho} = 
	\begin{pmatrix}
		-\partial_t\abs{c_1}^2 & c_0\dot{c}_1^*(t) \\ 
		c_0^*\dot{c}_1 (t)  & \partial_t\abs{c_1}^2
	\end{pmatrix}\ . 
	\label{eq:dif_rhos}
\end{align}
The system-bath Hamiltonian describes an excitation and relaxation process. Therefore, recalling Eq.~\eqref{eq:282}, a reasonable ansatz for the exact master equation in the interaction picture is of the form
\begin{align}
\label{eq:master_eq_two_level}
\dot{\rho} &= -\frac{i}{2}S(t)[\sigma_+\sigma_-, \rho (t)] + \gamma (t) \left( \sigma_-\rho(t) \sigma_+ - \frac{1}{2}\{\sigma_+\sigma-,\rho (t)\} \right)\ ,
\end{align}
where the first term represents the Lamb shift and the second term represents relaxation. We will shortly verify this ansatz. Meanwhile, note that unlike Eq.~\eqref{eq:282}, the relaxation rate $\gamma$ is now time-dependent. This is an important difference, since there is now no guarantee that the rate is always positive and finite. 

Let us now check and confirm the ansatz. 
Note that
\bes
\label{eq:771}
\begin{align}
\sigma_- \rho \sigma_+ &= \begin{pmatrix} \r_{11} &0 \\ 0 & 0 \end{pmatrix} = \begin{pmatrix} |c_1(t)|^2 &0 \\ 0 & 0 \end{pmatrix}  \\
[\sigma_+ \sigma_- ,\rho] &= \begin{pmatrix} 0 & -\r_{01} \\ \r_{10} & 0 \end{pmatrix} = \begin{pmatrix} 0 & -c_0 c_1^*(t) \\ c_0^* c_1(t) & 0 \end{pmatrix}\\
\{\sigma_+ \sigma_-,\r\} &= \begin{pmatrix} 0 & \r_{01} \\ \r_{10} & 2\r_{11} \end{pmatrix} = \begin{pmatrix} 0 & c_0 c_1^*(t) \\ c_0^* c_1(t) & 2|c_1(t)|^2 \end{pmatrix}\ ,
\end{align}
\ees
where we used Eq.~\eqref{eq:766} for the second equality in each line. If Eq.~\eqref{eq:master_eq_two_level} holds then it must be true, using the first equality in each line of Eq.~\eqref{eq:771}, that:
\beq
\label{eq:772}
\dot{\r} = \begin{pmatrix} \g(t)|c_1|^2  & \left(\frac{i}{2}S(t)-\frac{1}{2}\g(t)\right)c_0c_1^*(t) \\ \left(-\frac{i}{2}S(t)-\frac{1}{2}\g(t)\right)c_0^*c_1(t) & -\g(t)|c_1|^2 \end{pmatrix}\ .
\eeq
Comparing the off-diagonal elements of Eqs.~\eqref{eq:dif_rhos} and Eq.~\eqref{eq:772} we find that they agree provided $\dot{c}_1 = -\frac{1}{2}c_1(t)[\g(t)+iS(t)]$, i.e.:
\bes
\label{eq:773}
\begin{align}
\label{eq:773a}
S(t) &= -2\Im\left(\frac{\dot{c}_1(t)}{c_1(t)}\right)\\
\gamma (t) &= -2\Re\left(\frac{\dot{c}_1(t)}{c_1(t)}\right)\ .
\label{eq:773b}
\end{align}
\ees
We have thus identified the Lamb shift rate and relaxation rate from the exact master equation~\eqref{eq:master_eq_two_level}.

But, to ensure that the ansatz is correct we still need to confirm that this identification also works for the diagonal elements. Let $c_1(t) = r(t)e^{i\t(t)}$. Then $\partial_t |c_1|^2 = 2\dot{r}r$, and also $\dot{c}_1 = \dot{r}e^{i\t(t)}+i\dot{\t}c_1$, which implies $\dot{c}_1/c_1 = \dot{r}/r + i\dot{\t}$, i.e., $\Re(\dot{c}_1/c_1) = \dot{r}/r$. Therefore, if Eq.~\eqref{eq:773b} holds then:
\beq
\g(t)|c_1(t)|^2 = -2(\dot{r}/r)r^2 = -2\dot{r}r = -\partial_t|c_1(t)|^2\ ,
\eeq
as required if Eqs.~\eqref{eq:dif_rhos} and Eq.~\eqref{eq:772} are to agree.

\paragraph{Connection with the TCL formalism}

Note that Eq.~\eqref{eq:master_eq_two_level} is in the form of the TCL-ME, since it is \emph{time-local}. Namely, we can introduce a time-local generator and rewrite it as
\begin{align}
\dot{\rho} &= \mathcal{K}_S (t) \rho (t) = \Tr_B\left[ \mathcal{K}(t)\rho(t)\otimes\rho_B \right] \ ,
\end{align}
where $\mathcal{K}(t)$ is the TCL generator [Eq.~\eqref{eq:TCL-ME}], which can be computed directly from the time-local generator $\mathcal{K}_S (t)$, which we identify here as 
$\mathcal{K}_S (t) = -\frac{i}{2}S(t)[\sigma_+\sigma_-, \cdot] + \gamma (t) \left( \sigma_\cdot \sigma_+ - \frac{1}{2}\{\sigma_+\sigma-,\cdot\} \right)$. Next, recall that $\mK(t) = \sum_{n=1}^\infty\a^{2n}\mK_n(t)$ [Eq.~\eqref{eq:685b}, where we have shifted the bath operators so all odd orders vanish]. Correspondingly, $\mK_S(t) = \sum_{n=1}^\infty\a^{2n}\mK_n(t)$, and therefore also
\beq
\g(t) = \sum_{n=1}^\infty\a^{2n}\g_{2n}(t)\ , \quad S(t) = \sum_{n=1}^\infty\a^{2n}S_{2n}(t)\ .
\label{eq:776}
\eeq

To make the connection between the exact solution of the qubit-in-cavity model and this perturbative expansion of the TCL-ME, recall that we started from the Liouville-von-Neumann equation in the form $\partial_t \tr = -i \a [\tilde{H}(t),\tr(t)] \equiv \a \mc{L}\tr(t)$ [Eq.~\eqref{eq:728}]. This means that if we were to introduce the dimensionless parameter $\alpha$ into the formulation of the qubit-in-cavity model, it would multiply the coupling constants $g_k$, and hence we would need to replace $f(t)$ with $\alpha^2 f(t)$ in Eq.~\eqref{eq:761}. Then Eq.~\eqref{eq:diff_c1} is replaced by
\begin{align}
\label{eq:diff_c1-2}
\dot{c_1}(t) = - \alpha^2 \int_{0}^{t}dt' f(t-t') c_1(t')\ .
\end{align} 

If we consider the Laplace transform solution for $c_1(t)$, given by the inverse Laplace transform of Eq.~\eqref{eq:765}, then to lowest order in $\alpha$ we simply have $c_1(t)=c_1(0)$. The reason is that the inverse Laplace transform of $c_1(0)/s$ [where have taken $\alpha\to 0$ in Eq.~\eqref{eq:765}] is $c_1(0)$. Therefore to lowest order in $\alpha$, Eq.~\eqref{eq:diff_c1-2} yields $\dot{c_1}(t) = - \alpha^2 c_1(0) \int_{0}^{t}dt' f(t-t') + O(\alpha^3)$, and it follows from Eq.~\eqref{eq:773} that
\bes
\label{eq:777}
\begin{align}
\label{eq:777a}
S_2(t) &= 2\Im\left(\int_{0}^{t}dt' f(t-t')\right)\\
\gamma_2 (t) &= 2\Re\left(\int_{0}^{t}dt' f(t-t')\right)\ .
\label{eq:777b}
\end{align}
\ees

\subsection{Jaynes-Cummings model on resonance}

Having derived the exact master equation for a qubit in a cavity, we can now apply it to compare the predictions of various master equations to the exact solution. To do so we need to specify the bath spectral density $J(\o)$. We will consider the Jaynes-Cummings model on resonance, a model in which the cavity supports a single mode with a frequency $\o_0$ equal to that of the qubit. First we consider the case where the cavity is completely isolated from the external world, then we consider the case where the cavity is coupled to the external electromagnetic field.

\subsubsection{Isolated cavity}
Assume that the cavity has opaque walls that act as infinitely tall barriers, so that no radiation can leak into or out of the cavity. In this case, with $\omega_0$ being the qubit transition frequency, since the cavity only has a single mode, at this frequency, the  spectral density becomes
\beq
J(\omega )= \sum_k |g_k|^2 \d(\o-\o_k) \mapsto |g|^2 \delta(\omega-\omega_0)\ .
\eeq
Therefore the memory function $f(t)$ [Eq.~\eqref{eq:761}] is
\beq
f(t)=\int_{0}^\infty d\omega J(\omega)e^{i(\omega_0-\omega)t}=|g|^2\ ,
\eeq
and the amplitude of the qubit excited state, $c_1(t)$, then satisfies
\beq
\dot{c_1}(t)=-\int_0^t ds f(t-t')c_1(s)=-|g|^2\int_0^t dt' c_1(t')\ .
\eeq
Rather than using the Laplace transform solution, it is simpler to differentiate both sides to get
\beq
\ddot{c_1}(t)=-|g|^2 c_1(t)\ .
\eeq
The solution of this differential equation is 
\beq
c_1(t)=A \cos(|g|t)+B \sin(|g|t)\ ,
\eeq
where $A$ and $B$ are constants. Thus, the population of the excited state is $\rho_{11}(t)=|c_1(t)|^2$, which oscillates with a period given by $\pi/|g|$, as expected from a qubit coupled to an oscillator resonant with it.

\subsubsection{Leaky cavity}
Next we consider the case where, instead of opaque walls, the cavity allows photons to leak out or in. It can be shown that in this case 
the memory function is 
\beq
f(t) = \frac{1}{2\tau_M \tau_B}e^{-t/\tau_B}
\eeq
where $\tau_M$ is a Markovian timescale whose exact meaning will become apparent below, and $\tau_B$ is the usual bath correlation time (decay time of $\ave{B(t)B(0)}_B$). Moreover, it can be shown that $\alpha^2 = \tau_B/\tau_M$, where $\alpha$ is the dimensionless system-bath coupling strength we have used as a dimensionless prefactor for $H_{SB}$ in the TCL-ME.

The excited state amplitude $c_1(t)$ then satisfies
\begin{align}
\dot{c_1}&=-\int_0^t dt' f(t-t')c_1(t') = -\frac{1}{2\tau_M \tau_B}\int_0^t dt' e^{-(t-t')/\tau_B}c_1(t') \ .
\end{align}
It is again simpler to differentiate once more rather than use the Laplace transform:
\beq
\ddot{c}_1 +\frac{1}{\tau_B} \dot{c}_1 +\frac{1}{2\tau_M \tau_B} c_1 = 0\ ,
\eeq
a simple second order differential equation. Its solution is:
\beq
c_1(t)=c_1(0) e^{-\frac{t}{2\tau_B}}
\left[
\cosh\left(
\frac{t\d}{2}\right)
+\frac{1}{\tau_B\d} \sinh\left(
\frac{t\d}{2}\right)
\right]\ ,
\label{eq:c1-TCL}
\eeq
where 
\beq
\d=\sqrt{\frac{1}{\tau_B^2}-\frac{2}{\tau_M\tau_B}} = \frac{1}{\tau_B}\sqrt{1-2\alpha^2} \ .
\eeq
The excited state population is $\rho_{11}(t)=|c_1(t)|^2$. 
We thus have two distinct cases:

\paragraph{Weak coupling}
This is the case when $\alpha^2 = \tau_B/\tau_M \leq 1/2$, so that  $\d \in \Re$. Then $S(t) = 0$ [Eq.~\eqref{eq:773a}] and Eqs.~\eqref{eq:773b} and~\eqref{eq:c1-TCL} yield:
\bes
\begin{align}
\g(t) &= \frac{\frac{2}{\tau_M\tau_B}\cosh\left(\frac{t\d}{2}\right)}{\d \cosh\left(\frac{t\d}{2}\right)+\frac{1}{\tau_B} \sinh\left(\frac{t\d}{2}\right)}\\
\rho_{11}(t)&=\r_{11}(0) e^{-\frac{t}{\tau_B}}
\left|\left[
\cosh\left(
\frac{t\d}{2}\right)
+\frac{1}{\tau_B\d} \sinh\left(
\frac{t\d}{2}\right)
\right]\right|^2\ .
\label{eq:788}
\end{align}
\ees
In this case the population decays, i.e., the dynamics is Markovian-like.

\paragraph{Strong coupling}
This is the case when $\alpha^2 = \tau_B/\tau_M > 1/2$, so that  $\d \in \Im$. Then $S(t) \neq 0$ [Eq.~\eqref{eq:773a}] and Eq.~\eqref{eq:773b}  and~\eqref{eq:c1-TCL} yield:
\bes
\label{eq:789}
\begin{align}
\label{eq:789a}
\g(t) &= \frac{\frac{2}{\tau_M\tau_B}\cos\left(\frac{t|\d|}{2}\right)}{\d \cos\left(\frac{t|\d|}{2}\right)+\frac{1}{\tau_B} \sin\left(\frac{t|\d|}{2}\right)}\\
\label{eq:789b}
\rho_{11}(t)&=\r_{11}(0) e^{-\frac{t}{\tau_B}}
\left|\left[
\cos\left(
\frac{t|\d|}{2}\right)
+\frac{1}{\tau_B|\d|} \sin\left(
\frac{t|\d|}{2}\right)
\right]\right|^2\ .
\end{align}
\ees
In this case the population exhibits damped oscillations, i.e., the dynamics is non-Markovian. 

With this analytical solution in hand, we are ready to compare to the predictions of the TCL-ME.

\subsubsection{Comparison to TCL-ME, Markov limit, and NZ-ME}

Recall that the TCL-ME expansion is, in the present case, equivalent to an expansion of $\g(t)$ and $S(t)$ in powers of $\alpha$, as in Eq~\eqref{eq:776}. We can thus obtain the $\g_{2n}(t)$ terms for the weak coupling case by expanding $\g(t)$ of Eq.~\eqref{eq:788} in powers of $\alpha$, and similarly for the strong coupling case. 

We can also use Eq.~\eqref{eq:777b}, so that:
\beq
\gamma_2 (t) = 2\Re\left(\int_{0}^{t}dt' f(t-t')\right) = \Re\left(\int_{0}^{t}dt'\frac{1}{\tau_M \tau_B}e^{-(t-t')/\tau_B}\right) =
\frac{1}{\tau_M}\left(
1-e^{-t/\tau_B}\right)\ ,
\label{eq:791}
\eeq
which is clearly an example of the weak coupling case (as expected for a low-order-in-$\alpha$ expansion) since the rate exhibits no oscillations. Note that $\gamma_2 (t)$ has a rise time of $\tau_B$ to its asymptotic value of $1/\tau_M$.

Recall that the TCL-2 result is exactly the Redfield equation, as we showed in Sec.~\ref{sec:TCL2-Redfield}. Moreover, if we take the upper limit of the integral to infinity we have the Markov limit. Therefore:
\beq
\g_2(\infty) = 1/\tau_M \equiv \g_0 \ ,
\eeq
which explains the subscript $M$ notation we used all along in this example. We already know the solution in the Markovian limit: $\r_{11}(t) = \r_{11}(0)e^{-t/\tau_M}$.

By doing the expansion to fourth order in $\alpha$ we find:\footnote{Note that the result given in the book~\cite{Breuer:book} differs from Ref.~\cite{Breuer:99}[Eq.~(69)]; the latter is the correct one.}
\beq
\gamma_4(t)=\frac{1}{\tau_M} \left(1-e^{-t/\tau_B}+\frac{\tau_M}{\tau_B}[
\sinh(t/\tau_B) -t/\tau_B]e^{-t/\tau_B}\right)\ ,
\eeq
which has the limiting behavior $\gamma_4(\infty) = \frac{1}{\tau_M}+\frac{1}{2\tau_B} > \g_2(\infty)$.

What about the NZ-ME? It can be shown that to second order in $\alpha$, the NZ-ME yields exactly the same result as TCL-2, except that two changes are needed: (1) $e^{-t/\tau_B}$ is replaced by $e^{-t/(2\tau_B)}$ in the results for $\r_{11}(t)$, and (2) $\d$ is replaced by 
\beq
\d' = \sqrt{\frac{1}{\tau_B^2}-\frac{4}{\tau_M\tau_B}} = \frac{1}{\tau_B}\sqrt{1-4\alpha^2} \ .
\eeq

\begin{figure}[hb]
	\includegraphics[scale=0.5]{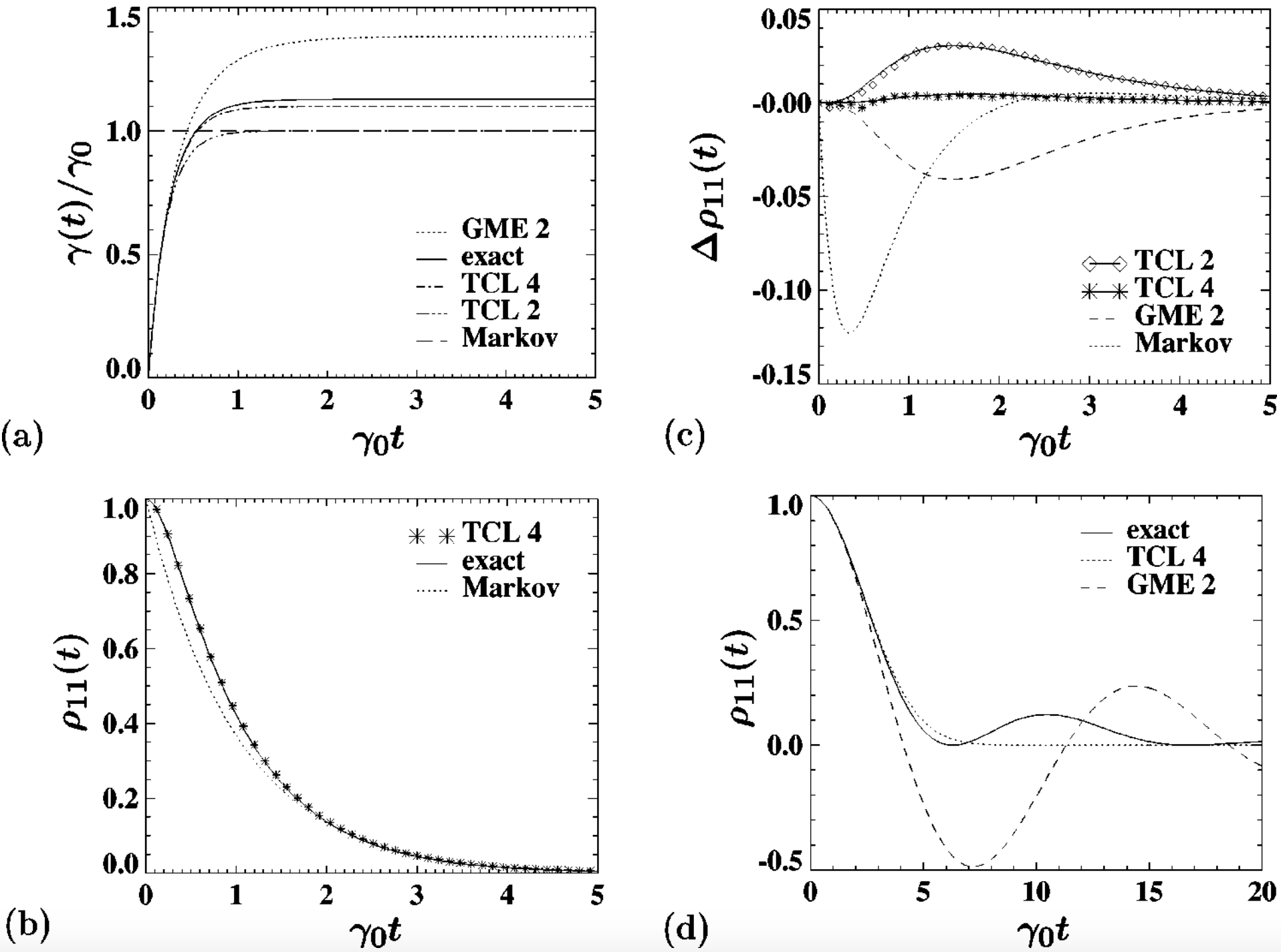}
	\caption{Damped Jaynes-Cummings model on resonance. Exact solution (exact), TCL-ME to second (TCL 2)
and fourth order (TCL 4), NZ-ME to second order (GME 2), and the RWA-LE (Markov).
(a) Decay rate of the excited state population, (b) the population of the excited state, including a stochastic simulation of the TCL-ME with $10^5$ realizations (diamonds for TCL 2and stars for TCL 4), and (c) deviation of the approximate solutions from the exact result, for
$1/\gamma_0 \equiv \tau_M=5\tau_B$ (weak coupling). (d) Population of the excited state for $1/\gamma_0 \equiv \tau_M=0.2\tau_B$ (strong coupling).
Source: Ref.~\cite{Breuer:99}.}
\label{fig:JCcomparison}
\end{figure}

Figure~\ref{fig:JCcomparison} shows these various results in terms of the deviation of the excited state population from the exact result. Focusing on panels (a)-(c), which report results for the weak coupling case, it illustrates a number of points:
\begin{itemize}
\item All approximations, except Markov, are good for very short times (shorter than $\tau_B$).
\item The Markov approximation initially overestimates the depopulation of the excited state, the underestimates it for longer times. It is a particularly poor approximation for times shorter than $\tau_B$, which is the rise-time of the curves in panel (a). 
\item TCL-2 (Redfield) underestimates the depopulation of the excited state for intermediate times.
\item TCL-2 converges to Markov in the long-time limit.
\item NZ-2 overestimates the depopulation of the excited state for intermediate times.
\item TCL-4 is a better approximation than both the TCL-2 and the Markov approximation. Its rate $\g_4(t)$ goes above the Markov rate, as expected since $\gamma_4(\infty) - \frac{1} = \frac{1}{2\tau_B}$.
\end{itemize}

\subsubsection{Breakdown of the NZ-ME and TCL-ME expansions for strong coupling}

What about the strong coupling case? The exact result is shown in Fig.~\ref{fig:JCcomparison}(d), and exhibits damped oscillations. The second order NZ-ME also exhibits damped oscillations, but the excited state population becomes negative! This result is physically non-sensical and is a clear example of violation of complete positivity of the evolution map. The TCL-4 approximation is good for short times but fails to capture the oscillations. To understand this let us take a step back and recall that the TCL-ME requires the invertibility of the operator $I-\Sigma$. The present example serves to illustrate how this invertibility condition can be violated, and how therefore the TCL can break down.

Assume that for different initial conditions $\{\r_{11}^{(1)}(0),\r_{11}^{(2)}(0),\r_{11}^{(3)}(0),\dots\}$ there is a common time $t_0$ at which the exact solution gives $\r_{11}^{\textrm{exact}}(t_0)=0$. This is indeed the case shown in Fig.~\ref{fig:JCcomparison}(d), as is easy to verify from Eq.~\eqref{eq:789b}: solving for its roots we have:
\beq
\tan(|\d|t_n/2) = -|\d|\tau_B\ \Longrightarrow \ t_n = \frac{2}{|\d|}(\arctan(|\d|\tau_B)+n\pi)\ \Longrightarrow \ t_0 = \min_n t_n \ ,
\eeq
where $n$ runs over the integers. Now, since the TCL-ME is time-\emph{local}, i.e., it only ``knows" about the current time $t$, this means that for $t\geq t_0$ it is impossible to invert the evolution back to the initial condition, as this information is lost in a time-local description. We therefore expect the TCL-ME to give unreliable results when the exact solution predicts a vanishing population. This is precisely what is seen in Fig.~\ref{fig:JCcomparison}(d).

Mathematically, we can see this another way. Eq.~\eqref{eq:789a} tells us that $\g(t)$ diverges at the same times $t=t_n$ where $\r_{11}^{\textrm{exact}}(t)=0$. More fundamentally, this is because $c_1(t)=0$ implies via Eq.~\eqref{eq:773b} that $\g(t)$ diverges (unless $\dot{c}_1(t)=0$ at the same time). But if $\g(t)$ diverges then it does not have a Taylor series, so the various $\g_{2n}(t)$ are undefined, and the TCL-ME expansion does not exist.


\section{Post Markovian Master Equation}

We have seen a variety of approaches to describing the reduced system dynamics via master equations, ranging from the exact Nakajima-Zwanzig equation, via the time-convolutionless, to the Markovian limit. In this section we will review a master equation approach that naturally interpolates between the Markovian limit and the limit of exact dynamics, as expressed in terms of CP maps via the Kraus OSR~\cite{ShabaniLidar:05}. The key idea will be to understand both limits as arising from a non-selective measurement process of the bath state. The exact dynamics corresponds to a single measurement at the final time, whereas Markovian dynamics corresponds to the limit of infinitely many measurements. The interpolation will thus limit the number of measurements in order to arrive at an non-Markovian approximation.

\subsection{Measurement interpretation of the Kraus OSR and the Lindblad equation}

Consider the usual setup of open system evolution, with the initial state $\r(0)=\r_S(0)\ox\r_B$ evolving under a joint unitary $U$ to the final state $\r(t)=U(t)\r(0)U^\dgr(t)$. The reduced system state at the final time is $\r_S(t) = \Tr_B[\r(t)]$. We wish to show that this can be understood equivalently as a projective measurement of the bath at the final time, as depicted schematically in Fig.~\ref{Fig:time_line_1}.

Suppose that we measure the bath at the final time $t$ via the complete set of projection operators $\{P_k = \ketb{k}{k}\}$. Thus, if outcome $k$ was observed, then the joint state transforms as 
\beq
\r(t) \overset{P_k}{\longmapsto} \frac{(I_S\ox P_k)\r(t)(I_S\ox P_k)}{p_k} \equiv \r^{(k)}(t)
\eeq
with probability $p_k = \Tr[(I_S\ox P_k)\r(t)]$. The reduced system state for this outcome is
\beq
\r_S^{(k)}(t) = \Tr_B[\r^{(k)}(t)] = \sum_{k'}\bra{k'}\r^{(k)}(t)\ket{k'} = \frac{\bra{k}\r(t)\ket{k}}{p_k} \ .
\eeq
Assuming we do not keep track of the measurement outcome, i.e., the measurement is non-selective, the final system state is the mixed state ensemble [recall Eq.~\eqref{eq:mixedens}] $\{p_k,\r_S^{(k)}(t)\}$, i.e.,
\beq
\r_S(t) = \sum_k p_k \r_S^{(k)}(t) = \sum_k \bra{k}\r(t)\ket{k} = \Tr_B[\r(t)]\ ,
\eeq
i.e., exactly the Kraus OSR result. Thus we can indeed understand the Kraus OSR as joint unitary evolution followed by a single non-selective measurement of the bath at the final time $t$.

In other words, we have shown that the following two evolutions are equivalent:
\bes
\begin{align}
&\r(0) \overset{U(t)}{\longmapsto} \r(t) \overset{\Tr_B}{\longmapsto} \r_S(t) \\
&\r(0) \overset{U(t)}{\longmapsto} \r(t) \overset{P_B}{\longmapsto} \r^{(k)}(t) \overset{\Tr_B}{\longmapsto} \r_S^{(k)}(t) \overset{\text{non-selective}}{\longmapsto} \r_S(t) \ ,
\end{align}
\ees
where $P_B$ denotes a projective measurement of the bath with projectors $\{P_k\}$.

\begin{figure}[!ht]
\includegraphics[width=80mm,height=30mm]{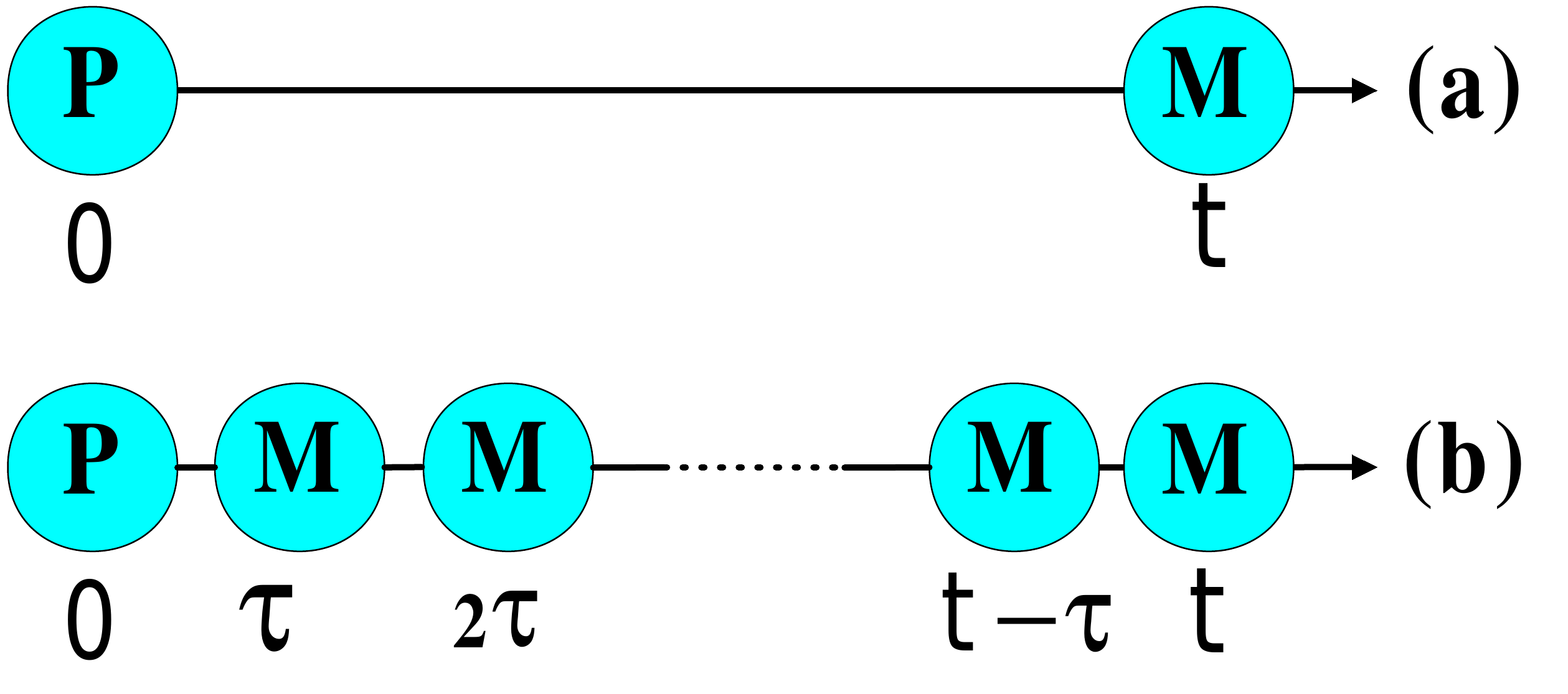}
\caption{Measurement approach to open system dynamics. P=preparation, M= measurement, time proceeds from left to right. (a) Exact Kraus operator sum representation. (b) Markovian approximation.}
\label{Fig:time_line_1} 
\end{figure}

For the Lindblad equation, we have already shown in Sec.~\ref{sec:LE-derivation} that it can be understood as arising from a sequence of infinitesimal CP maps. More specifically, we showed that the LE
\beq
\dot{\rho}_S=-i[H,\r_S]+\sum_{\alpha\geq 1} L_\alpha\rho L_\alpha^\dag - \frac{1}{2}\{L_\alpha^\dag L_\alpha ,\rho_S \}
\eeq
is equivalent to the sequence of CP maps
\beq
\r_S(t+\tau) = \sum_{\a\geq 0} K_\a\r_S(t) K_\a^\dgr \ ,
\eeq
where $\tau\to 0$ and 
\bes
\begin{align}
K_0 &= I+(-iH-\frac{1}{2}\sum_{\a\geq 1}L_\a^\dgr L_\a)\tau \qquad \text{conditional evolution}\\
K_\a &= L_\a \sqrt{\tau}\ , \quad (\a\geq 1) \qquad \text{jumps} \ .
\end{align}
\ees
Since we have just shown that each CP map can be understood as a projective measurement of the bath, we see that the LE can also be understood as representing an infinite sequence of such measurements, taking place in intervals of length $\tau$. Since each such measurement disentangles the system and bath state, it can be viewed as a preparation step of a new product state between the system and bath; see Fig.~\ref{Fig:time_line_1}.

\subsection{Interpolating between the two limits: derivation of the PMME}

Having seen that the exact Kraus OSR and the fully Markovian LE are two measurement limits, it is natural to consider an intermediate scenario, of a finite number of intermediate measurements between the initial and final times. Consider the simplest case, of a single projective measure of the bath at a random time $t^\prime\in(0,t)$, and note that the more measurements we introduce, the more Markovian the evolution becomes. We assume that the same CPTP map $\Lambda$ governs the evolution in the period $[0,t^\prime)$ and $(t^\prime,t]$, as shown in Fig.~\ref{Fig:time_line_2}.
\begin{figure}[b]
\includegraphics[width=80mm]{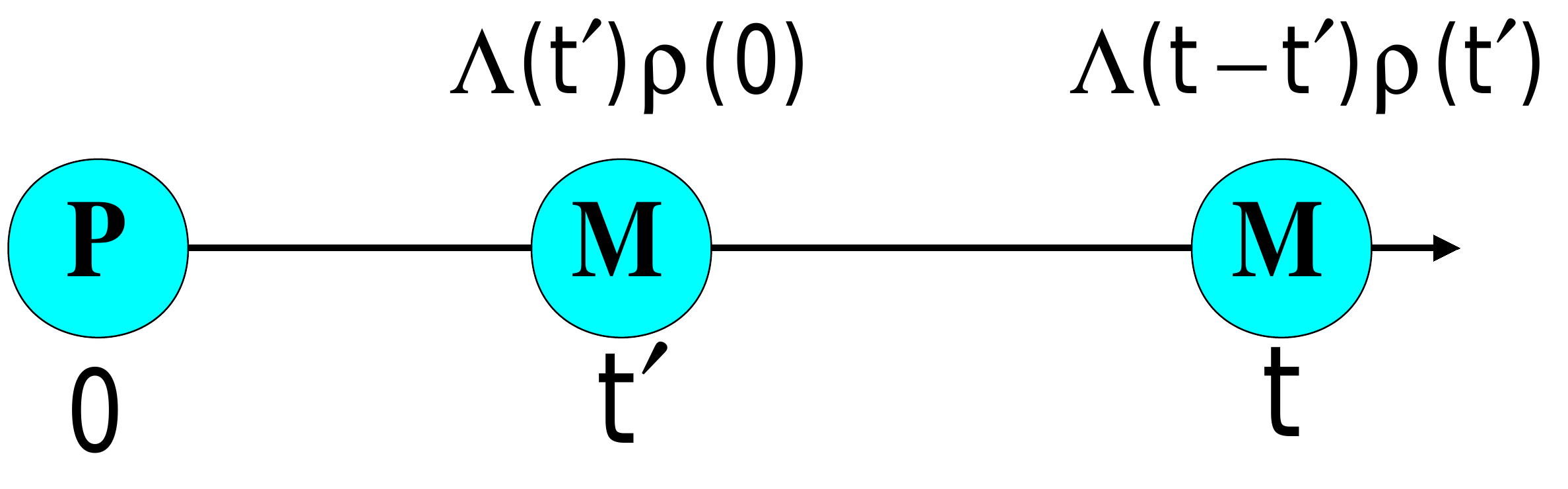}
\caption{\label{Fig:time_line_2} A single projective measurement of the bath is preceded and followed by a CPTP map $\Lambda$. For that specific outcome $\rho(t')=\Lambda(t^\prime)\rho(0)$ and $\rho(t)=\Lambda(t-t^\prime)\rho(t^\prime)$. To account for all possible outcomes each such trajectory is weighted as in Eq.~\eqref{eq:PMME_first_form}.}
\end{figure}
The measurement produces a random system state $\r(t')$ (where we from here on we drop the subscript $S$ since we are interested only in the system dynamics), which is then propagated to $\r(t)$, i.e., $\r(t) = \Lambda(t-t')\r(t')$. But since we do not know the outcome, nor the time $t'$, we introduce a weighting function $k(t-t',t)$ (the choice to make the argument depend on the remaining time interval $t-t'$ rather than $t'$ is for later convenience). The final state $\rho(t)$ can then be represented in the following form:
\beq
\label{eq:PMME_first_form}
\rho(t)=\int_0^t\underbrace{k(t-t^\prime,t)}_{\text{weight (kernel)}}\Lambda(t-t^\prime)\rho(t^\prime)dt^\prime
\eeq
It is convenient to change variables to $s=t-t^\prime$, so that:
\beq
\rho(t)=\int_0^t k(s,t)\Lambda(s)\rho(t-s)ds \ .
\label{eq:789}
\eeq
%
Our purpose is to arrive at a master equation, so let us differentiate Eq.~\eqref{eq:789} with respect to $t$:
\bes
\begin{align}
\frac{\partial{\rho}}{\partial t}&=\frac{\partial}{\partial t}\int_0^t k(s,t)\Lambda(s)\rho(t-s)ds \\
&= k(t,t)\Lambda(t)\rho(0) + \int_0^t \left(\frac{\partial k(s,t)}{\partial t}\Lambda(s)\rho(t-s)+k(s,t)\Lambda(s)\frac{\partial \rho(t-s)}{\partial t}\right)ds
\ .
\label{eq:804b}
\end{align}
\ees
The first term corresponds to performing the bath measurement at $t=0$ and then evolving from $\r(0)$ via $\Lambda(t)$. This term can thus be dropped [formally, by setting $k(t'=0,t)=k(s=t,t)=0$] since we assumed that the intermediate measurement weighted by $k$ occurs in the open interval $(0,t)$. 
%
%
To make further progress let us specify the form of the CP map $\Lambda$. 
For simplicity, let us assume that the intermediate evolutions are themselves Markovian:
\beq
\Lambda(t)=e^{\mathcal{L}t} \ ,
\label{eq:806}
\eeq
where $\mL$ is a Lindbladian, since this is the unique way to ensure that $\Lambda$ is CPTP in the Markovian case. Then 
\beq
\frac{\partial \rho(t-s)}{\partial t} = \frac{\partial e^{\mL(t-s)} }{\partial t}\rho(0) = \mL e^{\mL(t-s)}\rho(0) = \mL \rho(t-s)\ ,
\eeq
so that Eq.~\eqref{eq:804b} simplifies to:
\beq
\label{eq:PMME_third_form}
\frac{\partial \rho}{\partial t}=\int_{0}^{t}\left(\frac{\partial k(s, t)}{\partial t}+k(s, t)\mathcal{L}\right)e^{\mathcal{L}s}\rho(t-s)ds\ .
\eeq
We now seek to ensure that this evolution is trace-preserving. This requires the RHS to be traceless, since then $0 = \Tr\partial_t \rho = \partial_t \Tr\rho = 0$, so that $\Tr\r(t)=\textrm{const}$. It is sufficient to this end to demand that $\partial_t k(s,t) = 0$, since the second term is already traceless:
\beq
\Tr\left[ \int_{0}^{t} ds\ k(s, t)\mathcal{L}e^{\mathcal{L}s}\rho(t-s)\right] =  \int_{0}^{t} ds\ k(s, t) \Tr\left[\mathcal{L}e^{\mathcal{L}s}\rho(t-s) \right]=0\ ,
\eeq
since for a Lindbladian $\mL$ acting on any operator $X$
\beq
\Tr[\mathcal{L}X]=\Tr[\sum_\alpha L_\alpha X L_\alpha^\dag-\frac{1}{2}L_\alpha^\dag L_\alpha X-\frac{1}{2}X L_{\alpha}^\dag L_\alpha]= \sum_\alpha \Tr[ X L_\alpha^\dag L_\alpha]-\frac{1}{2}\Tr[  X L_\alpha^\dag L_\alpha]-\Tr[X L_{\alpha}^\dag L_\alpha ] = 0\ .
\eeq
Now, since $\partial_t k(s,t) = 0$, it follows that $k(s,t) = ck(s)$, where $c$ is a constant we can choose to be $1$. Therefore
\beq
k(s,t)\equiv k(s)\ .
\label{eq:k(s)}
\eeq
Then Eq.~\eqref{eq:PMME_third_form} reduces to:
\bes
\label{eq:PMME-final}
\begin{align}
\label{eq:PMME-final-a}
\frac{\partial \rho}{\partial t}&=\mathcal{L}\int_0^t k(s)e^{\mathcal{L}s}\rho(t-s)ds\\
&=\mathcal{L}k(t)e^{\mathcal{L}t}\ast\rho(t)\ ,
\end{align}
\ees
where in the second line $\ast$ denotes a convolution. \emph{Equation~\eqref{eq:PMME-final} is the PMME}.

Now consider two special cases of Eq.~\eqref{eq:PMME-final}:
\begin{itemize}
\item $k(s)=\delta(s)$: In this case the PMME reduces to $\frac{\partial \rho}{\partial t}=\mathcal{L}\rho(t)$, which is the standard Lindblad equation. Therefore the PMME includes the LE as a special case.
\item Expanding the exponential to zeroth order in $\mL$ (assuming $\|\mathcal{L}t\|\ll1$), the PMME reduces to $\frac{\partial \rho}{\partial t}=\mathcal{L}\int_0^t k(s)\rho(t-s)ds$, which is a form that has been proposed heuristically in the literature on non-Markovian master equations.
\item Since the PMME involves a convolution, it can be viewed as a special case of the NZ-ME. Namely, we can write the PMME in the NZ-ME form $\pt\hat\r(t) = \int_{0}^t\mK(t,t')\hat\r(t')dt'$ [recall Eq.~\eqref{eq:NZker}], where $\mK(t,t')$ is directly obtainable from Eq.~\eqref{eq:PMME-final-a}.

\end{itemize}


\subsection{Solution of the PMME}

To solve the PMME~\eqref{eq:PMME-final} we can use the Laplace transform~\eqref{eq:Lap}. Recall that the Laplace transform of the convolution of two functions is the product of their Laplace transforms: Eq.~\eqref{eq:Lap-conv}, and also recall the result for the Laplace transform of a derivative in Eq.~\eqref{eq:Lap-deriv}. 
Therefore, upon taking the Laplace transform of the PMME we find:
\begin{align}
s \tilde \rho(s) - \rho(0) &= \mL \textrm{Lap}[k(t)e^{\mL t}] \tilde \rho(s) .
\label{eq:819}
\end{align}
The Laplace transform satisfies the following shifting property:
\begin{align}
\textrm{Lap}[f(t)e^{at}] &= \tilde f(s-a) ,
\label{eq:Lap-shift}
\end{align}
but to use it requires a few extra steps, since it is not immediately clear how to deal with $e^{\mL t}$ in this context. Thus, we find that
it is most convenient to 
work in the eigenbasis of $\mL$. 
Since $\mL$ is not normal ($[\mL,\mL^\dag]\neq 0$ in general), it can have distinct right and left eigenvectors, i.e., we can find a set of operators $\{R_i\}$ and $\{L_i\}$ such that $\mL R_i = \lambda_i R_i$ and $L_i \mL = \lambda_i L_i$. Both sets are complete, and they are mutually orthonormal in the sense that after normalization $\Tr[L_iR_j] = \delta_{ij}$. 

We can therefore expand $\r$ in this so-called ``damping basis" (the basis of right eigenvectors of $\mL$), to get:
\beq
\rho(t) = \sum_i \mu_i(t) R_i \ ,
\label{eq.expandD}
\eeq
where the expansion functions are given by
\beq
\mu_j(t) = \sum_i \mu_i(t) \Tr(L_jR_i) = \Tr[L_j \rho(t)] \ .
\eeq
Substituting into the PMME Eq.~\eqref{eq:PMME-final-a} we obtain
\bes
\label{eq:PMME-final2}
\begin{align}
\label{eq:PMME-final-a2}
\frac{\partial \mu_i}{\partial t}R_i&=\sum_i \mathcal{L}\int_0^t k(s)e^{\mathcal{L}s}\mu_i(t-s)R_i ds\\
&=\sum_i \lambda_i \int_0^t k(s)e^{\lambda_i s}\mu_i(t-s)R_i ds\ ,
\end{align}
\ees
where we used $e^{\mathcal{L}s}R_i = e^{\lambda_i s} R_i$. Multiplying both sides by $L_i$ and taking the trace yields:
\beq
\frac{\partial \mu_i}{\partial t} = \lambda_i \int_0^t k(s) e^{\lambda_i s}\mu_i(t-s) ds .
\eeq
At this point we can take the Laplace transform of both sides and use the shifting property~\eqref{eq:Lap-shift}, to get:
\bes
\begin{align}
s\tilde{\mu}_i(s)-\mu_i(0) &= \lambda_i \textrm{Lap} \bigg[ k(t) e^{\lambda_i}t \bigg] \tilde{\mu}_i(s) \\
&= \lambda_i\tilde{k}(s-\lambda_i)\tilde{\mu}_i(s)
\end{align}
\ees
Therefore:
\begin{align}
\tilde \mu_i(s) = \frac{1}{s - \lambda_i \tilde k(s-\lambda_i)} \mu_i(0) .
\end{align}
Finally, taking the inverse Laplace transform:
\begin{align}
\mu_i(t) = \xi_i(t) \mu_i(0) \ ,
\end{align}
where:
\begin{align}
\xi_i(t) &\equiv \mathrm{Lap}^{-1} \ls \frac{1}{s - \lambda_i \tilde k(s-\lambda_i)}\rs\ , \quad \mu_i(0) = \Tr[L_i\rho(0)]\ .
\label{eq:823}
\end{align}
This completes the exact solution of the PMME. 

To summarize, given $\mL$ we need to compute its eigenvalues $\lambda_i$ and associated left and right eigenvectors, and given the kernel $k(t)$ we need to compute its Laplace transform. Using the initial condition $\r(0)$, we can then compute $\xi_i(t)$ and $\mu_i(t)$, from which we obtain $\r(t)$ using Eq.~\eqref{eq.expandD}. The kernel $k(t)$ was assumed to satisfy the condition $k(0)=0$.

\subsection{The PMME as a map, and its relation to the TCL-ME}

The solution of the PMME can be viewed as a map $\Phi$:
\beq
\rho(t) =  \sum_i \mu_i(t)R_i = \sum_i \xi_i(t) \mu_i(0) R_i = \sum_i \xi_i(t) \Tr[L_i\rho(0)]R_i = \Phi [ \rho(0)] \ ,
\eeq
where
\beq
\Phi [ X ] \equiv \sum_i \xi_i(t) \Tr[L_i X] R_i\ .
\label{eq:Phi-PMME}
\eeq

Let us assume that $\xi_i(t) \neq 0$ $\forall t$. If this is the case then $\Phi$ is invertible, i.e., if we let
\beq
\Phi^{-1} [ X ] = \sum_i \xi^{-1}_i(t) \Tr[L_i X] R_i
\eeq
then
\beq
\Phi^{-1}\circ \Phi [ X ] = \sum_i \xi^{-1}_i(t) \Tr[L_i \Phi(X)] R_i = \sum_{ij} \xi^{-1}_i(t) \xi_j(t) \Tr[L_j X]\Tr[   L_iR_j] R_i = \sum_i \Tr[L_i X] R_i = X\ ,
\eeq
as required. Therefore, using $\r(t) = \Phi [\r(0)]$  we can write $\rho(t-t') = \Phi(t-t')[\rho(0)] = \Phi(t-t')\Phi^{-1}(t) [\rho(t)]$, and so we have:
\begin{align}
\frac{\partial \rho}{\partial t}= \ls\mL \int_0^t k(t') e^{\mL t'} \Phi(t-t')\Phi^{-1}(t)dt'\rs \rho(t) \equiv \mK(t) \rho(t)\ ,
\label{eq:827}
\end{align}
where $\mK(t)$ is now a convolutionless generator, and Eq.~\eqref{eq:827} is time-local. Therefore, despite the appearance of the convolution in the PMME~\eqref{eq:PMME-final}, it can be written in TCL-ME form. This is similar to what we did to transform the NZ-ME into the TCL-ME, where an invertibility assumptions was likewise assumed (recall Sec.~\ref{sec:relevantpart}). It is an interesting open problem to identify the conditions under which the TCL-ME reduces to the PMME.


\subsection{Complete Positivity of the PMME}

Due to the freedom in choosing the kernel $k(t)$, complete positivity is not a guaranteed feature of the PMME. The following theorem provides us with a way to construct a complete positivity test. 

Consider a linear map $\Phi: \mathbf{C}^{d\times d} \mapsto \mathbf{C}^{d\times d}$, i.e., $\Phi$ acts on operators represented by $d\times d$ matrices, acting on the Hilbert space $\mathcal{H} = \textrm{span}\{\ket{i}\}_{i=1}^d$. Let us pick $\ket{i}$ as a column vector of zeroes, except for a single $1$ in position $i$. Let $C = \{\Phi[ \ketb{i}{j}]\}_{ij} = \sum_{ij} \ketb{i}{j} \otimes \Phi[ \ketb{i}{j}]$. I.e., $C$, known as the Choi matrix, is a $d\times d$ matrix of the $d\times d$ matrices $\Phi[ \ketb{i}{j}]$, meaning that $C$ is $d^2\times d^2$. 

\begin{thm}[Choi's theorem~\cite{Choi:75}]
$\Phi$ is completely positive if and only if $C>0$.
\end{thm}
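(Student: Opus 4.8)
The plan is to prove both implications by identifying the Choi matrix with the image, under $\mathcal{I}_d\ox\Phi$, of an (unnormalized) maximally entangled state. First I would introduce $\ket{\Omega}\equiv\sum_{i=1}^d\ket{i}\ox\ket{i}\in\mathcal{H}\ox\mathcal{H}$ and record the compact rewriting
\beq
C=(\mathcal{I}_d\ox\Phi)[\ketbra{\Omega}]\ ,
\eeq
which holds because $\ketbra{\Omega}=\sum_{ij}\ketb{i}{j}\ox\ketb{i}{j}$, so that applying $\mathcal{I}_d\ox\Phi$ reproduces exactly $\sum_{ij}\ketb{i}{j}\ox\Phi[\ketb{i}{j}]=C$. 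With this observation the ``only if'' direction is immediate: if $\Phi$ is CP then by definition (taking the ancilla dimension $k=d$) the superoperator $\mathcal{I}_d\ox\Phi$ maps positive operators to positive operators, and since $\ketbra{\Omega}>0$ it follows at once that $C=(\mathcal{I}_d\ox\Phi)[\ketbra{\Omega}]>0$.

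For the substantive ``if'' direction I would assume $C>0$ (positive semidefinite) and extract Kraus operators from it, then invoke the earlier result [Eq.~\eqref{eq:139}] that any map in operator-sum form $\Phi[X]=\sum_\a K_\a X K_\a^\dgr$ is automatically completely positive. Since $C$ is positive semidefinite it admits a spectral decomposition $C=\sum_\a \ketbra{w_\a}$ with $\ket{w_\a}=\sqrt{\lambda_\a}\ket{v_\a}$ and eigenvalues $\lambda_\a\geq 0$; the nonnegativity of the $\lambda_\a$ is exactly where the hypothesis $C>0$ enters. Next I would use the vectorization correspondence to associate to each $\ket{w_\a}\in\mathcal{H}\ox\mathcal{H}$ the unique $d\times d$ operator $K_\a$ defined by $\ket{w_\a}=(I\ox K_\a)\ket{\Omega}$, equivalently by declaring $\bra{j}K_\a\ket{i}$ to be the coefficient of $\ket{i}\ox\ket{j}$ in $\ket{w_\a}$.

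The heart of the argument is then to verify that these operators reproduce $\Phi$, i.e.\ that $\Phi[X]=\sum_\a K_\a X K_\a^\dgr$ for all $X$. The clean way to do this is to note that the Choi matrix determines a linear map uniquely, since the $(i,j)$ block of $C$ is literally $\Phi[\ketb{i}{j}]$ and linearity fixes $\Phi$ on the whole basis $\{\ketb{i}{j}\}$. It therefore suffices to check that the candidate map $X\mapsto\sum_\a K_\a X K_\a^\dgr$ has the same Choi matrix $C$; indeed
\beq
\sum_{ij}\ketb{i}{j}\ox\sum_\a K_\a\ketb{i}{j}K_\a^\dgr=\sum_\a (I\ox K_\a)\ketbra{\Omega}(I\ox K_\a^\dgr)=\sum_\a\ketbra{w_\a}=C\ ,
\eeq
using $\ket{w_\a}=(I\ox K_\a)\ket{\Omega}$. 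Having thus obtained an operator-sum representation for $\Phi$, complete positivity follows directly from Eq.~\eqref{eq:139}.

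I expect the main obstacle to be making the vectorization step rigorous: establishing that the map $\ket{w_\a}\mapsto K_\a$ is a well-defined bijection between $\mathcal{H}\ox\mathcal{H}$ and the $d\times d$ operators, and carefully tracking index conventions so that the identity $\ket{w_\a}=(I\ox K_\a)\ket{\Omega}$ and the Choi-matrix computation above come out consistently. Everything else — the spectral decomposition, the uniqueness of the map given its Choi matrix, and the final appeal to the previously established ``operator-sum $\Rightarrow$ CP'' fact — is routine once this correspondence is pinned down.
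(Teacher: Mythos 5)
Your argument is correct, and it is the standard proof of Choi's theorem; note, however, that the paper itself offers no proof of this statement -- it is quoted as a known result with a citation to Choi's original article -- so there is nothing internal to compare against. Both halves of your proposal check out. The identity $C=(\mathcal{I}_d\ox\Phi)[\ketbra{\Omega}]$ with $\ket{\Omega}=\sum_i\ket{i}\ox\ket{i}$ is exactly right for the paper's convention $C=\sum_{ij}\ketb{i}{j}\ox\Phi[\ketb{i}{j}]$, and the ``only if'' direction then follows from the definition of complete positivity; the only cosmetic caveat is that the paper defines CP via positivity of $\Phi\ox\mathcal{I}_R$ rather than $\mathcal{I}_R\ox\Phi$, so you should remark (one line) that the two are equivalent under the swap of tensor factors, which is a unitary conjugation. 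For the converse, your decomposition $C=\sum_\a\ketbra{w_\a}$, the vectorization $\ket{w_\a}=(I\ox K_\a)\ket{\Omega}$, and the verification that the candidate map $X\mapsto\sum_\a K_\a X K_\a^\dgr$ has Choi matrix $C$ -- hence coincides with $\Phi$ block by block and therefore everywhere by linearity -- are all sound, and the final appeal to the paper's earlier demonstration that a Kraus OSR is CP [Eq.~\eqref{eq:139}] closes the loop. Two minor points of hygiene: $\ketbra{\Omega}$ is only positive \emph{semi}definite (rank one), which is fine under the paper's convention that ``$A>0$'' means nonnegative eigenvalues not all zero, but you should not write it as strictly positive in a stricter convention; and the spectral decomposition is a convenience, not a necessity -- any decomposition of $C$ into a sum of rank-one positive terms yields a valid Kraus set, which is worth noting since it also explains the unitary freedom in the Kraus representation discussed in Theorem~\ref{th:Kraus-equiv}.
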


Constructing the Choi matrix $C$ for the PMME using Eq.~\eqref{eq:Phi-PMME} we have:
\bes
\begin{align}
C = \sum_{ij} \ketb{i}{j} \otimes \sum_k \xi_k(t)\Tr[L_k  \ketb{i}{j}] R_k = \sum_k \xi_k(t) \sum_{ij} \ketb{i}{j} \otimes  \bra j L_k \ket i R_k = \sum_k \xi_k(t) \sum_{ij} \ketb{i}{j} (L_k^T)_{ij} \otimes R_k\ .
\end{align}
\ees
Hence:
\beq
C= \sum_k \xi_k(t) L^T_k\otimes R_k >0 
\label{eq:PMMECP}
\eeq
Eq.~\eqref{eq:PMMECP} is the complete positivity for the kernel $k(t)$, for a given Lindbladian $\mL$ and its set of left and right eigenvectors.

\subsection{Example of the PMME: phase damping Lindbladian with an exponential kernel}
To illustrate the solution of PMME, consider the phase damping Lindbladian:
\beq
\mL \rho = \frac \g 2 (Z\rho Z- \rho) 
\eeq
To find the left and right eigenvectors of $\mL$, consider its action on the Pauli matrices $\{I,X,Y,Z\}$: 
\bes
\begin{align}
\mL I &= \frac \g 2 (ZIZ-I)=0\ , \quad \mL Z = \frac \g 2 (Z^3-Z)=0 \\
\mL X &= \frac \g 2 (ZXZ-X) = -\g X\ , \quad \mL Y = \frac \g 2 (ZYZ-Y) = -\gamma Y\ .
\end{align}
\ees 
Thus the Pauli matrices $\{R_i\} = \{I,X,Y,Z\}$ are $\mL$'s right eigenvectors, with corresponding eigenvalues $\{\lambda_i\} = \{0,-\g,-\g,0\}$. Representing the Pauli matrices as vectors, i.e., as $I = (1,0,0,0)^T, X=(0,1,0,0)^T$, etc., we can write $\mL$ as a diagonal matrix with diagonal entries $\{0,-\g,-\g,0\}$. It is then clear that the left eigenvectors are again the Pauli matrices, i.e., in this example $L_i=R_i$ for $i\in\{I,X,Y,Z\}$, and the condition $\Tr(L_i R_j) = \d_{ij}$ is automatically satisfied. 

Let us express the density matrix in terms of the Bloch vector: $\r(t) = \frac{1}{2}(I+\vec{v}(t )\cdot\vec{\s})$. The initial condition can then be written as
\beq
\mu_i(0) = \Tr[L_i \r(0)] = \frac{1}{2}v_i(0) \, 
\eeq
where $v_I(0) = 1$.

Let us now assume that the kernel $k(t)$ is:
\beq
k(t)=Ae^{-a t}\ .
\eeq
Recall that $\textrm{Lap}(e^{at}) = 1/(s-a)$, so that after the Laplace transformation we have
\beq
\tilde{k}(s)=\frac{A}{s+a}\ .
\eeq
Using Eq.~\eqref{eq:823} we thus find:
\beq
\xi_i(t) = \textrm{Lap}^{-1} \ls \frac{1}{s - \lambda_i \frac{A}{s-\lambda_i+a}}\rs\ .
\eeq
The $\mL$ eigenvectors $I$ and $Z$ have the eigenvalue $\lambda=0$, so that:
\beq
\xi_{I,Z}(t)=\textrm{Lap}^{-1}\ls \frac{1}{s} \rs= e^{0t} = 1\ .
\eeq
The $\mL$ eigenvectors $X$ and $Y$ have the eigenvalue $\lambda=-\g$, so that:
\begin{align}
\xi_{X,Y}(t)=\textrm{Lap}^{-1} \ls \frac{1}{s +\g \frac{A}{s+\g+a}}\rs=e^{-\frac 1 2 (a+\gamma)t}\left(\cos{\omega t}+\frac{a+\gamma}{2\omega}\sin{\omega t} \right)\ ,
\end{align}
where $\omega=\frac 1 2 \sqrt{4\g A-(\gamma+a)^2}$.
Thus the density matrix is
\beq
\r(t) = \sum_i\mu_i(0) \xi_i(t) R_i =  \frac{1}{2}\left[ I + (v_X(0)X+v_Y(0)Y)\xi_{X,Y}(t) + v_Z(0)Z\right]\ .
\eeq
This describes a Bloch vector with fixed $Z$-component but with $X$ and $Y$ components undergoing damped oscillations with frequency $\o$. This is clearly non-Markovian dynamics. The condition for oscillation is $4\g A> (\gamma+a)^2$; otherwise the oscillations become exponential decay.

Finally, we can use the complete positivity criterion we found above. The Choi matrix is:
\bes
\begin{align}
C= \sum_k \xi_k(t) L^T_k\otimes R_k&=\xi_II^T\otimes I+\xi_XX^T\otimes X+\xi_YY^T\otimes Y+\xi_ZZ^T\otimes Z\\
&=2\left(\begin{array}{cccc}
1 &0 &0 &\xi_X\\
0 &0 &0 &0\\
0 &0 &0 &0\\
\xi_X &0 &0 &1 \end{array}\right)\ .
\end{align}
\ees
Its eigenvalues are easily found to be $\{0,0,2(1+\xi_X),2(1-\xi_X)$. Therefore the PMME in this case corresponds to a CP map iff
\beq
|\xi_X|=|\xi_Y|<1\ ,
\eeq
which is a condition on the problem parameters $A,a,\g$.

\subsection{Experimental determination of the Lindbladian $\mL$ and kernel $k(t)$}
Since both $\mL$ and $k(t)$ are phenomenological in the PMME, is there a way we can determine them experimentally? To do so, we need to express the kernel in terms of  measurable quantities. Let us assume that we $\r(t)$ can be determined via quantum state tomography, let us guess $\mL$ (perhaps based on physical intuition as to the prevalent noise). Then we know $\rho(t)$, the initial condition $\r(0)$, and the left and right eigenvector sets $\{L_i,R_i\}$, so that we can compute $\xi_i(t)$: 
\bes
\begin{align}
\rho(t) &= \sum_i \mu_i(t) R_i = \sum \xi_i(t) \mu_i(0) R_i = \sum \xi_i(t) \Tr[L_i \rho(0)] R_i \\
\Longrightarrow & \Tr[L_j\rho(t)] =\Tr[L_j \rho(0)] \xi_j(t) \\
\Longrightarrow & \xi_i(t) = \frac{\Tr[ L_i \rho(t)]}{\Tr[L_i \rho(0)]}\ ,
\end{align}
\ees
which gives us way to compute $\xi_i(t)$ from purely experimentally measurable quantities.
But at the same time $\xi_i(t)$ is related to the kernel via Eq.~\eqref{eq:823}. We can invert the latter for $k(t)$ as follows:
\beq
\tilde{\xi}(s)  = \frac{1}{s-\lambda_i\tilde{k}(s-\lambda_i)} \quad \Longrightarrow \quad \tilde{k}(s-\lambda_i) = \frac{1}{\lambda_i}\left(s-\frac{1}{\tilde{\xi}_i(s)}\right)\ ,
\eeq
where we used the identity $\textrm{Lap}^{-1} [\tilde{k}(s-\lambda)] = k(t)e^{\lambda t}$, 
so that
\beq
k(t)=\frac{e^{-\lambda_it}}{\lambda_i}\textrm{Lap}^{-1} \ls s-\frac{1}{\tilde{\xi}_i(s)}\rs\ .
\eeq
Note that in this expression only the RHS depends on the eigenvalue index $i$. This gives us an opportunity to optimize the choice of the Lindbladian by minimizing the deviation for different $i$ values, since they must all agree in order to give a unique result for $k(t)$. The experimental determination of $\mL$ and $k(t)$ is thus an iterative process involving this minimization.

\newpage

\appendix

\section{Linear algebra background and Dirac notation}
\label{app:A}

Everything in this Appendix is about the finite dimensional case, unless explicitly noted otherwise.

\subsection{Inner Product}
The inner product of two vectors is a function operating on two copies of a vector space $V$ that outputs a complex number, $f:V\times V\mapsto \C$. By definition it must satisfy the following conditions:
\bes
\begin{align}	
		& f\lp |v\>, \sum_i \lambda_i |w_i\>\rp=\sum_i \lambda_i f\lp |v\>,|w_i\>\rp \\
		& f\lp |v\>,|w\>\rp^*= f\lp |w\>,|v\>\rp \\
		& f\lp |v\>,|v\>\rp \geq 0 .
\end{align}
\ees
It is easy to show that an immediate consequence is
\beq
		f\lp |v\>, \sum_i \lambda_i |w_i\>\rp^*=\sum_i \lambda_i^* f\lp |w_i\>,|v\>\rp.
\eeq
We define the inner product between two Dirac kets as follows:
	\beq
		f\lp |v\>,|w\>\rp \equiv \sum_{i=1}^nv_i^*w_i = \lp v_i^*, ...,v_n^* \rp \lp \begin{array}{c} w_1\\ ...\\ w_n \end{array}\rp = \<v|w\> .
	\eeq
	
\subsection{Orthonormal Bases}
Two vectors $|v\>$ and $|w\>$ are orthogonal if and only if their inner product is zero: $\<v|w\>=0 \iff |v\> \perp |w\>$. The norm of a vector is
\beq
\| |v\> \| \equiv \sqrt{\<v|v\>}.
\eeq
A unit vector is normalized: $\| |v\> \|=1$. A set of vectors forms a basis if it spans the vector space and is linearly independent. Using the previous definitions, we can then say that an orthonormal basis is a set of normalized orthogonal vectors that span the vector space $V$ and are linearly independent:

	\begin{description}
		\item[Orthonormal set]$\{|v_i\>\}_{i=1}^n,\ \<v_i|v_j\>=\delta_{ij},\ \delta_{ij}=\begin{cases} 1, & \text{if}\ i=j \\0, & \text{if}\ i\neq j \end{cases}$
	\end{description}

\subsection{Linear Operators}
Another concept important to our formulation of quantum mechanics is that of linear operators. Consider an operator $A$ that maps one vector space to another:

	\beq
		A:V\mapsto W
	\eeq
For $A$ to be linear, it must be true that for $a,\ b \ \epsilon \ \C$ and $|v\>,\ |w\> \ \epsilon \ V$

	\beq
		A(a|v\>+b|w\>)= aA|v\>+bA|w\> \ \epsilon \ W
	\eeq
In words, the operator $A$ acting on a linear combination of vectors in the space $V$ produces a linear combination of the operator acting on each vector individually, and this sum is an element of space $W$. A good example of a linear operator is the outer product.

\subsection{Outer Product}
If we consider vectors $|v\>\ ,\ |z\> \ \epsilon \ V$ and $|w\> \ \epsilon \ W$, the outer product of $|v\>$ and $|w\>$ is defined as follows:

	\beq
		A=\underbrace{|w\>\<v|}_{\text{outer product}}\ :\ \lp |w\>\<v| \rp |z\> \equiv |w\> \underbrace{\lp \<v|z\> \rp}_{\epsilon \ \C}= \<v|z\> |w\>
	\eeq
One important use of the outer product is in the case of expansion in an orthonormal basis. Consider a vector $|v\> \ \epsilon \ V$ and a set of vectors $\{ |i\> \}_{i=1}^n$ which forms an orthonormal basis set for $V$. We can equivalently write $|v\>=\sum_{i=1}^nc_i|i\>$, in which $c_i$ is an arbitrary constant. The inner product of some vector $|j\>$ with $|v\>$ produces the coefficient of $\ket{v}$ in the given basis:

	\beq
		\<j|v\>=\sum_ic_i\<j|i\>=\sum_ic_i\delta_{ij}=c_j
	\eeq

If we take the outer product of $|v\>$ with itself, we generate an $n\times n$ identity matrix:

	\beq
		\sum_{i=1}^n|i\>\<i|=I= \lp \begin{array}{cccc} 1 & 0& ... & 0  \\  0 & 1& ... & 0  \\... & ... & ... & ...  \\0 & 0& ... & 1  \\ \end{array} \rp
	\eeq

 We can confirm this is true by applying this inner product as an operator on a vector $|v\>$:

 	\beq
		\lp \sum_{i=1}^n|i\>\<i| \rp |v\>= \sum_{i=1}^n |i\> \underbrace{\<i|v\>}_{c_i}= \sum_{i=1}^n c_i|i\>=|v\>
	\eeq
 The operator acting on the vector returned the vector, and is known as the ``resolution of the identity".
 This special case of the outer product is used to generate a matrix representation of an operator in the appropriate basis. If we consider an operator $A$ that preserves the space, $A:\ V \mapsto V $, multiplication of the operator by the identity matrix produces a matrix with elements that perform the operation $A$ in the following way:

 	\begin{align}
		A&=IAI \\
		&=\lp \sum_{i=1}^n|i\>\<i| \rp A \lp \sum_{j=1}^n|j\>\<j| \rp \\
		&=\sum_{i,j}|i\> \underbrace{\<i|A|j\>}_{a_{ij}} \<j| \\
		&=\sum_{i,j}a_{ij}|i\>\<j|
	\end{align}

The scalar $a_{ij}$ is known as a matrix element of the operator $A$. Recall that since the vectors $|i\>$ and $|j\>$ are members of an orthonormal basis, $a_{ij}|i\>\<j|$ is actually a matrix with all but the $ijth$ element equal to zero and the $(i,j)$th element equal to $a_{ij}$:

	\beq
		a_{ij}|i\>\<j|=\lp \begin{array}{ccc} 0 & ... & 0 \\ .. & a_{ij} & ... \\ 0 & ... & 0  \end{array} \rp
	\eeq

The sum over all combinations of $i$ and $j$ therefore produces a matrix with elements $a_{ij}$:

	\beq
		\sum_{i,j}a_{ij}|i\>\<j|= \lp \begin{array}{ccc} a_{11} & ... & a_{1n} \\ .. & ... & ... \\a_{n1} & ... & a_{nn}  \end{array} \rp	
	\eeq

\subsection{The Cauchy-Schwartz Inequality}

The Cauchy-Schwartz inequality is
\beq
|\<v|w\>|^2 \leq \<v|v\>\<w|w\> .
\eeq
It helps us make powerful statements about the properties of vectors in Hilbert space that define the domain of quantum mechanics. In its elementary form it states that, from the definition of the inner product $\vec{a} \cdot \vec{b}=\|\vec{a}\|\| \vec{b}\|\cos\theta$,
it follows that the magnitude of the inner product of those vectors is less than or equal to the product of their norms: $|\vec{a} \cdot \vec{b}| \leq \|\vec{a}\|\|\vec{b}\|$.

We can prove this for Hilbert spaces while demonstrating the power of Dirac notation.
\begin{proof}
Pick an orthonormal basis whose first element is $\ket{1} = \ket{w}/\|\ket{w}\|$ (we can always do this using the Gram-Schmidt process to complete the basis). Then, using the resolution of identity we have
\begin{align}
\<v|v\>\<w|w\> &= \<v|I|v\>\<w|w\> = \sum_{i=1}^n\<v|i\>\<i|v\>\<w|w\>=\underbrace{\frac{\<v|w\>}{\||w\>\|}\frac{\<w|v\>}{\||w\>\|}}_{i=1}\<w|w\>+\sum_{i=2}^n \underbrace{|\<v|i\>|^2\||w\>\|^2}_{\geq 0} \notag \\
&= \<v|w\>\<w|v\>+ \text{positive number}
\end{align}
Therefore, since $\<v|w\>\<w|v\>=|\<v|w\>|^2$, we see that $|\<v|w\>|^2 \leq \<v|v\>\<w|w\>$.
\end{proof}

\subsection{Trace equalities}
\label{app:trace}
The following are some useful equalities satisfied by the trace operation. They are easily provable by the rules of matrix multiplication. $A$ and $B$ are arbitrary matrices of matching dimensions.
\bes
\label{eq:trace-equalities}
\begin{align}
\Tr(AB) &= \Tr(BA) \\ 
\Tr(A\ox B) &= \Tr(A)\Tr(B) \\
[\Tr(AB)]^* &= \Tr[B^\dag A^\dag]\ .
\end{align}
\ees

\subsection{Positive operators}
\label{app:pos-ops}

An operator is positive definite (or positive, for short) if all its eigenvalues are positive. An operator is positive semi-definite if all its eigenvalues are non-negative. To test this for a given operator $A$, it suffices to prove that for all vectors $\ket{v}$, the diagonal matrix elements $\bra{v}A\ket{v}$ are positive or non-negative, respectively. The reason is that this will obviously include the eigenvectors of $A$.

\subsection{Pauli matrices}
\label{app:Pauli}

The four Pauli matrices are:
\bea
\sigma_0 = I =\begin{pmatrix} 1 & 0 \\ 0 & 1 \end{pmatrix}, \quad\sigma_1 =\sigma_x=X = \begin{pmatrix} 0 & 1 \\ 1 & 0
\end{pmatrix},\quad
\sigma_2 = \sigma_y=Y = \begin{pmatrix} 0 & -\ii \\ \ii & 0 \end{pmatrix},\quad \sigma_3 =\sigma_z=Z = \begin{pmatrix} 1 & 0 \\ 0 & -1
\end{pmatrix}.
\eea
The last three are traceless by inspection. The Pauli matrices satisfy the identity
\beq
\sigma_k \sigma_l = \delta_{kl} I + i \sum_m \varepsilon_{klm} \sigma_m
\label{eq:Pauli-mult}
\eeq
where $\delta_{kl}$ is the Kronecker symbol (it is $1$ if $k=l$, otherwise it is $0$), and $\varepsilon_{klm}$ is the completely anti-symmetric Levi-Civita
symbol [it is $1$ if $(klm)$ is an even permutation of $(123)$, $-1$ if it is an odd permutation, and $0$ if any index is repeated]. 

Since the Pauli matrices are traceless we also have the useful identity
\beq
\label{eq:tracePaulis}
\Tr(\s_k\s_l) =2\d_{kl} .
\eeq

\section{Unitarily invariant norms}
\label{app:norms}

Let $\mathcal{V}$ an inner product space equipped
with the Euclidean norm $\Vert x\Vert \equiv \sqrt{\sum_{i}|x_{i}|^{2}%
\langle e_{i},e_{i}\rangle }$, where $x=\sum_{i}x_{i}e_{i}\in \mathcal{V}$
and $\mathcal{V}=\mathrm{Span}\{e_{i}\}$. Let $A :\mathcal{V}\mapsto 
\mathcal{V}$. Define
\beq
|A|\equiv \sqrt{A^{\dagger }A}\ .
\eeq
Unitarily invariant norms\index{norm!unitarily invariant} are norms that satisfy, for all unitary $U,V$ \cite{Bhatia:book}:
\begin{equation}
\Vert UAV\Vert _{\mathrm{ui}}=\Vert A\Vert _{\mathrm{ui}}\ .
\end{equation}%
We list some important examples.
\begin{enumerate}
\item The trace norm: 
\begin{equation}
\Vert A\Vert _{1}\equiv \mathrm{Tr}|A|=\sum_{i}s_{i}(A)\ ,
\label{eq:A1-trace-norm}
\end{equation}%
where $s_{i}(A)$ are the singular values of $A$ (i.e., the eigenvalues of $|A|$). If $A=\rho 
$ is a normalized quantum state, then $\Vert \rho \Vert _{1}=\mathrm{Tr}{\rho }=1$.

\item The operator norm: 
\begin{equation}
\Vert A \Vert _{\infty }\equiv \sup_{x\in \mathcal{V}}\frac{\Vert A x\Vert 
}{\Vert x\Vert }=\max_{i}s_{i}(A)\ .
\end{equation}
Therefore $\Vert A x\Vert \leq \Vert A \Vert _{\infty }\Vert x\Vert $. Also note that, by definition $\Vert A \Vert _{\infty } \leq \Vert A \Vert _{1}$, since the largest singular value is one of the summands in $\Vert A \Vert _{1}$. 

\item The Hilbert-Schmidt norm: 
\begin{equation}
\Vert A\Vert _{2}\equiv \sqrt{\mathrm{Tr}A^{\dagger }A}=\sqrt{\sum_{i}s_{i}^2(A)}\ .
\end{equation}
Again, by definition $\Vert A \Vert _{\infty } \leq \Vert A \Vert _{2}$, since $\sqrt{\sum_{i}s_{i}^2(A)} \geq \sqrt{\max_{i}s_{i}^2(A)} = \Vert A \Vert _{\infty }$. In addition, $\Vert A\Vert _{1}^2=\sum_{i,j}s_{i}(A)s_{j}(A) \geq \sum_{i}s^2_{i}(A) = \Vert A\Vert _{2}^2$.
\end{enumerate}
We have thus established the ordering
\beq
\Vert A\Vert _{\infty } \leq \Vert A\Vert _{2}\leq \Vert A\Vert _{1}\ .
\eeq
All unitarily invariant norms satisfy the important property of submultiplicativity:
\begin{equation}
\Vert AB\Vert _{\mathrm{ui}}\leq \Vert A\Vert _{\mathrm{ui}}\Vert B\Vert _{\mathrm{ui}}.
\end{equation}
It follows that
\beq
\Vert AB\Vert_{\infty} \leq \Vert A\Vert _{\infty }\Vert B_{i}\Vert\ , \ \Vert B\Vert _{\infty }\Vert A\Vert _{i}\quad i=1,2,\infty\ .
\eeq
The norms of interest to us are also multiplicative over tensor products: 
\begin{eqnarray}
\Vert A\otimes B\Vert _{i} =\Vert A\Vert _{i}\Vert B\Vert _{i}\quad
i=1,2,\infty \ .  
\label{eq:ui}
\end{eqnarray}

As an application of unitarily invariant norms, let us revisit the convergence of the iterative expansion we saw in Eq.~\eqref{eqt:formal}. We have, for the $n$th order term:
\bes
\begin{align}
\label{eq:513a}
&\Vert (-i \lambda)^n \int_0^t dt_1 \int_0^{t_1} dt_2 \cdots \int_0^{t_{n-1}}dt_n  \left[\tilde{H}(t_1), \left[ \tilde{H}(t_2) , \dots \left[ \tilde{H}(t_n),\rho_{SB}(0) \right] \right] \dots \right] \Vert_\infty  \\
\label{eq:513b}
&\leq \lambda^n \int_0^t dt_1 \int_0^{t_1} dt_2 \cdots \int_0^{t_{n-1}}dt_n \| \left[\tilde{H}(t_1), \left[ \tilde{H}(t_2) , \dots \left[ \tilde{H}(t_n),\rho_{SB}(0) \right] \right] \dots \right] \|_\infty \\
\label{eq:513c}
&\leq \lambda^n 2^n \int_0^t dt_1 \int_0^{t_1} dt_2 \cdots \int_0^{t_{n-1}}dt_n \|\tilde{H}(t_1)\|_\infty \|\tilde{H}(t_2)\|_\infty \cdots \|\tilde{H}(t_n)\|_\infty \|\rho_{SB}(0)\|_1\\
\label{eq:513d}
&= (2\lambda)^n \int_0^t dt_1 \int_0^{t_1} dt_2 \cdots \int_0^{t_{n-1}}dt_n \|H_{SB}\|^n \\
\label{eq:513e}
&=(2\lambda \|H_{SB}\|)^n \frac{t^n}{n!} \ .
\end{align}
\ees
To go from Eq.~\eqref{eq:513a} to Eq.~\eqref{eq:513b} we used the triangle inequality; to go from Eq.~\eqref{eq:513b} to Eq.~\eqref{eq:513c} we used the fact that $\| [A,B]\| = \|AB-BA\| 
\leq \|AB\|+\|BA\| \leq 2\|A\|\|B\|$ for any unitarily invariant norm; to go from Eq.~\eqref{eq:513c} to Eq.~\eqref{eq:513d} we used the fact that $\| \tilde{H}(t_j)\| = \|H_{SB}\|$, since $\tilde{H}(t_j) = U_0^\dgr(t)H_{SB} U_0(t)$ and $U_0$ is unitary. Thus, the norm of the $n$th order term is $O[(\|H_{SB}\|t)^n]$.


\section{Distance and Fidelity between quantum states}

Consider two quantum states represented by their density matrices $\r$ and $\sigma$. Suppose we perform a POVM measurement with operators ${E_i}$, and obtain measurement outcome $i$ with probability ${p_i}$ for state $\r$, and ${q_i}$ for state $\sigma$: 
\bea
p_i = \Tr ( E_i \r )     \\
q_i = \Tr ( E_i \sigma )
\eea
How close are the two outcomes, or equivalently, how close are the two distributions? We address this next.

\subsection{Total variation distance and quantum distance}

The total variation distance between two classical probability distributions $p=\{p_i\}$ and $q=\{q_i\}$ is defined as
\beq
D(p,q) \equiv \frac{1}{2} \sum_i |p_i -q_i| .
\eeq
The total variation distance measure forms a metric on the space of classical probability distributions, as it satisfies all the three properties of a metric, viz. the distance between the same variables is zero,  it is symmetric, and it satisfies the triangle inequality:
\bes
\bea
D(x,x)&=&0 \\
D(x,y)&=&D(y,x)\\
D(x,y)&\leq& D(x,z) + D(z,y)
\eea
\ees
The trace-norm distance can then be realized as a quantum analogue of the total variation distance.
\beq
D(\r,\sigma) \equiv \frac{1}{2}\norm{\r - \sigma}_1
\label{eq:tnD}
\eeq

Here we have introduced the one-norm, also called the trace norm, which we define for an arbitrary matrix $A$:
\beq
\norm{A}_1 = \sum_i \sigma_i(A)
\eeq
where $\sigma_i(A)$ are the singular values of $A$, i.e., the eigenvalues of $|A|\equiv = \sqrt{A^\dagger A}$. The name trace norm comes from
\beq
\norm{A}_1  \equiv  \Tr |A|.
\eeq
While we're at it, there is a useful inequality relating the trace norm and the operator norm \cite{Bhatia:book}:
\beq
\label{eq:norm1inf}
\|AB\|_1 \leq \|A\|_1 \|B\|
\eeq
for any pair of operators $A$ and $B$.

Some useful properties of the trace-norm distance are:
\begin{enumerate}
\item Bounded between $0$ and $1$: Clearly $D(\r,\r)=0$ and $D(\r,\sigma)$ cannot be negative since it is the sum of non-negative quantities (the singular values are the absolute values of the eigenvalues). Also, by letting $\r=\ketbra{\psi}$ and $\s = \ketbra{\phi}$ such that $\bra{\psi}{\phi}\rangle=0$, we have $\Tr\sqrt{(\r-\s)^\dagger (\r-\s)} = \Tr\sqrt{\r+\s} = \Tr(\r+\s) = 2$, where we used $\r+\s = (\r+\s)^2$ and positivity. Thus $D(\r,\sigma)=1$ in this case, and it's not hard to see that $D$ can't be larger.
\item Invariance under a simultaneous unitary transformation of both $\r$ and $\s$: 
\beq
D(U\r U^\dagger,U\sigma U^\dagger) = \frac{1}{2}\norm{U\r U^\dagger - U\sigma U^\dagger}_1 = \frac{1}{2}\norm{U(\r  - \sigma) U^\dagger}_1 = \frac{1}{2}\norm{\r - \sigma}_1 = D(\r,\sigma)\ ,
\eeq
where we've used the fact that the trace norm is unitarily invariant \cite{Bhatia:book}.
\item If $\r$ and $\sigma$ commute, the trace-norm distance reduces to the total variation distance between the set of paired eigenvalues of $\r$ and $\sigma$. The pairing is done by their common eigenvectors (which they have by virtue of being commuting Hermitian operators).
\item It can be shown that if $p$ and $q$ are the probability distributions of $\r$ and $\sigma$ for some POVM, $D(\r,\sigma) \ge D(p,q)$. In other words, the trace-norm distance is always an upper bound on the corresponding total variation distance. Moreover, there always exists a POVM which saturates the bound.
\end{enumerate}
Hence, an equivalent definition of the quantum distance measure is
\beq
D(\r,\sigma)=\sup_{ \{\textrm{POVM}\} } D(p,q)
\label{eq:D-sup}
\eeq
This is very useful since we'd like to find a measurement which makes the two states as distinguishable as possible. The trace-norm distance automatically tells us how far apart the states would be if we could find such a measurement.

\subsection{Fidelity Measures}

A fidelity measure can be thought of as an overlap of two states, or the inner product between them. The classical fidelity is defined as
\beq
F(p,q) = \sum_i \sqrt{p_i}\sqrt{q_i} = (\vec{\sqrt{p}},\vec{\sqrt{q}})\ ,
\label{eq:F-classical}
\eeq
i.e., it is the inner product between two vectors $\vec{\sqrt{p}} = (\sqrt{p_1},\sqrt{p_2},\ldots )$ and $\vec{\sqrt{q}} = (\sqrt{q_1},\sqrt{q_2},\ldots )$, whose elements are given by square roots of the elements of classical probability distribution.
The fidelity is not a metric since it doesn't satisfy the triangle inequality. However, $\arccos(F)$ is a distance, also known as the Bures angle, or Bures length (related to the Bures or Fubini-Study metric). \\

A quantum fidelity measure was first introduced by Uhlmann. The Uhlmann's fidelity between two distribution $\r$ and $\sigma$ is clearly inspired by the classical fidelity, and is given by
\beq
F(\r,\sigma)\equiv\norm{\sqrt{\r}\sqrt{\sigma}}_1 .
\eeq

\subsection{The distance and fidelity inequality}

Fidelity and distance both give us a sense of how close two states are. While the distance gives us the separation between two states, fidelity measure the amount of overlap, or similarity of two states. We use two such measures, as while the distance measure has a nice interpretation as resulting from the optimal POVM, the fidelity measure is often easier to calculate. The two measures are related by the following inequality \cite{Fuchs:99}:
\beq
1 - F \leq D \leq \sqrt{1 - F^2} \iff 1 - D \leq F \leq \sqrt{1 - D^2}.
\label{eq:D-F}
\eeq

\subsection{Uhlman's Theorem}

Uhlman's theorem gives a nice operational interpretation for the fidelity. Consider two states $\r$ and $\sigma$, acting on the same Hilbert space $\mcal{H}_1$. Next consider the ``doubled" Hilbert space given by $\mcal{H}_1 \otimes \mcal{H}_2$, where $\mcal{H}_2 = \mcal{H}_1$.

One can always find two pure states $\ket{\Psi},\ket{\Phi}\in \mcal{H}_1 \otimes \mcal{H}_2$ such that
\bes
\bea
\r     &=& \Tr_{\mcal{H}_2} \ketbra{\Psi} \\
\sigma &=& \Tr_{\mcal{H}_2} \ketbra{\Phi} .
\eea
\ees
Indeed, if the spectral decomposition of $\r$ is $\sum_i r_i \ket{i}\bra{i}$, then $\ket{\Psi} = \sum_i \sqrt{r_i} \ket{i}\otimes\ket{i}$ yields $\Tr_{\mcal{H}_2} \ketbra{\Psi} = \Tr_{\mcal{H}_2} \sum_{ij} \sqrt{r_i r_j} \ket{i}\bra{j}\otimes \ket{i}\bra{j} = \sum_{ij} \sqrt{r_i r_j} \ket{i}\bra{j}\Tr (\ket{i}\bra{j}) = \r$, and similarly for $\s$.

This procedure is called ``purification", and $\ket{\Psi}$ is called a purification of $\r$. While the purification of a state is not unique (e.g., we could have picked $\ket{\Psi} = \sum_i e^{i \theta_i} \sqrt{r_i} \ket{i}\otimes\ket{i}$ instead), it can clearly always be found. Uhlman's theorem states that
\beq
F(\r,\sigma)=\sup_{\{\ket{\Psi},\ket{\Phi}\}} |\braket{\Psi | \Phi}|\ ,
\eeq
i.e., the fidelity has the appealing interpretation of being the largest possible overlap among the purifications of the two states. Thus it is also an inner product, just like the classical fidelity in Eq.~\eqref{eq:F-classical}. Moreover, since $|\braket{\Psi | \Phi}| = |\braket{\Phi | \Psi}|$, clearly 
\beq
F(\r,\sigma) = F(\s,\r)\ .
 \label{eq:F-sym}
\eeq

Using the definition of the trace norm and the positivity of $\r$ and $\s$, we have
\bes
\bea
\norm{\sqrt{\r}\sqrt{\sigma}}_1 &=& \Tr{\sqrt{(\sqrt{\r}\sqrt{\sigma})^{\dagger}(\sqrt{\r}\sqrt{\sigma})}} \\
&=& \Tr{\sqrt{\sqrt{\sigma}\sqrt{\r}\sqrt{\r}\sqrt{\sigma}}} \\
&=& \Tr{\sqrt{\sqrt{\sigma} \r \sqrt{\sigma}}} \\
&=& \Tr{\sqrt{\sqrt{\r} \sigma \sqrt{\r}}} = \norm{\sqrt{\s}\sqrt{\r}}_1,
\eea
\ees
where the last line follows from Eq.~\eqref{eq:F-sym}.

It turns out that, just like the trace distance is the maximum of the classical distance of the probability distributions from arbitrary POVMs [Eq.~\eqref{eq:D-sup}], the quantum fidelity is the
minimum of the classical fidelity of the probability distributions from arbitrary POVMs \cite{nielsen2010quantum}[p.412]:
\beq
F(\r,\s) = \inf_{\{\textrm{POVM}\}} F(p,q).
\eeq

\subsection{Fidelity for a pure state passing through a noise channel}

Suppose a pure state $\ket{\psi}$ passes through a noise channel $\mc{N}$, as depicted below, and we wish to compare the resultant mixed state $\r = \mcal{N}(\ketbra{\psi})$ with the original.
\begin{center} \includegraphics[scale=0.5]{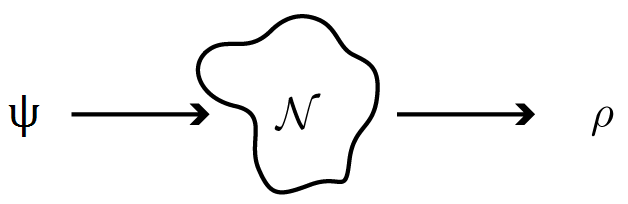} \end{center}
In this case we can simplify the expression for the fidelity (note that $\ketbra{\psi}>0,(\ketbra{\psi})^2 = \ketbra{\psi} \Rightarrow \ketbra{\psi} = \sqrt{\ketbra{\psi}}$):

\bes
\label{eq:F-pure}
\bea
F(\r,\ketbra{\psi}) &=& \Tr{\sqrt{\sqrt{\ketbra{\psi}} \r \sqrt{\ketbra{\psi}}}} \\
                    &=& \Tr{\sqrt{\ketbra{\psi} \r \ketbra{\psi}}}  \\
                                       &=& \sqrt{\braket{\psi|\r|\psi}} \Tr({\ketbra{\psi}})\\
                    &=& \sqrt{\braket{\psi|\r|\psi}} .
\eea
\ees
It turns out that in this case we can also obtain a tighter inequality than \eqref{eq:D-F},
\beq
1 - F^2(\r,\ketbra{\psi}) \le D(\r,\ketbra{\psi}) .
\eeq

\subsection{Fidelity is invariant under a joint unitary transformation}
If we rotate $\r$ and $\sigma$ by the same unitary transformation $U$, the Fidelity measure doesn't change, i.e.
\beq
F(\r,\sigma)=F( U \r U^\dagger,U \sigma U^\dagger)
\eeq

To prove this, we note that the trace norm is a unitarily invariant norm, and hence is submultiplicative [recall Eq.~\eqref{eq:ui-norm-sub}]. Also, if $A$ is positive, $U \sqrt{A} U^\dagger=\sqrt{(U \sqrt{A} U^\dagger)^2} = \sqrt{U \sqrt{A} \sqrt{A} U^\dagger}$, so that
\bea
    U \sqrt{A} U^\dagger = \sqrt{U A U^\dagger}.
\eea
Consequently,
\bes
\bea
F(U\r U^\dagger,U\sigma U^\dagger)  &=& \norm{\sqrt{U\r U^\dagger} \sqrt{U\sigma U^\dagger}}_1 \\
  &=& \norm{ U \sqrt{\r} U^\dagger U \sqrt{\sigma} U^\dagger}_1\\
  &=& \norm{ U \sqrt{\r} \sqrt{\sigma} U^\dagger}_1\\
  &=& \norm{ \sqrt{\r}  \sqrt{\sigma} }_1\\
  &=& F(\r,\sigma).
\eea
\ees

\subsection{Fidelity of Noise channels}

Consider a noise channel $\mc{N}$ that is completely positive and trace preserving (CPTP). Such maps can be represented by a set of Kraus operators $\{K_i\}$, such that $\mcal{N}(\r)=\sum_i K_i \r K_i^\dagger$ and $\sum_i K_i^\dagger K_i = I$. CPTP maps are \textit{contractive}, i.e., they can only make states become less distinguishable:
\bes
\bea
D(\mcal{N}(\r),\mcal{N}(\sigma)) \le D(\r,\sigma) \\
F(\mcal{N}(\r),\mcal{N}(\sigma)) \ge F(\r,\sigma)
\eea
\ees
As a heuristic justification of these inequalities, consider a completely depolarizing noise channel which maps all states to identity: $\mcal{N}(\r) = I$ $\forall \r$. Then $D(\mcal{N}(\r),\mcal{N}(\sigma)) = 0$ and $F(\mcal{N}(\r),\mcal{N}(\sigma))=1$. At the other extreme, if $\mc{N}$ is a unitary rotation (no decoherence), i.e., $\mc{N}(\r) = U\r U^\dagger$, then $D(\mcal{N}(\r),\mcal{N}(\sigma)) = D(\r,\s)$ and $F(\mcal{N}(\r),\mcal{N}(\sigma)) = F(\r,\s)$. Other CPTP maps lie in between these two extremes.

Since the fidelity can only increase under a CPTP map it makes sense to define the \textit{fidelity of a noise channel} by taking the minimum over all input states $\r$:
\beq
F(\mcal{N})\equiv \inf_\r F(\r,\mcal{N}(\r)).
\label{eq:F-channel}
\eeq
Actually we can simplify this somewhat: we can show that the minimization doesn't require general mixed states, but instead pure states suffice. The reason that the fidelity satisfies ``strong-concavity'', i.e., for any two convex combinations of mixed states defined over the same index set,
\beq
F(\sum_i p_i \r_i,\sum_i q_i \sigma_i) \ge \sum_i \sqrt{p_i q_i } F(\r_i,\sigma_i)\ .
\eeq
With this result, and the spectral decomposition $\r = \sum_i \lambda_i \ketbra{i}$, we have from Eq.~\eqref{eq:F-channel}
\bes
\bea
F(\mcal{N}) &=& \inf_\r F(\sum_i \lambda_i \ketbra{i},\mcal{N}(\sum_i \lambda_i \ketbra{i})) \\
            &\ge& \inf_\r \sum_i \sqrt{\lambda_i \lambda_i} F(\ketbra{i},\mcal{N}(\ketbra{i})) \\
            &\ge& \inf_{\ket{i}} F(\ketbra{i},\mcal{N}(\ketbra{i}) \left(\sum_i \lambda_i\right)  \\
            &=&  \inf_{\ket{i}} F(\ketbra{i},\mcal{N}(\ketbra{i})\ ,
\eea
\ees
where in the penultimate line we used the fact that all terms of the form $F(\ketbra{i},\mcal{N}(\ketbra{i}) $ are non-negative, so eliminating all but the smallest among them certainly makes the expression smaller.

Since every mixed state has a spectral decomposition, the infimum will be achieved for some pure state belonging to the spectral decomposition of some mixed state. Hence the fidelity of a CPTP noise channel can be redefined as ($\ket{\psi}$ is a pure state)
\beq
F(\mcal{N}) = \inf_{\ket{\psi}} F(\ketbra{\psi},\mcal{N}(\ketbra{\psi})) = \inf_{\ket{\psi}} \sqrt{\braket{\psi|\mcal{N}(\ketbra{\psi})|\psi}} \ .
\label{eq:F-channel-pure}
\eeq

\subsection{Examples: fidelities of various noise channels}

\subsubsection{The pure-dephasing channel}

Consider a channel that flips the phase of a qubit with probability $p$, and acts as identity otherwise.
\beq
\mcal{N}_{\textrm{PD}}(\r)=(1-p)\r+ p Z \r Z
\eeq

The fidelity of this channel can be calculated using Eq.~\eqref{eq:F-channel-pure} as
\bes
\bea
F(\mcal{N}_{\textrm{PD}}) &=& \inf_{\ket{\psi}} F(\ketbra{\psi},\mcal{N}_{\textrm{PD}}(\ketbra{\psi}) \\
       &=& \inf_{\ket{\psi}} \sqrt{ \braket{\psi|\,\mcal{N}_{\textrm{PD}}(\ketbra{\psi})\,|\psi} } \\
       &=& \inf_{\ket{\psi}} \sqrt{ (1-p) \braket{\psi|\psi}\braket{\psi|\psi}+ p \braket{\psi|Z|\psi}\braket{\psi|Z|\psi} } \\
       &=& \inf_{\ket{\psi}} \sqrt{(1-p) + p\, \braket{Z}^2}
\eea
\ees
In this case the minimization is trivial, since, e.g., $\braket{+|Z|+}=0$. Therefore we have
\beq
F(\mcal{N}_{\textrm{PD}}) = \sqrt{1-p}=1-p/2+\mc{O}(p^2)
\eeq
We see that the fidelity has been degraded by a term of order $p$. In other words, the pure-dephasing channel introduces an error of order $O(p)$ on the system. \\

\subsubsection{The depolarizing channel}

The depolarizing channel is represented by
\beq
\mcal{N}_{\textrm{Dep}}(\r)=(1-p)\r+ \frac{p}{3} \sum_{\alpha \in \{x,y,z\}} \sigma^\alpha \r \sigma^\alpha
\eeq
Proceeding as in in the previous example,
\bes
\bea
F(\mcal{N}_{\textrm{Dep}}) &=&  \inf_{\ket{\psi}} \sqrt{ \braket{\psi|\,\mcal{N}(\ketbra{\psi})\,|\psi}} \\
       &=& \inf_{\ket{\psi}} \sqrt{ (1-p) + \frac{p}{3} \sum_{\alpha \in \{x,y,z\}}{\braket{\psi|\sigma^\alpha|\psi}^2} }
\eea
\ees
If $\ket{\psi}=a \ket{0}+ b\ket{1}$, we obtain $\braket{\sigma^z}=|a|^2-|b|^2$, $\braket{\sigma^x}=2\,\Re(a^*b)$ and $\braket{\sigma^y}=2\, \Im (a^*b)$. The minimization over all $a$ and $b$, subject to $|a|^2+|b|^2=1$, yields $a=1$ and $b=0$ as one possible solution (the easiest way to see this is to realize that the depolarizing channel is completely symmetric, so any state will do, e.g., $\ket{0}$). Thus,
\beq
F(\mcal{N}_{\textrm{Dep}}) = \sqrt{1-p + \frac{p}{3}} = 1-\frac{p}{3}+\mc{O}(p^2)
\label{eq:Fdep}
\eeq
Thus, the error is again $\mc{O}(p)$.

\newpage



%

\end{document}